\title{\textbf{Redundancy Is All You Need \\ (for CSP Sparsification)}\thanks{An extended abstract of this work appeared in STOC 2025~\cite{brakensiek2024Redundancy}.}}
\author{Joshua Brakensiek\thanks{Department of Electrical Engineering and Computer Sciences, University of California, Berkeley. Contact: \href{mailto:josh.brakensiek@berkeley.edu}{josh.brakensiek@berkeley.edu}} \and Venkatesan Guruswami\thanks{Simons Institute for the Theory of Computing, and the Departments of EECS and Mathematics, University of California, Berkeley. Contact: \href{mailto:venkatg@berkeley.edu}{venkatg@berkeley.edu}}}
\date{}
\newcommand{\card}[1]{\lvert #1\rvert}
\newcommand{\Ham}{\operatorname{Ham}}
\newcommand{\Hom}{\operatorname{Hom}}
\newcommand{\supp}{\operatorname{supp}}
\newcommand{\CSP}{\operatorname{CSP}}
\newcommand{\NAE}{\operatorname{NAE}}
\newcommand{\Span}{\operatorname{span}}
\newcommand{\tLIN}{\operatorname{3LIN}}
\newcommand{\VC}{\operatorname{VC}}
\newcommand{\sat}{\operatorname{sat}}
\newcommand{\ex}{\operatorname{ex}}
\newcommand{\poly}{\operatorname{poly}}
\newcommand{\cP}{\mathcal P}
\newcommand{\cQ}{\mathcal Q}
\newcommand{\cF}{\mathcal F}
\newcommand{\cR}{\mathcal R}
\newcommand{\N}{\mathbb{N}}
\newcommand{\F}{\mathbb{F}}
\newcommand{\B}{\mathbb{B}}
\newcommand{\cC}{\mathcal C}
\newcommand{\cD}{\mathcal D}
\newcommand{\AND}{\operatorname{AND}}
\newcommand{\OR}{\operatorname{OR}}
\newcommand{\CUT}{\operatorname{CUT}}
\newcommand{\eps}{\varepsilon}
\newcommand{\val}{\operatorname{val}}
\newcommand{\NRD}{\operatorname{NRD}}
\newcommand{\SPR}{\operatorname{SPR}}
\newcommand{\CL}{\operatorname{CL}}
\newcommand{\E}{\mathbb E}
\newcommand{\one}{{\bf 1}}
\newcommand{\wt}{\operatorname{wt}}
\newcommand{\argmin}{\operatorname{argmin}}
\newcommand{\argmax}{\operatorname{argmax}}
\newcommand{\spor}{\Span_{\OR}}
\newcommand{\LIN}{\operatorname{LIN}}
\newcommand{\POLY}{\operatorname{POLY}}
\newcommand{\BCK}{\operatorname{BCK}}
\newcommand{\BCKp}{\operatorname{BCK}^{+}}
\newcommand{\R}{\mathbb{R}}
\newcommand{\Z}{\mathbb{Z}}
\newcommand{\wSPR}{\operatorname{wSPR}}
\newcommand{\type}{\operatorname{type}}
\newtheorem{theorem}{Theorem}
\numberwithin{theorem}{section}
\newtheorem{lemma}[theorem]{Lemma}
\newtheorem{claim}[theorem]{Claim}
\newtheorem{proposition}[theorem]{Proposition}
\newtheorem{corollary}[theorem]{Corollary}
\newtheorem{question}[theorem]{Question}
\theoremstyle{definition}
\newtheorem{definition}[theorem]{Definition}
\newtheorem{remark}[theorem]{Remark}
\newtheorem{example}[theorem]{Example}
\renewcommand{\paragraph}{%
  \@startsection{paragraph}{4}%
  {\z@}{6pt \@plus 1pt \@minus 1pt}{-5pt}%
  {\normalfont\normalsize\bfseries}%
}
\begin{document}

\maketitle
\thispagestyle{empty}

\begin{abstract}
The seminal work of Bencz{\' u}r and Karger demonstrated cut sparsifiers of near-linear size, with several applications throughout theoretical computer science. Subsequent extensions have yielded sparsifiers for hypergraph cuts and more recently linear codes over Abelian groups. A decade ago, Kogan and Krauthgamer asked about the sparsifiability of arbitrary constraint satisfaction problems (CSPs). For this question, a trivial lower bound is the size of a \emph{non-redundant} CSP instance, which admits, for each constraint, an assignment satisfying only that constraint (so that no constraint can be dropped by the sparsifier). For instance, for graph cuts, spanning trees are non-redundant instances.

\smallskip 
Our main result is that redundant clauses are sufficient for sparsification: for any CSP predicate $R$, every unweighted instance of $\CSP(R)$ has a sparsifier of size at most its non-redundancy (up to polylog and $1/\epsilon$ factors). For weighted instances, we similarly pin down the sparsifiability to the so-called chain length of the predicate. These results precisely determine the extent to which any CSP can be sparsified.

\smallskip
Our result is established in the general setting of non-linear codes, or equivalently set families, yielding a VC-type theorem for multiplicative error approximation. This unifies and extends known sparsification results for graph cuts, linear codes, and broad CSP classes. 
A key technical ingredient in our work is a novel application of the entropy method from Gilmer's recent breakthrough on the union-closed sets conjecture. 

\smallskip
As an immediate consequence of our main theorem, a number of results in the non-redundancy literature immediately extend to CSP sparsification.
We also contribute new techniques for understanding the non-redundancy of CSP predicates. 
By adapting methods from the matching vector codes literature in coding theory, we are able to construct an explicit predicate whose non-redundancy lies between $\Omega(n^{1.5})$ and $\widetilde{O}(n^{1.6})$, the first example with a provably non-integral exponent. The tools we introduce in this paper, including a conditional version of non-redundancy, have spurred several follow-ups on non-redundancy as well as new connections to extremal combinatorics, sparsification, and streaming.
\end{abstract}

\clearpage

\setcounter{tocdepth}{2} 
\tableofcontents
\thispagestyle{empty}
\pagebreak
\setcounter{page}{1}

\section{Introduction} \label{sec:intro}

The broad goal in sparsification is to replace an object by a more compact surrogate, typically a carefully chosen subsample, that preserves the behavior of the object under some metric of interest. For instance, for preserving cuts in undirected graphs, the influential works of Karger~\cite{DBLP:conf/soda/Karger93} and Bencz\'{u}r and Karger~\cite{DBLP:conf/stoc/BenczurK96} showed that every graph has an edge-weighted subgraph with near-linear number of edges that preserves the value of all (edge) cuts up to a $(1\pm \epsilon)$ multiplicative factor. These papers have had a substantial impact in shaping the last thirty years of work in areas such as spectral sparsifiers~\cite{spielman2011Spectral,DBLP:journals/siamcomp/BatsonSS12,lee2018constructing}, clustering~\cite{kannan2004clusterings,DBLP:conf/sigmod/SatuluriPR11}, hypergraph sparsifiers~\cite{kogan2015, chen2020linear,KKTY2021,kenneth2023cut,khanna2024optimal}, linear solvers~\cite{SpielmanT04,vishnoi2013Lx,koutis2014approaching}, convex optimization \cite{lee2014path,DBLP:journals/jacm/AroraK16,DBLP:books/daglib/0040632}, sketching/streaming algorithms~\cite{ahn2009Graph,ahn2012graph,DBLP:conf/soda/AhnGM12,andoni2016Sketching,McGregor14,kapralov2017single,braverman2021adversarial}, max-flow/min-cut algorithms~\cite{leighton1999multicommodity,christiano2011Electrical,kelner2014almost,DBLP:conf/focs/ChenKLPGS22}, machine learning~\cite{luo2021Learning,chen2022Graph,zhang2023Survey,guo2024Datacentric}, submodular functions~\cite{kenneth2023cut,schwartzman2024mini,rafiey2024decomposable,quanrud2024Quotient}, differential privacy~\cite{blocki2012johnson,arora2019Differentially}, PageRank~\cite{Chung14}, and even theoretical physics~\cite{hadjighasem2016Spectralclustering,vanchurin2018Information,taira2022Networkbased}, among many other works. 

Among the multiple exciting dimensions in which cut sparsification has been generalized, we now highlight two which form the backdrop for our work. Note that the graph cut problem can be modeled by the arity-two Boolean constraint $x + y =1 \pmod 2$. One can thus generalize cut sparsification by allowing for arbitrary constraints (of any arity over some finite domain) as considered in the field of constraint satisfaction problems (CSPs), leading to \emph{CSP sparsification.} This direction was proposed by Kogan and Krauthgamer~\cite{kogan2015} in their work on hypergraph cut sparsifiers, where the not-all-equal constraint captures hypergraph cut. As as special case, arbitrary binary CSPs (where each constraint has two variables) were studied in \cite{filtser2017Sparsification} for the Boolean domain and in \cite{butti2020Sparsificationa} for general domains, leading to a dichotomy: either near-linear sized sparsifiers exist, or no improvement over quadratic is possible.

In another direction, one can instead look toward more general structures to sparsify. For instance, a recent line of work by Khanna, Putterman, and Sudan turned toward sparsifying linear codes~\cite{khanna2024Code}, or more generally subgroups of powers of Abelian groups \cite{khanna2024Characterizations}. Beyond being algorithmically efficient~\cite{khanna2024Characterizations}, these structural results have led to exciting new results in CSP sparsification by constructing optimal sparsifiers when the constraints can be embedded into linear/Abelian equations.

In this work, we obtain sparsifiers encompassing both these generalizations via a unified approach to sparsification of \emph{non-linear} codes. The resulting sparsifiers for CSPs have \emph{optimal asymptotic size} up to polylogarithmic factors, for \emph{every choice of predicate} defining the CSP. In other words, we \emph{pinpoint the optimal extent to which an arbitrary CSP can be sparsified.}\footnote{In this work we focus on the \emph{existence} of sparsifiers, which is already highly non-trivial (e.g., \cite{khanna2024Code,butti2020Sparsificationa} are also non-algorithmic). Future directions (and barriers) for algorithmic aspects are briefly discussed in Sections~\ref{subsec:open-questions} and~\ref{sec:conclusion}.}

\subsection{Non-linear Code Sparsification}

We first state our result for codes as it is very general and crisply stated, and then turn to the consequences and further new results for CSPs. For a non-linear code\footnote{Such a problem can be equivalently phrased as the sparsification of \emph{set families}. However, we find the code framework notationally more convenient for our applications.} $C \subseteq \{0,1\}^m$, an $\eps$-sparsifier (for a parameter $\eps \in (0,1)$) is a weight function $w : [m] \to \R_{\ge 0}$ such that for every codeword $c$, adding up the weights of its nonzero positions, i.e., $\sum_i w(i) c_i$, is an accurate estimate of the Hamming weight of $c$ (i.e., $\Ham(c) :=\sum_i c_i$) to within a $(1 \pm \eps)$ multiplicative factor (Definition~\ref{def:SPR-code}). The goal is to minimize the support of $w$ (i.e., the number of nonzero entries $w(i)$), and the minimum value is called $\eps$-sparsifiability of $C$ and is denoted $\SPR(C,\eps)$. One of our main results is an upper bound on the sparsifiability in terms of a natural combinatorial parameter of the code called its \emph{non-redundancy} $\NRD(C)$, defined as follows: 
\begin{quote}
    $\NRD(C)$ is the size of the largest subset of indices $I \subseteq [m]$ such that for each $i \in I$, there is a codeword $c \in C$ with $c_i=1$ and $c_{i'}=0$ for $i' \in I \setminus \{i\}$.
\end{quote}
In other words, if we imagine the code as a matrix whose rows are codewords, its non-redundancy is largest square submatrix which is a permutation matrix.

Our main result can then be stated compactly as follows.
\begin{theorem}[Main]\label{thm:main-code}
For all $C \subseteq \{0,1\}^m$ and $\eps \in (0,1)$,
\[
  \SPR(C, \eps) = O(\NRD(C) (\log m)^6 / \eps^2).
\]
\end{theorem}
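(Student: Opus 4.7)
The natural plan is to build the sparsifier via independent Bernoulli sampling: keep each coordinate $i \in [m]$ with probability $p_i$ and assign weight $w(i) = 1/p_i$ if kept. To pin down the $p_i$'s, I would define an \emph{importance score} $\tau(i)$ approximating the smallest Hamming weight of a codeword $c \in C$ with $c_i = 1$ (refined inductively by a peeling procedure, so that after removing coordinates with very small $\tau$-level, the effective code still has controlled non-redundancy). Taking $p_i \asymp (\log m)^c / (\eps^2 \tau(i))$, the expected support $\sum_i p_i$ should be bounded by $O(\NRD(C) \cdot \polylog(m)/\eps^2)$ via a Nash--Williams style partition argument: at each dyadic scale $\tau \approx 2^j$, the coordinates at that scale can be decomposed into $O(\NRD(C))$ "non-redundant systems" of disjoint witnesses, since each witness contributes one unit to the non-redundancy.

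The core difficulty is showing that for every codeword $c \in C$ simultaneously, $\sum_i w(i) c_i$ concentrates within a $(1\pm\eps)$ factor of $\wt(c)$. My approach is a Karger-style chain decomposition: partition codewords by Hamming weight into $O(\log m)$ buckets $C_k := \{c \in C : 2^k \le \wt(c) < 2^{k+1}\}$, apply Bernstein's inequality inside each bucket (the sampling rates are calibrated so that each fixed $c \in C_k$ fails with probability $\exp(-\Omega(\polylog(m)/\eps^2 \cdot \text{effective weight}))$), then union bound across buckets. The key quantitative input is a sharp \emph{code-counting} bound of the form $\log |C_k| \lesssim \NRD(C) \cdot \polylog(m)$ at every scale $k$, or, more flexibly, a weight-sensitive variant $\log |C_k| \lesssim f(k) \cdot \NRD(C)$ balanced against the Bernstein tail.

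This counting bound is where I would adapt Gilmer's entropy method. For a uniformly random codeword $C \in C_k$, consider the entropy $H(C) = \log|C_k|$ and attempt to upper-bound it by iteratively conditioning on coordinates: if for every coordinate $i$ the conditional probability $\Pr[C_i = 1 \mid \text{already-fixed coordinates}]$ is bounded away from $1$, then a Gilmer-type manipulation (looking at a second i.i.d.\ codeword $C'$ and an appropriate combining operation, such as bitwise OR followed by projection onto a nonredundant witness set) yields an entropy decrement proportional to the newly exposed coordinate's contribution. Iterating this process must terminate in $O(\NRD(C))$ rounds, because each iteration essentially identifies a fresh "independent" coordinate in the sense of non-redundancy; otherwise one would exhibit a nonredundant witness of size larger than $\NRD(C)$, a contradiction.

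The main obstacle is precisely this last step: making the Gilmer-style entropy reduction rigorous in the absence of union-closure. I expect to need either (i) to pass to an enlargement of $C$ (e.g., closed under OR on small subsets of coordinates) while controlling how $\NRD$ grows under this closure, or (ii) to work with a carefully chosen non-uniform distribution over $C_k$ so that the entropy computation is tractable. Calibrating the sampling probabilities so that the scale-wise Bernstein tails multiply correctly with $|C_k|$ to yield a single union bound, while maintaining the target $(\log m)^6$ dependence, is a delicate bookkeeping exercise, and I would expect the polylog exponent $6$ to arise naturally as the product of the number of chained scales, the loss per scale in the entropy reduction, and the loss in converting the peeling procedure into clean sampling probabilities.
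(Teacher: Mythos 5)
Your proposal is heading in the right direction: Gilmer's entropy method is indeed the key tool, and you have correctly identified both the need for a weight-sensitive counting bound and the obstacle that $C$ need not be union-closed. But there are genuine gaps in the route you sketch. The central missing insight is that passing to the OR-closure $\spor(C)$ costs \emph{nothing}: $\NRD(C)$ is exactly the VC-dimension of $\spor(C)$, so your option~(i) is not a compromise whose blow-up needs controlling --- the relevant parameter is invariant --- and Sauer--Shelah--Perles then gives $\log\card{\spor(C)} \le \NRD(C)\log(m+1)$ for free. This is precisely what makes Gilmer's method usable without any delicate conditional-entropy-decrement scheme: take a $\theta$-sparse distribution $\cD$ over $C$, OR together $N\approx\theta$ i.i.d.\ samples (landing in $\spor(C)$), lower-bound the resulting entropy by roughly $(N/\log N)\cdot H(\cD)$ via Sawin's refinement of Gilmer, and upper-bound it by $\NRD(C)\log(m+1)$, giving $H(\cD)\lesssim \NRD(C)\polylog(m)/\theta$ directly.

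The second gap is structural. A raw bucket bound $\log\card{C_k}\lesssim f(k)\cdot\NRD(C)$ with $f(k)$ small enough to beat your Bernstein tails need not hold: Sauer--Shelah only yields $\log|C|\le\NRD(C)\log(m+1)$, which is too large whenever codeword weights $d\ll\NRD(C)\log m$. What the paper proves instead is a \emph{puncturing} decomposition: for any $d$ and $\lambda\ge 1$ there is a set $I$ of $O(\lambda\,\NRD(C)\log m)$ coordinates such that $C_{\le d}|_{\overline I}$ has only $\exp\bigl(O(d\,\polylog(m)/\lambda)\bigr)$ codewords. The coordinates of $I$ get weight one in the sparsifier, and only the \emph{punctured} code must survive a union bound under uniform rate-$1/3$ subsampling. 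Constructing $I$ needs a further ingredient you do not anticipate: a minimax dichotomy asserting that for any $\theta$ either $C$ admits a $\theta$-sparse distribution (whence the entropy bound above lets you delete a small batch of codewords) or $C$ admits a $\theta$-cover --- a distribution over coordinates hitting every codeword with probability at least $1/\theta$ --- from which the next coordinate of $I$ is sampled; iterating this, tracked by the potential $\sum_{c\in C}\exp_2(\Ham(c|_{\overline I}))$, produces $I$. Finally, the concentration argument is a $\log m$-deep recursive divide-and-conquer (set aside $I$, subsample the remainder at rate $1/3$ with accuracy $\eps/\log m$, recurse), not your one-shot importance-weighted Bernoulli sampling; the latter, without the puncturing, does not yield tails that beat the codeword count.
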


This theorem is highly versatile and can cover the many motivating examples for studying such sparsficiation. For the sparsification of constraint satisfaction problems, we can let the codewords in our code $C$ correspond to indicators vectors of assignments to clauses; we discuss this in more detail in Section~\ref{subsec:csp-spar}. For sparsification of linear codes and more broadly group codes, Theorem~\ref{thm:main-code} can be used to give essentially tight bounds, matching the results of \cite{khanna2024Code,khanna2024Characterizations} (modulo the efficiency of \cite{khanna2024Characterizations}); see Section~\ref{sec:spr-app} for further discussion and connections.

\medskip\noindent\textbf{Interpreting Theorem~\ref{thm:main-code} as a multiplicative-error VC theorem.}
If the goal is to estimate the Hamming weight of every $c \in C$ within an  \emph{additive} error of $\pm \eps m$, then it well-known that the sample complexity is characterized by the \emph{VC-dimension} of $C$. Recall that the VC-dimension of $C \subseteq \{0,1\}^m$ is largest subset of indices $I \subseteq [m]$ such that every pattern of $0$'s and $1$'s occurs amongst the codewords of $C$ in the positions $I$, i.e., $C_{|I} = \{0,1\}^{|I|}$.\footnote{In the set family view, for $\mathcal{F} \subseteq 2^{U}$, this is the largest subset of the universe $U$ that is \emph{shattered} by the family $\mathcal{F}$.}
A random subset $I \subset [m]$ of size $O(\text{VC-dim}(C)/\eps^2)$~\cite{VC71,AnthonyBartlett1999} is known to  
satisfy $\frac{1}{|I|} (\sum_{i \in I} c_i ) \in \tfrac{\Ham(c)}{m} \pm \eps $ 
for every $c \in C$. Conversely, one needs to sample at least $\Omega_\eps(\text{VC-dim}(C))$ coordinates to obtain $\pm \eps$ additive error.

With this view, Theorem~\ref{thm:main-code} in effect says that the non-redundancy $\NRD(C)$ plays the role of the VC dimension of $C$ when the goal is to estimate the Hamming weight of the set $c \in C$ within a \emph{multiplicative} $(1 \pm \eps)$ error. This is a harder goal, since one has to be pay special attention to low-weight codewords (uniform sampling will clearly fail in general).
Rather surprisingly, when one performs the sampling carefully and re-weights the sampled coordinates appropriately, Theorem~\ref{thm:main-code} says that multiplicative error sparsification \emph{is} in fact possible for any code, albeit with sparsity roughly the non-redundancy of the code.

The NRD of a code can in general be much larger than its VC-dimension. In fact, it is easily seen that $\NRD(C)$ is precisely the VC-dimension of the \emph{union-closure} of $C$ (see Proposition~\ref{prop:nrd-vc}). This connection to union-closed families plays a crucial role in the proof of Theorem~\ref{thm:main-code}. See the technical overview (Section~\ref{subsec:technical-overview}) for more details, including discussion of a significantly simpler $\widetilde{O}_{\eps}(\NRD(C)\log|C|)$-sized sparsifier.

\subsection{CSP Sparsification}\label{subsec:csp-spar}
We now turn to (unweighted\footnote{The weighted case is discussed in Section~\ref{subsec:weighted}.}) CSP sparsification. For a relation $R \subseteq D^r$ of arity $r$ over a finite domain $D$, an instance $\Psi$ of the $\CSP(R)$ problem consists a variable set $X$ and a constraint set $Y \subseteq X^r$. An assignment $\sigma : X \to D$ satisfies a constraint $y=(x_1,x_2,\dots,x_r) \in Y$ if $(\sigma(x_1),\sigma(x_2),\dots,\sigma(x_r)) \in R$. The value $\val(\Psi,\sigma)$ of an assignment $\sigma$ is the number of constraints $y \in Y$ that it satisfies. Similarly, for a weight function $w : Y \to \R_{\ge 0}$, the weighted value $\val(\Psi,w,\sigma)$ is the sum of weights $w(y)$ of all constraints $y \in Y$ that $\sigma$ satisfies.  The goal in CSP sparsification is to output a weight function $w : Y \to \R_{\ge 0}$ of small support, such that for \emph{every} assignment $\sigma : X \to D$, 
\[ (1-\eps) \val(\Psi,\sigma) \le \val(\Psi,w,\sigma) \le (1+\eps) \val(\Psi,\sigma) \ , \]
and minimum such support size is denoted $\SPR(\Psi,\eps)$. 

The $\eps$-sparsifiability of the relation $R \subseteq D^r$, as a function of number of variables, is defined to the maximum (i.e., worst-case) value of $\SPR(\Psi,\eps)$ over all $n$-variables instances $\Psi$ of $\CSP(R)$. We denote it by $\SPR(R,n,\eps)$ and it is the chief object of our study. Note that this is for the unweighted case, see Section~\ref{subsec:weighted} how this result can be (tightly) applied to the weighted case.

Let us note an obvious obstruction to sparsification. Suppose we have an instance $\Psi=(X,Y)$ of $\CSP(R)$ such that for each of its constraints $y \in Y$, there is an assignment $\sigma_y : X \to D$ that satisfies only $y$ and no other constraint. Then clearly $\Psi$ cannot be sparsified at all---dropping any constraint $y$ would make the value of $\sigma_y$ drop from $1$ to $0$. We call such an instance a \emph{non-redundant} instance of $\CSP(\overline{R})$, where $\overline{R} = D^r \setminus R$ (cf., \cite{bessiere2013constraint,bessiere2020Chain}).\footnote{We use $\overline{R}$ rather than $R$ due to the conventions of each community. See Remark~\ref{rem:backwards} for deeper technical reasons.}  As introduced by Bessiere, Carbonnel, and Katsirelos~\cite{bessiere2020Chain}, we denote the size of the largest such non-redundant instance of $\CSP(\overline{R})$ on $n$-variables by $\NRD(\overline{R},n)$ and call it the non-redundancy of $\overline{R}$.
Thus a trivial lower bound on sparsifiability of $\CSP(R)$, regardless of the choice of $\eps \in (0,1)$, is given by 
\begin{align}
\label{eq:nrd-lb}
\SPR(R,n,\eps) \ge \NRD(\overline{R},n) \ , 
\end{align}
and this holds even if the goal is merely to preserve which assignments have nonzero value.

Rather remarkably,  this simplistic lower bound can be met and one can sparsify all the way down to $\NRD(\overline{R}, n)$ times polylogarithmic factors!  In fact, this turns out to be an easy corollary of Theorem~\ref{thm:main-code}. One can associate a canonical code $C_\Psi \subseteq \{0,1\}^Y$ with any $\CSP(R)$ instance $\Psi = (X,Y)$ whose codewords $c_\sigma$  correspond to the assignments $\sigma : X \to D$, and $c_{\sigma,y}$ is $1$ precisely when $\sigma$ satisfies $y$. It is easy to check that CSP sparsification of $\Psi$ reduces to code sparsification of $C_\Psi$, and the non-redundancy of $C$ equals the size of the largest non-redundant sub-instance of $\Psi$ (viewed as an instance of $\CSP(\overline{R})$). Combining Theorem~\ref{thm:main-code} and \eqref{eq:nrd-lb}, we therefore have our main result pinning down the sparsifiability of every CSP up to polylogarithmic factors.

\begin{theorem}\label{thm:main}
For every nonempty $R \subsetneq D^r$ and $\eps \in (0,1)$, we have that
\[
  \NRD(\overline{R}, n) \le \SPR(R, n, \eps) \le O(\NRD(\overline{R}, n)(r\log n)^6/\eps^2).
\]
\end{theorem}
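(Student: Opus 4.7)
The plan is to prove the lower bound essentially from definitions and the upper bound by reducing to Theorem~\ref{thm:main-code}, following the outline already sketched in the paragraphs preceding the statement. For the lower bound \eqref{eq:nrd-lb}, I would take a witness to $\NRD(\overline{R}, n)$: a non-redundant instance $\Psi = (X,Y)$ of $\CSP(\overline{R})$ on $n$ variables with $|Y| = \NRD(\overline{R}, n)$ equipped with assignments $\sigma_y$ ($y \in Y$) such that, under $\sigma_y$, the tuple at constraint $y$ lies in $R$ while every other constraint's tuple lies in $\overline{R}$. Viewed as a $\CSP(R)$ instance, this says $\val(\Psi, \sigma_y) = 1$ with only $y$ being satisfied; hence any weight $w$ with $w(y) = 0$ gives $\val(\Psi, w, \sigma_y) = 0$, violating the $(1\pm \eps)$ guarantee for every $\eps \in (0,1)$. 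So every $\eps$-sparsifier must keep every constraint, giving $\SPR(\Psi, \eps) = |Y|$.

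For the upper bound I would formalize the canonical-code construction. Given an $n$-variable instance $\Psi = (X, Y)$ of $\CSP(R)$, first merge repeated constraints (summing their weights), so that $|Y| \le n^r$. Define the code $C_\Psi \subseteq \{0,1\}^Y$ whose codewords are indexed by assignments $\sigma: X \to D$ via $c_{\sigma, y} = \one[\sigma \text{ satisfies } y]$. Then the Hamming weight of $c_\sigma$ equals $\val(\Psi, \sigma)$, and for any $w : Y \to \R_{\ge 0}$ we have $\sum_{y} w(y) c_{\sigma,y} = \val(\Psi, w, \sigma)$; so $w$ is an $\eps$-sparsifier for $\Psi$ if and only if it is an $\eps$-sparsifier for $C_\Psi$ in the sense of Definition~\ref{def:SPR-code}.

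The only step needing genuine care is the identification $\NRD(C_\Psi) \le \NRD(\overline{R}, n)$. Unwinding definitions, a witness $I \subseteq Y$ to $\NRD(C_\Psi)$ provides, for each $y \in I$, an assignment $\sigma_y$ satisfying $y$ while violating every $y' \in I \setminus \{y\}$ in $\CSP(R)$. Reinterpreted in $\CSP(\overline{R})$ on the constraint set $I$, the same $\sigma_y$ satisfies exactly $I \setminus \{y\}$, so $(X, I)$ is a non-redundant sub-instance of $\CSP(\overline{R})$ and $|I| \le \NRD(\overline{R}, n)$. Plugging into Theorem~\ref{thm:main-code} then yields
\[
\SPR(\Psi, \eps) = \SPR(C_\Psi, \eps) \le O(\NRD(C_\Psi) (\log |Y|)^6 / \eps^2) \le O(\NRD(\overline{R}, n)(r \log n)^6 / \eps^2),
\]
using $\log |Y| \le r \log n$, and maximizing over $n$-variable instances gives the claim. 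The main obstacle lives entirely inside Theorem~\ref{thm:main-code}, whose proof uses the entropy method inspired by Gilmer's union-closed-sets work; the present argument is a dictionary translation between CSP and code notations, with the single bookkeeping point being that distinct $r$-ary constraints on $n$ variables number at most $n^r$, which is what replaces the $\log m$ factor of Theorem~\ref{thm:main-code} by $O(r \log n)$.
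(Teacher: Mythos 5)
Your proposal is correct and mirrors the paper's own argument: the lower bound is exactly Proposition~\ref{prop:spr-lb}, and the upper bound is the reduction of Proposition~\ref{prop:csp-to-code} (canonical code $C_{R,\Psi}$, equality of sparsity, and $\NRD(C_{R,\Psi}) \le \NRD(\overline{R},n)$) followed by Theorem~\ref{thm:main-code} with $\log m \le r\log n$. The only cosmetic deviation is the remark about merging repeated constraints, which is moot here since the paper defines $Y \subseteq X^r$ as a set (and the instance is unweighted), but this does not affect the argument.
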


\subsection{Weighted CSP Sparsification}\label{subsec:weighted}

The discussion so far has focused on unweighted CSP instances, and we now shift our focus to the weighted case, where each constraint of $Y$ comes with a weight. We also get a tight characterization of weighted CSP sparsifiablity, in terms of a parameter called the \emph{chain length}, which was defined by Lagerkvist and Wahlstr{\"o}m~\cite{lagerkvist2017Kernelization,lagerkvist2020Sparsification} in the context of \emph{CSP kernelization} and later utilized by Bessiere, Carbonnel, and Katsirelos~\cite{bessiere2020Chain} in the context of learning CSPs in a certain query model (see Section~\ref{subsec:related-work} for more details on these connections). As before, the result is obtained in the setting of weighted non-linear codes, with the consequence for weighted CSPs being an easy corollary. We just state the result for codes here (see Section~\ref{sec:spr-weighted} for the full treatment of weighted CSPs).

For weighted sparsification of a code $C \subseteq \{0,1\}^m$, we might have an arbitrary input weighting $\zeta : [m] \to \R_{\ge 0}$ of its coordinates, and we must find a sparsifier $\widetilde{w}: [n] \to \R_{\ge 0}$ of low support that sparsifies $C$ with respect to the weighting $\zeta$, i.e., $\langle \widetilde{w},c\rangle \in (1 \pm \eps) \langle \zeta,c\rangle$. The minimum possible support of sparsifiers over all weightings $\zeta$ is called the weighted $\eps$-sparsity $\wSPR(C,\eps)$.

Now we define chain length. If we line up the codewords of $C$ as rows of an $|C| \times m$ matrix and allow arbitrary column permutations, the chain length of $C$, denoted $\CL(C)$, is the dimension of the largest upper triangular square submatrix with $1$'s on the diagonal.\footnote{In this view $\NRD(C)$ is the dimension of the largest identity submatrix, so clearly $\NRD(C) \le \CL(C)$. The quantity $\CL(C)$ was called \emph{visible rank} in \cite{alrabiahG21} and served as a field independent lower bound on the rank of $C$.}

In our main result for the weighted setting, we pin the sparsifiability of a weighted code to its chain length. 
Note that in the weighted case $\CL(C)$ is also a lower bound.

\begin{theorem}\label{thm:wspr-code-intro}
For all $C \subseteq \{0,1\}^m$ and $\eps \in (0,1)$, we have 
\[
    \CL(C) \le \wSPR(C, \eps) = O(\CL(C)(\log m)^6/\eps^2).
\]
\end{theorem}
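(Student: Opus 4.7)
The plan is to establish both inequalities in the style of Theorem~\ref{thm:main-code}, with a notion of \emph{weighted sensitivity} playing the role that non-redundancy did in the unweighted case.

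For the \textbf{lower bound} $\CL(C)\le\wSPR(C,\eps)$, I fix a maximum chain $s_1,\dots,s_K$ with witnesses $c^{(1)},\dots,c^{(K)}\in C$ satisfying $c^{(k)}_{s_k}=1$ and $c^{(k)}_{s_\ell}=0$ for $\ell>k$, and place geometrically spaced weights along the chain: $\zeta(s_k):=M^k$ for a sufficiently large constant $M=M(\eps)$, and $\zeta(i):=0$ elsewhere (a tiny uniform $\delta>0$ on non-chain coordinates can be added to control any mass a sparsifier might put outside the chain). The upper-triangular structure gives $\zeta\cdot c^{(k)}\in[M^k,2M^k]$. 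For any $\eps$-sparsifier $\widetilde w$, the single constraint $c^{(\ell)}_{s_\ell}=1$ bounds $\widetilde w(s_\ell)\le\widetilde w\cdot c^{(\ell)}\le 2(1+\eps)M^\ell$. If $\widetilde w(s_k)$ vanished for some $k$, then the fact that $c^{(k)}_{s_\ell}=0$ for $\ell>k$ would force
\[
\widetilde w\cdot c^{(k)} \;\le\; \sum_{\ell<k}\widetilde w(s_\ell) \;=\; O(M^{k-1}),
\]
contradicting $\widetilde w\cdot c^{(k)}\ge(1-\eps)M^k$ once $M$ is taken large enough. Hence $|\supp\widetilde w|\ge K$.

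For the \textbf{upper bound} $\wSPR(C,\eps)=O(\CL(C)(\log m)^6/\eps^2)$, I would mimic the sensitivity (importance) sampling framework that (implicitly) underlies Theorem~\ref{thm:main-code}. For each $i\in[m]$ set
\[
\tau_i \;:=\; \frac{\zeta(i)}{\mu_i}, \qquad \mu_i \;:=\; \min_{c\in C:\, c_i=1}\,\zeta\cdot c,
\]
(with $\tau_i:=0$ if no codeword has $c_i=1$), and write $T:=\sum_i\tau_i$. Independent sampling with probabilities $p_i:=\min(1,\,cT^{-1}\tau_i(\log m)^5/\eps^2)$, reweighted by $\zeta(i)/p_i$, should give a valid $\eps$-sparsifier with high probability by exactly the chaining / union-closed entropy argument of Theorem~\ref{thm:main-code}, with expected support $O(T(\log m)^5/\eps^2)$. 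The theorem therefore reduces to the key lemma $T\le\CL(C)$ (up to a possible extra $\log m$ factor that would be harmlessly absorbed into the final $(\log m)^6$).

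To prove this key lemma I plan to use induction on $m$: pick a coordinate $i^*$ lying in some maximum chain of $C$, so that $\CL(C|_{[m]\setminus i^*})=\CL(C)-1$; note that $\tau_{i^*}\le 1$ by definition; and observe that deleting $i^*$ can only shrink $\mu_i$ and hence only increase $\tau_i$ for every $i\ne i^*$. Applying the inductive hypothesis to $(C|_{[m]\setminus i^*},\zeta|_{[m]\setminus i^*})$ gives $\sum_{i\ne i^*}\tau_i\le\CL(C)-1$, and adding $\tau_{i^*}\le 1$ closes the step. The \emph{main obstacle} is the degenerate regime where no single coordinate is "chain-critical" (every $\CL(C|_{[m]\setminus i})$ equals $\CL(C)$), as in $C=\{(1,0,1),(0,1,1),(1,1,0)\}$; in that case the inductive step collapses and the cleanest fix is likely either LP duality on the chain polytope, or a weighted extension of Gilmer's union-closed entropy method already employed in Theorem~\ref{thm:main-code}.
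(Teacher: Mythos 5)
Your lower bound is in the right spirit (geometric weights along a maximum chain), but the stated argument has a real gap. The inequality $\widetilde w\cdot c^{(k)} \le \sum_{\ell<k}\widetilde w(s_\ell)$ is false in general: $c^{(k)}$ may have support on coordinates outside the chain, and the sparsifier $\widetilde w$ can put weight there, so vanishing of $\widetilde w(s_k)$ does not force $\widetilde w\cdot c^{(k)}$ to be small. Adding a tiny $\delta>0$ to non-chain coordinates of $\zeta$ does not directly bound $\widetilde w$ on those coordinates, since the sparsifier constraints are only per-codeword and multiplicative. The fix is to stop trying to force each chain coordinate into the support and instead argue about the sets $S_i := \supp(\widetilde w)\cap\supp(c^{(i)})$ directly: the paper's Lemma~\ref{lem:wspr-cl-lb} shows each $S_i$ must contain a fresh element not in $\bigcup_{j>i}S_j$, because otherwise $\langle\widetilde w,c^{(i)}\rangle$ would be dominated by the (geometrically small) total weights of $c^{(j)}$, $j>i$, yielding a contradiction. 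These fresh elements are automatically distinct, giving $|\supp(\widetilde w)|\ge\ell$ regardless of whether they are chain coordinates.

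Your upper bound is a genuinely different approach, but it is not complete. The paper treats Theorem~\ref{thm:main-code} as a black box: it buckets coordinates by weight scale (indices $i$ with $w(i)\in[m^{3t},m^{3t+3})$), reduces each pair of adjacent buckets to the \emph{unweighted} case via a coordinate-replication trick (Lemma~\ref{lem:rep-wspr}), and bounds the sum of bucket non-redundancies by $2\CL(C)$ by observing that non-redundant sets in buckets separated by $\ge 2$ concatenate into a chain (Claim~\ref{claim:NRD-CL-bound}). Your sensitivity-sampling route, with $\tau_i=\zeta(i)/\mu_i$, would be an attractive alternative, but it has two unresolved gaps: (i) the key lemma $\sum_i\tau_i\le\CL(C)$ is only proved by you in the ``chain-critical coordinate'' case, and you correctly note this can fail (e.g.\ $C=\{110,011,101\}$); and (ii) even granting that lemma, the concentration claim that sampling at rates $\propto\tau_i$ produces a valid $\eps$-sparsifier is asserted by appeal to ``the entropy argument of Theorem~\ref{thm:main-code},'' but that argument (Theorem~\ref{thm:NRD-decomp}, Gilmer's method) is formulated in terms of Hamming weight and non-redundancy of the unweighted code, and it is not clear how to port it to weighted sensitivities. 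Unless both gaps are closed, the upper bound remains unproven; the paper's bucketing reduction sidesteps both by never reasoning about weighted sensitivities at all.
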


The upper bound proceeds by using Theorem~\ref{thm:main-code} as a black-box together with a geometric weight bucketing technique from \cite{khanna2024Characterizations}. The lower bound proceeds by applying an exponential sequence of weights to the indices $i_1, \hdots, i_{\CL{(C)}} \in [m]$ forming a maximal chain. Of note, if for a particular set of weights, the ratio between maximum and minimal weights is $\lambda \ll \exp(\CL(C) / \NRD(C))$, we get a sharper upper bound of $\widetilde{O}_{\eps}(\NRD(C)\log \lambda)$ (see Corollary~\ref{cor:wspr-improved}). 

\subsection{Non-redundancy and Chain Length of Specific Relations}

As shown in our main results, to understand the sparsifiability of various CSPs, we must understand their non-redundancy and chain length. These quantities are readily computed in some simple cases. For example, for the relation $\OR_r := D^r \setminus \{0^r\}$, we have that $\NRD(\OR_r,n) = \CL(\OR_r, n) = \Theta(n^r)$. Indeed $Y = \binom{X}{r}$ is a non-redundant instance because setting all but $r$ variables to $1$ fails to satisfy exactly that $r$-tuple (see \cite{filtser2017Sparsification,carbonnel2022Redundancy,khanna2024Characterizations}).
When $R$ is affine, $\NRD(R,n) = \CL(R, n) = \Theta(n)$, and when $R$ is defined as the zero set of a degree $k$ polynomial, $\NRD(R,n) = \CL(R, n) = O(n^k)$; these follow from simple rank arguments (e.g., \cite{lagerkvist2020Sparsification}). Via Theorem~\ref{thm:main}, these special cases (plus simple gadget reductions) already capture all the previously known upper and lower bounds for CSP sparsification (see Section~\ref{subsec:related-work} for more details on the CSP sparsification literature). 

Furthermore, there are also some non-trivial upper bounds known on NRD and CL in the literature, which we can now import to sparsifiability for free courtesy Theorem~\ref{thm:main}. 
For instance, the so-called Mal'tsev relations, which generalize affine predicates (i.e., cosets) over Abelian groups, have been shown to have $O_D(n)$ non-redundancy and chain length~\cite{lagerkvist2020Sparsification,bessiere2020Chain}, and therefore by Theorem~\ref{thm:main}  their complements have near-linear sparsifiability. Carbonnel~\cite{carbonnel2022Redundancy} showed that if $R$ is an arity $r$ relation that does not contain\footnote{See Theorem~\ref{thm:carbonnel} for a precise definition.} any copy of $\OR_r$, then $\NRD(R, n) \le O(n^{r-\delta_r})$ for $\delta_r = 2^{1-r}$ (the specific bound arises from a classic hypergraph Tur{\'a}n result~\cite{erdos1964extremal}). By Theorem~\ref{thm:main} this immediately implies $\SPR(\overline{R},n,\eps) \le \widetilde{O}_\eps(n^{r-\delta_r})$,  where $\widetilde{O}(\cdot)$ hides polylogarithmic factors in $n$,  yielding an $\Omega(n^r)$ vs $\widetilde{O}(n^{r-\delta_r})$ dichotomy for sparsification of arity $r$ CSPs. (This was known for the Boolean case~\cite{khanna2024Characterizations}; see the related work subsection.)

The non-redundancy (and chain length) of a relation can in general be difficult to estimate.
Thus while in principle Theorem~\ref{thm:main} pins down the sparisifiability of every CSP, for specific relations, it can still be non-trivial to actually determine the asymptotic behavior of its sparsifiability. Our next set of results makes  progress in this direction via novel methods to bound non-redundancy.

Given that the non-redundancy of linear predicates is easy to pin down, we consider a natural family of relations which are very close to being linear. Specifically, let 
$\tLIN_G = \{(x,y,z) \mid x+y+z =0\}$ over an Abelian group $G$, and consider
$\tLIN^*_G = \tLIN_G \setminus \{(0,0,0)\}$. (We pick arity $3$ since the arity $2$ case is already fully resolved~\cite{filtser2017Sparsification,butti2020Sparsificationa}.)
Being defined by a linear equation over an Abelian group, we already know that $\NRD(\tLIN_G,n) = \Theta_G(n)$.
However the non-redundancy of $\tLIN^*_G$ seems challenging to understand. Existing methods in the literature only yield 
 $\NRD(\tLIN^{*}_G, n) \in [\Omega_G(n), O_G(n^2)]$. 

 We introduce a new method for bounding the non-redundancy of predicates like $\tLIN^{*}_G$ by connecting them to the theory of \emph{matching vector (MV) families} \cite{yekhanin20083query, dvir2011Matching} that have been used in the construction of locally decodable codes. Exploiting this connection, we construct a non-redundant instance to establish that $\NRD(\tLIN_G,n) \ge \Omega(n^{1.5})$ for all Abelian groups of order $\ge 3$.  Adapting ideas from the analysis of MV families together with some combinatorial ideas, we also prove an upper bound $\NRD(\tLIN_{\Z/p\Z},n) = \widetilde{O}_p(n^{2-\eps_p})$ for $\eps_p = \tfrac{2}{2p-1}$ and $p$ prime.  Specializing for $p=3$, we have the following result, which also gives the first examples of relations whose non-redundancy and sparsifiability have a non-integral exponent. 

 \begin{theorem}
     \label{thm:3linG-intro}
 We have 
\begin{align*}
\NRD(\tLIN^*_{\Z/3\Z}, n) &\in [\Omega(n^{1.5}), \widetilde{O}(n^{1.6})], \ \ \ \text{ and }& 
\SPR(\overline{\tLIN^*_{\Z/3\Z}}, n, \eps) &\in [\Omega(n^{1.5}), \widetilde{O}(n^{1.6}/\eps^2)].
\end{align*}
\end{theorem}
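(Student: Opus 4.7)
The sparsifiability bounds reduce to the corresponding $\NRD$ bounds. The lower bound $\SPR(\overline{\tLIN^*_{\Z/3\Z}},n,\eps) \ge \Omega(n^{1.5})$ follows by applying \eqref{eq:nrd-lb} with $R := \overline{\tLIN^*_{\Z/3\Z}}$, so that $\overline R = \tLIN^*_{\Z/3\Z}$ and the right-hand side is exactly $\NRD(\tLIN^*_{\Z/3\Z},n)$. The upper bound $\widetilde O(n^{1.6}/\eps^2)$ is then an immediate instantiation of Theorem~\ref{thm:main}, since $\tLIN^*_{\Z/3\Z}$ has constant arity and domain, so the $(r\log n)^6$ factor is absorbed into the $\widetilde O$. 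Hence the task is to prove the two $\NRD$ bounds.

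For the $\Omega(n^{1.5})$ lower bound, the plan is to construct an explicit non-redundant $\CSP(\tLIN^*_{\Z/3\Z})$ instance inspired by matching vector (MV) families. Work with variable set $X = V \times [m]$, where $V$ is a small $\F_3$-vector space and $m \approx n^{1/2}$, so $n = |V| \cdot m$. For each slab index $i \in [m]$ and each pair $(u,v)$ drawn from a carefully chosen family over $V$, add the triple $\bigl((u,i),(v,i),(-u-v,i)\bigr)$, whose three entries automatically sum to $0$ in $\F_3$. The witness for the triple indexed by $(i,u,v)$ is supported on the $i$-th slab and is built from a character-like $\F_3$-valued function on $V$; the MV-style property (distinct pairs have nonzero inner products) forces every other triple in the instance either to violate the linear equation or to evaluate to $(0,0,0)$, both of which violate $\tLIN^*$. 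Counting the triples, balancing the size of the MV-family against $m$, yields $\Omega(n^{1.5})$ constraints.

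For the $\widetilde O(n^{1.6})$ upper bound, start with an arbitrary non-redundant instance $\Psi = (X,Y)$ of $\CSP(\tLIN^*_{\Z/3\Z})$, fix a witness $\sigma_y : X \to \F_3$ for each $y \in Y$, and adapt the Dvir--Gopalan--Yekhanin analysis of MV-families. Associate to each $y$ a low-degree polynomial $P_y$ in the $3|X|$ indicator variables encoding $\sigma$ that evaluates to nonzero on $\sigma_y$ (because $y$ is satisfied) and to zero on every $\sigma_{y'}$ with $y' \neq y$ (because $y$ is not satisfied by $\sigma_{y'}$), thereby converting non-redundancy into a diagonal non-vanishing pattern. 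Bounding the $\F_3$-dimension of the span of these polynomials --- the degrees controlled using the ternary arity of $\tLIN$ and the prime $p = 3$ --- yields $|Y| \le \widetilde O(n^{2-2/(2p-1)}) = \widetilde O(n^{1.6})$. The main obstacle is the ``not-all-zero'' escape hatch of $\tLIN^*$: $\sigma_y$ may fail to satisfy $y'$ not by the algebraic reason $\sigma_y(y'_1)+\sigma_y(y'_2)+\sigma_y(y'_3) \ne 0$ but by the non-algebraic reason that all three values are zero. Handling this requires a preliminary combinatorial partition of $Y$ into $\polylog(n)$ buckets, within each of which the purely algebraic DGY-style rank argument applies; ensuring that this partition costs only polylogarithmic factors (rather than polynomial) is the delicate step.
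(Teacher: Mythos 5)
Your reduction of the $\SPR$ bounds to the $\NRD$ bounds via \eqref{eq:nrd-lb} and Theorem~\ref{thm:main} is correct and matches the paper. The two $\NRD$ bounds, however, have genuine gaps.

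\textbf{Lower bound.} Your slab construction cannot give a non-redundant instance of $\CSP(\tLIN^*_{\Z/3\Z})$. Non-redundancy requires the witness $\sigma_y$ to place \emph{every other} clause $y'\neq y$ inside $\tLIN^*$ and only $y$ itself outside; your description (``forces every other triple \dots to violate $\tLIN^*$'') has this exactly backwards. More seriously, a witness ``supported on the $i$-th slab'' maps every triple in every other slab to $(0,0,0)\notin\tLIN^*$, which immediately destroys non-redundancy; even after fixing that with a constant nonzero value on the other slabs, the witness becomes a linear functional in the $V$-coordinate alone and would also zero out the copies of the target pair $(u_0,v_0)$ in every other slab. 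And the count does not close: with $|V|=s$ and $m=n/s$ slabs, the number of pairs per slab is bounded by an MV-type constraint in $\F_3^{\Theta(\log s)}$, hence $\polylog(s)$, giving $\widetilde O(n)$ clauses, not $\Omega(n^{1.5})$. The paper's construction is a single \emph{global} tripartite object (variables labeled $u_{i_1,i_2}$, $v_{i_2,i_3}$, $w_{i_1,i_3}$ in $\F_3^{d}$ with $d=\Theta(\sqrt n)$; clauses are the $t^3$ triangles), routed through the conditional-non-redundancy / $G$-ensemble framework (Propositions~\ref{prop:nrd-tri} and \ref{prop:ensemble-nrd}) rather than slabs.

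\textbf{Upper bound.} The polynomial rank argument you sketch uses polynomials in the $3|X|=3n$ indicator variables; a degree-$(2p-2)=4$ polynomial space over $\F_3$ in $3n$ variables has dimension $\Theta(n^4)$, so this alone is weaker than the trivial $n^3$ bound. The paper's key step is to first pass to the $G$-ensemble: variables become vectors $v_x$ in $\F_3^d$ where $d$ is the dimension of the affine solution space, and the witnesses become linear functionals on $\Span(V)$. Only then does the polynomial method (Lemma~\ref{lem:dimU-upper-bound}) give $|E|\le\binom{\dim\Span(V)+4}{4}$, which is useful precisely when $\dim\Span(V)\le n^{0.4}$. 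The complementary case $\dim\Span(V)>n^{0.4}$ is handled by an entirely separate combinatorial degeneracy argument (Lemma~\ref{lem:ind-ub}) showing that a linearly independent subset meets at most $O(|V|\log|V|)$ triples, yielding a low-degree vertex. Your proposal contains neither the dimension reduction nor this second case. The ``partition $Y$ into $\polylog(n)$ buckets'' idea for the $(0,0,0)$ escape hatch is also not the paper's mechanism: the paper absorbs the escape hatch into a single additive $\NRD(\tLIN_G,n)=O(n)$ term via Proposition~\ref{prop:nrd-tri}, then works purely with $\tLIN_G$-conditional non-redundancy. (Your parenthetical ``because $y$ is satisfied [by $\sigma_y$]'' is again backwards; $\sigma_y$ is the unique witness that \emph{fails} to satisfy $y$.)
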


We now transition to discussing the broader context of our work in the literature.

\subsection{Technical Overview}\label{subsec:technical-overview}

We next describe the primary techniques we use to prove Theorem~\ref{thm:main-code} and Theorem~\ref{thm:3linG-intro}. 

\paragraph{A Simple Sparsifier.} To begin, we discuss a warm-up version of Theorem~\ref{thm:main-code} which proves a weaker upper bound of $\SPR(C, \eps) \le \widetilde{O}_{\eps}(\NRD(C) \cdot \log |C|)$ (see Theorem~\ref{thm:nrd-spr-warmup}), which for CSPs corresponds to an extra factor of the number of variables $n$. The key technical insight (Lemma~\ref{lem:NRD-supp-bound}) is that for all $d \in [m]$, the set of codewords of $C$ with Hamming weight at most $d$ (denoted by $C_{\le d}$) has total support size at most $d \cdot \NRD(C)$. This can proved inductively by noticing that dropping a suitable non-redundant set of coordinates decreases the Hamming weight of every codeword of $C$ by at least one.

With this lemma, we can recursively construct a sparsifier as follows, similar to the divide-and-conquer framework in \cite{khanna2024Code,khanna2024Characterizations} for linear codes. Pick $d \approx \widetilde{\Theta}_{\eps}(\log |C|)$ and let $I \subseteq [m]$ be the support of $C_{\le d}$. Every $i \in I$ is given weight $1$ in our sparsifier. For the rest of $[m]$, let $J \subseteq [m] \setminus I$ be a subsample where each $i \in [m] \setminus I$ is kept independently with probability $1/3$. Using a standard Chernoff bound, we can show that with positive\footnote{We only need positive probability since we are focused on existence. This can easily be amplified to $1-1/m^{\Omega(1)}$ probability by making $d$ a factor of $\log m$ bigger. In applications to CSPs, the main algorithmic bottleneck is (approximately) finding $I$, which appears to be similar in difficulty to an open problem in CSP kernelization (see Section~\ref{subsec:open-questions}).} probability the following holds for all $c \in C$:
\[
    3\Ham(c|_{J}) + \Ham(c|_{I}) \in \left[1-\frac{\eps}{2\log_2 m}, 1+\frac{\eps}{2\log_2 m}\right] \cdot \Ham(c).
\]
By induction, we can find a $\widetilde{O}_{\eps'}(\NRD(C') \cdot \log |C'|)$ $\eps'$-sparsifier for $C' := C|_{J}$ with $\eps' := (1 - 1/\log_2 m) \eps$. Scaling this sparsifier by $3$ and adding weights for $I$ gives us an $\eps$-sparsifier of $C$.

\paragraph{Entropy-based Sparsification.} The key inefficiency of the $\widetilde{O}_{\eps}(\NRD(C) \cdot \log |C|)$ bound is that the use of Lemma~\ref{lem:NRD-supp-bound} is too conservative. For the purposes of this overview, assume that all codewords of $C$ have the same Hamming weight $d \approx \NRD(C)$ as that is is the most representative case. Naively, Lemma~\ref{lem:NRD-supp-bound} says we should set aside $d^2$ coordinates of $[m]$ to ``sparsify'' all codewords of weight $d$. However, we can give a heuristic argument that far fewer than $d^2$ of these potential coordinates contain useful information for our sparsifier.

Assume without loss of generality that the support of $C$ lies in $[d^2]$. For each $i \in [d^2]$, let $p_i$ be the probability that a codeword $c \in C$ selected uniformly at random has $c_i = 1$. Since each codeword of $C$ has Hamming weight $d$, we have that $p_1 + \cdots + p_{d^2} = d$. Thus, the average value of $p_i$ is $1/d$. Consider the case in which each $p_i = O(1/d)$. In particular, no coordinate is distinguishing itself as a ``must'' to add to the sparsifier. A priori, the size of $C$ may be $\exp(\widetilde{\Omega}(d))$, so we cannot immediately use Chernoff bounds to analyze a random subsampling of the coordinates.

To get around this issue, we need to prove a much stronger upper bound on the size of $C$, similar to Bencz\'{u}r and Karger's cut-counting bound~\cite{DBLP:conf/stoc/BenczurK96} and its adaptation to linear codes~\cite{khanna2024Code,khanna2024Characterizations}. However, we use an entirely new method for proving such bounds based on the entropy method Gilmer~\cite{gilmer2022constant} developed to prove the union-closed sets conjecture up to a constant factor. In our context, pick $t = \widetilde{\Theta}(d)$ and sample uniformly and independently $t$ codewords $c_1, \hdots, c_t \in C$. Let $c$ be the bitwise OR of these $t$ codewords, and let $\cD$ be the distribution of $c$ over $\{0,1\}^{d^2}$ (recall that the weight $d$ codewords are supported on $d^2$ coordinates). Since each $p_i = O(1/d)$, by adapting Gilmer's method (or more precisely, a refinement due to Sawin~\cite{sawin2023improved}), we can show the entropy of $\cD$ is at least $\widetilde{\Theta}(t) = \widetilde{\Theta}(d)$ times the entropy of the uniform distribution over $C$ (i.e., $\log_2|C|$)--a similar inequality appears in \cite{wakhare2024two}.

To apply this fact, observe that each sample of $\cD$ lies in the ``$\OR$-closure'' of $C$ (denoted by $\spor(C)$). As such, the entropy of $\cD$ is at most $\log \card{\spor(C)}$, which by the Sauer-Shelah-Peres lemma is at most (up to log factors) the VC dimension of $\spor(C)$. It is easily seen that the VC dimension of $\spor(C)$ equals the non-redundancy of $C$~\cite{bessiere2020Chain}. Therefore, we have proved that $\widetilde{\Theta}(t) \cdot \log_2(C) \le \widetilde{O}(\NRD(C))$. Since $t \approx d \approx \NRD(C)$, $C$ is actually at most quasipolynomial in size! Thus we can now use a Chernoff bound to prove that $C$ can be subsampled to $\widetilde{O}_{\eps}(d)$ coordinates while approximately preserving all Hamming weights.

Recall this discussion was purely about the ``uniform'' case $p_i = O(1/d)$. In general, we apply minimax theorem to prove the following ``skewed'' versus ``sparse'' dichotomy (see Proposition~\ref{prop:cover-sparse}): for every code $C$ and parameter choice $\theta \ge 1$ there is either a probability distribution $\cP$ over $C$ for which each coordinate equals $1$ with probability at most $1/\theta$ (i.e., $\cP$ is ``$\theta$-sparse''); or, there is a probability distribution $\cQ$ over the coordinates of $C$ such that for every (nonzero) $c \in C$, we have that $\cQ$'s measure of $\supp(c)$ is at least $1/\theta$ (i.e., $\cQ$ is a ``$\theta$-cover.'') For a suitable choice of $\theta$, we repeatedly apply Proposition~\ref{prop:cover-sparse} to recursively build the sparsifier: in the $\theta$-sparse case, we use the entropy method to prove that a ``small'' number of codewords of $C$ can be removed to put us in the $\theta$-cover case (see Lemma~\ref{lem:sparse-removal}); and in the $\theta$-cover case, we sample from the $\theta$-cover to get a coordinate to add to our sparsifier. This procedure culminates in showing that we can set aside $\widetilde{O}_{\eps}(\NRD(C))$ coordinates to have weight $1$ in our sparsifier with the remainder of the code being sufficiently sparse that subsampling can be used (Theorem~\ref{thm:NRD-decomp}). Note that the statement of Theorem~\ref{thm:NRD-decomp} resembles the analogous decompositions for linear codes \cite{khanna2024Code,khanna2024Characterizations}. However, their method found all the coordinates to set aside in ``one pass,'' whereas we iteratively understand the dense and sparse structure of our non-linear code. With Theorem~\ref{thm:NRD-decomp} in hand, we construct the sparsifier with a recursive argument similar to that of Theorem~\ref{thm:nrd-spr-warmup}.

As mentioned earlier, extended these ideas to weighted sparsification (Theorem~\ref{thm:wspr-code-intro}) is relatively straightforward. We adapt a weight-binning argument of \cite{khanna2024Characterizations} by essentially computing an (unweighted) sparsifier for each group of coordinates that is similar in weight (within $\poly(m)$).  We then analyze the aggregated size of these sparsifiers by comparing the sum of the non-redundancies of the groups of coordinates to the chain length of the code.

\paragraph{Connections to Matching Vector Families.} We now switch gears to briefly discussing the key ideas behind Theorem~\ref{thm:3linG-intro}. Let $G := \Z/3\Z$ and recall that $\tLIN_G = \{(x,y,z) \mid x+y+z =0\}$ and $\tLIN^*_G = \tLIN_G \setminus \{(0,0,0)\}$. It is well-known that since $\tLIN_G$ is an affine predicate, we have that $\NRD(\tLIN_G, n) = \Theta(n)$, which is much smaller than our bound on $\NRD(\tLIN^*_G, n)$. As such, we prove that to understand the asymptotics of $\NRD(\tLIN^*_G, n)$ it suffices to look at specially-structured non-redundant instances.

Recall that an instance $\Psi := (X,Y)$ of $\CSP(\tLIN^*_G)$ is non redundant if for every clause $y \in Y$ there is an assignment $\sigma_y$ which satisfies every clause of $\Psi$ except $y$. We show that with at most an additive $\Theta(n)$ change in size, we can assume that $\sigma_y$ maps $y$ to $(0,0,0)$. In other words, each $\sigma_y$ is a satisfying assignment to $\Psi$ when viewed as an instance of $\CSP(\tLIN_G)$ (see Proposition~\ref{prop:nrd-tri}). This idea of ``conditional'' non-redundancy abstracts and generalizes an approach from \cite{bessiere2020Chain}. 

Since the set of solutions to an instance of $\CSP(\tLIN_G)$ form a vector space (of some dimension, say $d$) over $\F_3$, we can think of each variable $x \in X$ of $\Psi$ as a vector $v_x \in \F_3^d$ and the assignments as linear maps on the vectors. Because we are studying satisfying assignment to $\CSP(\tLIN_G)$, these vectors are highly structured: for each $y := (x_1,x_2,x_3) \in Y$, we have that $v_{x_1} + v_{x_2} + v_{x_3} = 0$. Further, $\sigma_y$ can be viewed as a linear map taking each of $v_{x_1}, v_{x_2}, v_{x_3}$ to $0$, while mapping at least one vector in every other triple in $Y$ to a nonzero value. We call this family of vectors together with these assignments a \emph{$G$-ensemble} (Definition~\ref{def:G-ensemble}), and note that it bears a strong resemblance to matching vector families.

In particular, we adapt techniques used by Dvir, Gopalan, and Yekhanin~\cite{dvir2011Matching} for constraining the size of matching vector families to give nontrivial upper and lower bounds on the size of $G$-ensembles. For the lower bound (Theorem~\ref{thm:nrd-3-lin-lb}), we directly construct a non-redundant instance with $\Omega(n^{1.5})$ clauses. The proof is self-contained and elementary.

The upper bound (Theorem~\ref{thm:ub-1.6}) is slightly more technical. We break the proof into cases based on whether the embedding dimension $d$ of the vectors is small ($d = \widetilde{O}(n^{0.4})$) or large ($d = \widetilde{\Omega}(n^{0.4})$). For small $d$, we adapt the polynomial method used in \cite{dvir2011Matching} to prove there can be at most $O(d^4) = \widetilde{O}(n^{1.6})$ non-redundant clauses. On the other hand, when $d$ is large, we ignore the assignments $\sigma_y$ and use a careful induction (Lemma~\ref{lem:ind-ub}) to show that the geometry of the vectors imply that some $x \in X$ is a member of at most $\widetilde{O}(n/d) = \widetilde{O}(n^{0.6})$ clauses, thereby leading to a bound of at most $\widetilde{O}(n^{1.6})$ clauses total. Closing the gap between $\Omega(n^{1.5})$ and $\widetilde{O}(n^{1.6})$ for $\NRD(\tLIN_{\Z/3\Z}^*, n)$ is a tantalizing open question.

\subsection{Related Work}\label{subsec:related-work}

Our results and techniques have connections to many areas including computational complexity theory, extremal combinatorics, coding theory, and learning theory. We now give a general overview of these connections.

\paragraph{CSP Sparsification.} Since we already discussed the history of CSP sparsification, we give a comprehensive list of known results about CSP sparsification (up to polylog factors). 
\begin{itemize}
\item The case of binary CSPs ($r=2$) is fully classified. In particular, for every finite domain $D$ and $R \subseteq D^2$, we either have that $\SPR(R, n, \eps) = O(n/\eps^2)$ or $\SPR(R, n, \eps) = \Omega(n^2)$~\cite{butti2020Sparsificationa}. However, the sparsification routine is only efficient in the Boolean case~\cite{filtser2017Sparsification}. Of note, $\SPR(R, n, \eps) = \Omega(n^2)$ if and only if there exist $D_1, D_2 \subseteq D$ of size exactly $2$ such that $|R \cap (D_1 \times D_2)| = 1$ (informally $R$ has an ``induced copy'' of $\AND_2$).
\item For $r \ge 3$, much less is known. Kogan and Krauthgamer~\cite{kogan2015} contributed near-linear hypergraph cut sparsifiers (i.e., the predicate is $\NAE_r := \{0,1\}^r \setminus \{0^r, 1^r\}$). Since then, there have been multiple improvements in efficiently constructing hypergraph sparsifiers/sketches (e.g., \cite{chen2020linear,KKTY2021,khanna2024optimal}).
\item The breakthroughs of Khanna, Putterman, and Sudan~\cite{khanna2024Code,khanna2024Characterizations} construct near-linear sparsifiers for any predicate which can defined by a system of linear (in)equations (possibly over a higher domain). For example $\NAE_r = \{x \in \{0,1\}^r : x_1 + \cdots + x_r \not\equiv 0 \mod r\}$. Of note, their first paper~\cite{khanna2024Code} only proved the result over finite fields (and was nonalgorithmic), whereas their second paper~\cite{khanna2024Characterizations} extended the result to all Abelian groups and was computationally efficient.
\item The framework of Khanna, Putterman, and Sudan~\cite{khanna2024Characterizations} produced numerous corollaries. In particular, if a predicate can be expressed as the nonzero set of a degree $k$ polynomial, then it has a sparsifier of size $\widetilde{O}_{\eps}(n^k)$. Furthermore, they show if a predicate $R$ can express\footnote{More specifically, we say that $R \subseteq \{0,1\}^r$ can express
$\AND_k$ if there exits a map $z : [r] \to \{0,1,x_1, \hdots, x_k, \overline{x_1}, \hdots, \overline{x_k}\}$ such that $R(z(1), \hdots, z(r)) = \AND_k(x_1, \hdots, x_k)$. We discuss a more general framework of gadget reductions in Section~\ref{subsec:gadget}.} 
$\AND_k := \{1^k\}$, then $\SPR(R, n, \eps) = \Omega(n^k)$. As a consequence, they also classify all ternary Boolean predicates ($r=3$) as well as which Boolean predicates of arity $r$ cannot be sparsified below $\Omega(n^r)$ (just $\AND_r$ and its bit flips), while also constructing a sparsifier of size $\widetilde{O}_{\eps}(n^{r-1})$ in the other cases.
\item It appears that lower bounds with a nontrivial dependence on $\eps$ are only known for cut sparsifiers (and thus hypergraph cut sparsifiers via a simple gadget reduction). See \cite{andoni2016Sketching,MR3909627} as well as Section~\ref{sec:conclusion} for further discussion. 
\end{itemize}

\paragraph{CSP Kernelization.} Another question similar in spirit to CSP sparsification is that of \emph{CSP kernelization.}\footnote{More commonly, CSP kernelization is referred to as CSP sparsification (e.g., \cite{dell2014Satisfiability,lagerkvist2020Sparsification}). However, we refer to this line of work by the former name to reduce ambiguity. This similarity in name has been noted before in the literature (e.g., \cite{butti2020Sparsificationa}), but we appear to be the first work to notice both variants of ``CSP sparsification'' can be analyzed with similar techniques.} The basic question is to, given an instance $\Psi$ of $\CSP(R)$, \emph{efficiently} find as small of an instance $\Psi'$ of $\CSP(R)$ as possible (not necessarily a subinstance) such that $\Psi$ and $\Psi'$ are either both satisfiable or both unsatisfiable. This particular question can be attributed to Dell and van Melkebeek~\cite{dell2014Satisfiability}, who were particularly inspired by Impagliazzo, Paturi, and Zane's sparsification lemma \cite{IPZ2001} and Harnik and Naor's compression framework~\cite{HarnikNaor2010}. See the literature review in \cite{dell2014Satisfiability} for further motivations.  

At first, the problem seems rather unrelated to CSP sparsification. For example, if $\CSP(R)$ is polynomial-time tractable, then there trivially exists a kernel of size $O(1)$. When $\CSP(R)$ is NP-hard, however, the size of the smallest possible kernelization seems to much more closely track with the non-redundancy of $R$. In particular, Dell and van Melkebeek~\cite{dell2014Satisfiability}, proved that assuming $\mathsf{coNP} \nsubseteq \mathsf{NP/poly}$, the problem $k$-SAT cannot be kernelized below $\Omega(n^{k-\eps})$ for any constant $\eps > 0$, which is close to $k$-SAT's non-redundancy of $\Theta(n^k)$.

Furthermore, most upper bounds on the kernelization of NP-hard predicates follow from upper bounds on non-redundancy (see \cite{carbonnel2022Redundancy}). For example the works of Chen, Jansen, and Pieterse~\cite{chen2020BestCase} as well as Lagerkvist and Wahlstr{\"o}m~\cite{lagerkvist2017Kernelization,lagerkvist2020Sparsification} develop various kernelization methods that happen to just be ``efficient'' non-redundancy upper bounds. For example, these works show that if the predicate $R$ can be expressed as the zero set of a polynomial of degree $k$, then there exist a kernel of size $O(n^k)$. This kernel happens to preserve every solution to $R$, so it is also a non-redundancy upper bound. Using techniques like these, they are able to prove a number of results similar to the state-of-the-art in CSP sparsification, such as a complete classification of ternary Boolean predicates and a $O(n^{r-1})$ vs $\Omega(n^{r-\eps})$ Boolean dichotomy~\cite{chen2020BestCase}. See \cite{jansen2019Optimal,jansen2020Optimal,jansen2020Crossing,takhanov2023algebraic,beukersExtending} and citations therein for related work.

We seek to emphasize that any \emph{efficient} CSP sparsification algorithm for $\CSP(R)$ is by design a kernelization algorithm for $\CSP(\overline{R})$ (since all codewords with weight $0$ are preserved). As such, making Theorem~\ref{thm:main} efficient would require explicitly proving that every CSP can be kernelized to (approximately) its non-redundancy, which is a significant open question in the CSP kernelization community (see~\cite{carbonnel2022Redundancy}). See Section~\ref{subsec:open-questions} and Section~\ref{sec:conclusion} for further discussion.

\paragraph{The Union-closed Sets Conjecture.} A family $\cF$ of subsets of $[n]$ is \emph{union-closed} if $A, B \in \cF$ imply that $A \cup B \in \cF$. In 1979, Frankl~\cite{frankl1995extremal} conjectured that there always exists $i \in [n]$ which appears in at least half of the sets of $\cF$. For decades, progress on the conjecture was minimal, with the best general result being that some $i \in [n]$ appears in $\Omega(1/\log_2|\cF|)$ of the sets \cite{knill1994graph,wojcik1999Unionclosed,gilmer2022constant}. However, in 2022, Gilmer~\cite{gilmer2022constant} shocked the combinatorics community by using an entropy-based approach to prove that some $i \in [n]$ appears in $1/100$ of the sets. This immediately led to a large number of follow-up works refining Gilmer's entropy method~\cite{alweiss2022improved,chase2022approximate,pebody2022Extensiona,sawin2023improved,cambie2022better}. In particular, we can now replace `$1/100$' with `$0.382\hdots$', leaving Frankl's conjecture (technically) still open.

For our application to CSP sparsification, the entropy method used by Gilmer (and its subsequent refinements by many other reseachers) is the key idea needed to show that non-redundancy is essentially the optimal size for a CSP sparsifier. In particular, the improvement from $1/\log_2|\cF|$ to $\Omega(1)$ is precisely the same ``gain'' we utilize to go from a very simple $\widetilde{O}_{\eps}(\NRD(C)\cdot \log_2|C|)$ sparsifier (see Section~\ref{sec:simple}) to our $\widetilde{O}_{\eps}(\NRD(C))$ sparsifier. See the technical overview (Section~\ref{subsec:technical-overview}) for more details. To the best of our knowledge, our work is the first application of Gilmer's entropy method to sparsification.\footnote{Gilmer's breakthough is cited in the literature review of \cite{DBLP:conf/innovations/0001D0NS24}, but the property-testing question they study on union-closed families has no technical connection to Gilmer's entropy method. See also \cite{wakhare2024two} for applications of the entropy method to learning theory and statistical physics.}

\paragraph{Matching Vector Families and Locally Decodable Codes.} In coding theory, locally decodable codes (LDCs) are a class of codes which allow for jthe reliable recovery of any message symbol based on a small sample of codeword symbols, even in the presence of a constant fraction of errors. A particularly interesting familiy of constructions of LDCs has arisen out of a theory of \emph{matching vector codes}~\cite{yekhanin20083query} and follow-ups ~\cite{raghavendra2007note,gopalan2009note,efremenko20093query,dvir2011Matching}.
See \cite{dvir2011Matching} for a literature survey. Simply stated, a matching vector (MV) family over a (finite) ring $\cR$ is a pair of lists of vectors $u_1, \hdots, u_k, v_1, \hdots, v_k \in \cR^d$ such that the inner products $\langle u_i, v_j\rangle$ are nonzero\footnote{Or, more generally the inner products lie in some restricted subset of $\cR$.} if and only $i \neq j$. Informally, the $u_i$'s play a role in the encoding of the $i$'th message symbol, with the matching vector $v_i$ helping with its local decoding. Given a choice of $\cR$ and $d$, the primary question of interest is to find the maximal possible value of $k$. This ``spin off'' question about LDCs has become a topic of interest in its own right~\cite{dvir2011Matching,yekhanin2012Locally,gopalan2012Locality,bhowmick2013New}.

In this work, we demonstrate a novel application of matching vector families to the study of non-redundancy and thus (by Theorem~\ref{thm:main}) sparsification. In particular, we construct an explicit family of predicates such that their non-redundant instances can be viewed as a generalized MV family. We then use techniques developed for MV families to given nontrivial bounds on the non-redundancy of the predicates. See Section~\ref{sec:nrd-mv} and the technical overview (Section~\ref{subsec:technical-overview}) for more details.

\paragraph{Extremal Combinatorics.} Computing the non-redundancy of a predicate can be viewed as a problem in extremal combinatorics known as a \emph{hypergraph Tur{\'a}n problem}. In particular, for an instance of a CSP to be non-redundant, every instance induced by a subset of the variables must also be non-redundant. In particular, if $\cF$ is a family of hypergraphs which can never appear in non-redundant instances of $\CSP(R)$, then $\NRD(R, n) \le \ex_r(n, \cF)$, where the hypergraph Tur{\'a}n number $\ex_r(n, \cF)$ is the size of the largest $r$-uniform hypergraph on $n$ vertices without any $F \in \cF$ as a subgraph. This observation was first made explicit by Carbonnel~\cite{carbonnel2022Redundancy} although the technique was also used in earlier work \cite{bessiere2020Chain}. As far as we are aware, ours is the first work to observe that these insights can also benefit the study of CSP sparsification. 

The literature on hypergraph Tur{\'a}n numbers is quite rich. For instance, Keevash~\cite{keevash2011Hypergraph} surveys the vast body of work on the ``non-degenerate'' case in which $\ex_r(n, \cF) = \Omega_r(n^r)$. However, for our applications, we are mostly interested in the ``denegerate'' case in which $\ex_r(n, \cF) = O(n^{c})$ for some $c \in [1, r)$. The works \cite{bessiere2020Chain,carbonnel2022Redundancy} apply some of the most well-known works in this setting~\cite{erdos1964extremal,sos1973existence,MR519318} to get some nontrivial results such as classifying precisely which predicates $R$ have $\NRD(R, n) = \Theta(n^r)$, extending Chen, Jansen, and Pieterse's result for the Boolean case~\cite{chen2020BestCase}. See Sections~\ref{subsec:size}, \ref{subsec:simplest}, and \ref{subsec:msd-cl} for more details on specific applications.

\paragraph{Query Complexity and Learning Theory.} Rather surprisingly, the definition of non-redundancy appears to have come out of the artificial intelligence community~\cite{bessiere2020Chain}. In particular, a rather broad and well-studied question (e.g., \cite{freuder2002Suggestion,paulin2008Automatic,lallouet2010Learning,bessiere2012non,bessiere2013constraint,bessiere2020Chain}) is that of \emph{constraint acquisition}: how can an agent learn the constraints defining an instance of a constraint satisfaction problem?

A model specifically relevant to our work is the \emph{partial membership queries} model studied by Bessiere,  Carbonnel, and Katsirelos~\cite{bessiere2020Chain}. In this model, the domain $D$, the constraint type $R$ (or types), and the set of variables $X$ are known but the constraints are hidden. For each query, the agent picks some subset of variables $X' \subseteq X$ as well as a partial assignment $\sigma : X' \to D$. The response to the query is `YES' if $\sigma$ satisfies every constraint induced by $X'$, and `NO' otherwise. The goal is to construct an instance of $\CSP(R)$ with the same solution set as the hidden CSP. For every CSP predicate $R$, they prove that the query complexity of an instance of $\CSP(R)$ on $n$ variables is bounded between $\Omega(\NRD(R,n))$ and $O(\CL(R,n)\cdot \log n)$. Notably, the lower bound is proved by showing that the VC dimension of the query complexity problem equals $\NRD(R,n)$.\footnote{This observation is directly used in proving our main result, see Section~\ref{subsec:vc-prelim}.}

\subsection{Subsequent Work}

Since the initial version of our paper was posted, numerous follow-up works have emerged which expand on the many directions covered in this paper.

\subsubsection{Sparsifier Improvements}

\paragraph{Improvements to the Unweighted Sparsifier.} Very recently, \cite{SRY26} gave an improved bound for Theorem~\ref{thm:main-code}. They coin the term ``moonflowers'' to refer to non-redundant subsets of a code $C \subseteq \{0,1\}^m$ and adapt ideas from the study of sunflowers (e.g., \cite{ALWZ21}) to study the maximum density of a code which avoids particular moonflowers. In particular, their main sparsifier has a bound of the form.
\[
    \SPR(C, \eps) \le O\left(\frac{\NRD(C) \cdot \log m}{\eps^2}\right) \cdot (\log (\NRD(C) / \eps))^{O(1)}(\log \log m)^{O(1)}
\]
The overall method of constructing the sparsifier is rather similar to our work in the sense that it crucially uses the entropy method of Gilmer as a key step in the sparsification process, although also have a number of technical refinements of our techniques.  By suitably adapting our Example~\ref{example:chain} they also show a lower bound of $\Omega(\NRD(C) \cdot \log m / \eps)$ for an adversarial choice of $C$. Despite the improvements, the impact on Theorem~\ref{thm:main} is negligible as the parameters $\log m$ and $\log \NRD(C)$ are related by a factor of $r$ (the arity of the underlying CSP) which we consider to be a constant. Furthermore, the lower bound example does not arise from a CSP, so no limitation beyond $\NRD(C)$ itself is known for Theorem~\ref{thm:main}.

\paragraph{Improvements to the Weighted Sparsifier.} Another recent work \cite{BGP26b} improved Theorem~\ref{thm:wspr-code-intro} on optimal weighted code sparsifiers. By adapting the ``contraction''-based techniques for linear code sparsifiers~\cite{khanna2024Code,khanna2024Characterizations} in combination with some of the techniques developed in Section~\ref{sec:spr-weighted}, the authors of \cite{BGP26b} showed that Theorem~\ref{thm:wspr-code-intro} can be improved to
\[
    \wSPR(C, \eps) \le O\left(\frac{\CL(C) \cdot \log^2(\CL(S)/\eps) \cdot (\log \log (\CL(S)/\eps))^2}{\eps^2}\right).
\]
Note that the dependence on $m$ is removed entirely. That said, since we do not understand the relationship between the non-redundancy and chain length of CSPs, these techniques are unable to give any improvements to Theorem~\ref{thm:main}. Nor can they recover the interpolation between Theorem~\ref{thm:main-code} and Theorem~\ref{thm:wspr-code-intro} in Corollary~\ref{cor:wspr-improved}.

\subsubsection{Further Study of Non-redundancy}

Building off of the contributions this paper makes to the study of CSP non-redundancy, multiple follow-up works \cite{brakensiek2025richness,BGP26a,SV26b} have further expanded our knowledge of non-redundancy.

\paragraph{New Techniques.} In particular, the lower-bound technique of Theorem~\ref{thm:3linG-intro} was abstracted in \cite{brakensiek2025richness} in a logical framework known as generalized ``functionally guarded primitive positive'' (fgpp) definitions. They also introduce using the Kruskal-Katona and Shearer's inequalities to give novel upper bounds on CSP non-redundancy. As a result, we now know for any rational $\beta \ge 1$, there exists an explicit CSP predicate $R$ such that $\NRD(R, n) = \Theta_{R}(n^{\beta})$. Our notion of conditional non-redundancy (Definition~\ref{def:cond-nrd}) was critical in proving this fact. 

In addition, \cite{brakensiek2025richness} made a number of other contributions to the study of CSP non-redundancy. They show that many extremal hypergraph problems (such as understanding the maximum density of graphs with a given girth) can be encoded into suitable CSP non-redundancy problems. They also introduce a novel tool known as \emph{Catalan polymorphisms} to study the existence of Mal'tsev embeddings (see~\cite{lagerkvist2020Sparsification,bessiere2020Chain}).

\paragraph{Further Classifications.} We mentioned earlier that the non-redundancy of every Boolean predicate of arity at most $3$ is known~\cite{chen2020BestCase,khanna2024Characterizations}. The recent work \cite{BGP26a} extends this result to arity $4$ by classifying the non-redundancy of every arity-4 Boolean predicate except (up to isomorphism) one. Of interest, they identify a predicate $R \subseteq \{0,1\}^4$ for which
\[
    \NRD(R, n) \in \left[\frac{n^3}{2^{O(\sqrt{\log n})}}, \frac{n^3}{2^{\Omega(\log^*(n))}}\right],
\]
the first example of non-redundancy growth which provably does not follow a simple power law (e.g., $\Theta(n^{\beta})$).

The work of \cite{SV26b} investigates the non-redundancy of symmetric Boolean predicate of arity 4 and 5. Notably, they identify two interesting symmetric predicates of arity 5 for which state-of-the-art techniques leave a polynomial gap between the upper and lower bounds.

\subsubsection{Broader Connections}

\paragraph{Spectral Sparsifiers.} Although Theorem~\ref{thm:main} resolves the optimal size of CSP sparsifiers, there is still much ongoing work to understand \emph{spectral} variants of CSP sparsification, analogous to the theory of spectral graph and hypergraph cut sparsifiers~\cite{spielman2011Spectral,DBLP:journals/siamcomp/BatsonSS12,lee2018constructing,chen2020linear,KKTY2021,kenneth2023cut,khanna2024optimal,yoshida2026}.

In one such direction, \cite{KPS25spectral} define the notion of a spectral sparsifier for any CSP predicate. Their work is mostly focused on the study of spectral sparsifiers for linear equations (i.e., ``spectral code sparsifiers'') where they show such sparsifiers exist of size $\widetilde{O}_{\eps}(n^2)$. They leave reducing the sparsifier size to near-linear as an open question.

In another direction, \cite{BKLM26} motivated by the theory of Cayley graph sparsifiers (see also \cite{khanna2024Code,HLMPZ26}) developed a theory of sparsification of a family $\mathcal A$ of positive semi-definite (PSD) matrices. In particular, they introduce a quantity $N^*(\alpha, \mathcal A)$ which measures how tightly any matrix $A \in \mathcal A$ can be dominated by sums of other matrices (up to a factor of $\alpha$). Their main result is that up to a logarithmic factor in the dimension of the matrices, $N^*(\alpha, \mathcal A)$ controls the sparsifiability of the family.

In the setting of our code sparsification problem, their result can recover our warm-up bound Theorem~\ref{thm:nrd-spr-warmup} by representing a code $C \subseteq \{0,1\}^m$ as a family of $m$ diagonal matrices which are each $|C|$-dimensional. However, their union bound is too coarse to recover Theorem~\ref{thm:main}.

\paragraph{Streaming Algorithms.} Very recently, \cite{SV26a} showed that (up to logarithmic factors), non-redundancy of a CSP also governs the \emph{streaming} decidability of CSPs. That is, fix a CSP predicate $R$ and consider a streaming of clauses corresponding to an instance of $\CSP(R)$ on $n$ variables. The goal is to minimize the amount of space needed to determine if the stream corresponds to a satisfiable instance. The reason that non-redundancy is connected is that if the streaming algorithm just keeps track of a non-redundant set of clauses, any other (dropped) clause can be logically deduced from these clauses, so decidability is preserved. Using a communication complexity lower bound, they show this analysis is essentially tight. Such questions are also closely related to the study of approximating min-CSPs. As such, any future improvements to the study of streaming decidability will also impact the study of CSP non-redundancy and CSP sparsifiability.

\paragraph{Average-case Sparsification.} From Theorem~\ref{thm:main}, we know that every instance $\CSP(R)$ has a sparsifier of size approximately its own non-redundancy. A previous version of the paper asked whether the sparsifier size can be substantially improved in the \emph{average case} setting. That is, if the CSP predicate $R$ is fixed but the instance is sampled randomly, can the $\NRD(\overline{R}, n)$ barrier be broken? This was generally resolved in the affirmative by \cite{BGP25} where they compute optimal sparsifier sizes for random instances of all CSPs and \emph{valued} CSPs. See Section~\ref{sec:conclusion} for further discussion.

\subsection{Open Questions}\label{subsec:open-questions}

We conclude the introduction with a few directions of further study.  See Section~\ref{sec:open-nrd} and Section~\ref{sec:conclusion} for a more thorough discussion of directions for future exploration.

\begin{itemize}
\itemsep=0ex
\vspace{-1ex}
\item \textbf{Making Theorem~\ref{thm:main} efficient.} Note that the underlying construction for Theorem~\ref{thm:main-code}, if made algorithmic, runs in polynomial time with respect to the size of the code, yielding an $\exp(O(n))$-time algorithm\footnote{This is already nontrivial, as a naive guess-and-check algorithm would require $\exp(\widetilde{O}(\NRD(\overline{R},n)))$ time. } for Theorem~\ref{thm:main}. The primary barrier in constructing our sparsifier in $\poly(n)$ time is the fact that an efficient sparsifier is also a kernelization algorithm, but kernelizing every CSP instance to its non-redundancy is a significant open question in the kernelization community \cite{carbonnel2022Redundancy}.
\item \textbf{Computing $\NRD(R,n)$.} For a general predicate $R \subseteq D^r$, there is no simple (even conjectured) expression for $\NRD(R,n)$. In fact, even determining when $\NRD(R,n) = \Theta(n)$ is a major open question (e.g., \cite{bessiere2020Chain,carbonnel2022Redundancy}). In Section~\ref{sec:open-nrd}, we explore a number of predicates from the various parts of the literature whose status is unresolved, including a predicate we categorize as the ``simplest unresolved predicate.'' 
\item \textbf{Non-redundancy versus Chain Length.} Recall we show that unweighted sparsification is closely tied to non-redundancy while weighted sparsification is closely tied to chain length. For non-linear codes, $\NRD$ and $\CL$ can be very different (e.g., Example~\ref{example:chain}), but the relationship for CSPs is unknown~\cite{bessiere2020Chain,carbonnel2022Redundancy}. In particular, it seems quite possible that there exists a CSP predicate $R$ for which $\wSPR(R, n, \eps) / \SPR(R, n, \eps) = n^{\Omega(1)}.$ 
\item \textbf{Spectral CSP Sparsification.} As mentioned, \cite{KPS25spectral} recently defined a notion of spectral CSP sparsifiers. Proving an analogue of Theoerm~\ref{thm:main} in this setting would be a rather interesting result.
\end{itemize}

\subsection*{Organization}

In Section~\ref{sec:prelim}, we prove some basic facts about non-redundancy, sparsification and their relationship. In Section~\ref{sec:simple}, we give a straightforward proof that $\SPR(C, \eps) = \widetilde{O}_{\eps}(\NRD(C) \log_2|C|)$. In Section~\ref{sec:entropy}, we prove Theorem~\ref{thm:main} by connecting CSP sparsification to non-redundancy via Gilmer's entropy method.  In Section~\ref{sec:spr-weighted}, we extend Theorem~\ref{thm:main} to weighted instances. In Section~\ref{sec:spr-app}, we discuss the immediate applications of Theorem~\ref{thm:main} (and its weighted variant) based on what is known about non-redundancy and chain length in the literature. In Section~\ref{sec:nrd-mv}, we bound the non-redundancy of a family of predicates via methods related to matching vector families. In Section~\ref{sec:open-nrd}, we give examples of CSP predicates in the literature whose non-redundancy is unresolved.  In Section~\ref{sec:conclusion}, we wrap up with other directions of exploration. We emphasize that the material in Sections~\ref{sec:simple}, \ref{sec:entropy}, and \ref{sec:spr-weighted} are largely independent of the material in Sections~\ref{sec:spr-app}, \ref{sec:nrd-mv}, \ref{sec:open-nrd}.

\subsection*{Acknowledgments}
We thank Libor Barto, Dmitry Zhuk, Madhu Sudan, and Louie Putterman for valuable conversations. We thank Arpon Basu, Pravesh Kothari, Cosmas Kravaris, and Louie Putterman for helping to correct some errors in the original manuscript. We also thank anonymous STOC reviewers for suggestions which improved the quality of this manuscript. This research was supported in part by a Simons Investigator award, NSF grant CCF-2211972, and NSF grant DMS-2503280.

\section{Preliminaries} \label{sec:prelim}

In this section, we give precise definitions of the non-redundancy and sparsification of CSPs and non-linear codes and discuss how they related to each other. In particular, we show that the main sparsification result for codes (Theorem~\ref{thm:main-code}) implies the main sparsification result for CSPs (Theorem~\ref{thm:main}). We conclude with a few tail inequalities useful for sparsification.

\subsection{CSPs}

Given a finite set $D$ called the \emph{domain} and integer $r \in \N := \{1,2,\hdots\}$ called the \emph{arity}, we call any subset $R \subseteq D^r$ a \emph{predicate} (interchangeably called a \emph{relation}). We say that $R$ is \emph{non-trivial} if $R \neq \emptyset$ and $R \neq D^r$. We define an \emph{instance} $\Psi$ of $\CSP(R)$ to be a pair $(X, Y)$, where $X$ is a (finite) set of variables and $Y \subseteq X^r$ is the set of clauses. Typically, we let $n := |X|$ parameterize the number of variables and $m := |Y|$ parameterize the number of clauses. Note that $m \le n^r$, so $\log m \le r\log n$. We now clarify a couple points about our model of CSPs.
\begin{itemize}
\itemsep=0ex
\item In general, a CSP may consist of a family of multiple predicates. In the context of finding solutions to a CSPs, this detail may significantly change the complexity, but for sparsification and related concepts (like non-redundancy), each predicate can be sparsified separately (e.g.,~\cite{khanna2024Characterizations}) resulting in a multiplicative error of at most the number of predicates (e.g., \cite{bessiere2020Chain,carbonnel2022Redundancy}). Since we think of predicates as being of constant size in this paper, this has no effect on our results.
\item The choice $Y \subseteq X^r$ explicitly allows for repeated variables in clauses.\footnote{Various papers in the literature make different (and sometimes ambiguous) choices regarding allowing repeated variables. As we argue, as long as $r$ is a constant, the answer can vary by at most a constant factor. Thus, we apply results from the literature (e.g., in Section~\ref{sec:spr-app}) without concern to their specific convention.} All our results also apply to the restricted family of instances without repeated variables. To see why, let $X' := X \times [r]$ and map each $(y_1, \hdots, y_r) \in Y$ to $((y_1,1), \hdots, (y_r,r)) \in Y' \subseteq X'^r$. One can verify that any sparsifier of $(X', Y')$ is also a sparsifier of $(X, Y)$, and the number of variables only differs by a constant factor.
\end{itemize}

We say that an \emph{assignment} is a map $\sigma : X \to D$. A clause $y := (y_1, \hdots, y_r) \in Y$ is \emph{$R$-satisfied} by $\sigma$ if and only if $\sigma(y) := (\sigma(y_1), \hdots, \sigma(x_r)) \in R$. We say that $\sigma$ is an $R$-satisfying assignment to $\Psi$ (or just $Y$) if every clause is satisfied. We let $\sat(R,\Psi)$ (or $\sat(R,Y)$ if $X$ is fixed) denote the set of satisfying assignments to $\Psi$. We now formally define \emph{non-redundancy}.

\begin{definition}[Non-redundancy of an instance, adapted from~\cite{bessiere2020Chain}]
We say that an instance $\Psi := (X,Y)$ of $\CSP(R)$ is \emph{non-redundant} if for all $Y' \subseteq Y$ of size $|Y|-1$, we have that $\sat(R,Y') \neq \sat(R,Y)$. In other words, for all $y \in Y$, there exists an assignment $\sigma_y$ which $R$-satisfies $Y \setminus \{y\}$ but not $y$. We define\footnote{We purposely deviate from the definition of the non-redundancy of an instance in  \cite{bessiere2020Chain}. In their work, they define the non-redundancy of an instance to be the number of $Y' \subseteq Y$ of size $|Y|-1$ for which $\sat(R,Y') \neq \sat(R,Y)$. For non-redundant instances our definitions coincide (so, Definition~\ref{def:NRD-CSP} always gives the same value), but for ``highly redundant'' instances our non-redundancy is much greater. We do this to ensure that our definition of non-redundancy is monotone with respect to adding clauses.} the non-redundancy of an arbitrary instance $(X,Y)$ to be the size of the largest $Y' \subseteq Y$ such that $(X,Y')$ is non-redundant.
\end{definition}

\begin{definition}[Non-redundancy of a predicate~\cite{bessiere2020Chain}]\label{def:NRD-CSP}
Given a relation $R \subseteq D^r$ and $n \in \N$, we define $\NRD(R, n)$ to be the maximum number of clauses of any non-redundant instance of $\CSP(R)$ on $n$ variables. 
\end{definition}

We further define the notion of a CSP sparsifier. Given an instance $\Psi := (X,Y)$ of $\CSP(R)$, we define the \emph{$R$-weight} of an assignment $\sigma : X \to D$, denoted by $\wt(R,\Psi,\sigma)$, to be the number of clauses of $Y$ $R$-satisfied by $\sigma$. Given a weight function $w : Y \to \R_{\ge 0}$, we define the \emph{$(w,R)$-weight} of $\sigma$ to be 
\[
    \wt(R, \Psi, w, \sigma) := \sum_{y \in Y} w(y) \cdot \one[\sigma(y) \in R].
\]
We can now define a CSP sparsifier.

\begin{definition}[CSP sparsifiers~(e.g., Definition~3.11~\cite{khanna2024Code})]\label{def:SPR-CSP}
Given $R \subseteq D^r$, we say that $w : Y \to \R_{\ge 0}$ is an \emph{$\eps$-sparsifier} of an instance $\Psi$ of $\CSP(R)$ if for all assignments $\sigma : X \to D$, we have that
\[
    (1-\eps)\wt(R, \Psi,\sigma) \le \wt(R, \Psi, w, \sigma) \le (1+\eps)\wt(R,\Psi,\sigma).
\]
We define the \emph{$\eps$-sparsity} of $\Psi$ to be the minimum support (i.e., $|w^{-1}(\R_{>0})|$) of any $\eps$-sparsifier. We define $\SPR(R, n, \eps)$ to be the maximum $\eps$-sparsity of any instance of $\CSP(R)$ on $n$ elements.
\end{definition}
\begin{remark}
One may also define sparsifiers for \emph{weighted instances} in essentially the same manner. See Section~\ref{sec:spr-weighted} for a discussion of this variant. The main takeaway is that we also prove a tight characterization in terms of another combinatorial quantity called \emph{chain length}.
\end{remark}

We now prove a lower bound for sparsity in terms of non-redundancy. Observe that in an instance $(X,Y)$ of $\CSP(R)$ if some $y \in Y$ has a corresponding assignment $\sigma_y$ which only satisfies $y$, then $y$ must have nonzero weight in any sparsifier (or else the weight of $\sigma_y$ will fall from $1$ to $0$). In particular, this implies that non-redundant instances of the \emph{complement} of $R$ cannot be sparsified at all. 

\begin{proposition}\label{prop:spr-lb}
For any $R \subseteq D^r$, let $\overline{R} := D^r \setminus R$. For all $n \in \N$ and $\eps \in (0,1)$, we have that
\begin{align}
    \SPR(R, n, \eps) \ge \NRD(\overline{R}, n).\label{eq:SPR-ge-NRD}
\end{align}
\end{proposition}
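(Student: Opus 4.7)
The plan is a direct unwinding of the two definitions, exploiting the fact that the same pair $(X,Y)$ can be viewed as an instance of $\CSP(R)$ or of $\CSP(\overline{R})$, and a ``witness'' assignment for non-redundancy under $\overline{R}$ becomes an ``isolating'' assignment under $R$.

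First, I would let $\Psi = (X, Y)$ be a non-redundant instance of $\CSP(\overline{R})$ with $|X| = n$ and $|Y| = \NRD(\overline{R}, n)$, which exists by the definition of $\NRD$. By non-redundancy, for each $y \in Y$ there is an assignment $\sigma_y : X \to D$ that $\overline{R}$-satisfies every clause in $Y \setminus \{y\}$ but fails to $\overline{R}$-satisfy $y$. Concretely, $\sigma_y(y') \in \overline{R}$ for every $y' \in Y \setminus \{y\}$, and $\sigma_y(y) \in D^r \setminus \overline{R} = R$.

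Second, I would reinterpret $\Psi$ as an instance of $\CSP(R)$ on the same $n$ variables (this is allowed since an instance is just a pair $(X,Y) \subseteq X \times X^r$, independent of which predicate one evaluates against). Under this reinterpretation, the previous sentence reads: $\sigma_y$ $R$-satisfies exactly the single clause $y$, so $\wt(R, \Psi, \sigma_y) = 1$. Now suppose $w : Y \to \R_{\ge 0}$ is any $\eps$-sparsifier of $\Psi$ in the $\CSP(R)$ sense. Applying Definition~\ref{def:SPR-CSP} to the assignment $\sigma_y$ gives
\[
    (1-\eps) \le \wt(R, \Psi, w, \sigma_y) = \sum_{y' \in Y} w(y') \cdot \one[\sigma_y(y') \in R] = w(y) \le (1+\eps),
\]
where the middle equality uses that $\sigma_y$ $R$-satisfies only $y$. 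Since $\eps \in (0,1)$, this forces $w(y) > 0$.

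Third, this argument applies to every $y \in Y$, so the support of $w$ contains all of $Y$, and hence has size at least $|Y| = \NRD(\overline{R}, n)$. Taking the minimum over all $\eps$-sparsifiers of $\Psi$ gives $\SPR(R, n, \eps) \ge \NRD(\overline{R}, n)$, which is \eqref{eq:SPR-ge-NRD}. There is no real obstacle here; the only conceptual point is the ``duality'' swap between $R$ and $\overline{R}$ encoded in Remark~\ref{rem:backwards}, namely that an assignment which \emph{fails} exactly one $\overline{R}$-constraint is an assignment which \emph{passes} exactly one $R$-constraint, so it pins down a positive weight on that constraint in any multiplicative-approximation sparsifier.
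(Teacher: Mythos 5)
Your proof is correct and matches the paper's argument essentially verbatim: take a non-redundant instance of $\CSP(\overline{R})$, observe that each witnessing assignment $\sigma_y$ $R$-satisfies exactly the clause $y$, and conclude that every $\eps$-sparsifier must assign $w(y)$ in $(1-\eps,1+\eps)$ and hence have full support. The only cosmetic difference is that you fix a maximum-size non-redundant instance up front, whereas the paper argues for an arbitrary one and maximizes at the end.
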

\begin{proof}
Consider any non-redundant instance $\Psi := (X,Y)$ of $\CSP(\overline{R})$ with $n = |X|$. For each $y \in Y$, let $\sigma_y$ be a solution in $\sat(\overline{R}, Y \setminus \{y\}) \setminus \sat(\overline{R}, Y)$.

View $\Psi$ also as an instance of $\CSP(R)$ and observe that for all $y \in Y$, $\sigma_y$ has a $R$-weight of $1$. Further, for any $w : Y \to \R_{\ge 0}$, $\sigma_y$ has a $(R,w)$-weight of $w(y)$. Thus, any $\eps$-sparsifier $w : Y \to \R_{\ge 0}$ of $\Psi$ must have that $w(y) \in (1-\eps,1+\eps)$ for all $y \in Y$. Therefore, $\supp(w) = Y$.

This proves that $\SPR(R, n, \eps) \ge |Y|$ for every non-redundant instance $(X,Y)$ of $\CSP(\overline{R})$ on $n$ varaibles. Taking the maximum over all choices of $Y$ proves (\ref{eq:SPR-ge-NRD}).
\end{proof}

Theorem~\ref{thm:main} establishes that this simple lower bound is essentially optimal.

\begin{remark}
\label{rem:backwards}
Based on the complement in (\ref{eq:SPR-ge-NRD}), one might think that $\NRD$ is defined ``backwards.'' However, as we shall see in Section~\ref{subsec:gadget}, the definition of $\NRD$ allows for a rich (universal) algebraic framework, which should greatly aid with the classification of $\NRD$ for all predicates.
\end{remark}

\subsection{Non-linear Codes}

An important abstraction for studying CSP sparsification is what we call \emph{non-linear code sparsification}. We define a (non-linear) Boolean \emph{code} to be an arbitrary $C \subseteq \{0,1\}^m$. We say that $C$ is \emph{non-trivial} if $C \neq \emptyset, \{0^m\}$. For any $c \in C$, we define its \emph{Hamming weight}, to be $\Ham(c) := c_1 + \cdots + c_m$. We define the support $c \in \{0,1\}^m$, denoted by $\supp(c) \subseteq [m]$ to be the set of nonzero coordinates. We further define $\supp(C) = \bigcup_{c\in C} \supp(c)$.

Given $S \subseteq [m]$ and $c \in \{0,1\}^m$, we define $c|_{S} \in \{0,1\}^S$ to be the \emph{list} $(c_i : i \in S)$. Likewise, we define \emph{punctured} code $C|_S := \{c|_S : c \in C\} \subseteq \{0,1\}^S$. Next, we formally define the non-redundancy of a code.

\begin{definition}\label{def:NRD-code}
A subset $I \subseteq [m]$ is \emph{non-redundant} for a code $C \subseteq \{0,1\}^m$ if for each $i \in I$, there exists $c \in C$ such that for all $i' \in I$, $c_{i'} = 1$ if and only if $i = i'$. We define the \emph{non-redundancy} of $C$, denoted by $\NRD(C)$, to be the size of the largest non-redundant set that is non-redundant for $C$.
\end{definition}

In other words, if we line up the codewords of $C$ as rows of an $|C| \times m$ matrix, $\NRD(C)$ is the dimension of the largest identity submatrix within that matrix. When $C$ is a linear code, $\NRD(C)$ equals the dimension of $C$. For non-trivial $C$, we have that $1 \le \NRD(C) \le m$.

We now formally define a Hamming-weight sparsifier of a code. Given a \emph{weight function}, $w : [m] \to \R_{\ge 0}$, we define the $w$-weight of a codeword $c \in C \subseteq \{0,1\}^m$ to be
\[
    \langle w, c\rangle := \sum_{i=1}^m w(i) c_i.
\]

\begin{definition}\label{def:SPR-code}
For $\eps \in (0,1)$, we say that $w : [m] \to \R_{\ge 0}$ is a $\eps$-sparsifier of $C \subseteq \{0,1\}^m$ if for all $c \in C$, we have that
\[
    (1-\eps)\Ham(c) \le \langle w,c\rangle \le (1+\eps)\Ham(c).
\]
We define the $\eps$-sparsity of $C$, denoted by $\SPR(C,\eps)$, to be the minimum support size (i.e., number of nonzero coordinates) of any $\eps$-sparsifier of $C$.
\end{definition}

\subsection{Theorem~\ref{thm:main-code} Implies Theorem~\ref{thm:main}}\label{subsec:implies}

We now connect the non-redundancy notions of CSPs and codes together. Given a relation $R \subseteq D^r$ and any instance $\Psi := (X,Y)$ of $\CSP(R)$, we can define a \emph{satisfiability code}, denoted by $C_{R,\Psi} \subseteq \{0,1\}^Y$ as follows. For each assignment $\sigma : X \to D$, we define a codeword $c_{\sigma}$ such that for each $y \in Y$, we have
\begin{align}
    c_{\sigma,y} := {\bf 1}[\sigma(y) \in R].\label{eq:build-code}
\end{align}
We then define $C_{R,\Psi} := \{c_{\sigma} \mid \sigma : X \to D\}$. We now show that our definitions of sparsity and non-redundancy of codes corresponds to that of CSPs.

\begin{proposition}\label{prop:csp-to-code}
Given $R \subseteq D^r$ and $n \in \N$, we have that
\begin{align}
\SPR(R, n, \eps) &= \max_{\substack{\Psi\emph{ instance of }\CSP(R)\\\emph{on $n$ variables}}}\SPR(C_{R,\Psi}, \eps)\label{eq:equiv-SPR}\\
\NRD(\overline{R}, n) &= \max_{\substack{\Psi\emph{ instance of }\CSP(R)\\\emph{on $n$ variables}}}\NRD(C_{R,\Psi})\label{eq:equiv-NRD}
\end{align}
\end{proposition}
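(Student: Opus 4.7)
The plan is to unpack definitions on both sides and observe that the correspondence $\sigma \mapsto c_\sigma$, together with the identification of the coordinate set of $C_{R,\Psi}$ with the clause set $Y$, converts every CSP-level object into an equivalent code-level object. Fixing an instance $\Psi=(X,Y)$ of $\CSP(R)$, the key identity to record is that by the definition in \eqref{eq:build-code}, $\wt(R,\Psi,\sigma) = \sum_{y\in Y}\one[\sigma(y)\in R] = \Ham(c_\sigma)$, and for any $w:Y\to\R_{\ge 0}$, $\wt(R,\Psi,w,\sigma) = \sum_{y\in Y} w(y)\cdot \one[\sigma(y)\in R] = \langle w, c_\sigma\rangle$. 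Since $C_{R,\Psi}$ is \emph{defined} as $\{c_\sigma : \sigma:X\to D\}$, quantifying over all assignments $\sigma$ is the same as quantifying over all codewords $c\in C_{R,\Psi}$.

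From these two identities, Definition~\ref{def:SPR-CSP} applied to $\Psi$ and Definition~\ref{def:SPR-code} applied to $C_{R,\Psi}$ impose literally the same inequalities on $w$, so the set of $\eps$-sparsifiers coincides on the nose. In particular, the minimum-support $\eps$-sparsifiers agree, giving $\SPR(\Psi,\eps) = \SPR(C_{R,\Psi},\eps)$. Taking the maximum over all $\Psi$ on $n$ variables on both sides then yields \eqref{eq:equiv-SPR}.

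For \eqref{eq:equiv-NRD}, the plan is again to translate the definition of a non-redundant sub-instance of $\CSP(\overline{R})$ directly into the code language. For any $I\subseteq Y$ and assignment $\sigma$, membership in $\sat(\overline{R},I\setminus\{y\})\setminus\sat(\overline{R},I)$ is equivalent to the joint conditions $\sigma(y')\notin R$ for all $y'\in I\setminus\{y\}$ and $\sigma(y)\in R$, which by \eqref{eq:build-code} is exactly $c_{\sigma,y'}=0$ for $y'\in I\setminus\{y\}$ and $c_{\sigma,y}=1$. This is precisely the witness required by Definition~\ref{def:NRD-code} for $I$ to be non-redundant for $C_{R,\Psi}$. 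Hence the set of $I\subseteq Y$ for which $(X,I)$ is a non-redundant instance of $\CSP(\overline{R})$ equals the collection of subsets of the coordinate set that are non-redundant for $C_{R,\Psi}$. Taking the largest such $I$ gives $\NRD(C_{R,\Psi})$ on the code side and, by the conventions established in the preceding subsection, the non-redundancy of $(X,Y)$ viewed as a $\CSP(\overline{R})$-instance on the CSP side. Maximizing over $\Psi$ on $n$ variables yields \eqref{eq:equiv-NRD}.

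The proof is essentially a definition-chase, so there is no real obstacle; the only point that warrants a line of care is remembering that $\overline{R}=D^r\setminus R$ is what converts the ``$\sigma$ does not satisfy $y$'' condition on the $\CSP(\overline{R})$ side into the ``$c_{\sigma,y}=1$'' condition on the code side, which is precisely the reason the complement appears in Theorem~\ref{thm:main} and in the lower bound of Proposition~\ref{prop:spr-lb}.
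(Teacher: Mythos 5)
Your proposal is correct and follows essentially the same definition-chasing argument as the paper: the weight identity $\wt(R,\Psi,w,\sigma)=\langle w, c_\sigma\rangle$ gives \eqref{eq:equiv-SPR} immediately, and the translation of the $\CSP(\overline{R})$ non-redundancy witnesses into the code non-redundancy witnesses (via the complementation flip) gives \eqref{eq:equiv-NRD}. The only cosmetic difference is that you phrase \eqref{eq:equiv-NRD} as an exact per-instance equality between $\NRD(C_{R,\Psi})$ and the non-redundancy of $(X,Y)$ viewed over $\overline{R}$, whereas the paper argues the two inequalities separately, but the content is the same.
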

\begin{proof}
We first prove (\ref{eq:equiv-SPR}). Observe that for any instance $\Psi := (X,Y)$ of $\CSP(R)$ and any weight function $w : Y \to \R_{\ge 0}$ we have that the $(R,w)$-weight of $\sigma$ is equal to the $w$-weight of $c_{\sigma}$ due to (\ref{eq:build-code}). Thus, $\SPR(C_{R,\Psi}, \eps)$ is precisely the $\eps$-sparsity of $\Psi$. Therefore, $\SPR(R,n,\eps)$ is the maximum such $\eps$-sparsity, we have proved (\ref{eq:equiv-SPR}).

Finally, we prove (\ref{eq:equiv-NRD}). Fix an instance $\Psi := (X,Y)$ of $\CSP(R)$. Let $Z \subseteq Y$ be a maximum-sized non-redundant index set of $C_{R,\Psi}$. In particular, for each $z \in Z$, there is an assignment $\sigma_z : X \to D$ such that $\sigma_z$ $R$-satisfies $z$ but no other $z' \in Z \setminus \{z\}$. Thus, $\sigma_z$ $\overline{R}$-satisfies $Z \setminus \{z\}$ but not $z$. Thus, $(X,Z)$ is a non-redundant instance of $\CSP(R)$, so $\NRD(\overline{R}, n) \ge \NRD(C_{R,\Psi})$ for all $\Psi$.

Likewise, if we let $(X,Z)$ be a maximum-sized non-redundant instance of $\CSP(\overline{R})$ with witnessing assignments $\sigma_z$ for $z \in Z$, we can view $\Psi := (X,Z)$ as an instance of $\CSP(\overline{R})$. Thus, $\{c_{\sigma_z} : z \in Z\}$ witness that $Z$ is a non-redundant index of $C_{R,\Psi}$. Therefore, $\NRD(C_{R,\Psi}) = |Z|$. This proves (\ref{eq:equiv-NRD}).
\end{proof}

Thus, we can now prove that Theorem~\ref{thm:main-code} implies Theorem~\ref{thm:main}.
\begin{proposition}
Theorem~\ref{thm:main-code} implies Theorem~\ref{thm:main}.
\end{proposition}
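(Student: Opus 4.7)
The plan is to combine the two directions directly using the tools already assembled in Section~\ref{sec:prelim}. For the lower bound $\NRD(\overline{R}, n) \le \SPR(R, n, \eps)$, I would simply invoke Proposition~\ref{prop:spr-lb}, which has already been established.

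For the upper bound, the plan is to fix an arbitrary instance $\Psi = (X, Y)$ of $\CSP(R)$ on $n$ variables that achieves the maximum $\eps$-sparsity, and to pass to its satisfiability code $C_{R,\Psi} \subseteq \{0,1\}^Y$. By Proposition~\ref{prop:csp-to-code}, equation (\ref{eq:equiv-SPR}), the $\eps$-sparsity of $\Psi$ equals $\SPR(C_{R,\Psi}, \eps)$, and by equation (\ref{eq:equiv-NRD}) we have $\NRD(C_{R,\Psi}) \le \NRD(\overline{R}, n)$. Then I would apply Theorem~\ref{thm:main-code} with $m = |Y|$ to obtain
\[
  \SPR(C_{R,\Psi}, \eps) = O\!\left(\NRD(C_{R,\Psi}) \, (\log m)^6 / \eps^2\right).
\]

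To convert $\log m$ into $r \log n$, I would use the trivial bound $m = |Y| \le n^r$ (as noted at the start of Section~\ref{sec:prelim}), which gives $\log m \le r \log n$. Substituting this and the non-redundancy comparison yields
\[
  \SPR(R, n, \eps) \le O\!\left(\NRD(\overline{R}, n) \, (r\log n)^6 / \eps^2\right),
\]
as required. There is essentially no obstacle here: the proof is a direct composition of the already-proved reductions in Proposition~\ref{prop:csp-to-code} and Proposition~\ref{prop:spr-lb} with the hypothesized Theorem~\ref{thm:main-code}, together with the elementary bound $m \le n^r$. The only subtlety worth flagging is verifying that the maximization in (\ref{eq:equiv-SPR}) may indeed be restricted to $n$-variable instances (so that $m \le n^r$ holds uniformly), which is immediate from the definition of $\SPR(R, n, \eps)$.
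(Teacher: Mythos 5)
Your proposal is correct and follows exactly the same route as the paper's own proof: apply Proposition~\ref{prop:spr-lb} for the lower bound, then pass to $C_{R,\Psi}$, invoke Theorem~\ref{thm:main-code} with $m \le n^r$, and maximize over instances via Proposition~\ref{prop:csp-to-code}. Nothing to add.
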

\begin{proof}
The lower bound follows from Proposition~\ref{prop:spr-lb}. For the upper bound, consider any instance $\Psi$ of $\CSP(R)$ on $n$ variables and $m$ clauses. Since $m \le n^r$, by Theorem~\ref{thm:main-code}, we have that
\[
    \SPR(C_{R,\Psi}, \eps) \le O(\NRD(C_{R,\Psi}(r\log n)^6/\eps^2).
\]
Take the maximum of both sides over all instances $\Psi$ on $n$ variables. Then, Proposition~\ref{prop:csp-to-code} implies Theorem~\ref{thm:main}.
\end{proof}

\subsection{Chernoff Bounds}

We now state a few Chernoff-type bounds that will be useful in our sparsification arguments. We let $\exp$ and $\log$ denote the natural exponential and logarithm, respectively. For any base $b > 1$, we let $\exp_b(x) := b^x$ and $\log_b(x) := \log(x)/\log(b)$.

As our sparsifiers utilize i.i.d. subsampling, we use the following Chernoff bound to analyze the probability that Hamming weight is preserved.
\begin{theorem}[e.g., \cite{Motwani_Raghavan_1995,mitzenmacher2005Probability,fung2011general}]\label{thm:chernoff}
Let $X_1, \hdots, X_n$ be i.i.d. samples of the Bernoulli distribution with probability $p \in [0,1]$. Then, for all $\delta > 0$,
\begin{align}
  \Pr[X_1 + \cdots + X_n < (1-\delta) p n] &\le \exp(-\delta^2 n p / 2)\label{eq:chernoff-lower}\\
  \Pr[X_1 + \cdots + X_n > (1+\delta) p n] &\le \exp(-\delta^2 n p / (2+\delta))\label{eq:chernoff-upper}
\end{align}
\end{theorem}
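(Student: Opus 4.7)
The plan is to prove both tail bounds via the classical Cram\'er--Chernoff (moment generating function) method. Let $S := X_1 + \cdots + X_n$. Since each $X_i$ is Bernoulli($p$), $\E[e^{tX_i}] = 1 - p + pe^t \le \exp(p(e^t-1))$ for every $t \in \R$, using $1+x \le e^x$. Independence of the $X_i$ then gives the uniform bound $\E[e^{tS}] \le \exp(np(e^t - 1))$, which is the single ingredient both tails will hang on.

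For the upper tail I would fix $t > 0$ and apply Markov's inequality to $e^{tS}$:
\[
\Pr[S > (1+\delta)pn] \;\le\; \frac{\E[e^{tS}]}{e^{t(1+\delta)pn}} \;\le\; \exp\!\bigl(np(e^t - 1 - t(1+\delta))\bigr).
\]
The exponent is minimized at $t = \log(1+\delta) > 0$, producing the Kullback--Leibler form $\exp(-np \cdot \varphi(\delta))$ with $\varphi(\delta) := (1+\delta)\log(1+\delta) - \delta$. The bound~(\ref{eq:chernoff-upper}) then reduces to the elementary inequality $\varphi(\delta) \ge \delta^2/(2+\delta)$ for all $\delta > 0$, which I would verify by clearing the denominator and checking that the resulting smooth function of $\delta$ and its first derivative both vanish at $\delta = 0$ while its second derivative is non-negative on $(0,\infty)$.

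For the lower tail I would instead take $t > 0$ and apply Markov's inequality to $e^{-tS}$, obtaining
\[
\Pr[S < (1-\delta)pn] \;\le\; \exp\!\bigl(np(e^{-t} - 1) + t(1-\delta)pn\bigr).
\]
Choosing $t = -\log(1-\delta) > 0$ (valid for $\delta \in (0,1)$; for $\delta \ge 1$ the event is empty since $S \ge 0$) yields $\exp(-np \cdot \psi(\delta))$ where $\psi(\delta) := \delta + (1-\delta)\log(1-\delta)$. The needed inequality $\psi(\delta) \ge \delta^2/2$ follows immediately from the Taylor expansion $\psi(\delta) = \sum_{k \ge 2} \delta^k / (k(k-1))$, and establishes~(\ref{eq:chernoff-lower}). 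The only step that is not purely mechanical is the upper-tail inequality $\varphi(\delta) \ge \delta^2/(2+\delta)$, which must hold globally in $\delta$ rather than merely near $0$; I expect this to be the main obstacle, though it is a standard textbook estimate.
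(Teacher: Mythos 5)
Your proof is correct and is precisely the standard Cramér--Chernoff (moment-generating-function) derivation that appears in the textbooks the paper cites; the paper itself states Theorem~\ref{thm:chernoff} with references to \cite{Motwani_Raghavan_1995,mitzenmacher2005Probability} and does not supply its own proof. One small check on the point you flag: with $f(\delta) := (2+\delta)\varphi(\delta) - \delta^2$ one has $f(0)=f'(0)=f''(0)=0$ and $f''(\delta)=2\log(1+\delta)-\delta/(1+\delta)$, whose derivative $(1+2\delta)/(1+\delta)^2$ is positive, so $f''\ge 0$ on $[0,\infty)$ and the global inequality $\varphi(\delta)\ge\delta^2/(2+\delta)$ indeed follows as you expected.
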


The key applications of the Chernoff bound in our sparsifier constructions are as follows.

\begin{lemma}\label{lem:chernoff-small}
Let $S \subseteq [m]$ be a random subset of $[m]$ such that each element is included independently with probability at most $1/3$. Then,
\[
  \Pr[|S| > m/2] \le \exp(-m / 30).
\]
\end{lemma}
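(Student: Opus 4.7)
My plan is to reduce Lemma~\ref{lem:chernoff-small} to a direct application of the upper-tail Chernoff bound (\ref{eq:chernoff-upper}). The one mild subtlety is that the statement allows each element of $[m]$ to be included with a possibly different probability $p_i \le 1/3$, whereas Theorem~\ref{thm:chernoff} is stated for i.i.d.\ Bernoullis. I would handle this by stochastic domination: introduce i.i.d.\ Bernoulli$(1/3)$ random variables $Y_1, \hdots, Y_m$ coupled to the inclusion indicators $X_i = \one[i \in S]$ so that $X_i \le Y_i$ pointwise. Then $|S| = \sum_i X_i \le \sum_i Y_i$, so it suffices to bound $\Pr[\sum_i Y_i > m/2]$.

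Now apply (\ref{eq:chernoff-upper}) with $n = m$, $p = 1/3$, and $\delta = 1/2$. The threshold matches since $(1+\delta) n p = (3/2)(m)(1/3) = m/2$, and the exponent works out to
\[
  \frac{\delta^2 n p}{2+\delta} \;=\; \frac{(1/4)(m)(1/3)}{5/2} \;=\; \frac{m}{30},
\]
yielding the claimed bound $\Pr[|S| > m/2] \le \exp(-m/30)$. There is no real obstacle here; the only step beyond turning a crank is the coupling reduction from the non-identically-distributed setting to the i.i.d.\ one, after which the inequality is immediate from the stated Chernoff bound.
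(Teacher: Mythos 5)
Your proof is correct and matches the paper's argument essentially verbatim: both invoke a coupling/stochastic-domination step to reduce to the i.i.d.\ Bernoulli$(1/3)$ case, then apply the upper-tail bound (\ref{eq:chernoff-upper}) with $n=m$, $p=1/3$, $\delta=1/2$ to get the exponent $m/30$. No discrepancies.
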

\begin{proof}
By a standard coupling argument, the probability that $|S| > m/2$ is maximized when each element is included with probability exactly $1/3$. We then apply (\ref{eq:chernoff-upper}) with $p=1/3$, $n = m$, and $\delta = 1/2$. In that case, $\delta^2 n p / (2+\delta) = m/30$.
\end{proof}

\begin{lemma}\label{lem:chernoff-ham}
Let $S \cup T$ be a partition of $[m]$. For $p \in (0,1]$, let $S_p$ be a random subset of $S$ where each element of $S$ is included independently with probability $p$. For any codeword $c \in \{0,1\}^m$ and any $\eps \in (0,1)$, we have that
\[
\Pr_{S_p}\left[\frac{1}{p}\Ham(c|_{S_p}) + \Ham(c|_T) \not\in [1-\eps, 1+\eps] \cdot \Ham(c)\right] < 2\exp(-\eps^2 \Ham(c) p/3).
\]
\end{lemma}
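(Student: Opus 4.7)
The plan is to reduce to the standard Chernoff bound (Theorem~\ref{thm:chernoff}) by identifying the relevant Bernoulli sum. Write $a := \Ham(c|_S)$ and $b := \Ham(c|_T)$, so $\Ham(c) = a+b$. The random variable $\Ham(c|_{S_p})$ is distributed as $\mathrm{Binomial}(a,p)$, since only the $a$ positions in $S$ where $c$ is $1$ contribute, and each is retained independently with probability $p$. First I would dispose of the trivial cases: if $a = 0$, then $\tfrac{1}{p}\Ham(c|_{S_p}) + \Ham(c|_T) = \Ham(c)$ deterministically, so the bad event has probability $0$; the same holds if $\Ham(c) = 0$. Hence we may assume $a \ge 1$.

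Next, rewrite the bad event in a form suitable for Chernoff. Set $\delta := \eps(a+b)/a \ge \eps$. Then the condition $\tfrac{1}{p}\Ham(c|_{S_p}) + b \notin [1-\eps,1+\eps](a+b)$ is equivalent to $\Ham(c|_{S_p}) \notin [(1-\delta)pa,(1+\delta)pa]$, i.e., $\Ham(c|_{S_p})$ deviates from its mean $pa$ by a multiplicative factor more than $\delta$. Applying the two-sided Chernoff bound from Theorem~\ref{thm:chernoff} (and a union bound) gives
\[
\Pr[\text{bad}] \;\le\; \exp(-\delta^2 a p / 2) + \exp(-\delta^2 a p / (2+\delta)).
\]

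It remains to check that each exponent is at least $\eps^2(a+b)p/3$, which will yield the claimed bound of $2\exp(-\eps^2\Ham(c)p/3)$. For the lower tail, $\delta^2 a/2 = \eps^2(a+b)^2/(2a) \ge \eps^2(a+b)/3$ follows from $3(a+b) \ge 2a$. For the upper tail, the inequality $\delta^2 a/(2+\delta) \ge \eps^2(a+b)/3$ rearranges, after multiplying through by $a(2+\delta)$ and substituting $\delta = \eps(a+b)/a$, to $3(a+b) \ge 2a + \eps(a+b)$, i.e., $a(1-\eps) + b(3-\eps) \ge 0$, which is immediate since $\eps \in (0,1)$ and $a,b \ge 0$. (In the lower-tail case one should note that if $\delta > 1$ the tail is vacuous, so no additional argument is needed.)

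The main obstacle is the trivially minor book-keeping around the constant $3$ in the denominator of the exponent: one must absorb the $(2+\delta)$ denominator coming from the upper Chernoff bound using $\eps < 1$ and the nonnegativity of $a,b$. Once that short algebraic verification is in place, the lemma follows by the two Chernoff estimates plus a union bound, with the trivial cases handled separately at the start.
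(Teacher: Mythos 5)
Your proof is correct and follows essentially the same route as the paper's: same choice $\delta = \eps\,\Ham(c)/\Ham(c|_S)$, same reduction to the two-sided Chernoff bound, same trivial-case handling for $\Ham(c|_S)=0$. The only difference is cosmetic: you verify the two exponent inequalities by clearing denominators and reducing to $a+3b\ge 0$ and $a(1-\eps)+b(3-\eps)\ge 0$, while the paper simply notes $\delta \ge \eps$ and uses $\delta^2/2 \ge \eps^2/2 > \eps^2/3$ together with the monotonicity of $x \mapsto x^2/(2+x)$ to get $\delta^2/(2+\delta) \ge \eps^2/(2+\eps) > \eps^2/3$. Both yield the same bound; the paper's version is a touch shorter. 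One small point worth tightening: to match the strict inequality in the statement, it helps to observe explicitly that $\delta^2/2 \ge \eps^2/2 > \eps^2/3$ gives strict domination of the lower-tail term (or, in your notation, that $a\ge 1$ makes $a+3b>0$ strict), so the sum of the two tails is strictly below $2\exp(-\eps^2\Ham(c)p/3)$.
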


\begin{proof}
Let $w_S = \Ham(c|_{S})$ and $w_T = \Ham(c|_{T})$, so $\Ham(c) = w_S + w_T$. If $w_S = 0$, then $\Ham(c) = w_T$. Thus, the probability of failure is $0$. Otherwise, apply Theorem~\ref{thm:chernoff} with $n := w_S$ and $\delta := \eps\cdot \frac{w_S+w_T}{w_S}$ (note that $\delta$ may be larger than $1$) to have that
\begin{align*}
  \Pr_{S_p}[\Ham(c|_{S_p}) \not\in [1-\delta,1+\delta]\cdot p n] 
                                                &<2\exp\left(-\frac{\delta^2 w_S p}{2+\delta}\right)\\
                                                &=2\exp\left(-\eps^2 \left(\frac{w_S+w_T}{w_S}\right)^2 \frac{w_S}{2 + \eps\cdot \frac{w_S+w_T}{w_S}} p\right)\\
                                                &=2\exp\left(-\eps^2 \frac{(w_S + w_T)^2}{2w_S+ \eps(w_S+w_T)}p\right)\\
                                                &\le 2\exp\left(-\eps^2 \frac{(w_S + w_T)^2}{3(w_S+w_T)}p\right)\\
                                                &= 2\exp\left(-\eps^2 \Ham(c) p/3\right),
\end{align*}
where the second-to-last line uses the fact that $\eps \in (0,1)$. Finally, to complete the proof, note that
\begin{align*}
  \Ham(c|_{S_p}) \not\in [1-\delta,1+\delta] p n
&\iff \frac{1}{p} \Ham(c|_{S_p}) + \Ham(c|_T) \notin [1-\delta,1+\delta]\cdot w_S + w_T\\
&\iff\frac{1}{p} \Ham(c|_{S_p}) + \Ham(c|_T) \notin [1 - \eps, 1+\eps]\Ham(c)  \ . \qedhere
\end{align*}
\end{proof}

\section{Warmup: A Simple Sparsifier}\label{sec:simple}

As a warm-up for our main result, we prove that every code $C \subseteq \{0,1\}^m$ has an $\eps$-sparsifier of size within $\widetilde{O}_{\eps}(\log |C| \cdot \NRD(C)).$ 

\begin{theorem}\label{thm:nrd-spr-warmup}
For all nonempty $C \subseteq \{0,1\}^m$ and $\eps \in (0,1)$, we have that
\[
  \SPR(C, \eps) = O(\NRD(C) \log |C| (\log^3 m)/\eps^2).
\]
More precisely,
\begin{align}
\SPR(C, \eps) \le 36 \NRD(C) \log_2(4|C|) \log_2^3 (2m) / \eps^2 \ . \label{eq:spr-simple-precise}
\end{align}
\end{theorem}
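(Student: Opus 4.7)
The plan is a recursive divide-and-conquer construction driven by a structural bound: for every $d \ge 0$, $|\supp(C_{\le d})| \le d \cdot \NRD(C)$, where $C_{\le d} := \{c \in C : \Ham(c) \le d\}$. I would prove this auxiliary lemma by induction on $d$, with the base case $d = 0$ trivial. For the inductive step, I would produce a non-redundant set $I \subseteq [m]$ for $C$ of size at most $\NRD(C)$ such that every nonzero codeword of $C_{\le d}$ meets $I$; puncturing $C$ to $[m] \setminus I$ then strictly drops the Hamming weight of each such codeword, so applying the hypothesis to $C|_{[m]\setminus I}$ at threshold $d-1$ (using $\NRD(C|_{[m]\setminus I}) \le \NRD(C)$) yields $|\supp(C_{\le d})| \le |I| + (d-1)\NRD(C) \le d \cdot \NRD(C)$.

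Given the lemma, I would build the sparsifier one level at a time. Set $\eps' := \eps/(2\log_2(2m))$ and pick $d := \Theta(\log(4|C|)/(\eps')^2)$ just large enough to run a Chernoff argument: let $I := \supp(C_{\le d})$ (of size at most $d \cdot \NRD(C)$ by the lemma), give every $i \in I$ weight $1$, and subsample each coordinate of $[m]\setminus I$ independently with probability $1/3$ into a set $J$. Codewords in $C_{\le d}$ are supported entirely on $I$ and thus have their Hamming weights preserved exactly. For $c$ with $\Ham(c) > d$, Lemma~\ref{lem:chernoff-ham} gives failure probability much less than $1/|C|$, so a union bound over $C$ guarantees
\[
3\Ham(c|_J) + \Ham(c|_I) \in [1 - \eps', 1 + \eps']\cdot \Ham(c) \quad \text{for all } c \in C
\]
with positive probability, while Lemma~\ref{lem:chernoff-small} simultaneously ensures $|J| \le m/2$.

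I would then recurse on $C|_J \subseteq \{0,1\}^J$ with error budget $\eps - \eps'$, scale the resulting sparsifier by $3$, and union with the unit weights on $I$. Since $\NRD(C|_J) \le \NRD(C)$ and $C|_J$ has at most $|C|$ codewords, the recurrence $f(m) \le d \cdot \NRD(C) + f(m/2)$ terminates in at most $\log_2(2m)$ levels, each contributing $O(\NRD(C) \log|C| \log^2 m / \eps^2)$ coordinates, for a total of $O(\NRD(C) \log|C| \log^3 m / \eps^2)$ matching \eqref{eq:spr-simple-precise} once the constants are pinned down. The multiplicative errors compose as $\prod_{k=1}^{\log_2(2m)}(1 + \eps') \le 1 + \eps$ for $\eps \in (0,1)$, so the final weight function is an $\eps$-sparsifier of $C$.

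The main obstacle is the sub-claim inside the key lemma: finding a non-redundant set $I$ that intersects every nonzero codeword of $C_{\le d}$. A merely inclusion-maximal non-redundant set need not have this property, because if a nonzero $c$ avoids $I$ and we try to add some $j \in \supp(c)$, the existing witnesses for $i \in I$ may fail to have $c'_j = 0$, so non-redundancy of $I \cup \{j\}$ is not automatic. Overcoming this likely requires an exchange argument (swapping an offending $i^* \in I$ for $j$ while re-selecting witnesses) or a careful construction that builds $I$ and its witness codewords in tandem. All remaining steps---the Chernoff tail, the recursion depth bound, and the multiplicative error composition---are standard bookkeeping.
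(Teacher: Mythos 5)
Your overall structure matches the paper's proof of Theorem~\ref{thm:nrd-spr-warmup} exactly: the same auxiliary lemma $|\supp(C_{\le d})| \le d\cdot\NRD(C)$, the same threshold $d_0 = \Theta(\log|C|/\eps_0^2)$ with $\eps_0 = \eps/(2\log_2(2m))$, the same rate-$1/3$ subsample of $[m]\setminus I$, and the same $\log_2(2m)$-depth recursion with multiplicative error composition. However, you correctly identify the one nontrivial step and then leave it open, and that step is not ``standard bookkeeping'' --- without it the lemma, and hence the whole argument, does not go through.

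The obstacle you flag --- producing a non-redundant $I$ that also hits every nonzero codeword --- is real, and you are right that growing a non-redundant set greedily fails. But the fix is simpler than an exchange argument: reverse the direction. Instead of starting from a non-redundant set and trying to turn it into a hitting set, start from a hitting set and observe that \emph{minimality} forces non-redundancy. Concretely, choose $I \subseteq [m]$ inclusion-minimal with the property that $\supp(c) \cap I \neq \emptyset$ for every nonzero $c \in C$. Suppose $I$ were not non-redundant; then some $i \in I$ has no witness, i.e., every $c \in C$ with $c_i = 1$ also satisfies $c_{i'} = 1$ for some $i' \in I \setminus \{i\}$. But then $I \setminus \{i\}$ is still a hitting set, contradicting minimality of $I$. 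So $I$ is non-redundant, whence $|I| \le \NRD(C)$, and since $I$ hits every nonzero codeword you get $(C_{\le d})|_{\overline{I}} \subseteq (C|_{\overline{I}})_{\le d-1}$, closing the induction. With this in place, the rest of your write-up is a faithful reproduction of the paper's argument.
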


\begin{remark}
In terms of CSP sparsification, we have that for any $R \subseteq D^r$ that
\[
    \SPR(R, n, \eps) = O(n \cdot \NRD(\overline{R}, n) (\log |D|)(r\log n)^3/\eps^2).
\]
since any instance $\Psi$ of $\CSP(R)$ on $n$ variables has at most $n^r$ clauses and at most $|D|^n$ assignments (so $|C_{\Psi,R}|\le |D|^n$). Ignoring logarithmic factors, we are within a factor of $n$ of Proposition~\ref{prop:spr-lb}, which is highly nontrivial given the fact that apriori we only know that $\SPR(R, n, \eps) \in [\Omega(n), n^r].$ However, Theorem~\ref{thm:nrd-spr-warmup} has no nontrivial implications in the $r=2$ case.
\end{remark}

For any $d \in [m]$, let $C_{\le d}$ be the set of codewords of $C$ with Hamming weight at most $d$. The key observation of this simple sparsifier is that $\NRD(C)$ carefully controls the support size of $C_{\le d}$.

\begin{lemma}\label{lem:NRD-supp-bound}
For all $C \subseteq \{0,1\}^m$ and all $d \in \{0,1,\hdots, m\}$, we have that
\[
  \lvert\supp(C_{\le d})\rvert \le d \cdot \NRD(C).
\]
\end{lemma}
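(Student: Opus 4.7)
The plan is to induct on $d$. The base case $d=0$ is immediate since $C_{\le 0} \subseteq \{0^m\}$, so $\supp(C_{\le 0}) = \emptyset$. The inductive step will reduce $(C,d)$ to a smaller instance $(C', d-1)$ by peeling off one carefully chosen set of coordinates of size at most $\NRD(C)$.

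The key construction is an inclusion-minimal hitting set $T \subseteq [m]$ for the family $\{\supp(c) : c \in C_{\le d},\ c \neq 0\}$. (If this family is empty, the bound is trivial.) The main observation, and what I expect to be the conceptual crux, is that any such minimal $T$ is automatically non-redundant for $C$: by minimality, for each $t \in T$ the set $T \setminus \{t\}$ fails to be a hitting set, so there exists a nonzero $c_t \in C_{\le d}$ with $\supp(c_t) \cap (T \setminus \{t\}) = \emptyset$; since $T$ still hits $c_t$, we must have $\supp(c_t) \cap T = \{t\}$. These $c_t$ are precisely the witnesses required by Definition~\ref{def:NRD-code}, so $|T| \le \NRD(C)$.

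With $T$ in hand, I would apply the induction hypothesis to $C' := C|_{[m] \setminus T}$. Restriction never increases non-redundancy, so $\NRD(C') \le \NRD(C)$. Since $T$ hits every nonzero $c \in C_{\le d}$, the restriction $c|_{[m] \setminus T}$ has Hamming weight at most $d-1$, which implies $\supp(C_{\le d}) \setminus T \subseteq \supp(C'_{\le d-1})$. Combining $|T| \le \NRD(C)$ with the inductive bound $|\supp(C'_{\le d-1})| \le (d-1)\NRD(C)$ yields $|\supp(C_{\le d})| \le d \cdot \NRD(C)$.

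The main potential obstacle is obtaining the sharp factor of $d$ rather than a lossy constant. A more naive approach---choosing one witness codeword $c^{(i)} \in C_{\le d}$ per $i \in \supp(C_{\le d})$ and extracting a large identity submatrix by finding an independent set in the resulting conflict graph via, say, Caro--Wei---loses roughly a factor of $2$, because each row contributes up to $d-1$ off-diagonal entries to the conflict graph on both sides. Routing through a minimal hitting set instead of a matching is what produces the tight constant; note that the bound is attained by $\NRD(C)$ disjoint weight-$d$ codewords, so no further improvement is possible.
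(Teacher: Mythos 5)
Your proposal is correct and follows essentially the same inductive argument as the paper: peel off a minimal hitting set (which is automatically non-redundant by the same minimality argument), observe its size is at most $\NRD(C)$, and recurse on the punctured code with $d-1$. The only cosmetic difference is that the paper takes a minimal hitting set for the nonzero codewords of all of $C$ rather than just $C_{\le d}$; both choices work.
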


\begin{remark}
For linear codes, this lemma is trivial: the non-redundancy of a linear code is precisely its dimension and each time we expand the support of a linear code with a new codeword, the rank increases by $1$. A variant of this observation appears in \cite{khanna2024Code} as part of their divide-and-conquer framework for linear code sparsification. 
\end{remark}

\begin{proof}
We prove this result by induction on $d$. The base case of $d = 0$ is trivial. Now assume $d \ge 1$. Pick $I \subseteq [m]$ minimal such that\footnote{This is known as a \emph{hitting set} (e.g., \cite{even2005Hitting}).} for every non-zero $c \in C$, $\card{\supp(c) \cap I} \ge 1$. We claim that $I$ is non-redundant. If not, there exists $i \in I$ such that for every $c \in C$ with $c_i = 1$, we have that $\card{\supp(c) \cap I} \ge 2$. In that case, $I \setminus \{i\}$ is a smaller set with the prescribed property, contradicting $I$'s minimality. Therefore, $I$ is non-redundant and
\[
    (C_{\le d})|_{\overline{I}} \subseteq (C|_{\overline I})_{\le {d-1}}.
\]
Therefore, by the induction hypothesis,
\[
\card{\supp(C|_{\overline I})_{\le {d-1}}} \le (d-1) \NRD(C|_{\overline I}) \le (d-1) \NRD(C).
\]
Thus, since $I$ is non-redundant,
\[
    \card{\supp(C_{\le d})} \le |I| + \card{\supp(C|_{\overline I})_{\le {d-1}}} \le \NRD(C) + (d-1)\NRD(C) = d\NRD(C). \qedhere
\]
\end{proof}

We now proceed to construct the sparsifier. We use a fairly standard divide-and-conquer technique (e.g., \cite{khanna2024Code,khanna2024Characterizations}).

\begin{proof}[Proof of Theorem~\ref{thm:nrd-spr-warmup}]
We seek to prove (\ref{eq:spr-simple-precise}) by strong induction on $m$. The base cases of $m \le 36$ is trivial. Define $\eps_0 := \eps / (2\log_2(2m))$ and
\[
  d_0 := 9 \log_2(4|C|) /\eps_0^2 = 36 \log_2(4|C|) \log_2^2(2m)/\eps^2.
\]
Let $T = \supp(C_{\le d_0})$ and $S = [m] \setminus T$. By Lemma~\ref{lem:NRD-supp-bound}, we have that $|T| \le d_0 \cdot \NRD(C)$. Let $p := 1/3$ and let $S_p$ be a random subset of $S$ where each element is included independently with probability $p$. We seek to show that with positive probability, we have that $|S_p| \le m/2$ and for all $c \in C$,
\begin{align}
  \frac{1}{p}\Ham(c|_{S_p}) + \Ham(c|_T) \in [1-\eps_0, 1+\eps_0] \cdot \Ham(c).\label{eq:spar-1}
\end{align}
First, by Lemma~\ref{lem:chernoff-small} we have that $\Pr[|S_p| > m/2] < \exp(-m/30) < 1/3$ since $m > 36$. To prove (\ref{eq:spar-1}), first observe that if $\Ham(c) \le d_0$, then $\supp(c) \subseteq T$, so (\ref{eq:spar-1}) trivially follows. Otherwise, by Lemma~\ref{lem:chernoff-ham}, the probability that (\ref{eq:spar-1}) fails is at most 
\[
2\exp\left(-\frac{\eps_0^2 \Ham(c)}{9}\right) < 2\exp\left(-\frac{d_0\eps_0^2}{9}\right) = 2\exp(-\log_2(4|C|)) \le \frac{1}{2|C|}.
\]
In particular, by the union bound the probability that (\ref{eq:spar-1}) holds for all $c \in C$ is at least $1/2$. Therefore, we have proved the existence of $S' \subseteq S$ of size at most $m/2$ such that for all $c \in C$, the map
\[
  \forall i \in [m], w(i) := \begin{cases}
    1 & i \in T\\
    3 & i \in S'\\
    0 & i \in S \setminus S',
\end{cases}
\]
is an $\eps_0$-sparsifier of $C$. We now apply the induction hypothesis to $C' := C|_{S'}$ to find an $\eps' := \eps-2\eps_0$-sparsifier $w' : S'\to \R_{\ge 0}$ of size at most $36\NRD(C') \log_2(4|C'|) \log_2^3 (2|\supp C'|) / \eps'^2.$
Now define the sparsifier $\widetilde{w} : [m] \to \R_{\ge 0}$ by
\[
  \forall i \in [m], \widetilde{w}(i) := \begin{cases}
    1 & i \in T\\
    3w'(i) & i \in S'\\
    0 & i \in S \setminus S'.
\end{cases}
\]
Observe that for any $c \in C$, we have that
\begin{align*}
  \langle \widetilde{w}, c\rangle &= 3\langle w', c|_{S'}\rangle + \Ham(c|_{T})\\
                                  &\in 3[1-\eps',1+\eps']\cdot \Ham(c|_{S'}) + \Ham(c|_{T})\\
                                  &\subseteq [1-\eps',1+\eps']\cdot \langle w, c\rangle\\
                                  &\subseteq [1-\eps',1+\eps']\cdot[1-\eps_0,1+\eps_0] \cdot \Ham(c)\\
                                  &\subseteq [1-\eps, 1+\eps] \cdot \Ham(c),
\end{align*}
where the second-to-last inclusion follows from $w$ being an $\eps_0$-sparsifier. Thus, $\widetilde{w}$ is an $\eps$-sparsifier. To finish, we bound the size of $\widetilde{w}$ as follows:
\begin{align*}
\lvert \supp(\widetilde{w})\rvert &= \lvert \supp(w')\rvert + |T|\\
&\le \frac{36\NRD(C') \log_2(4|C'|) \log_2^3 (2|\supp C'|)}{\eps'^2} + \frac{36 \NRD(C)\log_2(4|C|) \log_2^2(2m)}{\eps^2}\\
&\le 36 \NRD(C) \log_2(4|C|) \left[\frac{\log_2^3 m}{(\frac{\log_2 m}{\log_2 (2m)}\eps)^2} + \frac{\log_2^2(2m)}{\eps^2}\right]\\
&= \frac{36\NRD(C)\log_2(4|C|)\log_2^2 (2m)}{\eps^2}[\log_2 m + 1]\\
& =\frac{36\NRD(C)\log_2(4|C|)\log_2^3 (2m)}{\eps^2}  \ . \qedhere
\end{align*}
\end{proof}

\section{The Entropy Sparsifier}\label{sec:entropy}

In this section, we prove Theorem~\ref{thm:main-code}. As mentioned in the introduction, this proof crucially needs bounds on the entropy of probability distributions over our non-linear code. We begin by defining the necessary notation for discussing such concepts and then proceed to analyze Gilmer's entropy method (and its subsequent refinements) and how these can assist with sparsification.

\subsection{OR-closure, VC Dimension, and Entropy}\label{subsec:vc-prelim}
We being by defining a few basic properties of codes.

\subsubsection{OR-closure and VC Dimension}

Given two strings $a,b \in \{0,1\}^m$, we define their bitwise-OR $a \vee b$ to be the string $c \in \{0,1\}^m$ such that $c_i = a_i \vee b_i$ for all $i \in [m]$. Likewise, given a list $a_1, \hdots, a_k \in \{0,1\}^m$, we let $\bigvee_{i=1}^k a_i$ denote their collective bitwise-OR.

\begin{definition}
Given $C \subseteq \{0,1\}^m$ we define its $\OR$-closure (or $\OR$-span) to be
\[
  \spor(C) := \bigcup_{k=0}^{\infty} \left\{\bigvee_{i=1}^k c_i : c_1, \hdots, c_k \in C\right\},
\]
where the empty $\OR$ is defined to be $0^m$. If $C = \spor(C)$, we say that $C$ is \emph{OR-closed}.
\end{definition}

We now discuss the VC dimension of $C$.

\begin{definition}
For $C \subseteq \{0,1\}^m$, we define the VC dimension of $C$, denoted by $\VC(C)$, to be the size of largest $I \subseteq [m]$ such that $C|_{I} = \{0,1\}^I$. In particular, any code with at most one codeword has VC dimension $0$. 
\end{definition}

We now restate an observation of \cite{bessiere2020Chain} that non-redundancy of a code is precisely the VC dimension of its OR-closure.

\begin{proposition}[Obs.~4~\cite{bessiere2020Chain}, restated]\label{prop:nrd-vc}
For all $C \subseteq \{0,1\}^m$, $\NRD(C) = \VC(\spor(C))$.
\end{proposition}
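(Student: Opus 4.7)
The plan is to prove both inequalities $\NRD(C) \le \VC(\spor(C))$ and $\VC(\spor(C)) \le \NRD(C)$ by directly exhibiting the witnessing subset $I \subseteq [m]$ on each side. Both directions should be short and essentially combinatorial; the main conceptual point is just to recognize that a non-redundant index set gives access to the standard basis vectors $e_i$ inside $C|_I$, and that taking ORs of these standard basis vectors shatters $I$.

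For the direction $\NRD(C) \le \VC(\spor(C))$, I would take a non-redundant $I \subseteq [m]$ of maximum size and, for each $i \in I$, let $c^{(i)} \in C$ be a codeword witnessing non-redundancy, so that $c^{(i)}|_I = e_i$. Then for every $S \subseteq I$, the OR $\bigvee_{i \in S} c^{(i)}$ lies in $\spor(C)$ and its restriction to $I$ equals $\one_S$. Hence $\spor(C)|_I = \{0,1\}^I$, giving $\VC(\spor(C)) \ge |I| = \NRD(C)$.

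For the reverse direction $\VC(\spor(C)) \le \NRD(C)$, take $I \subseteq [m]$ with $\spor(C)|_I = \{0,1\}^I$ of maximum size. For any fixed $i \in I$, choose $a \in \spor(C)$ with $a|_I = e_i$ and write $a = \bigvee_{j=1}^k c_j$ with each $c_j \in C$. For every $i' \in I \setminus \{i\}$, $a_{i'} = 0$ forces $(c_j)_{i'} = 0$ for all $j$, and $a_i = 1$ forces $(c_{j^*})_i = 1$ for some $j^*$. That single codeword $c_{j^*}$ then witnesses the non-redundancy condition at $i$. Doing this for each $i$ shows $I$ is non-redundant, so $\NRD(C) \ge |I| = \VC(\spor(C))$.

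There is really no hard step here; the only thing to watch is to be careful in the second direction that one uses the OR decomposition of a single sparse element of $\spor(C)$ (namely $e_i$) rather than trying to shatter all of $I$ at once — the pointwise ``OR-equals-$0$ means all summands are $0$'' argument is what extracts the single codeword from $C$ itself. So this is the step where the OR-closure definition is really being used, and it is where I would be most careful to make the indexing explicit.
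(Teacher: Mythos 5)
Your proof is correct and follows essentially the same route as the paper's: both directions reduce to the observation that $\spor(C)|_I = \{0,1\}^I$ holds precisely when each standard basis vector $e_i$ (for $i\in I$) is realized by some codeword of $C$ restricted to $I$. The paper compresses this into a single ``if and only if'' using the identity $\spor(C|_I)=\spor(C)|_I$; your version spells out each direction separately, but the key step — extracting a single witnessing codeword from an OR decomposition of a weight-one vector — is exactly the paper's argument.
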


\begin{proof}
Observe that for all $I \subseteq [m]$, $\spor(C|_{I}) = \{0,1\}^I$ if and only if for all $i \in I$, there exists $c_i \in C_{I}$ such that for all $i' \in I$, $c_{i,i'} = 1$ iff $i = i'$. In other words, $\spor(C|_{I}) = \{0,1\}^I$ if and only if $I$ is non-redundant in $C|_{I}$ (and thus $C$). Since $\spor(C|_{I}) = \spor(C)|_{I}$ for all $I \subseteq [m]$, we have $\VC(\spor(C))$ is precisely the non-redundancy of $C$.
\end{proof}

Our main use of VC dimension is to bound the size of the corresponding code via the Sauer-Shelah-Peres lemma.

\begin{lemma}[Sauer-Shelah-Peres~\cite{sauer1972density,shelah1972combinatorial}]\label{lem:sauer-shelah}
For all $C \subseteq \{0,1\}^m$,
\[
|C| \le \sum_{i=0}^{\VC(C)} \binom{m}{i} \le (m+1)^{\VC(C)}.
\]
\end{lemma}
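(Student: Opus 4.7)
The plan is to prove both inequalities via classical techniques. The first (and substantive) inequality will go through the standard Pajor strengthening, namely
\[
|C| \le |\{I \subseteq [m] : C|_I = \{0,1\}^I\}|.
\]
This immediately implies $|C| \le \sum_{i=0}^{\VC(C)}\binom{m}{i}$, since by definition every shattered set has size at most $\VC(C)$. The second inequality is then just a counting bound.

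For the Pajor strengthening I induct on $m$. The base case $m=0$ is immediate (at most one codeword, and the empty set is vacuously shattered). For $m \ge 1$, I split $C$ by its last coordinate and set
\begin{align*}
C_0 &:= \{x \in \{0,1\}^{m-1} : (x,0) \in C\}, & C_1 &:= \{x \in \{0,1\}^{m-1} : (x,1) \in C\},
\end{align*}
so that $|C| = |C_0| + |C_1| = |C_0 \cup C_1| + |C_0 \cap C_1|$ by inclusion-exclusion. Applying the induction hypothesis separately to $C_0 \cup C_1$ and $C_0 \cap C_1$, the two key lifting claims are: (i) every $I \subseteq [m-1]$ shattered by $C_0 \cup C_1$ is shattered by $C$ as a subset of $[m]$, since witnesses in $C_b$ lift to $(x,b) \in C$; and (ii) every $I \subseteq [m-1]$ shattered by $C_0 \cap C_1$ yields a shattered subset $I \cup \{m\}$ of $C$, since any witness $x \in C_0 \cap C_1$ lifts to both $(x,0), (x,1) \in C$, giving both extensions in coordinate $m$. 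The two families of shattered subsets of $C$ produced this way are disjoint (the first avoid $m$, the second contain it), so their union has size at least $|C_0 \cup C_1| + |C_0 \cap C_1| = |C|$.

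For the second inequality, I encode a subset $S \subseteq [m]$ of size at most $d := \VC(C)$ injectively as a tuple in $\{0,1,\dots,m\}^d$ by listing its elements in increasing order and padding the remaining slots with $0$. This gives $\sum_{i=0}^d \binom{m}{i} \le (m+1)^d$.

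The lemma is classical and poses no real obstacle; the only care required is in checking the two shattering-lift claims in the inductive step, both of which are immediate from unpacking the definition of shattering. If there is a ``hardest'' part, it is merely the conceptual move of replacing the cardinality bound by the stronger shattered-set-count statement, which is what makes the induction go through cleanly.
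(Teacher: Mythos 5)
Your proof is correct, and it is the textbook-standard argument for the Sauer--Shelah--Perles lemma via Pajor's strengthening (counting shattered sets and inducting on $m$ by splitting on the last coordinate). The paper does not supply a proof of this lemma at all---it is stated and cited as a classical result---so there is no in-paper argument to compare against; your reconstruction is exactly what one would expect and the key disjointness observation (sets from case (i) avoid $m$, sets from case (ii) contain $m$) is handled correctly. One tiny cosmetic nit: in the base case $m=0$ with $C = \emptyset$, the empty set is \emph{not} shattered, but $|C|=0$ so the bound still holds; your parenthetical only covers the nonempty case.
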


As an immediate corollary of Proposition~\ref{prop:nrd-vc} and Lemma~\ref{lem:sauer-shelah}, we can control the size of $\spor(C)$ in terms of $\NRD(C)$ via the Sauer-Shelah-Perles lemma.

\begin{corollary}\label{cor:spor-bound}
For any $C \subseteq \{0,1\}^m$, 
\[
    \card{\spor(C)} \le (m+1)^{\NRD(C)}.
\]
\end{corollary}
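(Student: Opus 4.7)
The plan is to simply chain together the two results the corollary cites, with essentially no additional work. First I would apply Proposition~\ref{prop:nrd-vc} to the code $C$, which gives the identity $\NRD(C) = \VC(\spor(C))$. Next, I would apply the Sauer-Shelah-Peres lemma (Lemma~\ref{lem:sauer-shelah}) to the code $\spor(C) \subseteq \{0,1\}^m$, yielding
\[
    \card{\spor(C)} \le \sum_{i=0}^{\VC(\spor(C))} \binom{m}{i} \le (m+1)^{\VC(\spor(C))}.
\]
Substituting the identity from Proposition~\ref{prop:nrd-vc} then gives $\card{\spor(C)} \le (m+1)^{\NRD(C)}$, as required.

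There is no real obstacle here: Proposition~\ref{prop:nrd-vc} is the content and Sauer-Shelah-Peres is invoked as a black box. The only thing worth briefly noting in the write-up is that Sauer-Shelah-Peres applies to an arbitrary code in $\{0,1\}^m$, and in particular to the (potentially exponentially larger) set $\spor(C)$, so no special structure of $\spor(C)$ beyond its VC dimension is needed. The proof can be written in two lines.
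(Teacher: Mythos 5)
Your proof is correct and matches exactly the route the paper intends: apply Proposition~\ref{prop:nrd-vc} to rewrite $\NRD(C)$ as $\VC(\spor(C))$, then invoke the Sauer-Shelah-Perles bound (Lemma~\ref{lem:sauer-shelah}) on $\spor(C)$. The paper states this as an immediate corollary without spelling out the two lines, which is precisely what you have done.
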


\subsubsection{Entropy} We define the binary entropy function to be 
\[
h(x) := x \log_2(1/x) + (1-x) \log_2(1/(1-x))
\]
for $x \in (0,1)$ with $h(0) = h(1) = 0$ with maximum value at $h(1/2) = 1$. Given a probability distribution $\mathcal D$ over a finite set, we define the entropy of the distribution as
\[
H(\mathcal D) := \sum_{u \in \supp(\cD)} \Pr_{\cD}[u] \log_2\left(\frac{1}{\Pr_{\cD}[u]}\right).
\]
Since $x \log_2(1/x)$ is a concave function of $x$, we have that every distribution $\cD$ over a finite set $S$ has $H(\cD) \le \log_2|S|$, with equality when $\cD$ is the uniform distribution over $S$. As such, the following is implied by Corollary~\ref{cor:spor-bound}.

\begin{proposition}\label{prop:entropy-bound}
For any $C \subseteq \{0,1\}^m$ and any probability distribution $\cD$ over $\spor(C)$, we have that
\begin{align}
  H(\cD) \le \NRD(C) \cdot \log_2(m+1).\label{eq:entropy-nrd-bound}
\end{align}
\end{proposition}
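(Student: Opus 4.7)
The plan is to observe that Proposition~\ref{prop:entropy-bound} is essentially a one-line consequence of the two ingredients already assembled just above it: the uniform-distribution upper bound $H(\cD) \le \log_2 \lvert \Supp(\cD) \rvert$ valid for any distribution $\cD$ on a finite set, combined with Corollary~\ref{cor:spor-bound}, which controls $\lvert \spor(C) \rvert$ in terms of $\NRD(C)$.

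Concretely, I would first note that since $\cD$ is supported on $\spor(C)$, we have $\Supp(\cD) \subseteq \spor(C)$, so by the concavity of $x \mapsto x \log_2(1/x)$ (or equivalently by Jensen's inequality), the entropy is maximized by the uniform distribution on $\spor(C)$, giving $H(\cD) \le \log_2 \lvert \spor(C) \rvert$. Then I would invoke Corollary~\ref{cor:spor-bound} to bound $\lvert \spor(C) \rvert \le (m+1)^{\NRD(C)}$, so that
\[
  H(\cD) \le \log_2 (m+1)^{\NRD(C)} = \NRD(C) \cdot \log_2(m+1),
\]
which is precisely \eqref{eq:entropy-nrd-bound}.

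There is no real obstacle here: the substantive content was already packaged into Proposition~\ref{prop:nrd-vc} (identifying $\NRD(C)$ with $\VC(\spor(C))$) and Lemma~\ref{lem:sauer-shelah} (Sauer--Shelah--Perles), which together produced Corollary~\ref{cor:spor-bound}. The proposition being proved is simply the entropy-theoretic reformulation of the same size bound, stated in this way because downstream arguments (in the entropy-method sparsifier) will want to compare $H(\cD)$ for distributions arising from OR-sums of random codewords against $\NRD(C) \cdot \polylog(m)$. I would therefore keep the proof to two short sentences and move on.
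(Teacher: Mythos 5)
Your argument is exactly the one the paper uses: the proposition is stated immediately after the remark that any distribution over a finite set $S$ has entropy at most $\log_2|S|$, and is explicitly presented as a direct consequence of Corollary~\ref{cor:spor-bound}. Correct, and the same approach.
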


Given a distribution $\cD$ over $\{0,1\}^m$, we define $\E[\cD] \in [0,1]^m$ to be the coordinate-wise expected value of the distribution. Given $I \subseteq [m]$, we let the punctured distribution $\cD|_{I}$ to be
\[
  \Pr_{\cD|_{I}}[c] := \sum_{\substack{c' \in \{0,1\}^m\\c'|_{I} = c}} \Pr_{\cD}[c'].
\]

We next discuss the significance of entropy to the structure of codes.

\subsection{Gilmer's Entropy Method}\label{subsec:gilmer}

In 2022, Gilmer~\cite{gilmer2022constant} made the following breakthough on the structure of union-closed sets, which we restate in terms of the OR-closed codes.

\begin{theorem}[\cite{gilmer2022constant}, restated]\label{thm:gilmer}
Let $C \subseteq \{0,1\}^m$ be OR-closed, let $\cD$ be the uniform distribution over $C$. Then, there exists $i \in [m]$ for which $\E[\cD]_i \ge 1/100$.
\end{theorem}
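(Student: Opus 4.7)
The plan is to follow Gilmer's entropy approach. I argue by contradiction: assume that $p_i := \E[\cD]_i < 1/100$ for every $i \in [m]$ (the claim is nontrivial only when $|C| \ge 2$, in which case some $p_i > 0$). Let $A, B$ be independent samples from $\cD$. Because $C$ is $\OR$-closed, $A \vee B \in C$, so the support bound gives $H(A \vee B) \le \log_2 |C| = H(A)$. Expanding $H(A, A \vee B)$ via the chain rule in two different ways,
\[
H(A) + H(A \vee B \mid A) \;=\; H(A, A \vee B) \;=\; H(A \vee B) + H(A \mid A \vee B),
\]
and substituting $H(A \vee B) \le H(A)$ yields the key consequence $H(A \vee B \mid A) \le H(A \mid A \vee B)$.

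Next I will compute both sides coordinate-by-coordinate. For the left side, since the coordinates of $B$ are mutually independent and independent of $A$, conditional on $A$ the coordinates of $A \vee B$ decouple; moreover $(A \vee B)_i$ is $1$ deterministically when $A_i = 1$ and distributed as $\mathrm{Ber}(p_i)$ when $A_i = 0$. This gives the exact identity $H(A \vee B \mid A) = \sum_i (1 - p_i)\, h(p_i)$. For the right side, the chain rule plus dropping conditioning yields $H(A \mid A \vee B) \le \sum_i H(A_i \mid (A \vee B)_i)$, and a short Bayes calculation shows $\Pr[A_i = 1 \mid (A \vee B)_i = 1] = 1/(2-p_i)$, from which $H(A_i \mid (A \vee B)_i) = (2 p_i - p_i^2)\, h\!\bigl(\tfrac{1}{2 - p_i}\bigr)$.

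The proof is thus reduced to establishing the pointwise analytic inequality
\[
(1 - p)\, h(p) \;>\; (2p - p^2)\, h\!\left(\tfrac{1}{2 - p}\right) \qquad \text{for every } p \in (0, 1/100],
\]
since summing this over all coordinates (and using that at least one $p_i > 0$) strictly violates the key consequence above. Heuristically the inequality is very clear: as $p \to 0$ the left side grows like $p \log_2(1/p)$ while the right side is only $\Theta(p)$, so the left side dominates by a factor $\Theta(\log(1/p))$. The sole technical obstacle is verifying this inequality rigorously on the entire interval $(0, 1/100]$, which I would do by checking it at the endpoint $p = 1/100$ (where a numerical computation shows the left side exceeds the right side by a factor of roughly four) and then monitoring the sign of the derivative of the difference of the two sides to conclude that the gap is maintained throughout the interval. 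Every other step is a routine application of standard entropy identities.
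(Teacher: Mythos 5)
The paper does not prove this statement; it cites \cite{gilmer2022constant} directly, and in Section~\ref{subsec:gilmer} only sketches the argument for the refined bound of Sawin (Theorem~\ref{thm:sawin}), noting that the crucial step is the chain-rule decomposition $H(A_i \vee B_i \mid A_{[i-1]}, B_{[i-1]}) \ge \gamma \cdot H(A_i \mid A_{[i-1]})$, where the conditioning on the \emph{prefixes} of $A$ and $B$ is what makes the conditional laws of $A_i$ and $B_i$ independent and identically distributed.

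Your proposal departs from this structure and, as written, has a gap that breaks the argument. The asserted ``exact identity'' $H(A \vee B \mid A) = \sum_i (1 - p_i)\, h(p_i)$ rests on the claim that ``the coordinates of $B$ are mutually independent.'' This is false in general: $B$ is drawn uniformly from $C$, an arbitrary $\OR$-closed set, and its coordinates can be heavily correlated. What your decoupling calculation actually gives is only an \emph{upper} bound, $H(A \vee B \mid A) = \E_a\bigl[H(a \vee B)\bigr] \le \sum_i (1 - p_i)\,h(p_i)$, by subadditivity applied to the coordinates of $a \vee B$. But your contradiction requires a \emph{lower} bound on $H(A \vee B \mid A)$ that exceeds the upper bound $\sum_i (2p_i - p_i^2)\,h\bigl(\tfrac{1}{2 - p_i}\bigr)$ on $H(A \mid A \vee B)$. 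Two upper bounds pointing in the same direction cannot contradict $H(A \vee B \mid A) \le H(A \mid A \vee B)$, so the proof does not close.

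The deeper issue is that conditioning on all of $A$ throws away exactly the structure Gilmer exploits. After conditioning on the prefix $(A \vee B)_{<i}$ (or more coarsely, on $(A_{<i}, B_{<i})$), the residual randomness in $A_i$ and $B_i$ factors as a product of two i.i.d.\ $[0,1]$-valued conditional-mean random variables, and the whole argument reduces to a one-dimensional inequality about such pairs. Conditioning on the full vector $A$ fixes $A_i$ outright and leaves only the (correlated) marginal of $B$, so there is no clean coordinate-wise comparison available. If you want to repair the argument, follow the route sketched after Theorem~\ref{thm:sawin} in the paper: apply the chain rule to $H(A \vee B)$ and to $H(A)$, compare the conditional entropies term by term, and reduce to the two-random-variable inequality.
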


As mentioned in the introduction, numerous technical improvements have now risen the constant $1/100$ to more than $0.382$~\cite{alweiss2022improved,chase2022approximate,pebody2022Extensiona,sawin2023improved,cambie2022better}, with $1/2$ still conjectured as being the best possible improvement. Of note, the previous best version of Theorem~\ref{thm:gilmer} was a lower bound of $\Omega(1/\log_2 |C|)$~\cite{knill1994graph,wojcik1999Unionclosed,gilmer2022constant}. We now give some intuition as to why the $\Theta(\log_2|C|)$ ``gain'' in Gilmer's theorem is essentially the same $\Theta(\log_2|C|)$ we seek to shave in Theorem~\ref{thm:nrd-spr-warmup}. 

From the perspective of code sparsification, Gilmer's theorem appears quite useful, as adding the coordinate $i$ with $\E[\cD]_i \ge 1/100$ to our sparsifier allowed us to make nontrivial progress on sparsifying a constant fraction of $\spor(C)$. If we make the (bold) assumption that this constant-fraction of codewords need not be revisited, we can recursively apply Gilmer's theorem to the remaining $99/100$ fraction of $\spor(C)$. Then, since $\log_2 \card{\spor(C)} \le \NRD(C) \cdot \log_2(m+1)$, we will have ``sparsified'' all of $\spor(C)$ (and thus $C$) using only $\widetilde{O}(\NRD(C))$ coordinates.

Although this informal sketch has serious technical issues (e.g., why can we ``forget'' a codeword after saving a single coordinate?), it turns out that we can make a rigorous argument affirming this intuition. For simplicity, assume all codewords in $C \subseteq \{0,1\}^m$ have the same Hamming weight $d$. What we eventually prove (see Theorem~\ref{thm:NRD-decomp}) is that for any $\lambda \ge 1$ there exist $\approx \lambda \NRD(C)$ coordinates $I \subseteq [m]$ such that the punctured code $C|_{\overline{I}}$ has very few codewords (i.e., at most $\approx \exp(d/\lambda)$). In otherwords, the coordinates of $I$ capture the most important distinguishing features among codewords of $C$. We can then subsample the remaining coordinates of $C$ to make progress toward the sparsifier, like in one step of the recursive argument proving Theorem~\ref{thm:nrd-spr-warmup}.

We note that the linear code sparsifiers of Khanna, Putterman, and Sudan~\cite{khanna2024Code,khanna2024Characterizations} also prove a similar bound (e.g., see Theorem 2.2 in \cite{khanna2024Code}). However, the proof methods are very different. In \cite{khanna2024Code}, they recursively remove ``dense'' subcodes of their linear code and then use a ``Bencz{\'u}r-Karger-style'' contraction algorithm to prove the result code is sparse. In our case, we instead simultaneously understand the dense and sparse structures of our code $C$ using Gilmer's entropy method. In particular, by applying the minimax theorem in Proposition~\ref{prop:cover-sparse}, we show that our code either has a probability distribution over it in which each coordinate is unlikely to equal $1$ (``$\theta$-sparse'') or there is probability distribution over the coordinates has nontrivial overlap with \emph{every} codeword (``$\theta$-cover''). If there is a``$\theta$-cover,'' we prove we can add a coordinate to our sparsifier which makes substantial progress toward properly sparsifying every codeword.  Conversely, if there is a $\theta$-sparse distribution, we use Gilmer's entropy method to prove that the ``important part'' of the distribution is quite sparse, and thus can be removed. We apply Proposition~\ref{prop:cover-sparse} many times to repeatedly refine our understanding of sparse and dense structures in $C$. When the process terminates, we obtain the decomposition theorem (Theorem~\ref{thm:NRD-decomp}), from which the remainder of the proof follows using standard techniques.

\paragraph{The Entropy Bound.} We now return to discussing concrete technical details. The key entropy bound we use is a refined version of Gilmer's method due to Sawin~\cite{sawin2023improved}. Like for Theorem~\ref{thm:gilmer}, we restate the bound in terms of non-linear codes.

\begin{theorem}[Theorem 2,~\cite{sawin2023improved}, adapted]\label{thm:sawin}
Let $\mathcal D$ be a distribution on $\{0,1\}^m$ such that $\max_{i \in [m]}\E[\cD]_i \le p \le \frac{3-\sqrt{5}}{2} \approx 0.3819$. Let $A,B \sim \mathcal D$ be i.i.d.\ samples from $\cD$. Then,
\begin{align}
  H(A \vee B)\ge H(A) \cdot \frac{h(1-(1-p)^2)}{h(p)}.\label{eq:sawin}
\end{align}
\end{theorem}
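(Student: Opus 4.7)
The theorem is labeled as an adaptation of Sawin's Theorem 2 refinement of Gilmer's entropy method, so my plan is to follow Sawin's argument while verifying that it extends from uniform distributions on union-closed families to arbitrary distributions on $\{0,1\}^m$. The strategy is to reduce, via the chain rule for entropy, to a per-coordinate inequality, and from there to a scalar calculus fact.

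First, applying the chain rule gives $H(A \vee B) = \sum_{i=1}^m H((A \vee B)_i \mid (A \vee B)_{<i})$ and $H(A) = \sum_{i=1}^m H(A_i \mid A_{<i})$. Setting $q := 1-(1-p)^2$, it suffices to prove the per-coordinate bound $H((A \vee B)_i \mid (A \vee B)_{<i}) \ge \frac{h(q)}{h(p)}\, H(A_i \mid A_{<i})$ for each $i$ and sum. To access this, I would use that conditioning decreases entropy to pass to the stronger conditioning on $(A_{<i}, B_{<i})$, under which $A$ and $B$ are still independent (since they are independent unconditionally). This lets me write $H((A \vee B)_i \mid A_{<i}, B_{<i}) = \E_{a,b}\, h(1-(1-\alpha(a))(1-\beta(b)))$, where $\alpha(a), \beta(b)$ are the conditional marginals of $A_i, B_i$; by exchangeability these share a common distribution $\mu$ on $[0,1]$ with $\E[\mu] \le p$.

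This reduces the per-coordinate inequality to a purely analytic claim: for $\alpha, \beta$ i.i.d.\ from any distribution $\mu$ on $[0,1]$ with $\E[\mu] \le p \le p^*$, one has $\E_{\alpha, \beta}\, h(1 - (1-\alpha)(1-\beta)) \ge \frac{h(q)}{h(p)}\, \E_\alpha\, h(\alpha)$, where $p^* := (3-\sqrt{5})/2$. The cleanest route to this is via monotonicity of the one-variable function $f(t) := h(1-(1-t)^2)/h(t)$ on $(0, p^*]$. At the endpoint $t = p^*$ the identity $(1-p^*)^2 = p^*$ (equivalently $p^{*2} - 3p^* + 1 = 0$) yields $f(p^*) = 1$, which pins down the specific threshold in the hypothesis; the monotonicity of $f$ on $(0, p^*]$ is then a one-variable calculus exercise carried out by differentiating and sign-analyzing.

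The main obstacle is the reduction from the two-variable analytic claim to the scalar monotonicity of $f$. The conditional marginals $\alpha, \beta$ need not be pointwise bounded by $p$---only their means are bounded by $p_i \le p$---so a naive pointwise appeal to monotonicity of $f$ is unavailable. Sawin's resolution is a careful two-variable convexity argument that exploits the concavity structure of $(\alpha, \beta) \mapsto h(1-(1-\alpha)(1-\beta))$ in the relevant regime together with Jensen's inequality; verifying that this argument depends only on the i.i.d.-ness of $A, B$ and the marginal bound $\max_i \E[\cD]_i \le p$---and not on any structural feature such as union-closure---is exactly what the ``adapted'' label refers to.
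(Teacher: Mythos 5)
The paper does not present its own proof of this statement: it is cited directly from Sawin~\cite{sawin2023improved} (hence ``adapted''), and the surrounding text gives only the brief sketch around equation~\eqref{eq:process}, namely the chain-rule decomposition, the pass to the stronger joint conditioning on $(A_{[i-1]},B_{[i-1]})$, and the reduction to a scalar inequality for two i.i.d.\ $[0,1]$-valued random variables of mean at most $p$. Your outline matches that sketch precisely, and you correctly identify both that the scalar monotonicity of $f(t)=h(1-(1-t)^2)/h(t)$ is by itself insufficient (since the conditional marginals are bounded only in expectation) and that the ``adaptation'' is benign because Sawin's argument uses only i.i.d.\ sampling and the coordinate-marginal bound, not union-closure, while---like the paper---you defer the core two-variable analytic inequality to the cited source.
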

Note that (\ref{eq:sawin}) is tight when $\cD$ is the product of $m$ independent Bernoulli distributions with probability $p$. The key idea used by Gilmer and others to prove results like (\ref{eq:sawin}) is to inductively show for $i \in [n]$ that
\begin{align}
H(A_i \vee B_i | A_{[i-1]}, B_{[i-1]}) \ge H(A_i | A_{[i-1]})\cdot \frac{h(1-(1-p)^2)}{h(p)}.\label{eq:process}
\end{align}
The crucial property is that the conditional random variable $\E[A_i|A_{[i-1]}]$ is independently and identically distributed to $\E[B_i|B_{[i-1]}]$. Thus, (\ref{eq:process}) can be reduced to proving a simple (but ingenious) inequality about two i.i.d.\ $[0,1]$-valued random variables with mean at most $p$. See \cite{sawin2023improved} and related works  (such as the exposition by Boppana~\cite{boppana2023Useful}) for more details.

 Although the application of (\ref{eq:sawin}) to the union-closed sets conjecture is mostly interested in the regime in which $p$ is a constant, we shall focus on using this bound in the regime in which $p$ tends to $0$ as $p \to \infty$. We recursively apply Theorem~\ref{thm:sawin} to amplify the growth in entropy.

\begin{corollary}\label{cor:entropy-power}
Let $\mathcal D$ be a distribution over $\{0,1\}^m$ with $\max_{i \in [m]}\E[\cD]_i \le p$. Let $N$ be a power of two such that $1 - (1-p)^{N/2} \le \frac{3-\sqrt{5}}{2}$. Let $A_1, \hdots, A_N$ be i.i.d.\ samples from $\cD$, then
\[
H\left(\bigvee_{i=1}^N A_i\right) \ge H(\cD) \cdot \frac{h(1 - (1-p)^N)}{h(p)}.
\] 
\end{corollary}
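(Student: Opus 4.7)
The plan is to prove this by induction on $k := \log_2 N$, pairing up samples and applying Theorem~\ref{thm:sawin} at each level. Define a sequence of distributions $\cD_0, \cD_1, \ldots, \cD_k$ by setting $\cD_0 := \cD$ and letting $\cD_{j+1}$ be the distribution of $X \vee X'$ where $X, X' \sim \cD_j$ are i.i.d. Also set $p_j := 1 - (1-p)^{2^j}$, so that $p_0 = p$ and $p_k = 1 - (1-p)^N$. By a straightforward induction on $j$, since $\E[X \vee X']_i = 1 - (1 - \E[X]_i)(1 - \E[X']_i)$ for independent $X, X'$, we get $\max_{i \in [m]} \E[\cD_j]_i \le p_j$ for every $j$.

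The key observation is that if $A_1, \ldots, A_N \sim \cD$ are i.i.d., then grouping them into consecutive disjoint pairs yields $N/2$ i.i.d.\ samples from $\cD_1$, and recursively grouping yields that $\bigvee_{i=1}^N A_i$ has exactly the distribution $\cD_k$. So it suffices to show
\[
H(\cD_k) \ge H(\cD_0) \cdot \frac{h(p_k)}{h(p_0)}.
\]

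I would establish this by a telescoping argument. Applying Theorem~\ref{thm:sawin} to $\cD_j$ (whose coordinate marginals are bounded by $p_j$) gives $H(\cD_{j+1}) \ge H(\cD_j) \cdot \frac{h(1-(1-p_j)^2)}{h(p_j)} = H(\cD_j) \cdot \frac{h(p_{j+1})}{h(p_j)}$, where the equality is just unfolding the definition $p_{j+1} = 1-(1-p_j)^2$ (which itself follows from $1 - (1-p)^{2^{j+1}} = 1 - ((1-p)^{2^j})^2$). Chaining these inequalities from $j = 0$ to $j = k-1$ gives a telescoping product whose intermediate $h(p_j)$ factors cancel, leaving exactly $H(\cD_k) \ge H(\cD) \cdot h(p_k)/h(p_0)$.

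The only nontrivial hypothesis check is that Theorem~\ref{thm:sawin} is applicable at every stage, i.e., $p_j \le \frac{3-\sqrt{5}}{2}$ for each $j \in \{0, 1, \ldots, k-1\}$. Since $p_j$ is monotonically increasing in $j$, it suffices to verify this for $j = k-1$, which reads $1 - (1-p)^{N/2} \le \frac{3-\sqrt{5}}{2}$ -- precisely the hypothesis we are handed. This is the only subtle point, and it is essentially built into the statement, so I do not anticipate a genuine obstacle: the proof is a clean induction whose sole technical content sits inside Theorem~\ref{thm:sawin}.
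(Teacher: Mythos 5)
Your proof is correct and takes essentially the same route as the paper's: both organize the argument around the dyadic sequence $p_j := 1-(1-p)^{2^j}$, apply Theorem~\ref{thm:sawin} at each level, and use monotonicity of $p_j$ to verify the marginal hypothesis at every stage from the single hypothesis at $j=k-1$. The only cosmetic difference is that you telescope the product explicitly from $j=0$ upward, while the paper phrases the same computation as a strong induction that peels off the top level first; the two presentations unfold to identical inequality chains.
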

\begin{remark}
We only prove our result when $N$ power of two as more precision would not significantly improve our main result. See Conjecture 1 and Lemma 26 in~\cite{wakhare2024two} for a comparable bound for all $N$.
\end{remark}

\begin{remark}
Note that if $p = \Theta(1/N)$, then $h(1-(1-p)^N) / h(p) \approx \frac{N}{\log N}$ (see Lemma~\ref{lem:or-closed-sparse-bound}), so we get a nearly factor of $N$ boost in the entropy. As we shall soon see, this ``boost'' is analogous to the exponential savings in the Bencz{\' u}r-Karger cut bound.
\end{remark}

\begin{proof}
We prove this by induction on $\log_2 N$. If $N=2$, we apply Theorem~\ref{thm:sawin} directly. Otherwise, we consider the random variable $\cD_{N/2}$ corresponding to the distribution $\bigvee_{i=1}^{N/2}A_i$, and note that by independence, for all $i \in [m]$,
\[
\Pr_{c \sim \cD_{N/2}}[c_i=1] = 1 - \Pr_{c \sim \cD}[c_i=0]^{N/2} \le 1 - (1-p)^{N/2} \le \frac{3-\sqrt{5}}{2}.
\]
Therefore,
\begin{align*}
  H\left(\bigvee_{i=1}^N A_i\right)  &\ge H(\cD_{N/2}) \cdot \frac{h(1 - (1-p)^N)}{h(1-(1-p)^{N/2})}\\
&\ge H(\cD) \cdot \frac{h(1 - (1-p)^{N/2}))}{h(p)} \cdot \frac{h(1 - (1-p)^N)}{h(1-(1-p)^{N/2})},\\
&= H(\cD) \cdot \frac{h(1 - (1-p)^N)}{h(p)},
\end{align*}
where the second line invokes the induction hypothesis because $1-(1-p)^{N/4} \le 1-(1-p)^{N/2} \le \frac{3-\sqrt{5}}{2}$.
\end{proof}

\subsection{Improved Entropy Bound for $\theta$-sparse Distributions}\label{subsec:theta-sparse}

We say that a distribution $\cD$ over $\{0,1\}^m$ is $\theta$-sparse if for all $\max_{i \in [m]}\E[\cD]_i \le 1/\theta$. By using Gilmer's entropy method, we can show that the entropy of any $\theta$-sparse distributions is approximately $\frac{\log \theta}{\theta}$ times smaller than the RHS of (\ref{eq:entropy-nrd-bound}). In some sense, this bound is similar to \cite{khanna2024Code}'s code-counting lemma for sparse linear codes, although their bound is proved using a Bencz{\'u}r-Karger-style contraction algorithm whereas ours is proved using Gilmer's entropy method.

We begin by proving the following bound for any distribution over any OR-closed code.

\begin{lemma}\label{lem:or-closed-sparse-bound}
Let $C \subseteq \{0,1\}^m$ be an OR-closed code and let $\theta \ge 1$. Let $\cD$ be a $\theta$-sparse distribution over $C$. Then,
\begin{align}
H(\cD) \le \frac{3\log_2(3\theta)}{\theta}\cdot \log_2 |C|.\label{eq:closed-bound}
\end{align}
\end{lemma}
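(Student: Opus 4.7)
The plan is to apply Corollary~\ref{cor:entropy-power} (the iterated Gilmer-Sawin bound) with $p = 1/\theta$ and an appropriately chosen power of two $N \asymp \theta$, then use OR-closure to translate the entropy growth into a bound on $H(\cD)$.

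First I would dispatch the easy regime: when $\theta$ is bounded by some small absolute constant (say $\theta \le 4$), the right-hand side $3\log_2(3\theta)/\theta$ already exceeds $1$, and since $\cD$ is supported on $C$, $H(\cD) \le \log_2|C|$ gives the bound for free. So I may assume $\theta$ is sufficiently large, in particular large enough that $p = 1/\theta$ is well below the threshold $(3-\sqrt{5})/2 \approx 0.382$ of Theorem~\ref{thm:sawin}.

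Next I would choose $N$ to be the largest power of two satisfying $1-(1-1/\theta)^{N/2} \le (3-\sqrt{5})/2$; equivalently, $(1-1/\theta)^{N/2} \ge (\sqrt{5}-1)/2$. Taking logarithms and using $-\log(1-1/\theta) \le 2/\theta$ (valid for $\theta$ at least a small constant), a routine estimate shows $N = \Theta(\theta)$; more concretely, one obtains $N \ge c_0 \theta$ for an explicit constant $c_0 > 0$. Letting $A_1,\dots,A_N \sim \cD$ be i.i.d., Corollary~\ref{cor:entropy-power} then yields
\[
H\!\left(\bigvee_{i=1}^N A_i\right) \;\ge\; H(\cD)\cdot \frac{h(1-(1-1/\theta)^N)}{h(1/\theta)}.
\]
Since $C$ is OR-closed, each sample $A_i$ lies in $C$, so $\bigvee_i A_i \in C$; hence the left-hand side is at most $\log_2|C|$. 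Rearranging,
\[
H(\cD) \;\le\; \log_2|C|\cdot \frac{h(1/\theta)}{h(1-(1-1/\theta)^N)}.
\]

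Finally I would estimate the two entropy terms. For the denominator, by choice of $N$ we have $1-(1-1/\theta)^N$ bounded away from $0$ and from $1$ (roughly in $[0.3,0.7]$), so $h(1-(1-1/\theta)^N) \ge c_1$ for an absolute constant $c_1 > 0$. For the numerator, the standard inequality $h(x) \le x\log_2(e/x)$ (or the cruder $h(1/\theta) \le (\log_2(3\theta))/\theta$) bounds $h(1/\theta) \le \log_2(3\theta)/\theta$. Combining gives $H(\cD) \le (\log_2(3\theta)/(\theta c_1))\log_2|C|$, and absorbing the constant $1/c_1$ into the factor $3$ yields the desired $\frac{3\log_2(3\theta)}{\theta}\log_2|C|$.

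The main obstacle is purely quantitative: one must pick $N$ carefully so that the ratio $h(1/\theta)/h(1-(1-1/\theta)^N)$ comes out to at most $3\log_2(3\theta)/\theta$ with the constant $3$ (rather than something larger). Nothing conceptual is hiding here—every step beyond Corollary~\ref{cor:entropy-power} is a tail estimate on $h$—but the constant $3$ forces one to be slightly careful about which power of two is chosen and about verifying the small-$\theta$ cases by hand. No further structural properties of $C$ beyond OR-closure are used.
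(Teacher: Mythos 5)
Your proposal is correct and follows essentially the same route as the paper: dispatch small $\theta$ trivially, apply Corollary~\ref{cor:entropy-power} with $p = 1/\theta$ and $N = \Theta(\theta)$ a power of two, use OR-closure to bound $H(\bigvee_i A_i) \le \log_2|C|$, and then estimate $h(1/\theta)/h(1-(1-1/\theta)^N)$. The paper fixes $N = 2^{\lfloor \log_2\theta\rfloor -1}$ and verifies $h(1-(1-p)^N) > 1/3$ via Bernoulli's inequality and $1-x \le e^{-x}$, whereas you take the maximal admissible power of two; either choice yields the claimed constant.
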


\begin{proof}
If $\theta < 4$, we directly see that
\[
H(\cD) \le \log_2 |C| \le \frac{3\log_2(3\theta)}{\theta}\cdot \log_2|C|.
\]
Otherwise, let $p := 1/\theta$ and $N := 2^{\lfloor \log_2 \theta\rfloor-1} \in (\theta/4, \theta/2]$.  Since $\theta \ge 4$, we have that $N \ge 2$. By Bernoulli's inequality,
\begin{align*}
  1 - (1-p)^{N/2} &\le pN/2 \le 1/4 < \frac{3-\sqrt{5}}{2}, \text{ and}\\
  1 - (1-p)^N &\le pN \le 1/2.
\end{align*}
Furthermore, since $1-p \le e^{-p}$ and $pN > 1/4$, we have that
\[
    1 - (1-p)^N \ge 1-e^{-pN} > 1 - e^{-1/4} > 0.22.
\]
Thus, since the binary entropy function in increasing the range $[0,1/2]$, 
\begin{align*}
h(1-(1-p)^N) > h(0.22) > 1/3.
\end{align*}

Therefore, by Corollary~\ref{cor:entropy-power} and the fact that $C$ is OR-closed, we have that if $A_1, \hdots, A_N$ are i.i.d.\ samples of $\cD$, then
\begin{align*} 
  \log_2 |C| &\ge H\left(\bigvee_{i=1}^N A_i\right)\\
                            &\ge H(\cD) \cdot \frac{h(1 - (1-p)^N)}{h(p)}\\
                            &\ge H(\cD) \cdot \frac{1/3}{\log_2(3\theta)/\theta}.
\end{align*}
Thus, (\ref{eq:closed-bound}) holds.
\end{proof}

As an immediate corollary of Lemma~\ref{lem:or-closed-sparse-bound}, we can bound the entropy of any $\theta$-sparse distribution in terms of the support's non-redundancy.

\begin{corollary}\label{cor:entropy-Karger}
Let $C \subseteq \{0,1\}^m$ and $\theta \ge 1$. For any $\theta$-sparse distribution $\cD$ over $C$ we have that
\begin{align}
  H(\cD) \le H(C,\theta) := \frac{3\log_2(3\theta)}{\theta}\cdot \NRD(C)\cdot \log_2(m+1).\label{eq:theta-sparse-entropy}
  \end{align}
\end{corollary}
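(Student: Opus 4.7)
The plan is to reduce to the OR-closed case already handled by Lemma~\ref{lem:or-closed-sparse-bound} and then control the size of the OR-closure by non-redundancy via Corollary~\ref{cor:spor-bound}.

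First, I would view $\cD$ as a distribution on $\spor(C) \supseteq C$ rather than on $C$ itself. This is legitimate because $\spor(C)$ is OR-closed by construction and the support of $\cD$ lies inside $C \subseteq \spor(C)$. Crucially, $\theta$-sparsity is defined solely in terms of the coordinate marginals $\E[\cD]_i$, so $\cD$ remains $\theta$-sparse when regarded as a distribution on the larger set $\spor(C)$ (the marginals are unchanged).

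Second, I would invoke Lemma~\ref{lem:or-closed-sparse-bound} on the OR-closed code $\spor(C)$ with the $\theta$-sparse distribution $\cD$, obtaining
\[
H(\cD) \le \frac{3\log_2(3\theta)}{\theta}\cdot \log_2 \lvert \spor(C)\rvert.
\]
Finally, I would apply Corollary~\ref{cor:spor-bound}, which gives $\lvert \spor(C)\rvert \le (m+1)^{\NRD(C)}$, so that $\log_2 \lvert \spor(C)\rvert \le \NRD(C)\cdot \log_2(m+1)$. Substituting yields exactly the claimed bound.

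There is no real obstacle here since the hard work is already packaged into Lemma~\ref{lem:or-closed-sparse-bound} (the Gilmer/Sawin entropy amplification via Corollary~\ref{cor:entropy-power}) and into Corollary~\ref{cor:spor-bound} (Sauer-Shelah-Peres together with Proposition~\ref{prop:nrd-vc}). The only subtlety worth flagging explicitly is that $\theta$-sparsity is preserved under enlarging the ambient set from $C$ to $\spor(C)$, which is immediate from the definition.
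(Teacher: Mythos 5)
Your proof is correct and takes essentially the same route as the paper: apply Lemma~\ref{lem:or-closed-sparse-bound} to the OR-closed code $\spor(C)$ (viewing $\cD$ as a distribution on it), then bound $\log_2\lvert\spor(C)\rvert$ by $\NRD(C)\log_2(m+1)$ via Sauer--Shelah--Perles and Proposition~\ref{prop:nrd-vc}. You are in fact slightly more careful than the paper in explicitly flagging that $\theta$-sparsity is preserved when regarding $\cD$ as a distribution over the larger set $\spor(C)$, a step the paper leaves implicit.
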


\begin{proof}
By Lemma~\ref{lem:or-closed-sparse-bound}, Lemma~\ref{lem:sauer-shelah}, and Proposition~\ref{prop:nrd-vc}, we have that
\begin{align*}
    H(\cD) &\le \frac{3\log_2(3\theta)}{\theta}\cdot \log_2\card{\spor{C}}\\
    &= \frac{3\log_2(3\theta)}{\theta} \cdot \VC(\spor{C}) \cdot \log_2(m+1)\\
    &= \frac{3\log_2(3\theta)}{\theta} \cdot \NRD(C) \cdot \log_2(m+1)
\end{align*}
can directly apply Proposition~\ref{prop:entropy-bound} to obtain that
\end{proof}

As an informal application, for $d \le \NRD(C)$, let $C_d$ be the set of codewords of weight exactly $d$ in $C$. Recall that the warm-up sparsifier (Theorem~\ref{thm:nrd-spr-warmup}) adds the entire support of $C_d$ to the sparsifier, so we are currently get no nontrivial savings. However, if the uniform distribution $\cD$ over $C_d$ is $\theta$-sparse, then we know that $|C_d| = H(\cD) \le \widetilde{O}(\NRD(C)/\theta)$. As such, by subsampling $\supp(C_d)$ at a rate of $\widetilde{O}_{\eps}(1/\theta)$, we can still get an $\eps$-sparsifier for $C_d$, beating the bound given by Lemma~\ref{lem:NRD-supp-bound}. In the next section, we discuss the scenario in which $\cD$ (or in fact any other distribution on $C_d$) is not $\theta$-sparse. In that case, we show using the minimax theorem that the codewords of $C_d$ have a ``$\theta$-cover.''

\subsection{Minimax}\label{subsec:minimax}

For any $\theta \ge 1$, we say that a code $C \subseteq \{0,1\}^m$ has a $\theta$-cover if there exists a probability distribution $\cQ$ over $[m]$ such that
\begin{align}
\forall c \in C,\ \underset{i \sim \cQ}{\E}[c_i] \ge 1/\theta.
\end{align}

By definition, code $\{\}$ has a $\theta$-cover for all $\theta \ge 1$, while the other trivial code $\{0^m\}$ has no $\theta$-cover for all $\theta \ge 1$. We now use the minimax theorem~\cite{neumann1928theorie} to show that $\theta$-covers and $\theta$-sparse distributions are dual to each other.

\begin{proposition}\label{prop:cover-sparse}
For every $C \subseteq \{0,1\}^m$ and every $\theta \ge 1$, at least one of the following is true.
\begin{itemize}
\item[(1)] $C$ has a $\theta$-cover.
\item[(2)] There exists a probability distribution over $C$ which is $\theta$-sparse.
\end{itemize}
\end{proposition}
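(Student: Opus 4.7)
The plan is to view the condition as the value of a two-player zero-sum game and invoke the minimax theorem (which, in this finite-dimensional setting, is equivalent to LP duality). Let Player 1 choose a distribution $\cD$ over $C$, let Player 2 choose a distribution $\cQ$ over $[m]$, and let the payoff (to Player 2) be
\[
V(\cD, \cQ) \;=\; \underset{c \sim \cD,\, i \sim \cQ}{\E}[c_i] \;=\; \sum_{c \in C} \sum_{i \in [m]} \cD(c)\, \cQ(i)\, c_i.
\]
This is bilinear in $(\cD, \cQ)$ and the strategy sets (probability simplices over the finite sets $C$ and $[m]$) are compact and convex, so by von Neumann's minimax theorem the value $v := \max_{\cQ} \min_{\cD} V(\cD, \cQ) = \min_{\cD} \max_{\cQ} V(\cD, \cQ)$ is well-defined and attained on both sides.

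Next I would use that linear functions on a simplex attain their extrema at vertices to rewrite the inner optimizations in terms of pure strategies:
\[
\min_{\cD} V(\cD, \cQ) = \min_{c \in C} \underset{i \sim \cQ}{\E}[c_i], \qquad \max_{\cQ} V(\cD, \cQ) = \max_{i \in [m]} \underset{c \sim \cD}{\E}[c_i].
\]
Thus $v = \max_{\cQ} \min_{c \in C} \E_{i\sim\cQ}[c_i] = \min_{\cD} \max_{i \in [m]} \E_{c \sim \cD}[c_i]$, with optimal $\cQ^\ast$ and $\cD^\ast$ attaining these values.

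Now I split into cases based on $v$. If $v \ge 1/\theta$, the optimal $\cQ^\ast$ satisfies $\E_{i \sim \cQ^\ast}[c_i] \ge 1/\theta$ for every $c \in C$, so $\cQ^\ast$ is a $\theta$-cover, giving case (1). Otherwise $v < 1/\theta$, and the optimal $\cD^\ast$ satisfies $\E_{c \sim \cD^\ast}[c_i] \le v < 1/\theta$ for every $i \in [m]$, so $\cD^\ast$ is a $\theta$-sparse distribution on $C$, giving case (2). (Note: the boundary value $v = 1/\theta$ is consistent with both cases, which is fine since the proposition is an ``or.'')

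Finally, I would handle the degenerate cases separately so the minimax setup is well-posed: if $C = \emptyset$ then condition (1) holds vacuously (any $\cQ$ is a $\theta$-cover), while if $C$ is nonempty then the simplex over $C$ is nonempty and the argument above applies. There is no real obstacle here — the only subtlety is the pure-strategy reduction for the inner optima, which is a standard vertex-of-simplex observation. The proposition is essentially a restatement of LP duality, packaged so that the later sparsification argument can branch on whichever alternative the duality produces.
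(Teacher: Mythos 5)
Your proof is correct and takes essentially the same approach as the paper: both set up the identical zero-sum game (one player picks $c \in C$, the other picks $i \in [m]$, payoff $c_i$), invoke the minimax theorem, and branch on whether the game value is above or below $1/\theta$, handling $C = \emptyset$ separately.
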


\begin{proof}
Since the empty code has a $\theta$-cover for all $\theta \ge 1$, we may assume that $C$ is nonempty. Consider the following zero-sum game. Have Alice pick a codeword $c \in C$ and Bob pick an index $i \in [m]$. The payoff of the game is $c_i$, which Alice is seeking to minimize and Bob is seeking to maximize. By the minimax theorem~\cite{neumann1928theorie}, there exists $\eta \in [0,1]$, a probability distribution $\cP$ over $C$, and a probability distribution $\cQ$ over $[m]$ with the following properties.
\begin{align}
\forall i \in [n], \underset{c \sim \cP}{\E}[c_i] &\le \eta \label{eq:eta-sparse}\\
\forall c \in C, \underset{i \sim \cQ}{\E}[c_i] &\ge \eta\label{eq:eta-cover}
\end{align}
In particular, if $1/\theta \ge \eta$, then (\ref{eq:eta-sparse}) implies that $\cP$ is a $\theta$-sparse distribution supported on $C$. Otherwise, if $1/\theta \le \eta$, then (\ref{eq:eta-cover}) implies that $\cQ$ is a $\theta$-cover of $C$.
\end{proof}

We now use Proposition~\ref{prop:cover-sparse} to prove that if we remove a ``small'' number of codewords from $C$, then the remainder of $C$ has a $\theta$-cover.

\begin{lemma}\label{lem:sparse-removal}
For any $C \subseteq \{0,1\}^m$ and $\theta \ge 1$. Recall that $H(C,\theta)$ is the RHS of (\ref{eq:theta-sparse-entropy}).  There exists $S \subseteq C$ of size at most $\exp_2(H(C,\theta))$ such that $C \setminus S$ has a $\theta$-cover.
\end{lemma}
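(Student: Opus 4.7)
The plan is to combine the minimax dichotomy of Proposition~\ref{prop:cover-sparse} with the entropy bound of Corollary~\ref{cor:entropy-Karger}, using a maximum-entropy construction. If $C$ already has a $\theta$-cover, take $S = \emptyset$; otherwise, Proposition~\ref{prop:cover-sparse} guarantees that the set of $\theta$-sparse distributions on $C$ is nonempty, and since that set is a compact convex polytope and entropy is continuous, a maximum-entropy element $\cP^*$ is attained. Corollary~\ref{cor:entropy-Karger} then gives $H(\cP^*) \le H(C,\theta)$. My strategy is to read the candidate cover $\cQ$ off of the KKT dual variables for $\cP^*$ and to show directly that the set of codewords not $\tfrac{1}{\theta}$-covered by $\cQ$ has size at most $2^{H(\cP^*)}$.

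The key step invokes Lagrangian/KKT conditions at $\cP^*$. Since the marginal constraints $\E_{\cP^*}[c_i] \le 1/\theta$ are linear inequalities with nonnegative multipliers, and the strictly concave entropy acts as a barrier that keeps $\cP^*$ in the positive orthant, $\cP^*$ takes the Gibbs form $\cP^*(c) = \tfrac{1}{Z}e^{-\sum_i \alpha_i c_i}$ with $\alpha_i \ge 0$, subject to complementary slackness $\alpha_i > 0 \Rightarrow \E_{\cP^*}[c_i] = 1/\theta$. Assuming $\sum_i \alpha_i > 0$, I define $\cQ(i) := \alpha_i / \sum_j \alpha_j$. Inverting the Gibbs formula yields $\sum_i \alpha_i c_i = -\ln(Z\cP^*(c))$, so $\langle \cQ, c\rangle \ge 1/\theta$ exactly when $\cP^*(c) \le \tfrac{1}{Z}\,e^{-(\sum_j \alpha_j)/\theta}$. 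Let $S$ be the set of codewords violating this threshold. Then $\cQ$ is a $\theta$-cover of $C \setminus S$ by construction, and a one-line Markov-style counting argument (probabilities sum to $1$) gives $|S| \le Z\,e^{(\sum_j \alpha_j)/\theta}$.

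Finally, a direct computation of $H(\cP^*)$ from the Gibbs form, combined with complementary slackness to evaluate $\sum_i \alpha_i \E_{\cP^*}[c_i] = (\sum_i \alpha_i)/\theta$, produces the matching identity $2^{H(\cP^*)} = Z \cdot e^{(\sum_j \alpha_j)/\theta}$, and hence $|S| \le 2^{H(\cP^*)} \le 2^{H(C,\theta)}$, as required. The degenerate case $\sum_i \alpha_i = 0$ forces $\cP^*$ to be uniform on $C$, whence $|C| = 2^{H(\cP^*)} \le 2^{H(C,\theta)}$ and we may simply take $S = C$. The main obstacle I anticipate is setting up the KKT conditions cleanly with inequality constraints (ensuring nonnegative duals and full support of $\cP^*$) and handling the degenerate case; once the Gibbs form is in hand, the remaining work is the mechanical identity that makes the Markov heavy-element count and the entropy bound agree exactly.
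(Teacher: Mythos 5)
Your approach is genuinely different from the paper's. The paper never needs a max-entropy characterization or Gibbs form: it repeatedly applies the minimax dichotomy (Proposition~\ref{prop:cover-sparse}) to peel off $\theta$-sparse distributions $\mu_1,\hdots,\mu_t$, combines them into a single $\theta$-sparse $\nu_t$ via a carefully tuned sequence of convex combinations so that the set $S_t := \argmax\nu_t$ of maximizers grows by at least one at each step, and then reads $\log_2|S_t| \le H(\nu_t)$ off the fact that $\nu_t$ is flat on $S_t$. Your argument instead picks the max-entropy $\cP^*$ in one shot, exhibits a Gibbs form, defines $\cQ$ from the KKT multipliers, and uses a Markov threshold to bound the uncovered set; the identity $2^{H(\cP^*)} = Z\,e^{A/\theta}$ is correct (I checked: $H(\cP^*) = \log_2 Z + \frac{1}{\ln 2}\sum_i \alpha_i\,\E_{\cP^*}[c_i] = \log_2 Z + \frac{A}{\theta\ln 2}$ by complementary slackness), and combined with Corollary~\ref{cor:entropy-Karger} this gives the bound. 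Conceptually cleaner, and it would be a nice alternative proof.

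However, there is a genuine gap exactly where you anticipated one: the full-support claim for $\cP^*$. The ``barrier'' heuristic alone does not establish it, and in fact the max-entropy distribution over a constrained sub-polytope of the simplex \emph{can} have zero entries: take $C = \{10, 01, 11\}$ with $\theta = 2$; the constraints $p_1+p_3 \le 1/2$ and $p_2+p_3\le 1/2$ together with $\sum p_i = 1$ force $p_3 = 0$, so $\cP^*$ is supported on two of the three codewords and the Gibbs form as you state it fails. What saves your argument is the prior case split: if $C$ has no $\theta$-cover, then $\cP^*$ \emph{does} have full support, but for a reason you have not supplied. The cleanest way to see it is LP duality: if $\cP(c_0)=0$ for every $\theta$-sparse $\cP$, then the dual of $\max \cP(c_0)$ over the $\theta$-sparse polytope produces nonnegative $\alpha'_i$ with $\sum_i \alpha'_i c_i \ge (\sum_j\alpha'_j)/\theta$ for all $c\in C$ and, strictly, $\sum_i\alpha'_i(c_0)_i \ge 1 + (\sum_j\alpha'_j)/\theta$, so $\cQ' := \alpha'/\sum_j\alpha'_j$ is a $\theta$-cover of $C$ --- contradicting the case assumption. (Equivalently: by the minimax theorem, ``no $\theta$-cover'' yields a \emph{strictly} $\theta$-sparse $\cP$, so for any $c_0$ a small perturbation $(1-t)\cP + t\delta_{c_0}$ is still feasible with $\cP(c_0)>0$, and the $+\infty$ entropy gradient at $\cP^*(c_0)=0$ then rules out $\cP^*(c_0)=0$.) You need to make one of these arguments to obtain the Gibbs form; the Slater condition for the KKT step itself also follows from the same strict-sparsity observation. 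The degenerate case $\sum_i \alpha_i = 0$ is fine once full support is established.
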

\begin{proof}
Intuitively, we seek to iteratively delete some codewords of $C$ until Proposition~\ref{prop:cover-sparse} guarantees the remainder of $C$ has a $\theta$-cover. Each codeword removed has a corresponding $\theta$-sparse distribution associated to it. To prove not too many codewords are deleted, we take a (careful) weighted average of the $\theta$-sparse distributions in such a way that every removed codeword has equal probability of being sampled. In that case, the entropy of this $\theta$-sparse distribution is at least the logarithm of the number of codewords removed, so we can bound the number of codewords by Corollary~\ref{cor:entropy-Karger}.

Given a probability distribution $\cD$ over $C$, we let $\mu_{\cD} : C \to R_{\ge 0}$ be the probability density function (PDF) of $\cD$. That is, $\mu_{\cD}(c)$ is the probability that $c$ is sampled from $\cD$. We say that $\mu_{\cD}$ is a $\theta$-sparse PDF if $\cD$ is $\theta$ sparse. Further, we define $\max \mu$ to be the highest probability attained by $\mu$, and $\argmax \mu := \{c \in C : \mu(c) = \max \mu\}$.

If $C$ has a $\theta$-cover, take $S = \emptyset{}$. Otherwise, by Proposition~\ref{prop:cover-sparse}, there is a $\theta$-sparse PDF $\mu_1$ supported on $C$. We now inductively define a sequence of probability distributions $\nu_1, \hdots, \nu_t$, where $t \le |C|$ is decided by a stopping criterion. As our base case, we define $\nu_1 := \mu_1$.

Now repeat the following process. Consider $i \in \{1,2,\hdots\}$. Let $S_i := \argmax \nu_i$. If $C \setminus S_i$ has a $\theta$-cover, we stop, and set $t := i$. Otherwise, if $C \setminus S_i$ has no $\theta$-cover, by Proposition~\ref{prop:cover-sparse}, there exist  a $\mu_{i+1}$ be a $\theta$-sparse PDF on $C \setminus S_i$. By convention, we assume that $\mu_i(c) = 0$ for all $c \in S_i$. Let $p_i := \max \nu_i$, and for every $c \in C \setminus S_i$ define
\[
    q_c := \frac{p_i - \nu_i(c)}{p_i - \nu_i(c) + \mu_{i+1}(c)}.
\]
Note that $q_c \in (0,1]$ because $p_i > \nu_i(c)$ by definition of $S_i$. Let $q := \min_{c \in C \setminus S_i} q_c$. Further define $\nu_{i+1} := (1-q)\nu_i + q\mu_{i+1}$ and $S_{i+1} := \argmax \nu_{i+1}$. We claim that
\begin{align}
    S_{i+1} = S_i \cup \argmin \{q_c : c \in C \setminus S_i\} \supsetneq S_i.\label{eq:grow}
\end{align}

To see why, first note that for all $c \in S_i$, 
\[
    \nu_{i+1}(c) = (1-q)\nu_i(c) + q \mu_{i+1}(c) = (1-q)p_i.
\]
Likewise, for all $c \in \argmin \{q_c : c \in C \setminus S_i\}$, we have that
\begin{align*}
    \nu_{i+1}(c) &= (1-q)\nu_i(c) + q \mu_{i+1}(c)\\
    &= \nu_i(c) + q(\mu_{i+1}(c)-\nu_i(c))\\
    &= \frac{(p_i - \nu_i(c) + \mu_{i+1}(c))\nu_i(c)+(p_i - \nu_i(c))(\mu_{i+1}(c) - \nu_i(c))}{p_i - \nu_i(c) + \mu_{i+1}(c)}\\
    &= \frac{p_i\mu_{i+1}(c)}{p_i - \nu_i(c) + \mu_{i+1}(c)}\\
    &= (1-q)p_i.
\end{align*}
However, for all other $c \in C \setminus S_i$, we have that
\begin{align*}
    \nu_{i+1}(c) &= (1-q)\nu_i(c) + q \mu_{i+1}(c)\\
    &= (1-q_c)\nu_i(c) + q_c\mu_{i+1}(c) + (q_c-q)(\nu_i(c) - \mu_{i+1}(c))\\
    &= \frac{(p_i - \nu_i(c) + \mu_{i+1}(c))\nu_i(c)+(p_i - \nu_i(c))(\mu_{i+1}(c) - \nu_i(c))}{p_i - \nu_i(c) + \mu_{i+1}(c)} + (q_c -q)(\nu_i(c) - \mu_{i+1}(c))\\
    &= (1-q_c)p_i + (q_c -q)(\nu_i(c) - \mu_{i+1}(c))\\
    &< (1-q_c)p_i + (q_c - q)(p_i - 0)\\
    &= (1-q)p_i,
\end{align*}
where the inequality uses the fact that $q_c - q > 0$, $p_i > \nu_i(c)$, and $\mu_{i+1}(c) \ge 0$. Therefore, we have proved (\ref{eq:grow}), so $S_i := \argmax \nu_i$ is strictly increasing as $i$ increases.

Since $C$ is finite in size and the empty code has a $\theta$-cover, this process must terminate in $t \le |C|$ steps. That is, $C \setminus S_t$ has a $\theta$-cover. It suffices to prove that $|S_t| \le 2^{H(C,\theta)}$, where $H(C,\theta)$ is the RHS of (\ref{eq:theta-sparse-entropy}). Since $S_t = \argmax \nu_t$, we have that $\max \nu_t \le \frac{1}{|S_t|}$, so 
\[
  H(\nu_t) = \sum_{c \in \supp(\nu_t)} \nu_t(c) \log_2\left(\frac{1}{\nu_t(c)}\right) \ge \sum_{c \in \supp(\nu_t)} \nu_t(c) \log_2\card{S_t} = \log_2\card{S_t}.
\]
Further, $\nu_t$ is a convex combination of $\theta$-sparse distributions $\mu_1, \hdots, \mu_t$, so $\nu_t$ is also a $\theta$-sparse over $C$. Thus, by Corollary~\ref{cor:entropy-Karger}, we have that
\[
\log_2|S_t| \le H(\nu_t) \le H(C,\theta),
\]
as desired.
\end{proof}

We next utilize Lemma~\ref{lem:sparse-removal} to prove our key decomposition theorem.

\subsection{Decomposition Theorem}

In Khanna, Putterman, and Sudan's linear code sparsification~\cite{khanna2024Code}, their key technical result (their Theorem~2.2) is that after puncturing a ``dense'' collection of coordinates, the remaining code has a sparsity property. Their proof proceeds by recursively removing dense subcodes until the remainder is sparse. However, as we shall see, the situation is a bit more nuanced for non-linear codes. Instead, we alternate between removing sparse and dense portions of our code using Lemma~\ref{lem:sparse-removal}. By keeping track of a suitable monovariant, we can carefully control the tradeoff between the dense and sparse portions of our decomposition. More precisely, by conditioning on $\approx \lambda \NRD(C)$ coordinates to puncture, we save a factor of approximately $\lambda$ in the exponent on the number of codewords remaining.

\begin{theorem}\label{thm:NRD-decomp}
For any $C \subseteq \{0,1\}^m$, integer $d \ge 1$, and real $\lambda \ge 1$, there exists $I \subseteq [m]$ of size at most $2\lambda \NRD(C) \log_2(4m)$ such that $C_{\le d}|_{\overline{I}}$ has at most $m \cdot \exp_2(3d\log_2^2(2m)/\lambda)$ codewords.
\end{theorem}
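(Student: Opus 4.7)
The plan is to construct $I$ by an iterative process that alternates between two operations, both driven by Lemma~\ref{lem:sparse-removal}: (i) declaring a small ``sparse'' set of codewords finished (which will contribute directly to the final count bound), and (ii) using the guaranteed $\theta$-cover on the remainder to add coordinates to $I$ that uniformly hit the residual codewords and thereby shrink their residual weights. A parameter $\theta$, tuned roughly as $\theta \approx \lambda\NRD(C)/d$ up to logarithmic factors, controls the tradeoff between the size of $I$ and the number of codewords removed.

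Concretely, maintain a growing hitting set $I$ and a shrinking residual set $R \subseteq C_{\le d}$, initialized as $I = \emptyset$ and $R = C_{\le d}$. In each phase, invoke Lemma~\ref{lem:sparse-removal} on the punctured code $R|_{\overline I}$ with parameter $\theta$, obtaining $S \subseteq R$ with $|S| \le \exp_2(H(C,\theta))$ such that $R|_{\overline I} \setminus S|_{\overline I}$ admits a $\theta$-cover $\cQ$ on $\overline I$. Then sample $T = \Theta(\theta \log m)$ coordinates i.i.d.\ from $\cQ$ and add them to $I$; by Theorem~\ref{thm:chernoff} and a union bound over survivors (with $|R|$ controlled via Corollary~\ref{cor:spor-bound}), every surviving $c \in R \setminus S$ has at least $T/(2\theta) \ge 1$ of its $1$'s absorbed into the new $I$, so its residual Hamming weight on $\overline I$ decreases by at least one. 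Discard $S$ and repeat. Since each surviving codeword started with weight $\le d$, after at most $d$ phases the residual $R|_{\overline I}$ consists only of the zero pattern, contributing a single codeword.

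For the final accounting, $|I| \le dT = O(d\theta \log m)$; matching this to the target $2\lambda \NRD(C)\log_2(4m)$ dictates $\theta \approx \lambda \NRD(C)/d$ (up to log corrections). The total number of ``sparse'' codewords discarded across $d$ phases is at most $d \cdot \exp_2(H(C,\theta)) = d\cdot \exp_2\bigl(\tfrac{3\log_2(3\theta)}{\theta}\NRD(C)\log_2(m+1)\bigr)$ by Corollary~\ref{cor:entropy-Karger}; substituting the chosen $\theta$ and simplifying yields a bound of the form $m \cdot \exp_2(3d\log_2^2(2m)/\lambda)$, as required. I would also note that non-redundancy is monotone under puncturing, so the entropy bound does not degrade across phases.

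The main obstacle is the simultaneous calibration of the three pressures in play: the Chernoff step needs $T$ to dominate $\log|R|$, the codeword bound needs $\exp_2(H(C,\theta))$ to be small (pushing $\theta$ up), and the $|I|$ bound needs $d\theta\log m$ to be small (pushing $\theta$ down). The two $\log_2 m$ factors in the target exponent $3d\log_2^2(2m)/\lambda$ presumably arise from combining the $\log m$ in the per-phase Chernoff sample size with the $\log m$ from the Sauer--Shelah--Peres step used inside Corollary~\ref{cor:entropy-Karger}. A cleaner variant, which I would try if the randomized version gets messy, is to replace the Chernoff sampling by a deterministic greedy selection of coordinates from $\supp(\cQ)$; this should make the log-factor accounting more transparent, at the cost of slightly more careful bookkeeping on residual weights.
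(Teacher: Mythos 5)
Your high-level skeleton is right---alternate applications of Lemma~\ref{lem:sparse-removal} between discarding a small ``sparse'' set and hitting the remainder via the $\theta$-cover, with $\theta\approx\lambda\NRD(C)/d$---and you identify the correct ingredients (Lemma~\ref{lem:sparse-removal}, Corollary~\ref{cor:entropy-Karger}). However, the per-phase ``Chernoff plus union bound over survivors'' step does not close. You sample $T=\Theta(\theta\log m)$ coordinates per phase and want every surviving codeword to lose at least one unit of residual weight, via a union bound over $R\setminus S$. The per-codeword failure probability is roughly $\exp(-T/(C\theta))=m^{-\Theta(1)}$, but $|R|$ can be as large as $(m+1)^{\min(d,\NRD(C))}$ (this is exactly what Corollary~\ref{cor:spor-bound}/Sauer--Shelah gives, and also what the crude bound $|C_{\le d}|\le(m+1)^d$ gives), so a valid union bound forces $T\gtrsim\theta\,\min(d,\NRD(C))\log m$. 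Carrying that through the calibration gives $|I|\le dT\approx\lambda\,\NRD(C)\,\min(d,\NRD(C))\log m$, which exceeds the target $2\lambda\NRD(C)\log_2(4m)$ by a factor of $\min(d,\NRD(C))$. Sweeping the Chernoff failures into $S$ instead of insisting on a union bound, or switching to a deterministic greedy hitting set (which also needs $\approx\theta\log|R|$ picks per phase), runs into the same obstruction.

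The paper sidesteps this entirely by abandoning the phase/union-bound structure: it adds \emph{one} coordinate at a time sampled from the current $\theta$-cover and tracks the monovariant $f(I,T)=\sum_{c\in C\setminus T}\exp_2\bigl(\Ham(c|_{\overline{I}})\bigr)$. Linearity of expectation shows $\E[f]$ shrinks by a factor $(1-1/(2\theta))$ per step, and since $f(\emptyset,\emptyset)\le(2m+2)^d$, about $2\theta d\log(2m+2)\approx2\lambda\NRD(C)\log_2(4m)$ steps drive $f$ below $1$ with positive probability. The exponential weighting automatically up-weights codewords whose residual weight has lagged behind, so no single codeword ever needs to be guaranteed progress in any fixed step, and no union bound is needed. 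That is the missing idea: replace the ``every survivor must progress each phase'' requirement (which demands $\log|R|$ samples per phase) with an expectation argument on a multiplicative potential (which demands only one sample per step).
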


\begin{proof}
Since deleting codewords can only decrease the non-redundancy, assume without loss of generality that $C = C_{\le d}$. We first handle a few simple edge cases.

\textbf{Case 1, $d \ge \lambda \NRD(C)$.} In this case, we let $I := \emptyset$ and note that by Lemma~\ref{lem:sauer-shelah}, $C$ has at most \[\exp_2(\NRD(C) \cdot \log_2(m+1)) \le \exp_2(d\log_2(m+1)/\lambda) \le m \cdot \exp_2(3d\log_2^2(2m)/\lambda)\]
codewords, as desired.\footnote{This easy case is similar to code-counting bound for the proof of Theorem~\ref{thm:nrd-spr-warmup}.}

\textbf{Case 2, $m \le 2\lambda \NRD(C)\log_2(4m)$.} In this case, we set $I = [m]$ and observe that $C|_{\overline{I}}$ has at most 1 codeword (empty string).

We now move onto the main case.

\smallskip
\textbf{Case 3, $d < \lambda \NRD(C)$ and $m > 2\lambda \NRD(C)\log_2(4m)$.}

Let $\theta := \lambda \NRD(C)/ d > 1$. 
 We inductively build sequences $\emptyset = I_0 \subsetneq \cdots \subsetneq I_t \subseteq [m]$ and $\emptyset = T_0 \subseteq \cdots \subseteq T_t \subseteq C$ for a currently unspecified $t \in \N$ such that for all $j \in \{0,1,\hdots, t\}$,
\begin{align}
  \card{I_j} &= j \label{eq:I-size}\\
  \card{T_j|_{\overline{I_j}}} &\le j \cdot \exp(H(C,\theta)) \label{eq:T-size}
\end{align}

We build the sequence using the following (randomized) procedure.
\begin{itemize}
    \item \textbf{Base Case:} Set $I_0 := \emptyset$ and $T_0 := \emptyset$.
    \item \textbf{For} $j \in \{0, 1, \hdots, m-1\}$,
    \begin{itemize}
    \item \textbf{If} $C = T_j$, \textbf{halt} with $t := j$.
    \item Let $C_j := (C \setminus T_{j})|_{\overline{I_j}}$.
    \item Invoke Lemma~\ref{lem:sparse-removal} to find $S_{j+1} \subseteq C_j$ of size at most $\exp_2(H(C,\theta))$ such that $C_j \setminus S_{j+1}$ has a $\theta$-cover $\cQ_{j+1}$. 
    \item Randomly sample $i_{j+1}$ from $\cQ_{j+1}$.
        \item Set $I_{j+1} := I_j \cup \{i_{j+1}\}$.
    \item Set $T_{j+1} := T_j \cup \{c \in C : c|_{\overline{I_j}} \in S_{j+1}\}$.
    \end{itemize}
\end{itemize}

As written, it seems possible that we may exhaust the \textbf{For} loop without meeting the halt condition. However, we shall prove in Claim~\ref{claim:mono} that with positive probability, we meet the halt condition for some $j = t \le 2 \lambda \NRD(C) \log_2(4m) < m$. Assuming this is the case, it suffices to show that $C|_{\overline {I_t}}$ has at most $m \cdot \exp_2(3d\log_2^2(2m)/\lambda)$ codewords. By the halting condition, we have that $C = T_t$. Therefore, we have that
\[
C|_{\overline{I_t}} = \bigcup_{i=1}^{t} S_i|_{\overline{I_t}}.
\]
Hence, the size of $C|_{\overline{I_t}}$ is at most
\begin{align*}
  \sum_{i=1}^t |S_i| &\le \sum_{i=1}^t H(C, \theta)\\
                     &\le t \cdot \exp_2\left(\frac{3\log_2(3\theta)}{\theta} \cdot \NRD(C) \cdot \log_2(m+1)\right)\\
                     &\le m \cdot \exp_2\left(\frac{3d\log_2(3\lambda \NRD(C)/d))}{\lambda} \cdot \log_2(2m)\right) \ ,
\end{align*}
recalling that $\theta =\tfrac{\lambda \cdot \NRD(C)}{d}$.
By assumption on the size of $m$, we have that
\[
  \frac{3\lambda \NRD(C)}{d} \le \frac{3m}{2\log_2(4m)d} < 2m.
\]
Thus, $C|_{\overline{I_t}}$ has size at most $m \cdot \exp_2(3d\log_2^2(2m)/\lambda)$, as desired. To finish, we prove Claim~\ref{claim:mono}.

\begin{claim}\label{claim:mono}
    With nonzero probability, the procedure halts after at most $2\lambda \NRD(C)\log_2(4m) < m$ steps.
\end{claim}
\begin{proof}
Consider the following monovariant defined for any $I \subseteq [m]$ and $T \subseteq [m]$.
\begin{align}
  f(I, T) := \sum_{c \in C \setminus T} \exp_2(\Ham(c|_{\overline{I}})) \label{eq:mono}
\end{align}
In particular, if $I' \supseteq I$ and $T' \supseteq T$, then $f(I', T') \le f(I, T)$. Observe that since every codeword of $C$ has Hamming weight at most $d$, we have that $|C|$ has size at most $\binom{m}{0} + \cdots + \binom{m}{d} \le (m+1)^d$. Thus,
\[
    f(I_0, T_0) = f(\emptyset, \emptyset) = \sum_{c \in C} \exp_2(\Ham(c)) \le |C| \cdot 2^d \le (2m+2)^d \le \exp(d\log(2m+2)) \ .
\]
Assume that for some $t_0$, the probability that the procedure terminates within $t_0$ steps is zero. For any $j \in \{0,1,\hdots, t_0-1\}$, we have that
\begin{align*}
    \underset{i_{j+1} \sim \cQ_{j+1}}{\E}[f(I_{j+1}, T_{j+1})] &= \underset{i \sim \cQ_{j+1}}{\E}[f(I_j \cup \{i\}, T_{j+1})]\\
    &= \sum_{c \in C_{d} \setminus T_{j+1}} \underset{i \sim \cQ_{j+1}}{\E} [\exp_2(\Ham(c|_{\overline{I_j}})-c_i)]\\
    &=\sum_{c \in C \setminus T_{j+1}} \exp_2(\Ham(c|_{\overline{I_j}}))\underset{i \sim \cQ}{\E} [\exp_2(-c_i)]\\
    &\le \sum_{c \in C \setminus T_{j+1}} \exp(\Ham(c|_{\overline{I_j}}))\left(1-\frac{1}{\theta} + \frac{1}{\theta} \cdot \frac{1}{2}\right)\\
    &\le \left(1 - \frac{1}{2\theta}\right)f(I_j, T_{j+1}) \le \left(1 - \frac{1}{2\theta}\right)f(I_j, T_{j}),
\end{align*}
Thus, with nonzero probability, we have that for all $j \in [t_0-1]$ 
\[
    f(I_j, T_j) \le \left(1 - \frac{1}{2\theta}\right)^j f(I_0, T_0) \le \exp\left(d\log(2m+2) - \frac{j}{2\theta}\right).
\]
In particular, if $j \ge (2\log_2(4m) - 1)\lambda \NRD(C)$, then
\begin{align*}
  f(I_j, T_j) &\le \exp(d\log(2m+2) - (2\log_2(4m) - 1)\lambda \NRD(C)/(2\theta))\\
              &= \exp(d (\log(2m+2) - \log_2(4m) + 1/2))\\
              &< \exp(-d/9)  < 1.
\end{align*}
However, each term in the sum (\ref{eq:mono}) defining $f(I_j,T_j)$ is at least $1$. Therefore, $f(I_j, T_j) = 0$ and $C = T_j$. Thus, the halting condition will be met at step $j$. In other words, with nonzero probability the halting condition must be met in \[\lceil (2\log_2(4m) - 1)\lambda \NRD(C)\rceil \le 2\lambda \NRD(C) \log_2(4m) < m\] steps, as desired. 
\end{proof}
This completes the proof of Theorem~\ref{thm:NRD-decomp}.
\end{proof}

We now have all the ingredients we need to prove our main result.

\subsection{Proof of Theorem~\ref{thm:main-code}}

We now prove Theorem~\ref{thm:main-code}. We use a recursive argument similar to that of Theorem~\ref{thm:nrd-spr-warmup}, which itself was inspired by \cite{khanna2024Code}.

\begin{theorem}[Theorem~\ref{thm:main-code}, more precise version]\label{thm:main-code-precise}
For all $C \subseteq \{0,1\}^m$ and $\eps \in (0,1)$,
\begin{align}
  \SPR(C,\eps) \le 800 \NRD(C) \log^6(4m) / \eps^2.\label{eq:main-code-bound}
\end{align}
\end{theorem}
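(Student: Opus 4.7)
The plan is to mirror the divide-and-conquer scaffolding of the warm-up Theorem~\ref{thm:nrd-spr-warmup}, but to replace the coarse support bound from Lemma~\ref{lem:NRD-supp-bound} by the much finer decomposition of Theorem~\ref{thm:NRD-decomp}. Proceeding by strong induction on $m$ with $\eps_0 := \eps/(2\log_2(2m))$, the goal at each level is to construct a ``core'' set $T \subseteq [m]$ of size $\widetilde{O}(\NRD(C)/\eps^2)$ (up to polylog factors in $m$) on which the sparsifier places unit weight, to subsample $[m] \setminus T$ at rate $p = 1/3$ obtaining $S_p$, and to recurse on the punctured code $C|_{S_p}$ with reduced error budget $\eps - 2\eps_0$, exactly as in the warm-up. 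The key structural gain over the warm-up is that Theorem~\ref{thm:NRD-decomp} allows us to choose $T$ much smaller than $\supp(C_{\le d_0})$ by separating the ``dense'' and ``sparse'' parts of $C$ via the entropy method developed in Section~\ref{sec:entropy}.

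Concretely, I would pick a Chernoff threshold $d_*$ and a decomposition parameter $\lambda$, both polylogarithmic in $m$ and inverse-polynomial in $\eps_0$, and invoke Theorem~\ref{thm:NRD-decomp} to extract $I \subseteq [m]$ of size $O(\lambda \NRD(C)\log m)$ such that the punctured low-weight code $C_{\le d_*}|_{\overline{I}}$ has only $\poly(m)$ distinct codewords. To protect codewords whose Hamming weight is so small that Lemma~\ref{lem:chernoff-ham}'s concentration degrades (in particular the $\Ham(c) = 1$ case), I would augment $T := I \cup \supp(C_{\le d'})$ for an additional small threshold $d'$; Lemma~\ref{lem:NRD-supp-bound} bounds this augmentation by $d' \NRD(C)$ coordinates. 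The per-level verification then splits codewords into three regimes by $\Ham(c)$: the lightest are preserved exactly since their support lies in $T$; codewords of intermediate weight project to at most $\poly(m)$ distinct values on $\overline{T}$ by Theorem~\ref{thm:NRD-decomp}, so a union bound of Lemma~\ref{lem:chernoff-ham} over these projections controls them once $d'$ is large enough for the Chernoff tail to beat the projection count; and the heaviest codewords, with $\Ham(c) > d_*$, enjoy such strong Chernoff decay that their contribution to the failure probability becomes negligible. Combined with Lemma~\ref{lem:chernoff-small} (which guarantees $|S_p| \le m/2$ with positive probability) and the recursive combination of weights $\widetilde{w}(i) = 1$ on $T$, $\widetilde{w}(i) = 3 w'(i)$ on $S_p$ exactly as in the warm-up, the total support telescopes over the $\log_2 m$ recursion levels to the claimed bound $800\,\NRD(C)\log^6(4m)/\eps^2$.

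The main obstacle I anticipate is the third regime above: naively union-bounding Chernoff failure over all $c \in C$ with $\Ham(c) > d_*$ introduces a $\log|C|$ factor which, by Lemma~\ref{lem:sauer-shelah} and Proposition~\ref{prop:nrd-vc}, can be as large as $\NRD(C) \log m$ and would blow the final bound up to $\widetilde{O}(\NRD(C)^2)$---precisely what the warm-up already achieves. Escaping this requires either a layered application of Theorem~\ref{thm:NRD-decomp} across dyadic Hamming-weight bands (so each band is controlled by a small number of distinct projections) or pushing the parameters in Theorem~\ref{thm:NRD-decomp} so that the exponentially small Chernoff tail in $d_*$ overwhelms any $\NRD(C)$-sized union. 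The delicate balancing act is to tune $d_*$, $\lambda$, $d'$, and $\eps_0$ so that simultaneously (a) each of the three regimes passes its failure bound, (b) $|T|$ remains $\widetilde{O}(\NRD(C))$ up to polylog, and (c) the recursion telescopes to the $\log^6(4m)$ factor in the target bound; the payoff is that the $\log|C|$ factor from the warm-up is replaced by a pure polylog-in-$m$ cost, yielding Theorem~\ref{thm:main-code}.
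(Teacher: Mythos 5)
Your plan tracks the paper's proof closely: strong induction on $m$, replace Lemma~\ref{lem:NRD-supp-bound} by Theorem~\ref{thm:NRD-decomp} to build a small set $I$ on which the sparsifier places unit weight, preserve the lightest codewords exactly via $\supp(C_{\le d_0})$, subsample the complement at rate $1/3$, union bound Chernoff failure over the \emph{projections} rather than over all of $C$, and recurse with $\eps' = \eps - 2\eps_0$. All of these are the same moves the paper makes.

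However, your proposal as written leaves the crucial step unresolved: you offer two ways to handle the codewords of weight exceeding $d_*$, and only the first of them actually works. Your second alternative---inflate $d_*$ until the Chernoff tail $\exp(-\Theta(\eps_0^2 d_*))$ beats $|C| \le (m+1)^{\NRD(C)}$---requires $d_* = \Omega(\NRD(C)\log m/\eps_0^2)$; but then a single invocation of Theorem~\ref{thm:NRD-decomp} on $C_{\le d_*}$ with a polylogarithmic $\lambda$ leaves a projection count of $\exp(\Theta(d_*\log^2 m/\lambda))$, and balancing this against the Chernoff tail forces $\lambda$ to scale with $\sqrt{d_*}$, giving $|I| = \widetilde{\Omega}(\NRD(C)^{3/2}/\eps^2)$ rather than $\widetilde{O}(\NRD(C)/\eps^2)$. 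The paper commits to your first alternative: partition $C \setminus C_{\le d_0}$ into dyadic bands $C_j = \{c : \Ham(c) \in (2^{j-1}d_0, 2^j d_0]\}$ and apply Theorem~\ref{thm:NRD-decomp} to each band separately with the \emph{same} $\lambda$, obtaining $I_j$ with $|I_j| \le 2\lambda\NRD(C)\log_2(4m)$ and $|C_j|_{\overline{I_j}}| \le m\cdot\exp_2(O(2^j\log_2^3 m))$; since the Chernoff failure rate for a codeword in $C_j$ decays like $\exp(-\Omega(\eps_0^2 2^j d_0))$, the band-local projection count and the band-local tail trade off correctly once $\lambda = \Theta(\log^3 m/\eps_0^2)$, and the union $I = \bigcup_j I_j$ only picks up one extra $\log m$ factor. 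One more detail you should make explicit: Lemma~\ref{lem:chernoff-ham} is a statement about a single codeword, not a projection, so union bounding over $C_j|_{\overline{I}}$ requires the observation that if concentration holds for the \emph{minimum-Hamming-weight} representative of a projection class, then it automatically holds for every other $c'$ with the same projection (since the extra mass of $c'$ over $c$ lives entirely inside $I$ and is counted exactly). With the dyadic layering and this representative argument filled in, your sketch is the paper's proof.
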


\begin{proof}
Like in the proof of Theorem~\ref{thm:nrd-spr-warmup}, we prove (\ref{eq:main-code-bound}) by strong induction on $m$. The base cases of $m \le 800$ are trivial, and if $\NRD(C) = 0$, then $C \subseteq \{0^m\}$ so the zero sparsifier of size $0$ works.

Now consider the following parameters:
\begin{align*}
\eps_0 &:= \frac{\eps}{2\log_2 (4m)}\\
\lambda &:= \frac{100\log_2^3 (4m)}{\eps_0^2} = \frac{400\log_2^5 (4m)}{\eps^2}\\
d_0 &:= 2\lambda \log_2(4m)\\
\ell &:= \lceil\log_2 (m/d_0)\rceil\\
p &:= \frac{1}{3}.
\end{align*}
By Lemma~\ref{lem:NRD-supp-bound}, we have that there exists $I_0 \subseteq [m]$ of size at most $d_0 \NRD(C) = 2\lambda \NRD(C) \log_2(4m)$ which contains the support of all codewords $c \in C$ of Hamming weight at most $d_0$. For all $j \in [\ell]$, let
\[
  C_j := \{c \in C : \Ham(c) \in (2^{j-1}d_0, 2^jd_0]\}.
\]

Observe that $C = C_{\le d_0} \cup C_1 \cup \cdots \cup C_\ell$. Now, for each $j \in [\ell]$, apply Theorem~\ref{thm:NRD-decomp} to $C_j$ with $d := 2^j d_0$ and our choice of $\lambda$. As a result, there exists $I_j \subseteq [m]$ such that $|I_j| \le 2\lambda \NRD(C) \log_2(4m)$ and
\begin{align}
\card{C_j|_{\overline{I_j}}} \le m \cdot \exp_2 \left(\frac{3 \cdot 2^j d_0 \log_2^2(2m)}{\lambda}\right) \le m \cdot \exp_2 \left(6 \cdot 2^j \log_2^3(4m)\right).\label{eq:cj-bound}
\end{align}

Let $I = I_0 \cup I_1 \cup \cdots \cup I_{\ell}$. Then, $|I| \le 2(\ell+1)\lambda \NRD(C) \log_2(4m) \le 2\lambda \NRD(C) \log_2^2(4m).$ Further, for all $j \in [\ell]$, we have that $\card{C_j|_{\overline{I}}} \le \card{C_j|_{\overline{I_j}}}$ and so is bounded from above by the RHS of (\ref{eq:cj-bound}).

Let $S_p \subseteq [m] \setminus I$ be a random subset where each element is included independently with probability $p := 1/3$. We claim that with positive probability, we have that $|S_p| \le m/2$ and for all $c \in C$,
\begin{align}
  \frac{1}{p}\Ham(c|_{S_p}) + \Ham(c|_{I}) \in [1-\eps_0, 1+\eps_0] \cdot \Ham(c).\label{eq:use-chernoff}
\end{align}
As in Theorem~\ref{thm:nrd-spr-warmup}, by Lemma~\ref{lem:chernoff-small}, we have that $\Pr[|S| > m/2] \le \exp(-m/30) < 1/100$ since $m > 800$. We now turn to proving (\ref{eq:use-chernoff}). If $c \in C_{\le d_0}$, then $\supp(c) \subseteq I_0 \subseteq I$, so (\ref{eq:use-chernoff}) always holds. Now pick $i \in [\ell]$ and consider $c \in C_j$. By Lemma~\ref{lem:chernoff-ham}, we have that the probability that (\ref{eq:use-chernoff}) does not hold is at most
\[
  2\exp(-\eps_0^2\Ham(c) p / 3) < 2\exp(-\eps_0^22^{j-1}d_0/9) = 2\exp(-(100/9)2^j\log_2^4(4m)).
\]
Naively, we need to now take a union bound of size $\card{C_j}$. However, we only need to take a union bound of size $\card{C_j|_{\overline I}}$. To see why, for a fixed $\widetilde{c} \in C_j|_{\overline I}$ consider any $c \in C_j$ with $c|_{\overline I} = \widetilde{c}$ and $\Ham(c)$ is minimal. Assume that (\ref{eq:use-chernoff}) holds for $c$, for any other $c' \in C_j$ with $c'|_{\overline I} = \widetilde{c}$ we have that
\begin{align*}
  \frac{1}{p}\Ham(c'|_{S_p}) + \Ham(c'|_{I}) &= \frac{1}{p}\Ham(c|_{S_p}) + \Ham(c|_{I}) + (\Ham(c'|_{I}) - \Ham(c|_I))\\
&\in [1-\eps_0,1+\eps_0]\cdot \Ham(c) + (\Ham(c') - \Ham(c))\\
&\subseteq [1-\eps_0,1+\eps_0]\cdot \Ham(c'),
\end{align*}
since $\Ham(c') \ge \Ham(c)$. Therefore by (\ref{eq:cj-bound}), the probability that (\ref{eq:use-chernoff}) fails for any $c \in C_j$ is at most
\begin{align*}
  2\exp(-(100/9)2^j\log_2^4(4m)) \cdot m \cdot \exp_2 \left(6 \cdot 2^j \log_2^3(4m)\right) &\le 2m\exp_2(2^j\log_2^3(4m)\left(6 - 11\log_2(4m)\right))\\
&\le 2m \exp_2(-5\log_2^3(4m))\\
&\le \frac{1}{2m}.
\end{align*}
In particular, the probability that (\ref{eq:cj-bound}) fails for any $c \in C$ is at most $\ell/(2m) \le 1/2$. Thus, there exists $S \subseteq [m] \setminus I$ with $|S| \le m/2$ for which $\frac{1}{p}\Ham(c|_{S}) + \Ham(c|_{I}) \in [1-\eps_0, 1+\eps_0] \cdot \Ham(c)$ for all $c \in C$. To finish, let $\eps' := \eps - 2\eps_0$ and apply the induction hypothesis to $\eps'$-sparsify $C|_S$. Let $w : S \to \R_{\ge 0}$ be the resulting sparsifier. We then build a sparsifier $\widetilde{w} : [m] \to \R_{\ge 0}$ such that
\[
  \forall i \in [m], \widetilde{w}(i) := \begin{cases}
    1 & i \in I\\
    w(i)/p = 3w(i) & i \in S\\
    0 & \text{otherwise}.
\end{cases}
\]
By the same logic as in the proof of Theorem~\ref{thm:nrd-spr-warmup},  for any $c \in C$, we have that
\begin{align*}
  \langle \widetilde{w}, c\rangle &= 3\langle w, c|_{S}\rangle + \Ham(c|_{I})\\
                                  &\in 3[1-\eps',1+\eps']\cdot \Ham(c|_{S}) + \Ham(c|_{I})\\
                                  &\subseteq [1-\eps',1+\eps']\cdot (3 \Ham(c|_{S}) + \Ham(c|_{I}))\\
                                  &\subseteq [1-\eps',1+\eps']\cdot[1-\eps_0,1+\eps_0] \cdot \Ham(c)\\
                                  &\subseteq [1-\eps, 1+\eps] \cdot \Ham(c),
\end{align*}
as desired. To finish, we bound the size of $\widetilde{w}$. In particular,
\begin{align*}
\card{\supp(\widetilde{w})} &= \card{\supp(w)} + \card{I}\\
                            &\le 800 \NRD(C|_{S}) \log_2^6(4\card{\supp(C|_{S})}) / \eps'^2 + 2\lambda \NRD(C)\log_2^2(4m)\\
&\le \NRD(C)\left[\frac{800 \log_2^6(2m)}{\eps^2(1 - 1/\log_2(4m))^2} + \frac{800\log^5_2(4m)}{\eps^2}\right]\\
&= \frac{800 \NRD(C)}{\eps^2} \left[\log_2^4(2m)\log_2^2(4m) + \log^5_2(4m)\right]\\
&\le \frac{800 \NRD(C) \log_2^5(4m)}{\eps^2} [\log_2(2m) + 1]\\
&= \frac{800 \NRD(C) \log_2^6(4m)}{\eps^2} \ . \qedhere
\end{align*}
\end{proof}

\section{Extension to Weighted Sparsification}\label{sec:spr-weighted}

We now prove our main result for sparsifying \emph{weighted} CSPs. More precisely, given a predicate $R \subseteq D^r$, a weighted instance of $\CSP(R)$ can be viewed as an ordinary instance $\Psi := (X, Y \subseteq X^r)$ of $\CSP(R)$ along with a weight function $w : Y \to \R_{> 0}$ for the clauses. Note that we assume all weights are nonnegative, else the set of clauses $Y$ can be reduced in size. Given an assignment $\sigma : X \to D$, recall that its $(R,w)$-weight is defined to be
\[
    \wt(R, \Psi, w, \sigma) := \sum_{y \in Y} w(y) \one[(\sigma(y_1), \hdots, \sigma(y_r)) \in R].
\]
Then, a $(w,\eps)$-sparsifier of $\Psi$ is a function $\widetilde{w} : Y \to \R_{\ge 0}$ such that
\[
(1-\eps)\wt(R, \Psi, w, \sigma) \le \wt(R, \Psi, \widetilde{w}, \sigma) \le (1+\eps)\wt(R, \Psi, w, \sigma).
\]
The $\eps$-sparsity of $(\Psi, w)$ is then the minimal support size of any $(w,\eps)$-sparsifier of $\Psi$. We can now define the weighted sparsity of a predicate.

\begin{definition}
For any $R \subseteq D^r$, $n \in \N$ and $\eps \in (0,1)$, we define $\wSPR(R, n, \eps)$ to be the maximal $\eps$-sparsity of an weighted instance $(\Psi, w)$ of $\CSP(R)$ on $n$ variables.
\end{definition}

Likewise, given a code $C \subseteq \{0,1\}^m$ and a weight function $w : [m] \to \R_{> 0}$, we say that $\widetilde{w} : [m] \to \R_{\ge 0}$ is $(w,\eps)$-sparsifier of $C$ is for all $c \in C$,
\begin{align}
    (1-\eps)\langle w, c\rangle \le \langle \widetilde{w}, c\rangle \le (1+\eps)\langle w, c\rangle,\label{eq:w-eps-spars}
\end{align}
with the $\eps$-sparsity of $(C,w)$ being the minimum support size of any $(w,\eps)$-sparsifier. This leads to an analogous definition of weighted code sparsity.
\begin{definition}
For any $C \subseteq \{0,1\}^m$ and $\eps \in (0,1)$, we define $\wSPR(C, \eps)$ to be the maximum $\eps$-sparsity of $(C,w)$ among all weight functions $w : [m] \to \R_{> 0}$.
\end{definition}

\begin{proposition}\label{prop:wspr-code-csp}
For all $R \subseteq D^r$, $n \in \N$, and $\eps \in (0,1)$, we have that
\[
    \wSPR(R, n, \eps) = \max_{\substack{\emph{$\Psi$ instance of $\CSP(R)$}\\\emph{on $n$ variables}}} \wSPR(C_{R,\Psi}, \eps),
\]
with $C_{R, \Psi}$ defined as in Section~\ref{subsec:implies}.
\end{proposition}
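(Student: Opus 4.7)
The plan is to mirror the proof of Proposition~\ref{prop:csp-to-code}, specifically the part establishing (\ref{eq:equiv-SPR}), with the extra weighting tracked coordinate-by-coordinate. Given an instance $\Psi = (X, Y)$ of $\CSP(R)$, recall that the satisfiability code $C_{R,\Psi} \subseteq \{0,1\}^Y$ is indexed by the clauses, so any weight function $w : Y \to \R_{>0}$ on the clauses can equally well be interpreted as a weight function on the coordinates of $C_{R,\Psi}$. The key identity is that for every assignment $\sigma : X \to D$,
\[
    \langle w, c_\sigma \rangle = \sum_{y \in Y} w(y) \cdot c_{\sigma, y} = \sum_{y \in Y} w(y) \cdot \one[\sigma(y) \in R] = \wt(R, \Psi, w, \sigma),
\]
using the definition of the codeword $c_\sigma$ in (\ref{eq:build-code}). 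The same identity holds if we replace $w$ by any candidate sparsifier $\widetilde{w} : Y \to \R_{\ge 0}$.

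Consequently, the inequalities defining a $(w,\eps)$-sparsifier of $C_{R,\Psi}$ (namely (\ref{eq:w-eps-spars})) coincide, term for term, with the inequalities defining a $(w,\eps)$-sparsifier of the weighted CSP instance $(\Psi, w)$. Thus the collection of valid sparsifiers is literally the same set of functions $Y \to \R_{\ge 0}$ on both sides, and in particular the minimum support size is the same. Taking the maximum over all positive weight functions $w : Y \to \R_{>0}$ yields
\[
    \wSPR(C_{R,\Psi}, \eps) = \max_{w : Y \to \R_{>0}} \bigl(\text{$\eps$-sparsity of } (\Psi, w)\bigr).
\]
Taking a further maximum over all instances $\Psi$ of $\CSP(R)$ on $n$ variables gives the claimed equality, since by definition $\wSPR(R,n,\eps)$ is exactly this double maximum.

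The only mild subtlety is reconciling the convention that clause weights are strictly positive with the coordinate weights in $C_{R,\Psi}$: any coordinate $y$ with $w(y) = 0$ could be dropped without effect, so restricting to positive weights on the code side is harmless and matches the CSP convention. There is no substantive obstacle—the proposition is essentially a bookkeeping lemma whose content is that the map $\sigma \mapsto c_\sigma$ converts weighted constraint satisfaction into weighted Hamming-weight preservation verbatim.
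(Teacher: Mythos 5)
Your proof is correct and follows exactly the route the paper takes: the paper's own proof of this proposition is a one-line remark that it is "essentially identical to the proof of (\ref{eq:equiv-SPR}) in Proposition~\ref{prop:csp-to-code}," and your argument is a faithful, spelled-out version of precisely that reduction via the identity $\langle w, c_\sigma\rangle = \wt(R,\Psi,w,\sigma)$.
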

\begin{proof}
The proof is essentially identical to the proof of (\ref{eq:equiv-SPR}) in Proposition~\ref{prop:csp-to-code}.\end{proof}

It is easy to see that 
\begin{align*}
\wSPR(R, n, \eps) &\ge \SPR(R, n, \eps) \ge \NRD(\overline{R}, n).
\end{align*}
However, we can prove a stronger (tight!) lower bound for both in terms of \emph{chain length}. We denote the chain length of a code $C$ by $\CL(C)$ and the chain length of a CSP predicate $R$ by $\CL(R, n)$ for instances with $n$ variables. These terms are defined in Section~\ref{subsec:cl-code} and Section~\ref{subsec:cl-csp}, respectively. However, we can immediately state the main results of this section.

\begin{theorem}\label{thm:wspr-code}
For all $C \subseteq \{0,1\}^m$ and $\eps \in (0,1)$, we have that
\[
    \CL(C) \le \wSPR(C, \eps) = O(\CL(C)(\log m)^6/\eps^2).
\]
\end{theorem}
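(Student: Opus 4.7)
I will follow the sketch in the introduction. Let $i_1,\dots,i_\ell\in[m]$ and $c_1,\dots,c_\ell\in C$ witness $\CL(C)=\ell$, so $c_s(i_s)=1$ and $c_s(i_a)=0$ for $a<s$. I will set $B:=8m/\eps$ and assign $w(i_s):=B^{-s}$, $w(i):=0$ otherwise; a direct check gives $\langle w,c_s\rangle\in[B^{-s},(1+\eps/m)B^{-s}]$. For any $(w,\eps)$-sparsifier $\widetilde w$, the upper constraint $\langle\widetilde w,c_s\rangle\le(1+\eps)\langle w,c_s\rangle\le 2B^{-s}$ would force $\widetilde w(i_s)\le 2B^{-s}$ by applying the constraint for $s=\ell,\ell-1,\dots$ in turn. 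Suppose $\widetilde w(i_{s_0})=0$ for some $s_0$. Then
\[
\langle\widetilde w,c_{s_0}\rangle\le\sum_{a>s_0}\widetilde w(i_a)\,c_{s_0}(i_a)\le\sum_{a>s_0}2B^{-a}\le\tfrac{4}{B}\,B^{-s_0}<(1-\eps)B^{-s_0},
\]
contradicting the sparsifier's lower bound. Hence $\{i_1,\dots,i_\ell\}\subseteq\supp(\widetilde w)$, giving $\wSPR(C,\eps)\ge\ell=\CL(C)$.

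\paragraph{Upper bound construction.} I plan to apply Theorem~\ref{thm:main-code} as a black box together with a geometric weight-bucketing scheme. The plan is to partition $[m]$ into buckets $B_1,\dots,B_t$ by weight level, with $B_k:=\{i:w(i)\in[W_0(1+\eps_0)^{k-1},W_0(1+\eps_0)^k)\}$ for $\eps_0=\Theta(\eps)$; weights within a bucket then differ by at most a $(1+\eps_0)$ factor. For each bucket I will apply Theorem~\ref{thm:main-code} to the unweighted code $C|_{B_k}$ with error $\Theta(\eps)$ to obtain $\widetilde w_k:B_k\to\R_{\ge 0}$ of support $O(\NRD(C|_{B_k})(\log m)^6/\eps^2)$, rescale by the representative bucket weight $W_k$, and combine $\widetilde w:=\sum_k W_k\widetilde w_k$. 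Because the per-bucket approximations are multiplicative and combine additively across buckets,
\[
\langle\widetilde w,c\rangle=\sum_k W_k\langle\widetilde w_k,c|_{B_k}\rangle\in(1\pm\Theta(\eps))\sum_k\langle w|_{B_k},c|_{B_k}\rangle=(1\pm\Theta(\eps))\langle w,c\rangle,
\]
so $\widetilde w$ is a $(w,\eps)$-sparsifier of $C$ after tuning constants.

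\paragraph{Size bound and main obstacle.} The combined support size is $\sum_k O(\NRD(C|_{B_k})(\log m)^6/\eps^2)$. Using $\NRD(C|_{B_k})\le\CL(C|_{B_k})\le\CL(C)$ (a chain in $C|_{B_k}$ is one in $C$), this is $\le t\cdot O(\CL(C)(\log m)^6/\eps^2)$; I therefore need $t=O(\polylog m)$, which requires the effective weight dynamic range to be $\poly(m)$. To achieve this I will decompose codewords into scales $C_s:=\{c\in C:\langle w,c\rangle\in[2^{-s}M,2^{-s+1}M)\}$ where $M=\max_c\langle w,c\rangle$, and within scale $s$ ignore coordinates with weight below $\eps M\cdot 2^{-s}/m$ (their aggregate contribution to any $c\in C_s$ is $\le\eps\langle w,c\rangle$) and above $2^{-s+1}M$ (unused by $C_s$). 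Each scale then has effective dynamic range $O(m/\eps)$ and needs only $O(\log m)$ geometric buckets. The hard part will be combining the scales cleanly: a coordinate can be relevant for several scales, so scales cannot be sparsified independently without risk of double-counting, and codewords at the boundary between scales must be assigned consistently. The bookkeeping here is the non-linear analogue of the weight-binning argument of \cite{khanna2024Characterizations}; after absorbing the $O(\log m)$ scales and the $O(\log m)$ buckets per scale into the $(\log m)^6/\eps^2$ factor from Theorem~\ref{thm:main-code}, the final support bound is $O(\CL(C)(\log m)^6/\eps^2)$ after tuning constants.
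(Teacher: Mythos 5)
Both halves of your proposal contain genuine gaps.

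\textbf{Lower bound.} Your inequality
\[
\langle\widetilde w,c_{s_0}\rangle\le\sum_{a>s_0}\widetilde w(i_a)\,c_{s_0}(i_a)
\]
is unjustified: it silently assumes $\supp(\widetilde w)\subseteq\{i_1,\dots,i_\ell\}$, which is exactly what you are trying to conclude and is in fact false in general. A sparsifier is free to place weight on a non-chain coordinate $j$ with $c_{s_0,j}=1$ and use it to account for $\langle w,c_{s_0}\rangle$; such a $\widetilde w(j)$ appears in $\langle\widetilde w,c_{s_0}\rangle$ but is dropped by your bound. For instance, with $C=\{(1,0,0,1),(0,1,0,1)\}$, chain indices $\{1,2\}$, and your weights, the support $\{1,4\}$ gives a valid sparsifier not containing $i_2=2$, so your stronger intermediate claim ``$\{i_1,\dots,i_\ell\}\subseteq\supp(\widetilde w)$'' is simply wrong (even though the size conclusion happens to hold here). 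The paper's Lemma~\ref{lem:wspr-cl-lb} avoids this trap: it does not try to identify the support with the chain indices, but instead sets $S_i:=\supp(\widetilde w)\cap\supp(c(i))$ and shows that $S_i\setminus\bigcup_{j>i}S_j\neq\emptyset$ for every $i$, which yields $\ell$ distinct elements of the support regardless of whether they are chain coordinates.

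\textbf{Upper bound.} You correctly identify the bucketing-plus-Theorem~\ref{thm:main-code} blueprint, but the step you flag as ``the hard part'' is precisely where the argument must (and in the paper does) do real work, and your proposed fix does not close it. Bounding each bucket by $\NRD(C|_{B_k})\le\CL(C)$ separately gives $t\cdot\CL(C)$, and taming $t$ by splitting into scales only trades one $\log$-factor blowup for another while leaving the cross-scale double-counting unresolved. The paper's Claim~\ref{claim:NRD-CL-bound} is the missing idea: rather than sparsifying $C|_{I_t}$ for the full code, it sparsifies $(C_t\cup C_{t+1})|_{I_t}$ where $C_t$ consists of codewords whose \emph{maximum-weight} coordinate lies in bucket $t$. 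Because a codeword of type $t'$ is automatically zero on every bucket $I_t$ with $t\ge t'+2$, the witnessing non-redundant sets from buckets that are two apart concatenate into a single chain in $C$, and hence $\sum_t\NRD((C_t\cup C_{t+1})|_{I_t})\le 2\CL(C)$ with no factor of $t$. Without this type-restriction-and-concatenation argument, the $\CL(C)$ bound (rather than $\CL(C)\cdot\polylog$ with unbounded exponent, or $t\cdot\CL(C)$) cannot be obtained.

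Also note two smaller issues: the paper's definition in Section~\ref{sec:spr-weighted} requires $w:[m]\to\R_{>0}$, so your assignment $w(i):=0$ off the chain is outside the admissible class (easily patched with a tiny positive value, but the flaw above persists); and narrow $(1+\eps_0)$-width buckets are unnecessary, since Lemma~\ref{lem:rep-wspr}'s replication trick handles a per-bucket weight ratio of $m^{O(1)}$ directly inside the $\eps$ error budget.
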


\begin{theorem}\label{thm:wspr-csp}
For all $R \subseteq D^r$, $n \in \N$ and $\eps \in (0,1)$, we have that
\[
    \CL(\overline{R}, n)-1 \le \wSPR(R, n, \eps) = O(\CL(\overline{R},n)(r\log n)^6/\eps^2)
\]
\end{theorem}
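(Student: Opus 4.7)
The plan is to deduce Theorem~\ref{thm:wspr-csp} from Theorem~\ref{thm:wspr-code} via the satisfiability code $C_{R,\Psi}$, mirroring the reduction from Theorem~\ref{thm:main} to Theorem~\ref{thm:main-code} carried out in Section~\ref{subsec:implies}. Proposition~\ref{prop:wspr-code-csp} already gives $\wSPR(R,n,\eps) = \max_{\Psi}\wSPR(C_{R,\Psi},\eps)$, so the task reduces to relating $\CL(\overline R, n)$ to $\max_{\Psi}\CL(C_{R,\Psi})$.

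The first step would be to establish the chain-length analogue of Proposition~\ref{prop:csp-to-code}: namely $\CL(\overline R, n) = \max_{\Psi}\CL(C_{R,\Psi})$, with the maximum taken over instances on $n$ variables. Unpacking the definitions, a $k \times k$ upper-triangular submatrix of $C_{R,\Psi}$ with $1$'s on the diagonal corresponds to clauses $y_1,\ldots,y_k$ and assignments $\sigma_1,\ldots,\sigma_k$ such that $\sigma_j(y_j) \in R$ and $\sigma_j(y_{j'}) \in \overline R$ for all $j' < j$. Equivalently, $\sigma_j$ $\overline R$-satisfies $y_1,\ldots,y_{j-1}$ but fails to $\overline R$-satisfy $y_j$, which is exactly the defining condition of a length-$k$ chain in the sub-instance $(X,\{y_1,\ldots,y_k\})$ viewed as an instance of $\CSP(\overline R)$; the converse passes back in the same way.

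With this correspondence in hand, the upper bound is immediate: for any weighted instance $(\Psi,w)$ on $n$ variables, $m := |Y| \le n^r$, so Theorem~\ref{thm:wspr-code} applied to $(C_{R,\Psi},w)$ produces a $(w,\eps)$-sparsifier of support at most $O(\CL(C_{R,\Psi})(\log m)^6/\eps^2) \le O(\CL(\overline R,n)(r\log n)^6/\eps^2)$, which Proposition~\ref{prop:wspr-code-csp} translates back into a $(w,\eps)$-sparsifier of $(\Psi,w)$ of the same size. For the lower bound, take a maximum length-$k$ chain in $\CSP(\overline R)$ with $k = \CL(\overline R,n)$ and form the CSP instance on its $k$ clauses; by the correspondence above this is also a length-$k$ chain in $C_{R,\Psi}$. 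Assign geometrically-spaced weights $w(y_j) := B^{k-j+1}$ for some $B$ chosen much larger than $k/\eps$. Then $\wt(R,\Psi,w,\sigma_j)$ is dominated by the $B^{k-j+1}$ contribution from $y_j$ (by a standard geometric-sum bound on the unconstrained upper-triangular entries), and preserving these $k$ quantities up to a $(1\pm\eps)$ multiplicative factor forces any $(w,\eps)$-sparsifier $\widetilde w$ to satisfy $\widetilde w(y_j) > 0$ for at least $k-1$ values of $j$.

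The main obstacle lies in the last step: the correspondence and the upper bound reduce to bookkeeping, but the lower bound requires care because the chain leaves the $R$-satisfaction of $\sigma_j$ on $y_{j+1},\ldots,y_k$ entirely free, and these positions contribute noise to $\wt(R,\Psi,w,\sigma_j)$. The geometric weighting is chosen precisely so that this noise is dominated by $B^{k-j+1}$; a telescoping comparison between $\sigma_j$ and $\sigma_{j+1}$ then pins $\widetilde w(y_j)$ to be at least $B^{k-j+1}(1 - O(1/B)) > 0$ for each $j < k$. The ``$-1$'' in $\CL(\overline R,n)-1$ absorbs the slack at the top chain position $y_k$, whose sparsifier weight is not directly forced positive by the telescoping argument.
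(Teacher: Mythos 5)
Your plan mirrors the paper's: reduce to code sparsification via the satisfiability code $C_{R,\Psi}$, establish a chain-length analogue of Proposition~\ref{prop:csp-to-code}, apply the code-level Theorem~\ref{thm:wspr-code} for the upper bound, and use geometrically-spaced weights for the lower bound. However, the accounting contains two off-by-one errors that happen to cancel in the final bound but make the reasoning internally inconsistent, and they lead you to attribute the ``$-1$'' in the theorem to the wrong source.

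First, the claimed correspondence $\CL(\overline R, n) = \max_{\Psi}\CL(C_{R,\Psi})$ is off by one. A code chain of length $k$ (Definition~\ref{def:chain-code}) supplies $k$ witnessing pairs $(y_j,\sigma_j)$, which by Corollary~\ref{cor:chain-csp} certifies $\CL(\overline R,n)\ge k+1$ --- Definition~\ref{def:chain-csp} counts the nested instances $\Psi_1,\ldots,\Psi_\ell$, not the clauses. The correct identity, which is exactly the paper's Proposition~\ref{prop:cl-code-csp}, is $\CL(\overline R,n)-1=\max_{\Psi}\CL(C_{R,\Psi})$. So the maximum chain supplies $\CL(\overline R,n)-1$ clauses, not $\CL(\overline R,n)$.

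Second, the exponential-weight argument in fact pins down \emph{all} $k$ chain positions, not $k-1$. The top position $y_k$ is forced directly, not by telescoping: $\sigma_k$ fails $y_1,\ldots,y_{k-1}$ by the chain condition and there are no further clauses, so $\wt(R,\Psi,w,\sigma_k)=w(y_k)>0$, and any $(w,\eps)$-sparsifier must give $\widetilde w(y_k)\in(1\pm\eps)w(y_k)>0$. The geometric-sum / telescoping comparison you describe then handles every other position. Consequently the ``$-1$'' in Theorem~\ref{thm:wspr-csp} is purely the definitional offset from the previous paragraph, not slack at the top of the chain. If you keep both errors as written, you would be taking a chain with $\CL(\overline R,n)-1$ clauses and concluding only $\CL(\overline R,n)-2$ are forced, which is too weak; the conclusion comes out right only because the two errors cancel. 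Once corrected, the argument matches the paper's, which proves the lower bound once at the code level (Lemma~\ref{lem:wspr-cl-lb}: $\wSPR(C,\eps)\ge\CL(C)$, with no slack) and combines it with Propositions~\ref{prop:wspr-code-csp} and~\ref{prop:cl-code-csp}.
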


\subsection{Chain Length of Codes}\label{subsec:cl-code}

As it is more elementary to state, we begin by defining the chain length of a code.

\begin{definition}\label{def:chain-code}
Let $C \subseteq \{0,1\}^m$ be a code. A \emph{chain} of length $\ell$ is a pair of injective maps $a : [\ell] \to [m]$ and $c : [\ell] \to C$ such that the following conditions hold.
\begin{align*}
\forall i \in [\ell], c(i)_{a(i)} &= 1.\\
\forall 1 \le i < j \le \ell, c(j)_{a(i)} &= 0.
\end{align*}
The \emph{chain length} of $C$, denoted by $\CL(C)$ is the maximum-length chain.
\end{definition}

 Recall that if we line up the codewords of $C$ as rows of an $|C| \times m$ matrix and allow arbitrary column permutations, $\NRD(C)$ is the dimension of the largest identity submatrix within that matrix. In this setup, $\CL(C)$ is the dimension of the largest upper triangular square submatrix with $1$'s on the diagonal. When $C$ is a linear code, both $\NRD(C)$ and $\CL(C)$ equal the dimension of $C$. Further, for nontrivial $C$, we have that $1 \le \NRD(C) \le \CL(C) \le m$. 

If the set of indices $a : [\ell] \to [m]$ is clear from context, we refer to just the list of codewords $c : [\ell] \to C$ as the chain. We now prove that chain-length of a code is always at least its non-redundancy. 
\begin{proposition}\label{prop:cl-ge-nrd}
For all $C \subseteq \{0,1\}^m$, $\CL(C) \ge \NRD(C)$.
\end{proposition}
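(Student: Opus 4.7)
The plan is to observe that any non-redundant set in $C$ immediately yields a chain of the same length, simply by taking an arbitrary ordering. Concretely, let $I = \{i_1, \dots, i_\ell\} \subseteq [m]$ witness $\NRD(C) = \ell$, so there exist codewords $c_1, \dots, c_\ell \in C$ with $(c_k)_{i_j} = 1$ if and only if $j = k$, for all $j,k \in [\ell]$.

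I would then define the maps $a \colon [\ell] \to [m]$ and $c \colon [\ell] \to C$ by $a(j) := i_j$ and $c(j) := c_j$. Both are injective: $a$ since $i_1,\dots,i_\ell$ are distinct elements of $I$, and $c$ since the codewords $c_1,\dots,c_\ell$ must be distinct (they differ in the coordinates of $I$). The diagonal condition $c(j)_{a(j)} = (c_j)_{i_j} = 1$ holds by the definition of non-redundancy. The upper-triangular condition is automatic and in fact stronger than required: for \emph{every} pair $i \neq j$ in $[\ell]$ (not just $i < j$), we have $c(j)_{a(i)} = (c_j)_{i_i} = 0$, since the only coordinate of $I$ where $c_j$ is nonzero is $i_j$.

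Hence $(a,c)$ is a chain of length $\ell$, proving $\CL(C) \ge \ell = \NRD(C)$. There is no real obstacle here; the content of the proposition is essentially just the observation that an identity submatrix is, trivially, also an upper triangular submatrix with ones on the diagonal.
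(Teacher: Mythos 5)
Your proof is correct and follows the same route as the paper: take a witnessing non-redundant set, order it arbitrarily, and observe that the resulting submatrix is (in fact) an identity matrix and hence a fortiori upper triangular with ones on the diagonal. You spell out the injectivity and triangularity checks a bit more explicitly than the paper does, but there is no substantive difference.
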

\begin{proof}
Let $I \subseteq [m]$ be a maximum-sized non-redundant index set of $C$. Let $c : I \to C$ witness that $I$ is non-redundant. Let $\ell := |I|$. For any map bijection $a : [\ell] \to I$, we have that $(a, c \circ a)$ is a chain of length $\ell$. 
\end{proof}

We now consider an example where $\CL(C)$ and $\NRD(C)$ can be as different as possible.
\begin{example}\label{example:chain} Consider the code $C := \{1^n0^{m-n} : n \in [m]\}.$
It is straightforward to check that the defining codewords of $C$ form a chain, so $\CL(C) = m$. However, $\NRD(C) = 1$ because for any $c, c' \in C$ either $\supp(c) \subseteq \supp(c')$ or $\supp(c') \subseteq \supp(c)$. Therefore, by Theorem~\ref{thm:main-code}\footnote{This can be done in a more elementary manner by subsampling the $i$th coordinate with probability $p_i := \min(1, O(\tfrac{\log m}{i\eps^2}))$ and giving it a weight of $1/p_i$ if kept.} $\SPR(C, \eps) \le (\log^{O(1)} m)/\eps^2$. However, if we consider the weighting $w(i) = 2^i$, then any pair of codewords has weight different by a factor of $2$. Hence, for $\eps < 1/2$, every $\eps$-sparsifier must keep every coordinate of the code or else two codewords will be given identical weight by the sparsifier. Therefore, $\wSPR(C, \eps) = m$.
\end{example}

We now formalize this example into a lower bound of $\wSPR(C, \eps)$ for every code $C$.

\begin{lemma}\label{lem:wspr-cl-lb}
For every $C \in \{0,1\}^m$ and $\eps \in (0,1)$, we have that $\wSPR(C, \eps) \ge \CL(C)$.
\end{lemma}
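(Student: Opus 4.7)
Here is my plan. Let $(a,c)$ be a chain of length $\ell := \CL(C)$ and define the positive weight function $w\colon[m]\to\R_{>0}$ by $w(a(i)) := M^{\ell-i}$ on chain positions and $w(k) := \tau$ on non-chain positions, where $M$ is taken very large and $\tau$ very small (in terms of $\ell$, $m$, $\eps$). Writing $u_j := \langle w, c(j)\rangle$, the chain identity $c(j)_{a(i)}=0$ for $j>i$ gives $u_j = M^{\ell-j}(1+o_\ell(1))+O(m\tau)$, so the $u_j$'s are exponentially well-separated. I will show that any $\eps$-sparsifier of $(C,w)$ has support of size at least $\ell$.

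Suppose to the contrary that $\widetilde w$ is an $\eps$-sparsifier whose support $S$ satisfies $|S|<\ell$. Since the $\ell$ vectors $c(1)|_S,\ldots,c(\ell)|_S$ in $\R^{|S|}$ are linearly dependent, the $\ell\times|S|$ matrix $B_S$ with these rows admits a nonzero left-kernel vector $\beta$. For any such $\beta$, swapping the order of summation gives $\sum_j \beta_j\,\langle\widetilde w,c(j)\rangle = \sum_{k\in S}\widetilde w(k)\sum_j\beta_j c(j)_k = 0$, and combining with the sparsifier inequalities $|\langle\widetilde w,c(j)\rangle-u_j|\le \eps u_j$ forces
\[
\Bigl|\sum_j \beta_j u_j\Bigr| \le \eps\sum_j|\beta_j|\,u_j. \tag{$\star$}
\]
The rest of the argument produces a specific $\beta$ in this left kernel for which $(\star)$ fails.

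To construct the required $\beta$, let $T := \{i : a(i)\in S\}$ and $U := S\setminus a([\ell])$. The chain identity makes the part of the kernel conditions indexed by $T$ triangular: for each $i\in T$ the condition $\sum_j\beta_j c(j)_{a(i)}=0$ reduces to $\beta_i = -\sum_{j<i} c(j)_{a(i)}\,\beta_j$. Hence by induction the smallest index $j^*$ with $\beta_{j^*}\ne 0$ in any nonzero kernel vector must lie in $[\ell]\setminus T$. Choose $j^*$ as small as possible over all nonzero kernel vectors, fix such a $\beta$ normalized by $\beta_{j^*}=1$, and note that $\beta_j=0$ for $j<j^*$; the remaining values of $\beta$ at the $\ell-|T|-1$ indices of $[\ell]\setminus T$ exceeding $j^*$ are free parameters subject to $|U|\le\ell-|T|-1$ linear equations coming from positions in $U$, whose coefficients are the $\{0,1\}$-entries $c(j)_k$. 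A Cramer-style sub-determinant estimate for solvable linear systems with $\{0,1\}$-coefficients then yields a solution with $\max_j|\beta_j|\le C_\ell$ for some constant $C_\ell$ depending only on $\ell$. With this magnitude bound in hand, the exponential separation of the $u_j$'s closes the argument: for $M \ge M_0(\ell,\eps)$ and $\tau\le \tau_0(\ell,m,\eps)$ we estimate $|\sum_{j>j^*}\beta_j u_j|\le C_\ell \sum_{j>j^*}u_j = O_\ell(M^{\ell-j^*-1})$, so $|\beta\cdot u|\ge u_{j^*}(1-o_\ell(1))$ while $\sum_j|\beta_j|u_j\le u_{j^*}(1+o_\ell(1))$, giving $|\beta\cdot u|>\eps\sum_j|\beta_j|u_j$ in contradiction with $(\star)$. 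The principal obstacle is the uniform bound $\max_j|\beta_j|\le C_\ell$ independent of $M$; this rests on the $\{0,1\}$-structure of $B_S$ together with standard sensitivity estimates for solvable linear systems of bounded size.
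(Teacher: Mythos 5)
Your proposal is correct, but it takes a genuinely different route from the paper, and it is more elaborate than necessary. Both proofs begin by putting a geometrically decreasing weight sequence on the chain positions. From there, the paper argues combinatorially: with $\lambda := 4/(1-\eps)$ and $w(a(i)) := \lambda^{\ell+1-i}$ (and tiny weight elsewhere), it sets $S_i := \supp(\widetilde w)\cap\supp(c(i))$ and shows directly, via a short inequality chain using only the sparsifier guarantee and the geometric decay, that $S_i\setminus\bigcup_{j>i}S_j\neq\emptyset$ for every $i$; since these sets are pairwise disjoint, $\lvert\supp(\widetilde w)\rvert\ge\ell$. No linear algebra is involved, and the base $\lambda$ is a small constant depending only on $\eps$. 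Your argument instead goes through a rank-drop contradiction: if $\lvert S\rvert<\ell$ then the rows $c(j)|_S$ are dependent, a left-kernel vector $\beta$ kills $\sum_j\beta_j\langle\widetilde w,c(j)\rangle$, and $(\star)$ is violated once you exhibit a kernel vector with a small leading index $j^\ast$ and entries bounded by $C_\ell$. This does work: the triangular structure coming from $c(j)_{a(i)}=0$ for $j>i$ correctly forces $j^\ast\notin T$, and the boundedness claim for $\beta$ follows from a Cramer/Hadamard estimate on a solvable $\{0,1\}$-system of size at most $\ell$, giving $C_\ell\le\ell^{\ell/2}$. The trade-off is that you must then take $M$ exponentially large in $\ell$ (and $\tau$ correspondingly small) to beat $C_\ell$, and the entry-bound step, while standard, is the one real obligation you leave at the level of an assertion; the paper's direct inequality chain sidesteps both the linear-algebra detour and the need for an astronomically large base. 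Both arguments are purely existential, which is all the lemma requires.
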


\begin{proof}
Let $\ell := \CL(C)$. By permuting $[m]$, we may assume without loss of generality that there exist $c(1), \hdots, c(\ell) \in C$ such that $c(i)_j = 1$ if $i = j$ and $c(i)_j = 0$ if $j < i$. Pick $\lambda := \tfrac{4}{1-\eps}$ and consider the following weight function $w : [m] \to \R_{> 0}$.
\[
    w(i) := \begin{cases}
    \lambda^{\ell+1-i}& i \in [\ell]\\
    \frac{1}{m} & \text{otherwise}
    \end{cases}
\]
Observe then that for each $i \in [\ell]$, that
\[
\langle w, c(i)\rangle \ge w(i)c(i)_i = \lambda^{\ell+1-i}, \text{ and }
\]
\[
\langle w, c(i)\rangle \le \sum_{j=i}^{m} w(i)
= \sum_{j=i}^{\ell} \lambda^{\ell+1-j} + \frac{m-\ell}{m}
<  1 + \lambda + \cdots + \lambda^{\ell+1-i}
< \frac{1}{1 - 1/\lambda} \cdot \lambda^{\ell+1-i} \ .
\]
Now consider any $\eps$-sparsifier $\widetilde{w} : [m] \to \R_{\ge 0}$. Let $S := \supp(\widetilde{w})$ and for each $i \in [\ell]$, let $S_i := S \cap \supp c(i)$. We seek to prove that $|S| \ge \ell$. To prove this, it suffices to show for all $i \in [\ell]$ that
\begin{align}
    S_i \setminus \bigcup_{j=i+1}^{\ell} S_i \neq \emptyset \label{eq:S-i-big}
\end{align}
First, observe that since $\langle w, c(\ell)\rangle$ has nonzero weight, so $S_\ell \neq \emptyset$. Now assume for sake of contradiction that for some $i \in [\ell-1]$, has $S_i \subseteq S_{i+1} \cup \cdots \cup S_\ell$. Then, since $\widetilde{w}$ is an $\eps$-sparsifier and $\lambda := 4/(1-\eps)$,
\begin{align*}
4\lambda^{\ell-i} &= (1-\eps)\lambda^{\ell+1-i}
\le (1-\eps)\langle w, c(i)\rangle
\le \langle \widetilde{w}, c(i)\rangle\\
&= \sum_{a \in S_i} \widetilde{w}(a) &\text{ (definition of $S_i$)}\\
&\le \sum_{j=i+1}^{\ell} \sum_{a \in S_j} \widetilde{w}(a)
= \sum_{j=i+1}^{\ell} \langle \widetilde{w}, c(j)\rangle\\
&\le (1+\eps) \sum_{j=i+1}^{\ell} \langle w, c(j)\rangle\\
&\le (1+\eps) \sum_{j=i+1}^{\ell} \frac{1}{1 - 1/\lambda} \cdot \lambda^{\ell+1-j}\\
&\le \frac{1+\eps}{(1 - 1/\lambda)^2} \lambda^{\ell-i}
= \frac{4^2(1+\eps)}{(3+\eps)^2} \lambda^{\ell-i}
\le \frac{32}{9}\lambda^{\ell-i}\\
&< 4\lambda^{\ell-i},
\end{align*}
a contradiction. Therefore (\ref{eq:S-i-big}) holds for all $i \in [\ell]$, so $|S| \ge \ell$. Therefore, $\wSPR(C) \ge \CL(C)$.
\end{proof}

\subsection{Chain Length of CSPs}\label{subsec:cl-csp}

We define the chain length of a CSP predicate in terms of sets of satisfying assignments. Recall for an instance $\Psi$ of $\CSP(R)$ on $n$ variables, $\sat(R, \Psi)$ is the set of satisfying assignments to the $\Psi$. Lagerkvist and Wahlstr{\"o}m~\cite{lagerkvist2020Sparsification} first defined the concept, although our definition is closer to that of Bessiere, Carbonnel, and Katsirelos~\cite{bessiere2020Chain}.

\begin{definition}[\cite{lagerkvist2020Sparsification,bessiere2020Chain}]\label{def:chain-csp}
Given $R \subseteq D^r$ and $n \in \N$, we define a \emph{chain} to be a sequence of instances $\Psi_1, \hdots, \Psi_\ell$ of $\CSP(R)$ on $n$ variables such that
\begin{align}
    \sat(R, \Psi_1) \subsetneq \sat(R, \Psi_2) \subsetneq \cdots \subsetneq \sat(R, \Psi_\ell).\label{eq:chain-csp}
\end{align}
The \emph{chain length} of $\CSP(R)$ on $n$ variables, denoted by $\CL(R, n)$, is the maximum length $\ell$ of such a chain.
\end{definition}

It is not immediately obvious how Definition~\ref{def:chain-csp} relates to Definition~\ref{def:chain-code}. We prove this as follows
\begin{proposition}\label{prop:cl-code-csp}
For all $R \subseteq D^r$ and $n \in \N$, we have that
\begin{align}
    \CL(\overline{R}, n) - 1= \max_{\substack{\emph{$\Psi$ instance of $\CSP(R)$}\\\emph{on $n$ variables}}} \CL(C_{R,\Psi}),\label{eq:cl-code-csp}
\end{align}
with $C_{R, \Psi}$ defined as in Section~\ref{subsec:implies}.
\end{proposition}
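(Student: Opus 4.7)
We prove the two inequalities separately.

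For the direction $\max_{\Psi} \CL(C_{R,\Psi}) + 1 \leq \CL(\overline{R}, n)$, consider any instance $\Psi = (X, Y)$ of $\CSP(R)$ on $n$ variables together with a chain of length $\ell$ in $C_{R, \Psi}$ witnessed by clauses $y_1, \ldots, y_\ell \in Y$ and assignments $\sigma_1, \ldots, \sigma_\ell : X \to D$. For each $j \in \{1, \ldots, \ell + 1\}$, define the $\CSP(\overline{R})$-instance $\Phi_j := (X, \{y_1, \ldots, y_{j-1}\})$. Monotonicity yields $\sat(\overline{R}, \Phi_j) \supseteq \sat(\overline{R}, \Phi_{j+1})$, and the triangular chain condition of Definition~\ref{def:chain-code} places each $\sigma_j$ in the former but not the latter: $\sigma_j$ $\overline{R}$-satisfies $y_1, \ldots, y_{j-1}$ yet $R$-satisfies $y_j$. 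Reversing the enumeration of the $\Phi_j$'s therefore gives a chain in $\CSP(\overline R)$ of length $\ell + 1$.

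For the reverse direction, suppose $S_1 \subsetneq S_2 \subsetneq \cdots \subsetneq S_{\ell+1}$ is a chain of sat-sets for $\CSP(\overline R)$ on $n$ variables. Canonicalize by setting $\tilde Y_j := \{y \in X^r : \sigma(y) \in \overline R \text{ for all } \sigma \in S_j\}$. A direct check gives $\sat(\overline{R}, (X, \tilde Y_j)) = S_j$ (for the nontrivial inclusion, any defining clause set $Y_j$ of $S_j$ is contained in $\tilde Y_j$), and hence the strict chain $\tilde Y_1 \supsetneq \tilde Y_2 \supsetneq \cdots \supsetneq \tilde Y_{\ell+1}$. The main step is to build clauses $y_1, \ldots, y_\ell$ and assignments $\sigma_1, \ldots, \sigma_\ell$ satisfying the code-chain condition. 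For each $j \in [\ell]$, pick $\sigma_j \in S_{\ell+2-j} \setminus S_{\ell+1-j}$ and then any $y_j \in \tilde Y_{\ell+1-j}$ with $\sigma_j(y_j) \in R$; such $y_j$ exists since $\sigma_j \notin \sat(\overline R, \tilde Y_{\ell+1-j})$, and automatically $y_j \notin \tilde Y_{\ell+2-j}$. The clauses $y_j$ lie in pairwise-disjoint strips $\tilde Y_{\ell+1-j}\setminus \tilde Y_{\ell+2-j}$ of the filtration and are thus distinct. For $i < j$ one has $\ell+1-i \geq \ell+2-j$, which yields $y_i \in \tilde Y_{\ell+1-i} \subseteq \tilde Y_{\ell+2-j}$; combined with $\sigma_j \in S_{\ell+2-j}$, this forces $\sigma_j(y_i) \in \overline R$. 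Together with $\sigma_j(y_j) \in R$, this matches Definition~\ref{def:chain-code} for $C_{R, \Psi}$ with $\Psi := (X, \{y_1, \ldots, y_\ell\})$, yielding $\CL(C_{R, \Psi}) \geq \ell$.

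The crux, and the step requiring real care, is the reversed indexing. A naive ``forward'' pairing $y_j \in \tilde Y_j \setminus \tilde Y_{j+1}$ with $\sigma_j \in S_{j+1} \setminus S_j$ can fail the chain condition, because the later $\sigma_j$ need not $\overline{R}$-satisfy the earlier $y_i$ once $y_i$ has been removed from $\tilde Y$ at step $i+1$. Drawing the $y_j$'s from the deepest strips first and pairing them with $\sigma_j$ from the smallest sat sets $S_{\ell+2-j}$ (which are the most $\overline{R}$-restricted) automatically enforces $\sigma_j(y_i) \in \overline R$ for all earlier $i$, completing the argument.
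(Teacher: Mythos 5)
Your proof is correct and follows essentially the same route as the paper's: in both directions you convert between code chains and strict chains of satisfying-assignment sets by choosing the same witnesses $\sigma_i$ and $y_i$, with the same reversed indexing (most-constrained sat set paired with the earliest chain position). The one cosmetic difference is the canonicalization $\tilde Y_j$, which the paper's argument avoids by observing directly that $\sigma_i \in \sat(\overline{R},\Psi_{i-1}) \subseteq \sat(\overline{R},\Psi_j)$ for all $j \le i-1$ (so $\sigma_i$ already $\overline{R}$-satisfies $y(j) \in Y_j$ without needing to enlarge the clause sets); your closure step is harmless and arguably makes the verification of $\sigma_j(y_i)\in\overline{R}$ slightly more transparent, at the cost of one extra preliminary lemma.
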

\begin{remark}
Note that chain lengths for CSPs and codes are off by one for the same reason that a path graph with $n$ vertices has only $n-1$ edges. In particular, the code's chain length is effectively counting the number of $\subsetneq$'s in (\ref{eq:chain-csp}).
\end{remark}
\begin{proof}
We first prove that the LHS of (\ref{eq:cl-code-csp}) is at least the RHS. Fix an instance $\Psi := (X, Y \subseteq X^r)$ with $|X| = n$ and let $\ell := \CL(C_{R,\Psi})$. Thus, there exists $y : [\ell] \to Y$ and $c : [\ell] \to C_{R,\Psi}$ such that for all $i \in [\ell]$, $c(i)_{y(i)} = 1$ and for all $1 \le i < j \le \ell$, $c(j)_{y(i)} = 0$.

Let $\sigma_1, \hdots, \sigma_\ell : X \to D$ correspond to $c(1), \hdots, c(j)$. In particular, for all $i \in [\ell]$, $\sigma_i(y(i)) \in R$ but for all $1 \le i < j \le \ell$, $\sigma_j(y(i)) \not\in R$. For all $i \in \{0,1,\hdots, \ell\}$ define the instance
\[
    \Psi_i := (X, \{y(j) : j \in [i]\}).
\]
as an instance of $\CSP(\overline{R})$. Since adding clauses can only decrease the number of satisfying assignments, we have that
\begin{align}
    \sat(\overline{R}, \Psi_\ell) \subseteq \sat(\overline{R}, \Psi_{\ell-1}) \subseteq \cdots \subseteq \sat(\overline{R}, \Psi_1) \subseteq \sat(\overline{R}, \Psi_0)\label{eq:csp-chain}
\end{align}
To prove these inclusions are strict, for each $i \in [\ell]$,
Observe that
\[
    \sigma_{i} \in \sat(\overline{R}, \Psi_{i-1}) \setminus \sat(\overline{R}, \Psi_{i}),
\]
as $\sigma_{i}(y(1)), \hdots, \sigma_{i+1}(y(i-1)) \not\in R$ but $\sigma_{i+1}(y(i)) \in R$. Therefore, all inclusions in (\ref{eq:csp-chain}) are strict, so $\CL(\overline{R}, n) \ge \CL(C_{R,\Psi})+1$.

We now prove that the LHS of (\ref{eq:cl-code-csp}) is at most the RHS. Let $\ell := \CL(\overline{R}, n) - 1$. Fix a set $X$ of size $n$ and consider instances $\Psi_0 := (X, Y_0), \hdots, \Psi_\ell := (X, Y_\ell)$ of $\CSP(\overline{R})$ on the variable set $X$ such that
\[
\sat(\overline{R}, \Psi_\ell) \subsetneq \sat(\overline{R}, \Psi_{\ell-1}) \subsetneq \cdots \subsetneq \sat(\overline{R}, \Psi_1) \subsetneq \sat(\overline{R}, \Psi_0)
\]
For each $i \in [\ell]$, pick $\sigma_i \in \sat(\overline{R}, \Psi_{i-1}) \setminus \sat(\overline{R}, \Psi_{i})$. Since $\sigma_i \not\in \sat(\overline{R}, \Psi_i)$, there exists $y(i) \in Y_i$ which $\sigma_i$ does not satisfy.

Now consider $Y := \{y(1), \hdots, y(\ell)\}$ and let $\Psi := (X,Y)$. It suffices to prove that $\CL(C_{R,\Psi}) \ge \ell$. For each $i \in [\ell]$, there exists $c(i) \in C_{R,\Psi}$ corresponding to the assignment $\sigma_i$. Since $\sigma_i \in \sat(\overline{R}, \Psi_{i-1}) \setminus \sat(\overline{R}, \Psi_{i})$, we have that $c(i)_{y(1)} = \cdots c(i)_{y(i-1)} = 0$ but $c(i)_{y(i)} = 1$. Therefore, $(y : [\ell] \to Y, c : [\ell] \to C_{R,\Psi})$ is a chain of length $\ell$ in $C_{R,\Psi}$.
\end{proof}

As an immediate corollary of the proof, we have the following more combinatorial interpretation of chain length. This is somewhat closer to Lagerkvist and Wahlstr{\"o}m's~\cite{lagerkvist2020Sparsification} definition of chain length.

\begin{corollary}\label{cor:chain-csp}
Let $R \subseteq D^r$ be a relation and $X$ a variable set of size $n$. We have that $\CL(R, n) \ge \ell + 1$ if and only if there exist assignments $\sigma_1, \hdots, \sigma_\ell : X \to D$ and clauses $y_1, \hdots, y_\ell \in X^r$ such that for all $i \in [\ell]$,
\[
  \sigma_i(y_1), \hdots, \sigma_i(y_{i-1}) \in R \text{ and } \sigma_i(y_i) \not\in R.
\]
\end{corollary}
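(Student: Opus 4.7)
The plan is to derive Corollary~\ref{cor:chain-csp} as a direct unpacking of Proposition~\ref{prop:cl-code-csp} (applied with $\overline{R}$ in place of $R$) through Definition~\ref{def:chain-code} for the specific code $C_{\overline{R},\Psi}$.

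First, I would substitute $\overline{R}$ for $R$ in Proposition~\ref{prop:cl-code-csp}; since $\overline{\overline{R}} = R$, this yields
\[
  \CL(R, n) - 1 \;=\; \max_{\Psi} \CL(C_{\overline{R}, \Psi}),
\]
where $\Psi$ ranges over instances of $\CSP(\overline{R})$ on $n$ variables. Consequently, the inequality $\CL(R, n) \ge \ell + 1$ is equivalent to the existence of some $\Psi = (X, Y)$ of $\CSP(\overline{R})$ with $|X|=n$ such that the satisfiability code $C_{\overline{R}, \Psi}$ has chain length at least $\ell$.

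Next, I would translate Definition~\ref{def:chain-code} specialized to $C_{\overline{R}, \Psi}$. A chain of length $\ell$ consists of injections $y:[\ell]\to Y\subseteq X^r$ and $c:[\ell]\to C_{\overline{R}, \Psi}$ with $c(i)_{y(i)} = 1$ and $c(i)_{y(j)} = 0$ for all $j<i$ (after relabeling the indexing convention in Definition~\ref{def:chain-code}). Each $c(i)$ arises from some assignment $\sigma_i : X \to D$ via \eqref{eq:build-code}, which reads $c(i)_{y} = \mathbf{1}[\sigma_i(y) \in \overline{R}]$. Reading off the two chain conditions then gives $\sigma_i(y_i) \notin R$ and $\sigma_i(y_j) \in R$ for $j < i$ (setting $y_i := y(i)$), which is exactly the combinatorial condition claimed by the corollary.

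For the converse direction, given $\sigma_1,\ldots,\sigma_\ell$ and $y_1,\ldots,y_\ell$ satisfying the corollary's hypotheses, I would form $\Psi := (X, \{y_1,\ldots,y_\ell\})$ and let $c(i)$ be the codeword of $C_{\overline{R},\Psi}$ corresponding to $\sigma_i$; the stated conditions then directly witness a chain of length $\ell$. Distinctness of the $y_i$ (needed for the injection $y$) follows because for any $j > i$, $\sigma_j$ separates $y_i$ (which it sends into $R$) from $y_j$ (which it sends into $\overline{R}$). Since the entire argument is a definitional translation once Proposition~\ref{prop:cl-code-csp} is in hand, I anticipate no substantive obstacle; the only point requiring slight care is bookkeeping between the ``$\ell+1$ instances yielding $\ell$ strict inclusions'' convention of CSP chain length and the ``$\ell$ rows'' convention of code chain length, which is exactly the $-1$ offset already tracked in Proposition~\ref{prop:cl-code-csp}.
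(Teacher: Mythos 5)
Your proof is correct and takes essentially the same route the paper intends: the paper states the corollary ``as an immediate corollary of the proof'' of Proposition~\ref{prop:cl-code-csp}, and your derivation via applying that proposition with $\overline{R}$ in place of $R$ and then unpacking Definition~\ref{def:chain-code} for the satisfiability code $C_{\overline{R},\Psi}$ is precisely that unpacking made explicit. The one small point you handle implicitly but should be noted as also required by Definition~\ref{def:chain-code} is injectivity of the codeword map $c$: this follows for free since if $c(i)=c(j)$ with $i<j$ then $c(i)_{y_i}=1$ contradicts $c(j)_{y_i}=0$.
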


We can now show that Theorem~\ref{thm:wspr-code} implies Theorem~\ref{thm:wspr-csp}.

\begin{proposition}
Theorem~\ref{thm:wspr-code} implies Theorem~\ref{thm:wspr-csp}.
\end{proposition}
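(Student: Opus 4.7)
The plan is to follow the template from Section~\ref{subsec:implies}, where Theorem~\ref{thm:main-code} was shown to imply Theorem~\ref{thm:main}, substituting in the weighted sparsification and chain length analogues. The two key translation lemmas are already established: Proposition~\ref{prop:wspr-code-csp} equates $\wSPR(R, n, \eps)$ with the maximum of $\wSPR(C_{R, \Psi}, \eps)$ over $n$-variable instances $\Psi$ of $\CSP(R)$, and Proposition~\ref{prop:cl-code-csp} equates $\CL(\overline{R}, n) - 1$ with the maximum of $\CL(C_{R, \Psi})$ over the same range. Thus the implication reduces to a bookkeeping exercise.

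For the lower bound, I would invoke Proposition~\ref{prop:cl-code-csp} to select an instance $\Psi^*$ on $n$ variables achieving $\CL(C_{R, \Psi^*}) = \CL(\overline{R}, n) - 1$, then apply the lower bound of Theorem~\ref{thm:wspr-code} to conclude $\wSPR(C_{R, \Psi^*}, \eps) \ge \CL(\overline{R}, n) - 1$. Finally Proposition~\ref{prop:wspr-code-csp} gives $\wSPR(R, n, \eps) \ge \wSPR(C_{R, \Psi^*}, \eps)$, completing that direction.

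For the upper bound, fix any instance $\Psi = (X, Y)$ with $|X| = n$. The associated code $C_{R, \Psi}$ lives in $\{0,1\}^m$ for $m = |Y| \le n^r$, so $\log m \le r \log n$. The upper bound of Theorem~\ref{thm:wspr-code} then yields
\[
    \wSPR(C_{R, \Psi}, \eps) = O\bigl(\CL(C_{R, \Psi}) \cdot (r \log n)^6 / \eps^2\bigr).
\]
Taking the maximum over all $\Psi$ on $n$ variables and applying both Proposition~\ref{prop:wspr-code-csp} and Proposition~\ref{prop:cl-code-csp} (the latter to bound $\CL(C_{R, \Psi}) \le \CL(\overline{R}, n) - 1 \le \CL(\overline{R}, n)$) gives the claimed upper bound on $\wSPR(R, n, \eps)$.

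There is essentially no obstacle here; all the genuine work is packaged inside Theorem~\ref{thm:wspr-code} and the two translation propositions. The only minor care needed is to correctly route the bound $m \le n^r$ so that the polylogarithmic factor comes out as $(r \log n)^6$ rather than $(\log m)^6$, matching the statement of Theorem~\ref{thm:wspr-csp}.
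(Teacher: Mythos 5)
Your proof is correct and takes essentially the same route as the paper, which likewise applies Theorem~\ref{thm:wspr-code} to $C_{R,\Psi}$ for every $n$-variable instance $\Psi$, takes the maximum, and invokes Propositions~\ref{prop:wspr-code-csp} and~\ref{prop:cl-code-csp}. You simply spell out the lower- and upper-bound directions more explicitly, including the $m \le n^r$ bookkeeping, which the paper leaves implicit.
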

\begin{proof}
Fix $R \subseteq D^r, n \in \N, \eps > 0$. Apply Theorem~\ref{thm:wspr-code} to $C_{R,\Psi}$ for all instances $\Psi$ of $\CSP(R)$ on $n$ variables. Then, take the maximum of these resulting inequalities for all such $\Psi$. By applying Proposition~\ref{prop:wspr-code-csp} and Proposition~\ref{prop:cl-code-csp}, we obtain Theorem~\ref{thm:wspr-csp}.
\end{proof}

\subsection{Proof of Theorem~\ref{thm:wspr-code}}

We now prove Theorem~\ref{thm:wspr-code}. We use a bucketing approach by \cite{khanna2024Characterizations} to split the weights into groups of roughly equal weight (within a factor of $m^{O(1)}$). We then use a standard repetition trick and Theorem~\ref{thm:main-code} to sparsify these groups. Finally, we argue that the total size of the sparsifier cannot be significantly more than the length of the maximal chain in our code.

As a warmup, we first prove that non-redundancy is an upper bound if the ratio between the maximum and minimum weight is at most $m^{O(1)}$. (this observation is also in \cite{khanna2024Characterizations} for linear codes).

\begin{lemma}\label{lem:rep-wspr}
Let $C \subseteq \{0,1\}^m$ and $w : [m] \to R_{> 0}$ be a positive weight function such that $\max(w) / \min(w) \le m^3$. Then, for all $\eps \in (0,1)$, $C$ has a $(w,\eps)$-sparsifier of size at most
\[
    10^7 \NRD(C)\log^6(4m)/\eps^2.
\]
\end{lemma}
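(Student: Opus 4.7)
The plan is to reduce the weighted problem to the unweighted sparsification theorem (Theorem~\ref{thm:main-code}) via a coordinate-multiplicity gadget. After rescaling so that $\min_i w(i) = 1$, every weight lies in $[1, m^3]$. I first dispose of the tiny-$\eps$ regime: if $\eps < 2/m^3$, then $10^7 \NRD(C) \log^6(4m)/\eps^2 \ge 10^7 \cdot m^6/4 \ge m$, so the identity weighting $\widetilde w := w$ itself already fits the stated bound. Assume henceforth $\eps \ge 2/m^3$.

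Next, fix $T := m^4$, define multiplicities $k_i := \lceil T \cdot w(i)\rceil$, and build the blown-up code $C' \subseteq \{0,1\}^{m'}$ obtained by replacing each coordinate $i$ with $k_i$ identical copies. Formally, take a surjection $\pi : [m'] \to [m]$ with $|\pi^{-1}(i)| = k_i$ and set $C' := \{(c_{\pi(1)}, \ldots, c_{\pi(m')}) : c \in C\}$; note $m' \le m(Tm^3+1) = O(m^8)$, so $\log m' = O(\log m)$. The key structural fact is $\NRD(C') = \NRD(C)$: any non-redundant $I' \subseteq [m']$ can contain at most one copy of each original coordinate (two copies of the same $i$ agree on every codeword), so $\pi(I') \subseteq [m]$ is a non-redundant subset of $[m]$ of the same size; conversely any non-redundant $I \subseteq [m]$ lifts by picking one copy per element.

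Then apply Theorem~\ref{thm:main-code} to $C'$ with parameter $\eps' := \eps/3$ to obtain $u : [m'] \to \R_{\ge 0}$ of support $O(\NRD(C) \log^6(4m)/\eps^2)$ with $\langle u, c'\rangle \in (1\pm \eps')\Ham(c')$ for all $c' \in C'$, and pull it back via $\widetilde w(i) := \tfrac{1}{T}\sum_{j \in \pi^{-1}(i)} u(j)$; then $\supp(\widetilde w) \subseteq \pi(\supp(u))$, so $|\supp(\widetilde w)| \le |\supp(u)|$, and $\langle \widetilde w, c\rangle = \tfrac{1}{T}\langle u, c'\rangle$ for $c' := (c_{\pi(j)})_j \in C'$. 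Since $k_i/T \in [w(i), w(i) + 1/T]$, for any nonzero $c \in C$ one gets $\tfrac{1}{T}\Ham(c') = \sum_i(k_i/T) c_i \in [\langle w,c\rangle,\, \langle w,c\rangle(1 + 1/m^3)]$ using $\langle w,c\rangle \ge 1$. Composing the $(1\pm\eps/3)$ distortion from $u$ with the $(1\pm 1/m^3) \subseteq (1\pm\eps/2)$ distortion from the rounding yields $\langle \widetilde w, c\rangle \in (1\pm\eps)\langle w,c\rangle$, as required.

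The only delicate point I anticipate is book-keeping the two layers of multiplicative error while keeping the multiplicity $T$ independent of $\eps$, so that $\log m' = O(\log m)$ (rather than $O(\log(m/\eps))$) and the final bound carries no stray $\log(1/\eps)$ factor; the small-$\eps$ case split is precisely what allows the choice $T = m^4$ to suffice.
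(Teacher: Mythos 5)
Your approach is essentially the same as the paper's: rescale so $\min w = 1$, replicate coordinate $i$ with multiplicity proportional to $w(i)$, observe that the blown-up code has the same non-redundancy, apply the unweighted Theorem~\ref{thm:main-code} to the blown-up code, and pull the sparsifier back by summing and rescaling. The one-sided ceiling rounding you use is handled by the paper with a floor (producing a $[1-\eps/2,1]$ rather than $[1,1+\eps/2]$ distortion), and the small-$\eps$ case split serves the identical purpose. Your non-redundancy-preservation argument for the blowup is correct.

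The one concrete gap is the stated numerical constant. With your choice $T = m^4$, the blown-up length is $m' \le m(Tm^3+1) = O(m^8)$, so $\log_2(4m')$ is close to $8\log_2(4m)$ for large $m$, and feeding $\eps' = \eps/3$ into Theorem~\ref{thm:main-code-precise} gives roughly $800\cdot 9\cdot 8^6 \approx 1.9\times 10^9$ in place of $10^7$. The paper instead uses the $\eps$-dependent multiplicity $b_i = \lfloor 2w(i)/(\eps\min w)\rfloor$, bounded via the cutoff $\eps > 1/m$, which keeps $\widetilde m = O(m^5)$ and thus yields a substantially smaller constant. Your choice to keep $T$ independent of $\eps$ is clean but costs an avoidable factor of $m^3$ in the blowup. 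If the exact constant $10^7$ mattered you would need to shrink $T$ (e.g.\ $T = \lceil m/\eps\rceil$ with the matching threshold $\eps > 1/m$) or retune the bookkeeping. You also implicitly assume $\NRD(C) \ge 1$ in the small-$\eps$ case; when $\NRD(C)=0$ one should return the zero sparsifier, as the paper notes.
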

\begin{proof}
If $\NRD(C) = 0$, return the empty sparsifier. If $\eps \le 1/m$, return $w$ itself as $m \le 1/\eps^2$. Thus, assume $\NRD(C) \ge 1$ and $\eps > 1/m$. For each $i \in [m]$ define
\[
    b_i := \left\lfloor \frac{2 w(i)}{\eps \min w}\right\rfloor \le 2m^5.
\]
Let $\widetilde{m} = \sum_{i=1}^m b_i \le 2m^6.$ Let $f : \{0,1\}^m \to \{0,1\}^{\widetilde{m}}$ be such that
\[
    f(x) := \underbrace{x_1,\hdots,x_1}_{b_1}, \underbrace{x_2,\hdots,x_2}_{b_2}, \hdots, \underbrace{x_m,\hdots,x_m}_{b_m}.
\]
We also let $S_i \subseteq [\widetilde{m}]$ be the set of $b_i$ coordinates which are equal to $x_i$. Observe that for all $x \in \{0,1\}^m$, we have that
\begin{align*}
\Ham(f(x)) &= \sum_{i=1}^n \left\lfloor \frac{2w(i)}{\eps \min w}\right\rfloor x_i\\
&\in\sum_{i=1}^n \left[\frac{2w(i)}{\eps \min w} - 1, \frac{2 w(i)}{\eps \min w}\right]\cdot x_i\\
&\subseteq \sum_{i=1}^n [1-\eps/2, 1] \cdot \frac{w(i)x_i}{(\eps/2)\min(w)}&\text{(since $w(i) \ge \min(w)$)}\\
&\subseteq [1-\eps/2, 1] \cdot \frac{\langle w, x\rangle}{(\eps/2)\min(w)}.
\end{align*}

Now define $\widetilde{C} := \{f(c) : c \in C\}.$ Note that $\NRD(\widetilde{C}) = \NRD(C)$ as the repetition of coordinates cannot increase non-redundancy. Let $\widetilde{w} : [\widetilde{m}] \to \R_{\ge 0}$ be an $(\eps/2)$-sparsifier of $\widetilde{C}$. By Theorem~\ref{thm:main-code}, we have that
\[
    \card{\supp(\widetilde{w})} \le 800\NRD(\widetilde{C}) \log_2^6(4\widetilde{m})/(\eps/2)^2 \le 10^7 \NRD(C) \log_2^6(4m) / \eps^2.
\]
Now consider the map $w' : [m] \to R_{\ge 0}$ defined by
\[
   \forall i \in [m], w'(i) := \frac{\eps \min(w)}{2} \sum_{j \in S_i} \widetilde{w}(j).
\]
Clearly, $\card{\supp(w')} \le \card{\supp(\widetilde{w})}$ so it suffices to prove that $w'$ is a $(w,\eps)$-sparsifier of $C$. To see why, for all $c \in C$,
\begin{align*}
\langle w', c\rangle &= \frac{\eps \min(w)}{2} \langle \widetilde{w}, f(c)\rangle\\
&\in [1-\eps/2,1+\eps/2]\cdot \frac{\eps \min(w)}{2} \Ham(f(c))\\
&\subseteq [1-\eps/2, 1+\eps/2] \cdot [1-\eps/2, 1] \cdot \langle w, x\rangle\\
&\subseteq [1-\eps, 1+\eps] \cdot \langle w, x\rangle,
\end{align*}
as desired.
\end{proof}

\begin{proof}[Proof of Theorem~\ref{thm:wspr-code}.]
Fix our code $C \subseteq \{0,1\}^m$, our positive weight function $w : [m] \to \R_{> 0}$, and $\eps \in (0,1)$. Similar to the proof of Lemma~\ref{lem:rep-wspr}, we may assume that $\NRD(C) \ge 1$, $\eps > 6/m$, and $m \ge 2$.

Define a function $t : [m] \to \Z$ be
\[
    t(i) := \left\lfloor \frac{\log w(i)}{3\log m}\right\rfloor.
\]
Let $T := \{t(i) : i \in \Z\}$ be a finite set. For each $c \in C$, define its \emph{type} to be
\[
    \type(c) := \max_{i \in \supp(c)} t(i) \in T.
\]
For each $t \in \Z$, define
\begin{align*}
    I_t &:= \{i \in [m] : t(i) = t\},\text{ and}\\
    C_t &:= \{c \in C : \type(c) = t\}.
\end{align*}
Note that these sets are only nonempty for $t \in T$. For each $t \in T$, observe that
\[
    \frac{\max w|_{I_t}}{\min w|_{I_t}} \le \frac{m^{3t+3}}{m^{3t}} = m^3.
\]

For each $t \in T$, let $\widetilde{w}_t : [m] \to \R_{\ge 0}$ be an $(\eps/2)$-sparsifier of $(C_{t} \cup C_{t+1} \cup \{1_{I_t}\})|_{I_t}$, where $1_{I_t}\in \{0,1\}^m$ is the indicator of the set $I_t$. If $t \not\in T$, define $\widetilde{w}_t : [m] \to \R_{\ge 0}$ to be the zero map.

For all $t \in T$, note that $(C_{t} \cup C_{t+1})|_{I_t}$ has at least one nonzero codeword. Therefore,
\[
    \NRD((C_{t} \cup C_{t+1} \cup \{1_{I_t}\})|_{I_t})\le \NRD((C_{t} \cup C_{t+1})|_{I_t}) + 1 \le 2 \NRD((C_{t} \cup C_{t+1})|_{I_t}).
\]
Thus, by Lemma~\ref{lem:rep-wspr}, we can ensure that
\begin{align}
    \card{\supp(\widetilde{w}_t)} &\le 8\cdot 10^7 \NRD((C_{t} \cup C_{t+1})|_{I_t})\log^6(4m)/\eps^2.\label{eq:w-t}
\end{align}
Now consider $\widetilde{w} : [m] \to \R_{\ge 0}$ defined by
\[
    \forall i \in [m], \widetilde{w}(i) := \widetilde{w}_{t(i)}(i).
\]
We claim that $\widetilde{w}$ is an $(w,\eps)$-sparsifier of $C$. To verify this, for any $c \in C$ with $t := \type(c)$, we have that
\begin{align*}
\langle \widetilde{w}, c\rangle &= \langle \widetilde{w}_{t-1}, c|_{I_{t-1}}\rangle + \langle \widetilde{w}_t, c|_{I_t}\rangle + \sum_{\substack{t' \in T\\t' \le t-2}} \langle \widetilde{w}_{t'}, c|_{I_{t'}}\rangle\\
&\in [1-\eps/2,1+\eps/2] \cdot \sum_{i \in I_{t-1} \cup I_t} w(i)c_i + \left[0, \sum_{\substack{t' \in T\\t' \le t-2}} \langle \widetilde{w}_{t'}, 1_{I_{t'}}\rangle\right]\\
&= [1-\eps/2, 1+\eps/2] \cdot \left[\langle w, c\rangle - \sum_{i \in \supp(c) \setminus (I_{t-1} \cup I_t)} w(i)\right]\\
&\ \ \ \ \ \ + (1 + \eps/2)\left[0, \sum_{\substack{t' \in T\\t' \le t-2}} \langle w, 1_{I_{t'}}\rangle\right]
\end{align*}
Since $t = \type(c)$, we have that $\langle w, c\rangle \ge m^{3t}$. Further, observe that
\[
\sum_{i \in \supp(c) \setminus (I_{t-1} \cup I_t)} w(i) \le \sum_{\substack{t' \in T\\t' \le t-2}} \sum_{i \in I_{t'}} w(i) \le m \cdot m^{3t-3} \le  m^{3t} \cdot \frac{\eps}{6} \le \frac{\eps}{6}\langle w, c\rangle,
\]
where we used the fact that $\eps > 6/m$. Therefore, 
\begin{align*}
\langle \widetilde{w}, c\rangle &\in ([1-\eps/2, 1+\eps/2] \cdot [1-\eps/6, 1] + (1+\eps/2)\cdot [0, \eps/6]) \cdot \langle w, c\rangle \subseteq [1-\eps, 1+\eps] \langle w, c\rangle,
\end{align*}
as desired. It thus suffices to bound $\card{\supp(\widetilde{w})}$. By (\ref{eq:w-t}), we have that
\begin{align}
\card{\supp(\widetilde{w})} &\le \sum_{t \in T} \card{\supp(\widetilde{w}_t)} \le 8\cdot 10^7 \left[\sum_{t \in T} \NRD((C_t\cup C_{t+1})|I_t)\right]\log^6(4m)/\eps^2.\label{eq:bound-wt}
\end{align}
To finish, we prove the following claim.
\begin{claim}\label{claim:NRD-CL-bound}
$\sum_{t \in T} \NRD((C_t\cup C_{t+1})|_{I_t}) \le 2\CL(C).$
\end{claim}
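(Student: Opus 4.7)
The plan is to extract, for each bucket $t$, a non-redundant structure of size $d_t := \NRD((C_t\cup C_{t+1})|_{I_t})$ and then stitch these structures across buckets into chains of $C$. Concretely, for each $t\in T$ fix a non-redundant set $J_t\subseteq I_t$ of size $d_t$ with witness codewords $\{c_j^{(t)}:j\in J_t\}\subseteq C_t\cup C_{t+1}$ such that $(c_j^{(t)})_j=1$ and $(c_j^{(t)})_{j'}=0$ for all $j'\in J_t\setminus\{j\}$. The $J_t$'s are automatically disjoint because the $I_t$'s are disjoint, so concatenating them gives an injective assignment $a$ into $[m]$. The only worry is the second chain condition: later codewords must vanish on earlier coordinates.

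To ensure this, I would split $T$ into its even and odd parts $T_{\even}$ and $T_{\odd}$ and build one chain per parity class, processing types in \emph{decreasing} order. Within a fixed type $t$ the pairs $(j,c_j^{(t)})$ for $j\in J_t$ satisfy the chain condition among themselves by the very definition of $J_t$. The cross-type condition is where the parity trick pays off: if $(j,c_j^{(t)})$ appears before $(j',c_{j'}^{(t')})$ then $t>t'$ and, since we are inside a parity class, $t\ge t'+2$. But $\type(c_{j'}^{(t')})\in\{t',t'+1\}$, so $\supp(c_{j'}^{(t')})\subseteq I_{\le t'+1}$, and since $j\in I_t$ with $t\ge t'+2$, we conclude $(c_{j'}^{(t')})_j=0$. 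Injectivity of the codeword map is then forced by the chain conditions (if two codewords coincided, the first condition on one index and the second on the same index would contradict each other). This produces a bona fide chain in $C$ of length $\sum_{t\in T_{\even}}d_t$, and likewise for $T_{\odd}$, so each sum is at most $\CL(C)$ and the two bounds add to $2\CL(C)$.

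The main subtlety, and the reason a naive one-shot chain does not work, is precisely the overlap between $C_t$ and $C_{t+1}$ baked into the definition of the $(C_t\cup C_{t+1})|_{I_t}$-non-redundancy: a type-$(t+1)$ witness can perfectly well be nonzero on $I_{t+1}$, so adjacent buckets can interfere with each other's chain condition. The parity bucketing bypasses this by guaranteeing a gap of at least two between consecutive types within a class, at the cost of only a factor of two in the final bound. Once Claim~\ref{claim:NRD-CL-bound} is in hand, plugging into \eqref{eq:bound-wt} yields the upper bound $\card{\supp(\widetilde{w})}=O(\CL(C)\log^6(4m)/\eps^2)$, which together with Lemma~\ref{lem:wspr-cl-lb} completes Theorem~\ref{thm:wspr-code}.
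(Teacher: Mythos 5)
Your proof is correct and takes essentially the same route as the paper: build a chain inside each type-$t$ bucket from a maximum non-redundant set in $(C_t\cup C_{t+1})|_{I_t}$, note that concatenation across buckets works whenever types differ by at least $2$ because a codeword of type at most $t'+1$ has support disjoint from $I_t$ when $t\ge t'+2$, and split $T$ into even and odd parity classes to guarantee that gap at the cost of a factor of $2$. Your extra remark that codeword-map injectivity follows automatically from the two chain conditions is a nice tidying detail the paper leaves implicit, but it is not a different argument.
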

\begin{proof}
For each $t \in T$, let $j_t := \NRD((C_t \cup C_{t+1})|_{I_t})$ and let $i^{(t)}_1, \hdots, i^{(t)}_{j_t} \in I_t$ be distinct indices such that there exist $c^{(t)}(1), \hdots, c^{(t)}(j_t) \in C_t \cup C_{t+1}$ such that
\[
    \forall a,b \in [j_t], c^{(t)}(a)_{i^{(t)}_{b}} = \one[a = b].
\]
We can view this non-redundant set as a chain in $C$. Call this chain $\cC_t$. The key observation is that if $t, t' \in T$ are such that $t \ge t'+2$, then the concatenation of $\cC_t$ and $\cC_{t'}$ is a chain. Since $\cC_t$ and $\cC_{t'}$ are each chains in $C$, it suffices to prove for any $a \in [j_t]$ an $b \in [j_{t'}]$ that
\begin{align}
    c^{(t')}(b)_{i_a^{(t)}} = 0.\label{eq:c-zero}
\end{align}
To see why, for any $c \in C_{t'} \cup C_{t'+1}$, we have that $\type(c) \le t'+1 < t$, so $\supp(c) \cap I_t = \emptyset$. Therefore, (\ref{eq:c-zero}) holds.

Thus, for any decreasing subsequence $t_1, \hdots, t_\ell$ of $T$ with consecutive terms having difference at least $2$, we have that the concatenation of $\cC_{t_1}, \hdots, \cC_{t_\ell}$ is a chain in $C$, so $j_{t_1} + \cdots + j_{t_\ell} \le \CL(C)$. Since $T$ can partitioned into two such subsequences (e.g., even/odd), we have that $j_1 + \cdots + j_\ell \le 2\CL(C)$, as desired.
\end{proof}
Applying Claim~\ref{claim:NRD-CL-bound} to (\ref{eq:bound-wt}) proves the theorem.
\end{proof}

\begin{remark}
A mistake in the proof of Theorem~\ref{thm:wspr-code} was caught while preparing the follow-up work \cite{BGP26b}. In particular, the initial (incorrect) version of the proof claimed that $\langle \widetilde{w}, c\rangle = \langle \widetilde{w}_{t-1}, c|_{I_{t-1}}\rangle + \langle \widetilde{w}_t, c|_{I_t}\rangle$, which is only approximately true. The precise error terms are now tracked and bounded in the corrected proof.
\end{remark}

\begin{remark}
Alternatively, we could bound the sum in (\ref{eq:bound-wt}) by $|T|\NRD(C)$. Since $|T| \le 2 + (\log \frac{\max w}{\min w})/(3\log m),$ we get the following immediate corollary which gives a savings if $\max w/\min w$ is small. It can also be seen an interpolation between Theorem~\ref{thm:main-code} and Theorem~\ref{thm:wspr-code}.
\end{remark}

\begin{corollary}\label{cor:wspr-improved}
Let $C \subseteq \{0,1\}^m$ and let $w : [m] \to \R_{\ge 0}$ be a weight function. For all $\eps \in (0,1)$, there exist a $(w,\eps)$-sparsifier of $C$ of size at most
\[
    O\left(\log \frac{m\max w}{\min w} \cdot \NRD(C)\log^5 m/\eps^2\right).
\]
\end{corollary}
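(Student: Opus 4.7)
The plan is to retrace the proof of Theorem~\ref{thm:wspr-code} verbatim through the construction of $\widetilde w$ and its verification as a $(w,\eps)$-sparsifier. All that changes is the final counting step: instead of invoking Claim~\ref{claim:NRD-CL-bound} to collapse $\sum_{t\in T}\NRD((C_t\cup C_{t+1})|_{I_t})$ to $2\CL(C)$, I will apply the trivial monotonicity bound $\NRD((C_t\cup C_{t+1})|_{I_t})\le \NRD(C)$ (puncturing and taking subsets never increases non-redundancy), giving $\sum_{t\in T}\NRD((C_t\cup C_{t+1})|_{I_t})\le |T|\cdot \NRD(C)$.

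Next I would bound the size of $T=\{t(i):i\in[m]\}$. Since $t(i)=\lfloor \log w(i)/(3\log m)\rfloor$ is an integer lying in the interval $[\lfloor \log(\min w)/(3\log m)\rfloor,\lfloor \log(\max w)/(3\log m)\rfloor]$, we get
\[
|T|\le 2+\frac{\log(\max w/\min w)}{3\log m}.
\]
Plugging both estimates into (\ref{eq:bound-wt}) yields
\[
\card{\supp(\widetilde w)}\le 4\cdot 10^7\left(2+\frac{\log(\max w/\min w)}{3\log m}\right)\NRD(C)\frac{\log^6(4m)}{\eps^2},
\]
which simplifies to $O\bigl(\log(m\max w/\min w)\cdot \NRD(C)\log^5 m/\eps^2\bigr)$, matching the claimed bound.

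As a minor bookkeeping note, the same trivial edge cases from the proof of Theorem~\ref{thm:wspr-code} (namely $\NRD(C)=0$, $\eps\le 2/m$, or $m\le 1$) should be dispatched up front so that Lemma~\ref{lem:rep-wspr} applies to each group $(C_t\cup C_{t+1})|_{I_t}$ with the ratio $\max w|_{I_t}/\min w|_{I_t}\le m^3$, exactly as in the original argument. There is no new mathematical obstacle: the entire proof is essentially a reapplication of the existing bucketing machinery with a different, more elementary estimate of the bucket-wise non-redundancy sum, and this elementary estimate is tighter than the chain-length bound precisely when the weight ratio $\max w/\min w$ is substantially smaller than $\exp(\CL(C)/\NRD(C))$, which is the regime the corollary is designed to exploit.
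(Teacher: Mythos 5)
Your proposal is correct and matches the paper's own justification exactly: the remark preceding Corollary~\ref{cor:wspr-improved} states precisely that one bounds the sum in (\ref{eq:bound-wt}) by $|T|\cdot\NRD(C)$ via monotonicity of non-redundancy under taking subcodes and punctures, and then uses $|T|\le 2+(\log(\max w/\min w))/(3\log m)$. The arithmetic simplification to $O(\log(m\max w/\min w)\cdot\NRD(C)\log^5 m/\eps^2)$ also checks out.
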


\begin{remark}
As observed in \cite{BGP26b}, one can recursively apply Theorem~\ref{thm:wspr-code} to itself to get a better sparsifier whose size is independent of the length $m$ of the codewords. We defer to \cite{BGP26b} for details on how this can be done.
\end{remark}

\section{Immediate Applications of Non-redundancy and Chain Length to Sparsification}\label{sec:spr-app}

In this section, we prove many corollaries of Theorem~\ref{thm:main-code}, Theorem~\ref{thm:main}, and Theorem~\ref{thm:wspr-code-intro} based on established facts in the non-redundancy literature.

\subsection{Code Sparsification for Non-Abelian Groups}\label{subsec:non-abelian}

Given a finite group $G$ with identity element $e$, we define a $G$-code to be a subgroup $H \le G^m$ for some $m \in \N$. Define the Hamming weight of $h \in H$ to be the number of coordinates $i \in [m]$ not equal to $e$. A \emph{$G$-code $\eps$-sparsifier} is a map $w : [m] \to \R_{\ge 0}$ such that for every $h \in H$, the $w$-weight of $h$ (i.e., the sum of the weights of the non-identity coordinates) is within a factor of $1\pm \eps$ of its Hamming weight.

The works of Khanna, Putterman, and Sudan~\cite{khanna2024Code,khanna2024Characterizations} proved the existence (and in fact efficient construction) of $G$-code $\eps$-sparsifiers of size $\log |H| \log^{O(1)}m/\eps^2$ for any Abelian group $G$. In particular $G = \Z/2\Z$ captures cut sparsification. We show as an immediate corollary of Theorem~\ref{thm:main-code} that comparably-sized sparsifiers exist for any \emph{non-Abelian} group $G$. The proof adapts techniques from \cite{lagerkvist2020Sparsification}.

\begin{theorem}\label{thm:group-sparsifier}
For any finite group $G$, any $H \le G^m$, and any $\eps \in (0,1)$, there is an $\eps$-sparsifier of $H$ of size $O(\log |H| (\log^6 m)/\eps^2)$.
\end{theorem}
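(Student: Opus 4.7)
The plan is to reduce to Theorem~\ref{thm:main-code} by passing from $H \le G^m$ to its non-identity pattern code, and then to bound the non-redundancy of that code by $\log_2 |H|$ via a direct product-of-witnesses construction that works even in the non-Abelian setting. Let $e$ denote the identity of $G$. Define the Boolean code
\[
 C := \{c_h : h \in H\} \subseteq \{0,1\}^m, \qquad c_{h,i} := \one[h_i \neq e].
\]
Then $\Ham(c_h)$ is precisely the Hamming weight of $h$ in $G^m$, and for any $w : [m] \to \R_{\ge 0}$ we have $\langle w, c_h\rangle = \sum_{i : h_i \neq e} w(i)$. Consequently an $\eps$-sparsifier of $C$ in the sense of Definition~\ref{def:SPR-code} is exactly a $G$-code $\eps$-sparsifier of $H$.

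The main step is to show
\[
 \NRD(C) \le \log_2 |H|.
\]
Let $I \subseteq [m]$ be non-redundant for $C$ with $|I| = k := \NRD(C)$. For each $i \in I$, fix a witness $h^{(i)} \in H$ such that $h^{(i)}_i \neq e$ and $h^{(i)}_{i'} = e$ for every $i' \in I \setminus \{i\}$. Fix an arbitrary ordering $i_1, \dots, i_k$ of $I$, and for every subset $S \subseteq I$ define
\[
 h_S := \prod_{j : i_j \in S} h^{(i_j)} \in H,
\]
where the product is taken in the chosen order. For any $i_j \in I$, the factor $h^{(i_j)}$ contributes $h^{(i_j)}_{i_j} \neq e$ at coordinate $i_j$ and every other factor contributes $e$ at coordinate $i_j$; hence $(h_S)_{i_j} = h^{(i_j)}_{i_j} \neq e$ if $i_j \in S$ and $(h_S)_{i_j} = e$ otherwise. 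In particular the $2^k$ elements $\{h_S : S \subseteq I\}$ have pairwise distinct projections to $G^I$, so they are distinct elements of $H$. This gives $|H| \ge 2^k$, i.e.\ $k \le \log_2 |H|$.

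Applying Theorem~\ref{thm:main-code} to $C$ now yields an $\eps$-sparsifier of $C$ (and hence of $H$) of support size
\[
 O\bigl(\NRD(C)\,(\log m)^6 / \eps^2\bigr) = O\bigl(\log|H|\,(\log m)^6 / \eps^2\bigr),
\]
as required. There is no real obstacle here beyond the bookkeeping; the only place where non-commutativity of $G$ could have caused trouble is in the product-of-witnesses step, but since each witness $h^{(i)}$ is the identity at every coordinate of $I$ except its own, the value of $h_S$ at coordinates in $I$ is independent of the multiplication order, which is all we need to certify that the $h_S$ are distinct.
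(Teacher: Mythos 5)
Your proof is correct and follows essentially the same route as the paper: pass to the non-identity pattern code $C$, bound $\NRD(C) \le \log_2|H|$ by multiplying the non-redundant witnesses over subsets of the non-redundant index set (where pairwise distinctness on $I$ is immediate since all but at most one factor is the identity at each coordinate in $I$, making order irrelevant), and then invoke Theorem~\ref{thm:main-code}. The paper writes the product as $h_1^{b_1}\cdots h_\ell^{b_\ell}$ for $b \in \{0,1\}^\ell$ rather than your $h_S$, but this is purely notational.
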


\begin{proof}
If $|G| = 1$, there is nothing to prove, so assume $|G| \ge 2$. Let $C \subseteq \{0,1\}^m$ be the code such that for every $h \in H$, the binary string $(\one[h_i \neq e] : i \in [m])$ is an element of $C$. Observe that any $\eps$-sparsifier of $C$ is an $\eps$-sparsifier of $H$ (and vice-versa). Therefore, by Theorem~\ref{thm:main-code}, it suffices to prove that $\NRD(C) \le \log_2 |H|$.

Let $I := \{i_1, \hdots, i_\ell\} \subseteq [m]$ be non-redundant. Thus, there exists $h_1, \hdots, h_\ell$ such that $h_{j,i_{j'}} \neq e$ if and only if $j = j'$. Now consider the set
\[
  S := \left\{h_1^{b_1} \cdots h_\ell^{b_\ell} \in H: b \in \{0,1\}^\ell\right\}.
\]
Note that for each $b \in \{0,1\}^\ell$, the coordinates within $I$ of $h_1^{b_1} \cdots h_\ell^{b_\ell}$ which do not equal $e$ are precisely the support of $b$. Therefore, $|S| = 2^\ell$. Thus, $\ell \le \log_2 |H|$, so $\NRD(C) \le \log_2|H|$, as desired.
\end{proof}

For non-Abelian groups $G$, Section~7 of \cite{khanna2024Characterizations} observed that for any two non-commuting elements $a,b \in G$ that the predicate $P := \{x \in \{0,1\}^4 : a^{x_1}b^{x_2}a^{-x_3}b^{-x_4} = e\}$ has $\NRD(P, n) \ge \Omega(n^2)$, suggesting that sparsification for (Abelian) affine CSPs does not extend to non-Abelian groups. However, this is not exactly the case---there are predicates corresponding to non-Abelian code sparsification, falling under the more general framework of ``Mal'tsev predicates.'' See Section~\ref{subsec:maltsev} for more details.

\subsection{Maximum-size Dichotomy}\label{subsec:size}

Observe that for any $R \subseteq D^r$, we have that $\NRD(R, n), \SPR(R, n, \eps) \le n^r$ as there are at most $n^r$ possible clauses on $n$ variables. A natural question is then which predicates have non-redundancy/sparsification of size $\Omega(n^r)$. When $D = \{0,1\}$, it is known that $\NRD(R, n) = \Omega(n^r)$ if and only if $R = \{0,1\}^r \setminus \{b\}$ for some $b \in \{0,1\}^r$, otherwise $\NRD(R, n) = O(n^{r-1})$~\cite{chen2020BestCase}. Using linear code sparsification, the analogous result is also known for sparsification~\cite{khanna2024Characterizations}. In particular, $\SPR(R, n, \eps) = \Omega(n^r)$ if and only if $R = \{b\}$ for some $b \in \{0,1\}^r$, otherwise $\SPR(R, n, \eps) = \widetilde{O}_{\eps}(n^{r-1}).$ 

For non-Boolean $D$, an analogous result is known for non-redundancy due to Carbonnel~\cite{carbonnel2022Redundancy}). The main proof technique is applying a hypergraph Tur{\'a}n result due to Erd{\H o}s~\cite{erdos1964extremal}.

\begin{theorem}[Carbonnel~\cite{carbonnel2022Redundancy}]\label{thm:carbonnel}
For $r \in \N$, let $\delta_r := 2^{1-r}$. For all $R \subseteq D^r$, we have that $\NRD(R, n) = \Omega(n^r)$ if and only if there exists subsets $D_1, \hdots, D_r \subseteq D$ of size exactly $2$ such that $|R \cap (D_1 \times \cdots \times D_r)| = 2^r-1$. Otherwise, $\NRD(R, n) = O(n^{r-\delta_r})$.
\end{theorem}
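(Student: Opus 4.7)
The plan is to prove both directions of the dichotomy. The forbidden structure ``$|R \cap (D_1 \times \cdots \times D_r)| = 2^r-1$'' will appear both as the explicit template used to build a large non-redundant instance and as the inevitable consequence of any sufficiently large non-redundant instance via a hypergraph Tur\'an argument.

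For the lower bound, suppose $D_j = \{b_j, b_j'\}$ with $R \cap (D_1 \times \cdots \times D_r) = (D_1 \times \cdots \times D_r) \setminus \{b\}$ where $b = (b_1, \ldots, b_r)$. I would partition $X$ into $r$ parts $X_1, \ldots, X_r$ of size $\lfloor n/r \rfloor$ and take $Y := X_1 \times \cdots \times X_r$, so that $|Y| = \Theta(n^r)$. For each $y = (x_{i_1}, \ldots, x_{i_r}) \in Y$, define the witness $\sigma_y$ by setting $x_{i_j} \mapsto b_j$ for the distinguished variable in $X_j$ and every other variable of $X_j$ to $b_j'$. Then $\sigma_y(y) = b \notin R$, while for any $y' \neq y$ some coordinate $j$ has $y_j' \neq y_j$, forcing $\sigma_y(y')$ to lie in $D_1 \times \cdots \times D_r$ but differ from $b$ in position $j$, hence $\sigma_y(y') \in R$. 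This exhibits a non-redundant instance of $\CSP(R)$ of size $\Omega(n^r)$.

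For the upper bound I would prove the contrapositive. Clauses with repeated variables number at most $O(n^{r-1})$, so assume every clause has distinct entries. Encode each clause $y$ as the edge $\{(y_j, j) : j \in [r]\}$ of the $r$-partite $r$-uniform hypergraph on vertex set $X \times [r]$, and invoke Erd\H{o}s' Tur\'an bound $\operatorname{ex}_r(n, K^{(r)}_{2,\ldots,2}) = O(n^{r - 2^{1-r}})$~\cite{erdos1964extremal}. If $|Y|$ exceeds this threshold by a large enough constant, we obtain pairs $V_j = \{v_{j,0}, v_{j,1}\} \subseteq X$ such that for every $s \in \{0,1\}^r$ the clause $y^{(s)} := (v_{1,s_1}, \ldots, v_{r,s_r})$ lies in $Y$. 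Now fix $\sigma := \sigma_{y^{(0^r)}}$ and set $D_j := \{\sigma(v_{j,0}), \sigma(v_{j,1})\}$. If $|D_j| = 1$ for some $j$, then $\sigma(y^{(0^r)}) = \sigma(y^{(e_j)})$, contradicting the fact that $\sigma$ rejects the former clause but satisfies the latter. Hence every $D_j$ has size 2, and as $s$ ranges over $\{0,1\}^r$, the values $\sigma(y^{(s)})$ enumerate all of $D_1 \times \cdots \times D_r$ with exactly $\sigma(y^{(0^r)})$ falling outside $R$, producing the forbidden structure.

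The main obstacle is a bookkeeping issue arising in the Tur\'an step: the partite $K^{(r)}_{2,\ldots,2}$ sits in $X \times [r]$, so its $X$-projections $v_{j,b}$ across different $j$ could coincide, making some putative transversal clause $y^{(s)}$ have repeated entries and thus fall outside the repeat-free sub-instance. I would handle this by first deleting a small number of problematic vertex copies and re-applying Tur\'an, or by passing to a balanced random partition of $X$ into $r$ disjoint colour classes before invoking a multi-partite variant of the Erd\H{o}s bound; the loss is confined to constants and does not affect the exponent $r - \delta_r$.
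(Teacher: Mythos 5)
Your proof is correct, and it matches the method the paper relies on: the paper attributes Theorem~\ref{thm:carbonnel} to Carbonnel and does not re-prove it, but its proof of the chain-length analogue (Theorem~\ref{thm:carbonnel-cl} in Section~\ref{subsec:msd-cl}) uses the same two ingredients --- Erd\H{o}s's hypergraph Tur\'an bound and extraction of the forbidden $2\times\cdots\times2$ box from a $K^{(r)}_{2,\dots,2}$ --- and your lower-bound construction is the standard one.

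One remark on the ``main obstacle'' you flag at the end: after you restrict to clauses with pairwise distinct entries, the bookkeeping problem you worry about cannot actually arise, so neither of your proposed repairs is necessary. In the $r$-partite $r$-uniform hypergraph on $X\times[r]$, any copy of $K^{(r)}_{2,\dots,2}$ necessarily has its $r$ pairs aligned one per colour class: if a transversal $T$ is an edge, so is $T$ with one vertex of a pair replaced by its partner, and since both edges meet each class exactly once, the two partners must lie in the same class. Furthermore, for $j\neq j'$ and any $b,b'\in\{0,1\}$, pick $s\in\{0,1\}^r$ with $s_j=b$ and $s_{j'}=b'$; the transversal clause $y^{(s)}$ lies (by hypothesis) in your repeat-free sub-instance, hence has pairwise distinct entries, forcing $v_{j,b}\neq v_{j',b'}$. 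Combined with $v_{j,0}\neq v_{j,1}$ within a pair, all $2r$ vertices project to distinct elements of $X$ automatically. The only cost of passing to $X\times[r]$ is replacing $n$ by $rn$ in the vertex count, a constant factor that leaves the exponent $r-\delta_r$ untouched.
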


As an immediate corollary of our main result Theorem~\ref{thm:main-code} tying sparsification to non-redundancy, we can now extend this dichotomy to sparsification.

\begin{theorem}
For $r \in \N$, let $\delta_r := 2^{1-r}$. For all $R \subseteq D^r$ and $\eps \in (0,1)$, we have that $\SPR(R, n, \eps) = \Omega(n^r)$ if and only if there exists subsets $D_1, \hdots, D_r \subseteq D$ of size exactly $2$ such that $|R \cap (D_1 \times \cdots \times D_r)| = 1$. Otherwise, $\SPR(R, n, \eps) = O(n^{r-\delta_r}\log^6 n / \eps^2)$.
\end{theorem}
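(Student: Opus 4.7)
The plan is to derive this sparsification dichotomy as a direct corollary of Theorem~\ref{thm:main}, which pins $\SPR(R, n, \eps)$ to $\NRD(\overline{R}, n)$ up to polylogarithmic factors, combined with Carbonnel's non-redundancy dichotomy (Theorem~\ref{thm:carbonnel}). The only bridge I need is to recognize that the combinatorial characterization in the theorem statement is precisely Carbonnel's condition applied to the complement predicate $\overline{R}$.

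Concretely, I will first observe the complementation identity: for any $D_1, \ldots, D_r \subseteq D$ of size exactly $2$, we have $|D_1 \times \cdots \times D_r| = 2^r$, so
\[
|R \cap (D_1 \times \cdots \times D_r)| = 1 \;\Longleftrightarrow\; |\overline{R} \cap (D_1 \times \cdots \times D_r)| = 2^r - 1.
\]
Applying Theorem~\ref{thm:carbonnel} to $\overline{R}$ then shows that $\NRD(\overline{R}, n) = \Omega(n^r)$ exactly when there exist such $D_1, \ldots, D_r$ with $|R \cap (D_1 \times \cdots \times D_r)| = 1$, and otherwise $\NRD(\overline{R}, n) = O(n^{r - \delta_r})$.

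Next I will plug these two cases into Theorem~\ref{thm:main}. In the first case (such subsets exist), the lower bound $\SPR(R, n, \eps) \ge \NRD(\overline{R}, n) = \Omega(n^r)$ yields the ``if'' direction. In the second case (no such subsets exist), the upper bound gives
\[
\SPR(R, n, \eps) \le O\!\left(\NRD(\overline{R}, n) \cdot (r \log n)^6 / \eps^2\right) = O\!\left(n^{r - \delta_r} \log^6 n / \eps^2\right),
\]
treating the arity $r$ as constant. This simultaneously establishes the claimed polynomial upper bound and, by contrapositive, the ``only if'' direction (since $\delta_r > 0$, so the upper bound is $o(n^r)$).

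There is essentially no obstacle here beyond bookkeeping: the heavy lifting is done by Theorem~\ref{thm:main-code} (whose proof occupies Sections~\ref{sec:prelim}--\ref{sec:entropy}) and by Carbonnel's prior extremal result. The only subtle point to flag for the reader is the $R \leftrightarrow \overline{R}$ swap between the sparsification and non-redundancy formulations (as noted in Remark~\ref{rem:backwards}), which is why the threshold count flips from Carbonnel's $2^r-1$ to $1$ in our statement.
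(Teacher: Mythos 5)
Your proposal is correct and is exactly the argument the paper uses (the paper's own proof of this theorem is literally the one line ``Apply Theorem~\ref{thm:main} to Theorem~\ref{thm:carbonnel}''). Your identification of the complementation identity $|R \cap (D_1\times\cdots\times D_r)|=1 \Leftrightarrow |\overline{R}\cap(D_1\times\cdots\times D_r)|=2^r-1$ as the bridge, and your note that the $r^6$ factor is absorbed into the big-$O$ for constant arity, are precisely the small bookkeeping details the paper leaves implicit.
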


\begin{proof}
Apply Theorem~\ref{thm:main} to Theorem~\ref{thm:carbonnel}.
\end{proof}

\subsection{Generalized Group Equations: Mal'tsev Sparsification}\label{subsec:maltsev}

For over a half-century (e.g., \cite{bergman2011Universal}), the field of universal algebra has sought to generalize various algebraic structures, such as Abelian and non-Abelian groups. One particular notion that has been vital to the study of CSPs is that of \emph{Mal'tsev\footnote{Spelling variants include Maltsev, Malcev, and Mal'cev.} operators}~\cite{MR65533}. Succinctly, for a domain $D$, a Mal'tsev operator is a ternary function $\varphi : D^3 \to D$ such that for any $x, y \in D$, we have that $\varphi(x,y,y) = \varphi(y,y,x) = x$. As a concrete example, for any group $G$, consider $\varphi_G(x,y,z) := x \cdot y^{-1} \cdot z$. It is clear that $\varphi_G(x,y,y) = \varphi_G(y,y,x) = x$, so $\varphi_G$ is a Mal'tsev operator.

Every finite domain $D$ and Mal'tsev operator $\varphi$ has a corresponding family of CSP predicates (e.g., \cite{bergman2011Universal,barto2017Polymorphisms}). In particular, we say that $R \subseteq D^r$ is a \emph{Mal'tsev predicate} if for any three tuples $a,b,c \in R$, we have that\footnote{In CSP parlance, we say that $\varphi$ is a \emph{Mal'tsev polymorphism} of $R$.}
\[
  (\varphi(a_1,b_1,c_1), \hdots, \varphi(a_r,b_r,c_r)) \in R.
\]
In particular, for an Abelian group $G$, the Mal'tsev predicates corresponding to $(G, \varphi_G)$ are precisely the affine predicates, i.e., cosets of $G^r$. For non-Abelian groups and more general Mal'tsev operators, the corresponding predicates are much more subtle (see e.g., \cite{lagerkvist2020Sparsification}). However, enough is known about Mal'tsev predicates to prove that their non-redundancy is quite small. In fact, \cite{lagerkvist2020Sparsification,bessiere2020Chain} prove that the non-redundancy of every Mal'tsev predicate is linear by utilizing a characterization of Mal'tsev predicates by~\cite{bulatov2006Simple}.

\begin{theorem}[\cite{lagerkvist2020Sparsification,bessiere2020Chain}, simplified]\label{thm:maltsev-nrd}
For every Mal'tsev predicate $R \subseteq D^r$, we have that $\NRD(R, n) = O_D(n)$.
\end{theorem}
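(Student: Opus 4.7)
The plan is to reduce bounding $\NRD(R,n)$ to bounding the length of the longest strictly decreasing chain of Mal'tsev subalgebras of $D^n$, and then to invoke Bulatov--Dalmau's structural theory for Mal'tsev algebras to obtain a chain length linear in $n$.

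For the reduction, I would let $(X,Y)$ be a non-redundant instance of $\CSP(R)$ with $|X|=n$ and greedily order the clauses as $y_1,\dots,y_m$ by repeatedly picking a remaining clause $y$ whose addition strictly shrinks the current solution set $S_i:=\sat(R,\{y_1,\dots,y_i\})$. If this process were to stall before exhausting $Y$, then every unpicked $y$ would satisfy $\sat(R,\{y_1,\dots,y_i\})\subseteq \sat(R,y)$; since $\{y_1,\dots,y_i\}\subseteq Y\setminus\{y\}$, this forces $\sat(R,Y\setminus\{y\})\subseteq \sat(R,y)$ and hence $\sat(R,Y\setminus\{y\})=\sat(R,Y)$, contradicting non-redundancy. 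Thus the greedy procedure exhausts $Y$, producing a strict chain $D^n=S_0\supsetneq S_1\supsetneq\cdots\supsetneq S_m$ of length exactly $m=|Y|$.

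Each $S_i$ is closed under the coordinate-wise action of $\varphi$ (preservation is inherited from $R$, and intersections preserve the Mal'tsev property), so the chain lives in the lattice of Mal'tsev subalgebras of $D^n$. The crux of the argument is then Bulatov--Dalmau's compact representation theorem: every such subalgebra $T\le D^n$ admits a canonical set of witness tuples, indexed by coordinate prefixes together with pairs of domain values, of total size $O_D(n)$, whose $\varphi$-closure recovers $T$. A standard consequence is that any strict chain of Mal'tsev subalgebras of $D^n$ has length $O_D(n)$. Applying this to the chain from the previous paragraph gives $|Y|=m=O_D(n)$, which is the desired bound.

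The main obstacle is the chain-length bound itself. In the Abelian-group setting each strict inclusion of subgroups at least halves the size, so chains in $D^n$ trivially have length at most $n\log_2|D|$. This sort of counting argument fails for general Mal'tsev algebras, which admit strict subalgebra inclusions of arbitrarily small index; for instance, on a three-element domain one can extend the defining Mal'tsev identities $\varphi(x,y,y)=\varphi(y,y,x)=x$ to make $\{0,1\}\subsetneq\{0,1,2\}$ a valid strict inclusion of subalgebras. Consequently, the linear chain-length bound genuinely requires the structural content of Bulatov--Dalmau's compact representation, which is the same tool underpinning polynomial-time algorithms for Mal'tsev CSPs and the kernelization/non-redundancy bounds in \cite{lagerkvist2020Sparsification,bessiere2020Chain}. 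Taking that result as a black box, the remainder of the proof is just the greedy reduction and the basic observation that Mal'tsev-preservation passes from $R$ to $\sat(R,Y)$.
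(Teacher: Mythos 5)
Your proof is correct and follows essentially the same route as the cited sources \cite{lagerkvist2020Sparsification,bessiere2020Chain} (which the paper itself invokes without reproving): reduce non-redundancy to chain length via the greedy chain construction, then bound the chain length of Mal'tsev subalgebras of $D^n$ using the Bulatov--Dalmau compact-representation/signature machinery. Your greedy argument is in fact just the proof that non-redundancy is at most chain length, which appears in the paper as Proposition~\ref{prop:cl-ge-nrd} (via the code formulation) together with Theorem~\ref{thm:maltsev-cl}; taking the signature-size bound as a black box to control chain length is exactly how the cited works argue, so nothing here diverges from the paper's account.
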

It turns out that Theorem~\ref{thm:maltsev-nrd} can be strengthened by considering gadget reductions which preserve non-redundancy up to a constant factor. See Section~\ref{subsec:gadget} and Section~\ref{sec:conclusion} for more details. As an immediate corollary of Theorem~\ref{thm:main}, we get a nearly identical result for sparsification.
\begin{theorem}[``Mal'tsev sparsification'']\label{thm:maltsev-spr}
For every Mal'tsev predicate $R \subseteq D^r$, we have that $\SPR(\overline{R}, n, \eps) = O_D(n\log^6 n/\eps^2)$.
\end{theorem}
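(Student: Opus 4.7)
The plan is to obtain Theorem~\ref{thm:maltsev-spr} as an immediate corollary of Theorem~\ref{thm:main} applied to the non-redundancy bound of Theorem~\ref{thm:maltsev-nrd}. All the conceptual work has already been done: Theorem~\ref{thm:main} converts any non-redundancy bound into a sparsification bound (up to polylogarithmic factors and the usual $1/\eps^2$), and Theorem~\ref{thm:maltsev-nrd} supplies a linear non-redundancy bound for Mal'tsev predicates. So the proof is essentially a substitution.

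Concretely, I would fix a Mal'tsev predicate $R \subseteq D^r$ and, wanting to control $\SPR(\overline{R}, n, \eps)$, instantiate Theorem~\ref{thm:main} with the predicate $\overline{R}$ playing the role of $R$. The upper bound in Theorem~\ref{thm:main} then reads
\[
  \SPR(\overline{R}, n, \eps) \le O\bigl(\NRD(\overline{\overline{R}}, n)\,(r\log n)^6 /\eps^2\bigr) = O\bigl(\NRD(R, n)\,(r\log n)^6/\eps^2\bigr),
\]
using $\overline{\overline{R}} = R$. Now apply Theorem~\ref{thm:maltsev-nrd} to conclude $\NRD(R, n) = O_D(n)$. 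Since the arity $r$ is a fixed feature of $R$ (and in particular bounded in terms of $R$), the factor $(r \log n)^6$ can be absorbed into an $O_D$ constant times $\log^6 n$, yielding $\SPR(\overline{R}, n, \eps) = O_D(n \log^6 n / \eps^2)$ as claimed.

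There is essentially no obstacle here beyond the two invoked ingredients, both of which are already in place: Theorem~\ref{thm:maltsev-nrd} is imported from \cite{lagerkvist2020Sparsification,bessiere2020Chain} (and ultimately rests on Bulatov and Dalmau's structural characterization \cite{bulatov2006Simple} of Mal'tsev predicates), while Theorem~\ref{thm:main} is the paper's main result proved via the entropy method in Section~\ref{sec:entropy}. The only minor caveat is that Theorem~\ref{thm:main} is stated for nonempty $R \subsetneq D^r$; the trivial cases $R = \emptyset$ (no satisfying assignments, so $\overline{R} = D^r$ and the zero sparsifier trivially works) and $R = D^r$ (every assignment satisfies every constraint of $\overline{R}$, which is empty) need a one-line check, but neither is a Mal'tsev-specific issue. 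Hence the theorem follows without any additional work beyond citing the two results.
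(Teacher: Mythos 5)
Your proposal is correct and follows exactly the paper's (implicit) proof: the paper presents Theorem~\ref{thm:maltsev-spr} as an immediate corollary of Theorem~\ref{thm:main} applied with $\overline{R}$ in place of $R$, then invokes Theorem~\ref{thm:maltsev-nrd} to bound $\NRD(R,n) = O_D(n)$. Your extra remark about the trivial predicates $R=\emptyset$ and $R=D^r$ is a reasonable sanity check but not something the paper bothers to spell out.
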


\subsection{Gadget Reductions}\label{subsec:gadget}

An excellent property of non-redundancy is that there is a rich gadget reduction framework for describing the relationships between the non-redundancy of predicates.\footnote{Formally these reductions are called \emph{functionally guarded primitive positive (fgpp) definitions}~\cite{lagerkvist2017Kernelization,carbonnel2022Redundancy}.} In particular, the framework allows for three types of operations.

\begin{theorem}[NRD gadget reductions, adapted from Proposition~2~\cite{carbonnel2022Redundancy}]\label{thm:nrd-gadget}
Let $P, Q \subseteq D^r$ be predicates. The following gadget reductions are (approximately) monotonic with respect to non-redundancy.
\begin{itemize}
\item[1.]\textbf{Projection (or minor).} For any $s \in \N$ and $\pi : [s] \to [r]$, let $R \subseteq D^s$ be defined by $R := \{(a_{\pi(1)}, \hdots, a_{\pi(s)}) \in D^s : a \in P\}$. Then, $\NRD(R, n) = O(\NRD(P, n))$.
\item[2.]\textbf{Domain restriction.} For any maps $f_1, \hdots, f_r : E \to D$, consider the predicate $R := \{a \in E^r: (f_1(a_1), \hdots, f_r(a_r)) \in P\}$. Then, $\NRD(R, n) = O(\NRD(P, n))$.
\item[3.]\textbf{Conjunction.} Let $R := P \wedge Q$. Then, $\NRD(R, n) = O(\NRD(P,n) + \NRD(Q,n))$.
\end{itemize}
\end{theorem}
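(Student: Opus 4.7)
The plan is to handle each of the three operations via a direct reduction from a non-redundant $\CSP(R)$ instance $\Psi = (X, Y)$ with witnesses $\{\sigma_y\}_{y \in Y}$ (each $\sigma_y$ failing only $y$) to a non-redundant $\CSP(P)$ instance (a pair of instances for conjunction) of size $|Y|$, on $O(n)$ variables. Since $\NRD(P, n)$ for any nontrivial $P$ lies in the polynomial range $[\Omega(n), O(n^r)]$, constant-factor blowups in either the clause count or the variable count are absorbed into $O(\NRD(P, n))$.

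Conjunction is immediate: partition $Y = Y_P \sqcup Y_Q$ according to whether $\sigma_y(y) \notin P$ or $\sigma_y(y) \notin Q$ (at least one holds since $R = P \cap Q$; break ties arbitrarily). For $y, y' \in Y_P$ with $y \neq y'$, we have $\sigma_y(y') \in R \subseteq P$ while $\sigma_y(y) \notin P$, so $(X, Y_P)$ is a non-redundant $\CSP(P)$ instance with the same witnesses, giving $|Y_P| \le \NRD(P, n)$, and symmetrically $|Y_Q| \le \NRD(Q, n)$. For domain restriction, I would expand to $X' := X \times [r]$, replace each $y = (y_1, \ldots, y_r)$ by $\tilde y := ((y_1, 1), \ldots, (y_r, r)) \in (X')^r$, and define $\tau_y : X' \to D$ by $\tau_y(x, i) := f_i(\sigma_y(x))$; then $\tau_y$ $P$-satisfies $\tilde y'$ iff $\sigma_y$ $R$-satisfies $y'$, yielding a non-redundant $\CSP(P)$ instance on $rn$ variables.

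Projection is the most delicate. In the injective case, set $\tilde y_{\pi(i)} := y_i$ and introduce fresh variables $U_{y, j}$ for missing coordinates $j \in [r] \setminus \pi([s])$; extend each $\sigma_y$ by $\sigma_y(U_{y', j}) := b'_j$ where $b' \in P$ is a completion of $(\sigma_y(y'_i))_i$ (which exists since $\sigma_y$ $R$-satisfies $y'$). For $y$ itself no completion exists, so $\sigma_y$ fails $\tilde y$ regardless of how its own fresh variables are set. The main obstacle is the non-injective case, where $\sigma_y$ may fail $y$ via an \emph{equality failure} ($\sigma_y(y_i) \neq \sigma_y(y_{i'})$ for some $i, i' \in \pi^{-1}(j)$) that the representative-based $P$-clause cannot detect. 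To handle this, I would partition $Y = Y_{\mathrm{eq}} \sqcup Y_{\mathrm{P}}$ by failure type; clauses in $Y_{\mathrm{P}}$ (where all fiber-equalities hold under $\sigma_y$, so the failure is genuinely in $P$) reduce via the injective-style construction, while $|Y_{\mathrm{eq}}|$ is bounded by $O(n)$ via a combinatorial observation: if $y \in Y_{\mathrm{eq}}$ exhibits $\sigma_y(y_i) \neq \sigma_y(y_{i'})$ for $i, i' \in \pi^{-1}(j)$, then no other clause $y' \in Y$ can place both $y_i$ and $y_{i'}$ in a common $\pi$-fiber (else $\sigma_y$ would $R$-fail $y'$), yielding a forest-like constraint on such witness pairs. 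This linear term is absorbed into $\NRD(P, n) = \Omega(n)$, and the hardest step in formalizing the proof is making the $Y_{\mathrm{eq}}$ bound precise when the fibers of $\pi$ are large and the implicit equality hypergraph is intricate.
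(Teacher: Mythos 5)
Your conjunction and domain-restriction arguments are correct. The projection argument, however, has a decisive gap that you did not flag: in your ``injective case'' you introduce fresh variables $U_{y,j}$ \emph{per clause} $y \in Y$ for each missing coordinate, so the resulting $\CSP(P)$ instance lives on $n + |Y|\cdot|[r]\setminus\pi([s])|$ variables. Since $|Y|$ is precisely the quantity to be bounded, the inequality you would extract, $|Y| \le \NRD\bigl(P,\; n + |Y|\cdot|[r]\setminus\pi([s])|\bigr)$, is self-referential and vacuous. This is not a repairable technicality: when $\pi$ is not surjective the stated bound is false. Take $P := \{011,101,110\} \subseteq \{0,1\}^3$ and $\pi:[2]\to[3]$ the inclusion, so $R = \{01,10,11\} = \OR_2$ with $\NRD(R,n)=\Theta(n^2)$; yet $\NRD(P,n)=O(n)$, because any non-redundant $\CSP(P)$ instance with Boolean witnesses is also non-redundant as a $\CSP(\{x\in\F_3^3 : x_1+x_2+x_3=2\})$ instance over $\F_3$ (the $\{0,1\}$-valued solutions of that affine equation are exactly $P$, and no other Boolean triple satisfies it), so the affine functionals $\ell_y(\sigma):=\sigma(y_1)+\sigma(y_2)+\sigma(y_3)-2$ are linearly independent and $|Y|\le n+1$. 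The projection item must therefore be read with $\pi$ surjective.

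With $\pi$ surjective your $Y_{\mathrm{eq}}\sqcup Y_P$ split is the right plan and no fresh variables are needed. The $Y_{\mathrm{eq}}$ bound is a genuine forest argument, not merely ``forest-like'': if witness pairs $e_{y_1}=\{a_1,a_2\},\dots,e_{y_k}=\{a_k,a_1\}$ formed a cycle, then since $\sigma_{y_1}$ $R$-satisfies $y_2,\dots,y_k$ and each $y_i$ places $a_i,a_{i+1}$ in a common $\pi$-fiber, chaining the forced equalities gives $\sigma_{y_1}(a_2)=\cdots=\sigma_{y_1}(a_1)$, contradicting $\sigma_{y_1}(a_1)\neq\sigma_{y_1}(a_2)$; hence $|Y_{\mathrm{eq}}|\le n-1$. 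For $Y_P$, fix a right inverse $\rho:[r]\to[s]$ of $\pi$ and collapse each $y$ to $\rho^*y := (y_{\rho(1)},\dots,y_{\rho(r)})$; the fiber equalities satisfied by each $\sigma_y$ show this map is injective on $Y_P$, that $\sigma_y$ $P$-fails $\rho^*y$, and that $\sigma_y$ $P$-satisfies $\rho^*y'$ for $y'\in Y_P\setminus\{y\}$, giving a non-redundant $\CSP(P)$ instance on the \emph{same} $n$ variables and $|Y_P|\le\NRD(P,n)$.
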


As an immediate consequence of Theorem~\ref{thm:main}, we obtain a comparable gadget framework for sparsification, although somewhat less transparent due to the appearance of complemented relations.
\begin{theorem}
Let $P, Q \subseteq D^r$ be predicates and $\eps > 0$. The following gadget reductions are (approximately) monotonic with respect to sparsification.
\begin{itemize}
\item[1.]\textbf{Projection (or minor).} For any $s \in \N$ and $\pi : [s] \to [r]$, let $R \subseteq D^s$ be defined by $R := \{(a_{\pi(1)}, \hdots, a_{\pi(s)}) \in D^s : a \in P\}$. Then, $\SPR(\overline{R}, n, \eps) = \widetilde{O}_{\eps}(\SPR(\overline{P}, n, \eps))$.
\item[2.]\textbf{Domain restriction.} For any maps $f_1, \hdots, f_r : E \to D$, consider the predicate $R := \{a \in E^r: (f_1(a_1), \hdots, f_r(a_r)) \in P\}$. Then, $\SPR(\overline{R}, n) = \widetilde{O}_{\eps}(\SPR(\overline{P}, n, \eps))$.
\item[3.]\textbf{Disjunction.} Let $R := P \vee Q$. Then, $\SPR(R, n, \eps) = \widetilde{O}_{\eps}(\SPR(P,n,\eps) + \SPR(Q,n,\eps))$.
\end{itemize}
\end{theorem}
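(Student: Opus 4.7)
The plan is to derive each of the three reductions directly from the corresponding non-redundancy reduction in Theorem~\ref{thm:nrd-gadget} by passing through Theorem~\ref{thm:main}. The key dictionary is that $\SPR(R, n, \eps)$ and $\NRD(\overline{R}, n)$ agree up to $\polylog$ factors, so any (approximately) monotone bound on the $\NRD$ of a complemented predicate translates directly into the same bound on $\SPR$; this is why the theorem statement is phrased entirely in terms of complements.

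For the projection reduction, let $R \subseteq D^s$ be obtained from $P \subseteq D^r$ via $R := \{(a_{\pi(1)}, \hdots, a_{\pi(s)}) : a \in P\}$. Theorem~\ref{thm:nrd-gadget}(1) yields $\NRD(R, n) = O(\NRD(P, n))$. Applying Theorem~\ref{thm:main} to both $\overline{R}$ and $\overline{P}$, and using $\overline{\overline{R}} = R$, I get
\[
  \SPR(\overline{R}, n, \eps) \le \widetilde{O}_{\eps}(\NRD(R, n)) \le \widetilde{O}_{\eps}(\NRD(P, n)) \le \widetilde{O}_{\eps}(\SPR(\overline{P}, n, \eps)).
\]
The domain-restriction case is proved in exactly the same way via Theorem~\ref{thm:nrd-gadget}(2).

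For the disjunction case, set $R := P \vee Q$ and apply De Morgan: $\overline{R} = \overline{P} \wedge \overline{Q}$. Theorem~\ref{thm:nrd-gadget}(3), applied to the pair of complemented predicates $\overline{P}, \overline{Q}$, then gives $\NRD(\overline{R}, n) = O(\NRD(\overline{P}, n) + \NRD(\overline{Q}, n))$, and Theorem~\ref{thm:main} delivers
\[
  \SPR(R, n, \eps) \le \widetilde{O}_{\eps}(\NRD(\overline{R}, n)) \le \widetilde{O}_{\eps}\bigl(\SPR(P, n, \eps) + \SPR(Q, n, \eps)\bigr).
\]
The switch from conjunction in the $\NRD$ framework to disjunction in the $\SPR$ framework is exactly what De Morgan's law prescribes under the complement-dualization inherent to Theorem~\ref{thm:main}.

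There is essentially no conceptual obstacle here—all the heavy lifting is done by Theorem~\ref{thm:main} and Theorem~\ref{thm:nrd-gadget}. The only thing to double-check is bookkeeping: Theorem~\ref{thm:nrd-gadget}(2) permits the domain to change from $D$ to $E$, so one must verify that the polylogarithmic factors in Theorem~\ref{thm:main}, which depend on the arity and on $n$ but not on the domain size, remain uniformly controlled across the transformation. Since each of the three reductions leaves $n$ fixed and changes the arity by at most a constant factor, the $\widetilde{O}_{\eps}$ notation absorbs everything cleanly, and the three bounds follow.
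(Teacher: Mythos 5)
Your proof is correct and takes precisely the route the paper intends (the paper states this result as an immediate consequence of Theorem~\ref{thm:main} applied to Theorem~\ref{thm:nrd-gadget}, without writing out the short argument). Sandwiching $\SPR$ between $\NRD$ bounds on both sides via Theorem~\ref{thm:main}, applying the $\NRD$ gadget reductions in the middle, and noting that De Morgan turns disjunction into conjunction of complements is exactly the right bookkeeping.
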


Of note, the gadget operations are sufficiently rich that the relative non-redundancy/sparsification of a predicate can be described in terms of a set of (universal) algebraic operations it is closed under called \emph{pattern partial polymorphisms}. See \cite{jonsson2017Time,lagerkvist2018Which,lagerkvist2020Sparsification,carbonnel2022Redundancy} for more details.

\subsection{Applications of Chain Length to Weighted Code Sparsifiers}

We now discuss how the previously-mentioned results for unweighted sparsification can be extended to weighted sparsification.

\subsubsection{Weighted Code Sparsification for Non-Abelian Groups}

Recall for any finite group $G$ and any subgroup $H \le G^m$ that an $\eps$-sparsifier of $H$ is a (sparse) reweighting of $[m]$ which preserves the Hamming weight over every $h \in H$ up to a factor of $1 \pm \eps$. Likewise, for any weight function $w : [m] \to \R_{\ge 0}$, we can define an $(w,\eps)$-sparsifier analogous to (\ref{eq:w-eps-spars}). Using Theorem~\ref{thm:wspr-code}, we can prove an analogue of Theorem~\ref{thm:group-sparsifier} for weighted sparsification.

\begin{theorem}
For any finite group $G$, any $H \le G^m$, any weight function $w : [m] \to \R_{\ge 0}$, and any $\eps \in (0,1)$, there is an $(w,\eps)$-sparsifier of $H$ of size $O(\log |H| (\log^6 m)/\eps^2)$.
\end{theorem}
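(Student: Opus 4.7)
The plan is to follow the strategy of Theorem~\ref{thm:group-sparsifier}, reducing to a Boolean code and then invoking the weighted code result Theorem~\ref{thm:wspr-code} (rather than Theorem~\ref{thm:main-code}). Specifically, let $C \subseteq \{0,1\}^m$ be the zero/non-zero pattern code $\{(\one[h_i \neq e])_{i \in [m]} : h \in H\}$. For any weight function $w : [m] \to \R_{\ge 0}$ and any $h \in H$, the $w$-weight of $h$ (as a group element) equals $\langle w, c\rangle$ for the corresponding codeword $c \in C$, so any $(w,\eps)$-sparsifier of $C$ is automatically an $(w,\eps)$-sparsifier of $H$. By Theorem~\ref{thm:wspr-code}, $C$ admits a $(w,\eps)$-sparsifier of size $O(\CL(C)(\log m)^6/\eps^2)$, so it will suffice to prove $\CL(C) \le \log_2 |H|$.

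For the chain length bound, let $a(1), \ldots, a(\ell) \in [m]$ and $h_1, \ldots, h_\ell \in H$ realize a chain of length $\ell := \CL(C)$; recall this means $h_{i,a(i)} \neq e$ for all $i$ and $h_{j,a(i)} = e$ whenever $j > i$. The key step is to consider the nested subgroups $H_i := \langle h_i, h_{i+1}, \ldots, h_\ell\rangle \le H$ for $i = 1, \ldots, \ell+1$ (with $H_{\ell+1} := \{e\}$) and show each inclusion $H_{i+1} \le H_i$ is strict. This holds because every element of $H_{i+1}$ is a finite product of the generators $h_{i+1}, \ldots, h_\ell$ and their inverses, each of which has identity at coordinate $a(i)$ by the chain condition; thus every element of $H_{i+1}$ has identity at $a(i)$ as well. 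However $h_i \in H_i$ does not, so $h_i \notin H_{i+1}$, forcing strict containment. Lagrange's theorem then gives $|H| \ge |H_1| \ge 2|H_2| \ge \cdots \ge 2^\ell|H_{\ell+1}| = 2^\ell$, i.e.\ $\ell \le \log_2|H|$.

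Notably, the argument makes no use of commutativity: the subgroup chain relies only on closure of subgroups under products and inverses together with Lagrange's theorem, so no serious obstacle arises from $G$ being non-abelian. Combining the two ingredients yields the claimed bound of $O(\log |H|(\log m)^6/\eps^2)$. (As a bonus, this subgroup-chain argument also gives $\NRD(C) \le \CL(C) \le \log_2|H|$ via Proposition~\ref{prop:cl-ge-nrd}, providing an alternative proof of Theorem~\ref{thm:group-sparsifier}.)
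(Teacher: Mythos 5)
Your proof is correct and follows the same high-level structure as the paper's: reduce to the Boolean zero/non-zero pattern code $C$, invoke Theorem~\ref{thm:wspr-code}, and establish $\CL(C) \le \log_2|H|$. The only real difference lies in how that last bound is obtained. The paper constructs the $2^\ell$ elements $h_1^{b_1}\cdots h_\ell^{b_\ell}$ with $b\in\{0,1\}^\ell$ and shows they are pairwise distinct by examining the coordinate indexed by the first position where two exponent vectors disagree, giving $|H|\ge 2^\ell$ directly. Your subgroup-chain argument is a clean reformulation of the same idea: $H_{i+1}$ is a proper subgroup of $H_i$ because every generator $h_{i+1},\ldots,h_\ell$ (and hence every product of them and their inverses) is identity at coordinate $a(i)$ while $h_i$ is not, and Lagrange's theorem then makes each containment cost a factor of at least $2$. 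Both arguments are equally elementary and both are manifestly independent of commutativity, as you note; yours trades the explicit bookkeeping of ordered products for the structural fact that a proper subgroup has index at least $2$, which is arguably slightly cleaner, but the two arguments encode the same underlying combinatorics.
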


\begin{proof}
If $|G| = 1$, there is nothing to prove, so assume $|G| \ge 2$. Let $C \subseteq \{0,1\}^m$ be the code such that for every $h \in H$, the binary string $(\one[h_i \neq e] : i \in [m])$ is an element of $C$. Observe that any $(w,\eps)$-sparsifier of $C$ is a $(w,\eps)$-sparsifier of $H$ (and vice-versa). Therefore, by Theorem~\ref{thm:wspr-code}, it suffices to prove that $\CL(C) \le \log_2 |H|$.

Consider a sequence $i_1, \hdots, i_\ell \subseteq [m]$ along with $h_1, \hdots, h_\ell \in H$ such that for all $1 \le j \le j' \le \ell$, we have that $h_{j,i_{j'}} \neq e$ if and only if $j = j'$. Now consider the set
\[
  S := \left\{h_1^{b_1} \cdots h_\ell^{b_\ell} \in H: b \in \{0,1\}^\ell\right\}.
\]
We claim that all the elements defining $S$ are distinct. To see why, assume there exist distinct $b, b' \in \{0,1\}^{\ell}$ with $h_1^{b_1} \cdots h_\ell^{b_\ell} = h_1^{b'_1} \cdots h_\ell^{b'_\ell}.$
Let $j \in [\ell]$ be the first coordinate for which $b_j \neq b'_j$. Then, we have that $h_j^{b_j} \cdots h_\ell^{b_\ell} = h_j^{b'_j} \cdots h_\ell^{b'_\ell}.$ Note that the $i_j$th coordinate of the LHS is equal to $h_{j,i_j}^{b_j}$. Likewise, the $i_j$th coordiante of the RHS is equal to $h_{j,i_j}^{b'_j}$. Since $h_{j,i_j} \neq e$ and $b_j \neq b'_j$, we have a contradiction. Thus, the elements defining $S$ are distinct, so $|S| = 2^\ell$. Thus, $\ell \le \log_2 |H|$, so $\CL(C) \le \log_2|H|$, as desired.
\end{proof}

\subsubsection{Maximum-size Dichotomy}\label{subsec:msd-cl}

Although Theorem~\ref{thm:carbonnel} due to Carbonnel~\cite{carbonnel2022Redundancy} is only stated in terms of non-redundancy, it turns out that it also applies to chain length. We give a proof for completeness.

\begin{theorem}\label{thm:carbonnel-cl}
For $r \in \N$, let $\delta_r := 2^{1-r}$. For all $R \subseteq D^r$, we have that $\CL(R, n) = \Omega(n^r)$ if and only if there exists subsets $D_1, \hdots, D_r \subseteq D$ of size exactly $2$ such that $|R \cap (D_1 \times \cdots \times D_r)| = 2^r-1$. Otherwise, $\CL(R, n) = O(n^{r-\delta_r})$. 
\end{theorem}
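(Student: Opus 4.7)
\textbf{Proof plan for Theorem~\ref{thm:carbonnel-cl}.}

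For the lower bound, I would reduce to Theorem~\ref{thm:carbonnel} by establishing that $\NRD(R,n) \le \CL(R,n) - 1$. Given any non-redundant instance $(X,Y)$ of $\CSP(R)$ of size $m$, fix an arbitrary ordering $y_1,\dots,y_m$ of its clauses and take the witnessing assignments $\sigma_1,\dots,\sigma_m$ from the definition of non-redundancy. For each $i$, $\sigma_i$ satisfies every $y_j$ with $j\ne i$ (hence certainly all $y_j$ with $j<i$) and fails $y_i$, which is exactly the condition in Corollary~\ref{cor:chain-csp}; so $\CL(R,n)\ge m+1$. Hence if the induced-OR condition of Theorem~\ref{thm:carbonnel-cl} holds, Theorem~\ref{thm:carbonnel} gives $\CL(R,n)\ge \NRD(R,n)=\Omega(n^r)$.

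For the upper bound, I would mirror Carbonnel's hypergraph Tur\'an argument. Suppose no size-$2$ subsets $D_1,\dots,D_r\subseteq D$ satisfy $|R\cap(D_1\times\cdots\times D_r)|=2^r-1$. Set $\ell=\CL(R,n)-1$ and, via Corollary~\ref{cor:chain-csp}, fix clauses $y_1,\dots,y_\ell\in X^r$ and assignments $\sigma_1,\dots,\sigma_\ell$ with $\sigma_i(y_j)\in R$ for $j<i$ and $\sigma_i(y_i)\notin R$. Let $H$ be the $r$-uniform (ordered) hypergraph on vertex set $X$ with edges $\{y_i\}_{i=1}^\ell$; by the standard device of recording positions (cf.\ Section~\ref{sec:prelim} on disallowing repeated variables) we may treat $H$ as an honest $r$-partite $r$-uniform hypergraph. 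The goal is to show that $H$ avoids the complete $r$-partite $r$-uniform subhypergraph $K^{(r)}_{2,\dots,2}$, after which Erd\H os's classical result~\cite{erdos1964extremal} yields $\ell = O(n^{r - 2^{1-r}}) = O(n^{r-\delta_r})$, as desired.

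The heart of the argument is to extract either an induced OR or a contradiction from any supposed copy of $K^{(r)}_{2,\dots,2}$. Suppose $H$ contains such a copy with parts $V_j=\{u_j,u'_j\}$, so all $2^r$ clauses in $V_1\times\cdots\times V_r$ appear in $H$. Choose $y^*$ to be the one of \emph{largest} index $i^*$ among these $2^r$ clauses; then $\sigma_{i^*}$ satisfies every other clause in $V_1\times\cdots\times V_r$, while $\sigma_{i^*}(y^*)\notin R$. Let $D_j:=\{\sigma_{i^*}(u_j),\sigma_{i^*}(u'_j)\}$. If $|D_j|=2$ for every $j$, then $R\cap(D_1\times\cdots\times D_r)$ misses exactly the tuple $\sigma_{i^*}(y^*)$ and contains the other $2^r-1$ tuples, giving the forbidden induced-OR configuration. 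Otherwise some $|D_{j_0}|=1$, i.e.\ $\sigma_{i^*}(u_{j_0})=\sigma_{i^*}(u'_{j_0})$; then the clause $y'\in V_1\times\cdots\times V_r$ obtained from $y^*$ by swapping its $j_0$-th entry lies in $H$, has index $i'<i^*$ (by maximality of $i^*$), and satisfies $\sigma_{i^*}(y')=\sigma_{i^*}(y^*)\notin R$, contradicting $\sigma_{i^*}(y')\in R$.

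The step I expect to require the most care is this last extraction, and in particular a clean bookkeeping for the ``largest index within a $K^{(r)}_{2,\dots,2}$ copy'' when clauses are ordered tuples that may repeat variables across positions; the argument above implicitly uses the $r$-partite coloring by position to ensure that the swapped clause $y'$ is a genuinely distinct element of $H$. With that technicality discharged exactly as in Carbonnel's original treatment, the Tur\'an bound supplies the $O(n^{r-\delta_r})$ estimate and completes the proof.
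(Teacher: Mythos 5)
Your proposal is correct and follows essentially the same route as the paper's proof: the lower bound via $\CL \ge \NRD$ (a CSP version of Proposition~\ref{prop:cl-ge-nrd}) combined with Theorem~\ref{thm:carbonnel}, and the upper bound via Corollary~\ref{cor:chain-csp}, showing the chain's clauses avoid $K^{(r)}_{2,\dots,2}$ by taking the clause of largest index in any candidate copy and arguing each $D_j$ must have size two, then invoking Erd\H{o}s's Tur\'an bound. You spell out the ``swap the $j_0$-th entry'' step and the $r$-partite positional bookkeeping more explicitly than the paper, which leaves both as implicit or parenthetical, but the argument is the same.
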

\begin{proof}
If there exists subsets $D_1, \hdots, D_r \subseteq D$ of size exactly $2$ such that $|R \cap (D_1 \times \cdots \times D_r)| = 2^r-1$, then by Proposition~\ref{prop:cl-ge-nrd} and Theorem~\ref{thm:carbonnel}, we have that $\CL(R, n) \ge \NRD(R, n) = \Omega(n^r)$, as desired.

Now assume such subsets do not exist. Let $X$ be a set of variables of size $n$, and let $\ell := \CL(R, n) - 1$. By Corollary~\ref{cor:chain-csp}, there exist assignments $\sigma_1, \hdots, \sigma_\ell : X \to D$ and clauses $y_1, \hdots, y_\ell \in X^r$ such that for all $i \in [\ell]$, $\sigma_i(y_1), \hdots, \sigma_i(y_{i-1}) \in R$ and $\sigma_i(y_i) \not\in R.$

Let $Y := \{y_1, \hdots, y_\ell\}$. We claim there do not exist $X_1, \hdots, X_r \subset X$ of size $2$ such that $X_1 \times \cdots \times X_r \subseteq Y.$ Otherwise, there exists $y_0 \in X_1 \times \cdots \times X_r$ such that there is an assignment $\sigma : X \to D$ for which $\sigma(y_0) \not\in R$ but $\sigma(y) \in R$ for all $y \in X_1 \times \cdots \times X_r$ not equal to $y_0$. For that to be the case, the sets $D_1 := \sigma(X_1), \hdots, D_r := \sigma(X_r)$ must each have size exactly $2$ (since something must differ from $\sigma(y_{0,i})$ for each $i \in [m]$), and
\[
D_1 \times \cdots \times D_r = \{\sigma(y) : y \in X_1 \times \cdots \times X_r\}.
\]
Therefore, $(D_1 \times \cdots D_r) \cap R$ has size exactly $2^r-1$ (since $\sigma(y_0)$ is excluded), a contradiction. Hence, the $r$-uniform hypergraph on vertex set $X$ and edge set $Y$ lacks any complete $r$-partite subgraph with each part having size exactly $2$. Thus, by a theorem of Erd{\H o}s~\cite{erdos1964extremal}, we have that $\ell= O(n^{r-\delta_r})$, so $\CL(R, n) = O(n^{r-\delta_r})$.
\end{proof}
By Theorem~\ref{thm:wspr-code}, this immediately extends to weighted sparsification.
\begin{theorem}
For $r \in \N$, let $\delta_r := 2^{1-r}$. For all $R \subseteq D^r$ and $\eps \in (0,1)$, we have that $\wSPR(R, n, \eps) = \Omega(n^r)$ if and only if there exists subsets $D_1, \hdots, D_r \subseteq D$ of size exactly $2$ such that $|R \cap (D_1 \times \cdots \times D_r)| = 1$. Otherwise, $\wSPR(R, n, \eps) = O(n^{r-\delta_r}(r\log n)^6 / \eps^2)$.
\end{theorem}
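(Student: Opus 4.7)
The plan is to derive this directly from Theorem~\ref{thm:wspr-csp} combined with the chain-length dichotomy of Theorem~\ref{thm:carbonnel-cl}, applied to the complement $\overline{R}$. Recall that Theorem~\ref{thm:wspr-csp} sandwiches the weighted sparsifiability of $R$ between $\CL(\overline{R}, n) - 1$ and $O(\CL(\overline{R}, n)(r\log n)^6/\eps^2)$, so both the upper and lower bound claimed in the statement will reduce to a corresponding dichotomy for $\CL(\overline{R}, n)$.

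First, I would translate the structural condition. The assertion ``there exist $D_1, \hdots, D_r \subseteq D$ of size exactly $2$ with $|R \cap (D_1 \times \cdots \times D_r)| = 1$'' is logically equivalent to ``$|\overline{R} \cap (D_1 \times \cdots \times D_r)| = 2^r - 1$,'' since $D_1 \times \cdots \times D_r$ has $2^r$ tuples in total. Thus, the hypothesis of our theorem for $R$ is exactly the hypothesis of Theorem~\ref{thm:carbonnel-cl} applied to $\overline{R}$.

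For the lower bound side, suppose the structural condition holds. Then by Theorem~\ref{thm:carbonnel-cl} applied to $\overline{R}$, $\CL(\overline{R}, n) = \Omega(n^r)$. Plugging into the lower bound of Theorem~\ref{thm:wspr-csp} gives $\wSPR(R, n, \eps) \ge \CL(\overline{R}, n) - 1 = \Omega(n^r)$, as desired. For the upper bound side, suppose the structural condition fails. Then again by Theorem~\ref{thm:carbonnel-cl} applied to $\overline{R}$ we obtain $\CL(\overline{R}, n) = O(n^{r-\delta_r})$. Substituting into the upper bound of Theorem~\ref{thm:wspr-csp} yields $\wSPR(R, n, \eps) = O(n^{r-\delta_r}(r\log n)^6/\eps^2)$, completing the dichotomy.

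Because both directions are a routine ``complementation plus application of previous results'' argument, there is no real obstacle; the only thing to double-check carefully is the equivalence of the two structural conditions (which is immediate from $|D_1 \times \cdots \times D_r| = 2^r$) and the fact that we are applying Theorem~\ref{thm:carbonnel-cl} to $\overline{R}$ rather than $R$. This mirrors the analogous passage from Theorem~\ref{thm:carbonnel} to the unweighted sparsification dichotomy in Section~\ref{subsec:size}, with $\CL$ in place of $\NRD$ and Theorem~\ref{thm:wspr-csp} in place of Theorem~\ref{thm:main}.
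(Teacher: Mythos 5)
Your proof is correct and takes essentially the same route as the paper: the paper also derives this result by combining Theorem~\ref{thm:wspr-csp} (via Theorem~\ref{thm:wspr-code}) with the chain-length dichotomy of Theorem~\ref{thm:carbonnel-cl} applied to $\overline{R}$, using the same complementation observation $|R \cap (D_1 \times \cdots \times D_r)| = 1 \iff |\overline{R} \cap (D_1 \times \cdots \times D_r)| = 2^r - 1$.
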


\subsubsection{Generalized Group Equations: Mal'tsev Sparsification}

Both Lagerkvist and Wahlstr{\"o}m~\cite{lagerkvist2020Sparsification} and Bessiere,  Carbonnel, and Katsirelos~\cite{bessiere2020Chain} proved a linear upper bound on the chain length of any Mal'tsev predicate.

\begin{theorem}[\cite{lagerkvist2020Sparsification,bessiere2020Chain}]\label{thm:maltsev-cl}
For every Mal'tsev predicate $R \subseteq D^r$, we have that $\CL(R, n) = O_D(n)$.
\end{theorem}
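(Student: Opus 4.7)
The plan is to reduce the chain-length bound to a structural statement about the lattice of Mal'tsev subuniverses of $D^n$. For any instance $\Psi = (X, Y)$ of $\CSP(R)$ on a variable set $X$ with $|X| = n$, since $R$ is closed under the Mal'tsev operation $\varphi$, the set $\sat(R, \Psi) \subseteq D^X$ is closed under coordinatewise application of $\varphi$ and hence forms a Mal'tsev subuniverse of $D^n$. By Definition~\ref{def:chain-csp}, a chain of instances witnessing $\CL(R, n) = \ell + 1$ corresponds to a strictly increasing chain $S_1 \subsetneq S_2 \subsetneq \cdots \subsetneq S_\ell$ of Mal'tsev subuniverses of $D^n$. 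Thus it suffices to show that any such chain has length $O_D(n)$.

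Next, I would invoke the compact representation theorem of Bulatov for Mal'tsev algebras~\cite{bulatov2006Simple}, as packaged for non-redundancy arguments in~\cite{lagerkvist2020Sparsification,bessiere2020Chain}: for every Mal'tsev subuniverse $S \subseteq D^n$, one may associate a canonical signature $\operatorname{Sig}(S)$ of size at most $c_D \cdot n$ (for a constant $c_D$ depending on $|D|$) with two key properties. First, $\operatorname{Sig}$ is monotone under set inclusion: $S \subseteq S'$ implies $\operatorname{Sig}(S) \subseteq \operatorname{Sig}(S')$. Second, $\operatorname{Sig}$ is injective: $S$ is uniquely determined by $\operatorname{Sig}(S)$. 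Concretely, $\operatorname{Sig}(S)$ records, for each coordinate $i \in [n]$ and each pair $(a, b) \in D^2$, a canonical witness exhibiting that two elements of $S$ sharing a prefix of length $i - 1$ can take values $a$ and $b$ at coordinate $i$. The Mal'tsev identity $\varphi(x,y,y) = \varphi(y,y,x) = x$ is the algebraic ingredient ensuring that this linear-size collection of witnesses suffices to reconstruct $S$ via closure under $\varphi$.

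With these two properties in hand, any strictly increasing chain $S_1 \subsetneq \cdots \subsetneq S_\ell$ induces a strictly increasing chain $\operatorname{Sig}(S_1) \subsetneq \cdots \subsetneq \operatorname{Sig}(S_\ell)$; since each signature has size at most $c_D \cdot n$, we obtain $\ell \le c_D \cdot n$, and hence $\CL(R, n) = O_D(n)$. The main obstacle is formulating $\operatorname{Sig}$ so that it is simultaneously canonical, monotone, and injective --- these requirements are in tension, since monotonicity is most natural when the signature is an ``unordered'' set-valued invariant while injectivity demands enough ordered structure to rebuild $S$ from it. Reconciling this is precisely what the Mal'tsev identity buys: it lets one canonicalize witnesses coordinate-by-coordinate without losing information. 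For non-Mal'tsev predicates this canonicalization is unavailable and chain lengths can grow polynomially, as illustrated by the $\tLIN^*_G$ family of Section~\ref{sec:nrd-mv}.
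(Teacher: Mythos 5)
The paper does not give a proof of this theorem: it is imported from~\cite{lagerkvist2020Sparsification,bessiere2020Chain} as a black box (the same sentence that motivates Theorem~\ref{thm:maltsev-nrd} credits both to Bulatov's compact representations~\cite{bulatov2006Simple}). So there is no in-paper proof to compare against; what I can do is check your sketch against the argument those references actually run.

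Your sketch is essentially correct and matches the standard route. The reduction to a maximal strictly increasing chain of $\varphi$-closed subuniverses of $D^n$ (via $\sat(R, \Psi)$ being Mal'tsev since $R$ is) is the right first move, and the signature/fork invariant---for each coordinate $i$ and pair $(a,b) \in D^2$, whether two tuples agreeing on the first $i-1$ coordinates can take values $a$ and $b$ at coordinate $i$---gives exactly the monotone, linear-size (at most $|D|^2 n$) invariant you need. Two small remarks. First, you claim full injectivity of $\operatorname{Sig}$; what the Bulatov machinery actually gives (and all that your argument uses) is the weaker, nested version: if $S \subseteq T$ are Mal'tsev subuniverses with $\operatorname{Sig}(S) = \operatorname{Sig}(T)$, then $S = T$. (Any compact representation $R \subseteq S$ realizing all forks of $S$ then realizes all forks of $T$ and sits inside $T$, so its closure is simultaneously $S$ and $T$.) The unrestricted injectivity statement is unnecessary and I would not assert it. Second, your bookkeeping is slightly off: a strictly increasing chain of subsets of a universe of size $|D|^2 n$ has length at most $|D|^2 n + 1$, and by Definition~\ref{def:chain-csp} a chain of $\ell$ instances yields $\ell - 1$ strict inclusions among the $\operatorname{Sig}$'s, so $\CL(R,n) \le |D|^2 n + 2 = O_D(n)$. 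Neither issue undermines the argument; your proposal is in substance the proof the cited works give.
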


Thus, as an immediate corollary of Theorem~\ref{thm:wspr-code}, we have that

\begin{theorem}
For every Mal'tsev predicate $R \subseteq D^r$, we have that $\wSPR(\overline{R}, n, \eps) = O_D(n\log^6 n/\eps^2)$.
\end{theorem}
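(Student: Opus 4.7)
The plan is essentially a one-line deduction combining two results already established in the excerpt. The weighted analogue of Theorem~\ref{thm:main}, namely Theorem~\ref{thm:wspr-csp}, pins weighted sparsifiability down to chain length: for every predicate $P \subseteq D^r$,
\[
    \wSPR(P, n, \eps) \le O\!\left(\CL(\overline{P}, n)\,(r\log n)^6 / \eps^2\right).
\]
Specializing to $P := \overline{R}$ (so that $\overline{P} = R$) reduces the task to bounding $\CL(R, n)$ for a Mal'tsev predicate $R$. But Theorem~\ref{thm:maltsev-cl} (imported from \cite{lagerkvist2020Sparsification,bessiere2020Chain}) says precisely that $\CL(R, n) = O_D(n)$ whenever $R$ is Mal'tsev. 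Since the arity $r$ of $R$ is bounded in terms of $D$ (or simply fixed as a constant of the predicate), the $(r\log n)^6$ factor becomes $O_D(\log^6 n)$, yielding
\[
    \wSPR(\overline{R}, n, \eps) \le O_D\!\left(n \log^6 n / \eps^2\right),
\]
as desired.

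The only thing worth double-checking in the writeup is the bookkeeping of complementation: the theorem statement concerns $\wSPR$ of the \emph{complement} $\overline{R}$ of the Mal'tsev predicate, which matches the convention (cf.\ Remark~\ref{rem:backwards}) that non-redundancy and chain length are defined on the complementary relation from sparsification. No new ideas are needed; there is no hard step and the argument should be presented as a brief corollary, parallel in form to the unweighted statement Theorem~\ref{thm:maltsev-spr}.
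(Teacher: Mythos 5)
Your proposal is correct and is essentially the same argument the paper gives: apply the weighted sparsification upper bound in terms of chain length (Theorem~\ref{thm:wspr-csp}, a corollary of Theorem~\ref{thm:wspr-code}) to the linear chain-length bound for Mal'tsev predicates (Theorem~\ref{thm:maltsev-cl}). The complementation bookkeeping you flag is exactly right, and the paper treats this as the same one-line corollary.
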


\subsubsection{Gadget Reductions}

Unlike for non-redundancy, a formal gadget framework is not explicitly stated in the literature for chain length. Bessiere,  Carbonnel, and Katsirelos~\cite{bessiere2020Chain} did observe (Observation 2) that a restricted family of gadget reductions called ``c-definitions'' is monotone with respect to chain length. In particular, using the notation of Theorem~\ref{thm:nrd-gadget}, they allow arbitrary conjunctions as well as projection reductions from maps $\pi : [s] \to [r]$ which are injective. 

However, in practice (i.e., in \cite{lagerkvist2020Sparsification}), chain length is closely used in association with pattern partial polymorphisms, suggesting that the full suite of gadget reductions in Theorem~\ref{thm:nrd-gadget} also apply to chain length. We leave filling this gap in the literature as future work.

\section{New results on Non-redundancy via Matching Vector Families}\label{sec:nrd-mv}

In this section, we consider the following families of arity-3 predicates. Let $G$ be a finite Abelian group of order at least $3$ with operator $+$ and identity element $0$.
\begin{align*}
\tLIN_{G} &:= \{(x,y,z) \in G^3 : x+y+z = 0\}\\
\tLIN^{*}_{G} &:= \tLIN_G \setminus \{(0,0,0)\}.
\end{align*}
\begin{remark}
The inspiration for studying $\tLIN^{*}_G$ came from investigating a related predicate $\BCK := \{111,222,012,120,201\}$ from \cite{bessiere2020Chain}. See Section~\ref{sec:open-nrd} for more details.
\end{remark}

Since $\tLIN_{G}$ is defined by a linear equation over an Abelian group, it is already known (e.g., \cite{lagerkvist2020Sparsification,bessiere2020Chain}) that $\NRD(\tLIN_G, n) = \Theta_G(n)$ and $\SPR(\overline{\tLIN_G}, n) = \widetilde{\Theta}_G(n/\eps^2)$ \cite{khanna2024Characterizations}.

However, much less is known about $\tLIN^{*}_G$. In particular, existing methods in the literature can only deduce $\NRD(\tLIN^{*}_G, n) \in [\Omega_G(n), O_G(n^2)]$. In this section, we introduce a new method for bounding the non-redundancy of predicates like $\tLIN^{*}_G$ by connecting them to the theory of \emph{matching vector families} \cite{yekhanin20083query,raghavendra2007note,gopalan2009note,efremenko20093query,dvir2011Matching} used primarily in the construction of locally decodable codes (LDCs). In particular, this is the first example in the non-redundancy/sparsification literature of a predicate whose optimal exponent as a function of $n$ is non-integral.

\subsection{Conditional Non-redundancy}

Recall (Definition~\ref{def:NRD-CSP}) that for a predicate $R \subseteq D^r$, an instance $\Psi := (X, Y \subseteq X^r)$ is non-redundant if for all $y \in Y$, there exists an assignment $\sigma_y : X \to D$ such that $\sigma_y(y) \not\in R$ but $\sigma_y(y') \in R$ for all $y' \in Y \setminus \{y\}$. We now modify the definition of non-redundancy slightly to capture conditional information about these assignments $\sigma_y$. 

\begin{definition}[Conditional Non-redundancy]\label{def:cond-nrd}
Given $P \subseteq Q \subseteq D^r$, we say that an instance $\Psi := (X,Y \subseteq X^r)$ is a non-redundant instance of $\CSP(P \mid Q)$ if for all $y \in Y$, there exists an assignment $\sigma_y : X \to D$ such that $\sigma_y(y) \in Q \setminus P$ and $\sigma_y(y') \in P$ for all $y' \in Y \setminus \{y\}$. We let $\NRD(P\mid Q, n)$ be the maximum number of clauses in a non-redundant instance of $\CSP(P \mid Q)$ with $n$ variables.
\end{definition}

Observe that each $\sigma_y$ is a satisfying assignment to $\CSP(Q)$. Also note that if $Q = D^r$, then $\NRD(P\mid Q, n) = \NRD(P, n)$. The key motivation for conditional non-redundancy is that it satisfies a form of the triangle inequality.

\begin{proposition}\label{prop:nrd-tri}
Let $P \subseteq Q \subseteq R \subseteq D^r$ be predicates, then we have that
\begin{align}
\NRD(P\mid Q, n) \le \NRD(P\mid R, n) \le \NRD(P\mid Q, n) + \NRD(Q\mid R, n).\label{eq:nrd-tri}
\end{align}
In particular, if $R = D^r$, we have that
\begin{align}
\NRD(P\mid Q,n) \le \NRD(P, n) \le \NRD(P\mid Q, n) + \NRD(Q, n).\label{eq:nrd-tri-simple}
\end{align}
\end{proposition}

In particular, if we seek to understand $\NRD(P,n)$, it suffices to identify a predicate $Q \supsetneq P$ with $\NRD(Q,n) \ll \NRD(P,n)$ and to bound $\NRD(P|Q,n)$. Then, we can use the structure of satisfying assignments to $\CSP(Q)$ to better constrain the non-redundancy of $\CSP(P)$. As we shall see, this is particularly useful when $Q$ is an affine predicate.

\begin{proof}
Note that $Q \setminus P \subseteq R \setminus P$. Thus, any non-redundant instance of $\CSP(P \mid Q)$ is a non-redundant instance of $\CSP(P \mid R)$. This proves the left inequality of (\ref{eq:nrd-tri}).

Let $(X,Y \subseteq X^r)$ be a non-redundant instance of $\CSP(P \mid R)$. Let $\{\sigma_y : y \in Y\}$ be a set of assignments satisfying the conditions of Definition~\ref{def:cond-nrd}. Let $Y_R \subseteq Y$ be the set of $y \in Y$ such that there exists an assignment $\sigma'_y : X \to D$ for which $\sigma'_y(y) \in R \setminus Q$ and $\sigma'_y(y') \in P$ for all $y' \in Y \setminus \{y\}$.  Since $P \subseteq Q$, we have that $(X, Y_{R})$ is a non-redundant instance of $\CSP(Q \mid R)$, so $|Y_R| \le \NRD(Q\mid R, n)$.

By definition of $Y_R$, for all $y \in Y \setminus Y_{R}$, we have that $\sigma'_y(y) \in Q \setminus P$. Therefore, $(X, Y \setminus Y_R)$ is a non-redundant instance of $\CSP(P \mid Q)$. Thus, $|Y \setminus Y_R| \le \NRD(P\mid Q, n)$. Since $|Y| \le |Y \setminus Y_R| + |Y_R|$, we have proved (\ref{eq:nrd-tri}).
\end{proof}

As a consequence, we can prove a simple $O_G(n^2)$ upper bound on $\NRD(\tLIN^*_G, n)$. A similar upper bound for a similar predicate was given by \cite{bessiere2020Chain} via a somewhat different technique, see Section~\ref{sec:open-nrd} for more details.

\begin{proposition}
\label{prop:simple-quadratic-ub}
$\NRD(\tLIN^*_G, n) = O_G(n^2)$.
\end{proposition}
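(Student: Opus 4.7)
The plan is to reduce the bound to a conditional non-redundancy estimate via Proposition~\ref{prop:nrd-tri}. Setting $P := \tLIN^*_G$ and $Q := \tLIN_G$, the triangle-type inequality~\eqref{eq:nrd-tri-simple} gives
\[
    \NRD(\tLIN^*_G, n) \le \NRD(\tLIN^*_G \mid \tLIN_G, n) + \NRD(\tLIN_G, n).
\]
Since $\tLIN_G$ is affine over an Abelian group, known results yield $\NRD(\tLIN_G, n) = O_G(n)$, so the task reduces to showing $\NRD(\tLIN^*_G \mid \tLIN_G, n) = O(n^2)$.

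The key structural observation I would exploit is the following. In any $\tLIN_G$-conditionally non-redundant instance $(X, Y)$ with witnessing assignments $\{\sigma_y : y \in Y\}$, the clause-specific assignment $\sigma_y$ must satisfy $\sigma_y(y) = (0,0,0)$ (the unique element of $\tLIN_G \setminus \tLIN^*_G$) and must obey the linear equation $\sigma_y(x_1') + \sigma_y(x_2') + \sigma_y(x_3') = 0$ on every other clause $y' = (x_1', x_2', x_3')$. Writing $V_y \subseteq X$ for the set of variables appearing in $y$, $\sigma_y$ vanishes on $V_y$, and whenever at least two coordinates of $y'$ lie in $V_y$ the linear relation forces the third to vanish as well.

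From this I would extract the combinatorial constraint that for distinct $y, y' \in Y$, the underlying variable sets $V_y, V_{y'}$ are distinct and satisfy $|V_y \cap V_{y'}| \le 1$. For if two distinct variables lay in $V_y \cap V_{y'}$, then at least two coordinates of $y'$ would be sent to $0$ by $\sigma_y$, forcing $\sigma_y(y') = (0,0,0) \notin \tLIN^*_G$. The coincidence $V_y = V_{y'}$ with $|V_y| \ge 2$ is ruled out identically, while $V_y = V_{y'}$ with $|V_y| = 1$ forces $y = y'$.

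The final step is a standard count of distinct subsets of an $n$-set of size at most $3$ whose pairwise intersections have size at most $1$. The size-$\le 2$ members contribute at most $n + \binom{n}{2}$ sets. Every pair $\{u, v\} \subseteq X$ lies in at most one size-$3$ member (else two such members would share two elements), so the size-$3$ members form a linear $3$-uniform hypergraph and number at most $\binom{n}{2}/3$. Summing yields $|Y| = O(n^2)$, as desired. The only mild obstacle I anticipate is the bookkeeping around clauses with repeated variables, which is cleanly handled by working with the underlying variable set $V_y$ rather than the ordered tuple.
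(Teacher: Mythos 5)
Your proof is correct and follows essentially the same route as the paper's: the same reduction via Proposition~\ref{prop:nrd-tri} to $\NRD(\tLIN^*_G \mid \tLIN_G, n)$, and the same key observation that if two clauses share two variables then the witnessing assignment $\sigma_y$ (which must vanish on $V_y$) is forced by the linear relation to also vanish on the other clause, contradicting non-redundancy. The paper argues directly that the first two coordinates determine the clause, giving $|Y| \le n^2$; your accounting in terms of unordered variable sets and a linear $3$-uniform hypergraph is a slightly more refined way to say the same thing, and your handling of repeated variables is sound.
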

\begin{proof}
As previously discussed, $\NRD(\tLIN_G,n) = O_G(n)$. Hence, by (\ref{eq:nrd-tri-simple}), it suffices to prove that $\NRD(\tLIN^*_G|\tLIN_G, n) \le n^2.$ Let $(X,Y)$ be a non-redundant instance of $\CSP(\tLIN^*_G \mid \tLIN_G)$. 

They key observation is that any two non-redundant clauses must share at most one variable in common. In particular, if both $(x_1,x_2,x_3) \in Y$ and $(x_1,x_2,x'_3) \in Y$ with $x_3 \neq x'_3$, then any satisfying assignment $\sigma$ to $(X,Y)$ as in instance of $\CSP(\tLIN_G)$ must impose that $\sigma(x_3) = \sigma(x'_3)$. Therefore, either both clauses satisfy $\tLIN^*_G$ or neither do. This contradicts that $(X,Y)$ is a non-redundant instance. Hence for any $(x_1,x_2,x_3) \in Y$, $(x_1,x_2)$ uniquely determines the clause. Thus, $|Y| \le n^2$, as desired.
\end{proof}

In Section~\ref{subsec:nrd-ub}, we shall substantially improve on this upper bound for many finite Abelian groups $G$.

\subsection{Connection to Matching Vector Families}\label{subsec:nrd-mv}

Recall from the introduction that a matching vector (MV) family is two lists of vectors of the same length such that the dot product of a pair of vectors for the two lists equals 0 if and only if they have the same index in their respective lists. We now define a generalization of MV families where the homomorphisms are applied to triples of vectors at a time instead a single vector.

\begin{definition}\label{def:G-ensemble}
For an Abelian group $G$, a \emph{$G$-ensemble} consists of a dimension $d \in \N$, a collection of vectors $V \subseteq G^d$, a collection of edges $E \subseteq V^3$, and a collection of homomorphisms $\phi_e \in \Hom(G^d, G)$ for all $e \in E$. We require that each $e := (v_1, v_2, v_3) \in E$, we have that $v_1 + v_2 + v_3 = 0$. Finally, for any $e,e' \in E$, we have that $\phi_e$ maps all the vectors of $e'$ to $0$ if and only if $e = e'$. We define the \emph{size} of a $G$-ensemble to be $|E|$, the number of edges.
\end{definition}

We now connect the existence of these $G$-ensembles to the non-redundancy problem.

\begin{proposition}\label{prop:ensemble-nrd}
The maximum size of a $G$-ensemble with $n$ vectors (in any dimension) equals $\NRD(\tLIN^*_G | \tLIN_G, n)$.
\end{proposition}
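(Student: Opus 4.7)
The plan is to establish this equality by explicit construction in both directions, with the guiding observation that a $\tLIN_G$-conditionally non-redundant instance of $\CSP(\tLIN^*_G)$ is essentially the same data as a $G$-ensemble: variables correspond to vectors, clauses to edges, and the witness assignments $\{\sigma_y\}$ play the role of the matching homomorphisms $\{\phi_e\}$.

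For the ensemble-to-instance direction, given a $G$-ensemble with $V = \{v_1, \ldots, v_n\} \subseteq G^d$, edge set $E$, and homomorphisms $\{\phi_e\}_{e \in E}$, I would take $X := V$ and $Y := E$ (interpreting each edge $(v_1, v_2, v_3)$ as the clause on those three variables). For each $e_0 \in E$, define the candidate witness $\sigma_{e_0} : X \to G$ by $\sigma_{e_0}(v) := \phi_{e_0}(v)$. Because $\phi_{e_0}$ is a homomorphism and every edge satisfies $v_1 + v_2 + v_3 = 0$, every $\sigma_{e_0}$ maps each clause into $\tLIN_G$. The matching condition ``$\phi_e$ zeros out all vectors of $e'$ iff $e = e'$'' then translates directly into $\sigma_{e_0}(e_0) = (0,0,0) \in \tLIN_G \setminus \tLIN^*_G$ and $\sigma_{e_0}(e) \in \tLIN^*_G$ for every $e \neq e_0$, which is exactly the witness data demanded by Definition~\ref{def:cond-nrd}.

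For the reverse direction, given a $\tLIN_G$-conditionally non-redundant instance $(X,Y)$ with witnesses $\{\sigma_y\}_{y \in Y}$, I would take $d := \max(|Y|,1)$ and define $v_x \in G^Y$ by $(v_x)_y := \sigma_y(x)$, with $\phi_e \in \Hom(G^Y, G)$ the coordinate projection onto the $y$-th coordinate when $e$ corresponds to $y$. The constraint $v_{x_1} + v_{x_2} + v_{x_3} = 0$ for each clause holds because each $\sigma_y$ satisfies $\tLIN_G$, and the matching condition is exactly the witness property transferred into projection language. The one step that requires care, which I would flag as the main ``obstacle,'' is that distinct variables $x \neq x'$ may collapse to the same vector ($v_x = v_{x'}$), producing an ensemble with strictly fewer than $n$ vectors; this is resolved by noting that the maximum $G$-ensemble size is monotone nondecreasing in $|V|$, since one may always pad with isolated dummy vectors belonging to no edge, so the maximum over exactly $n$ vectors coincides with the maximum over at most $n$ vectors, and both equal $\NRD(\tLIN^*_G | \tLIN_G, n)$.
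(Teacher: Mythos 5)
Your proof is correct. The forward direction (ensemble $\to$ conditionally non-redundant instance, taking $\sigma_{e_0} := \phi_{e_0}$) is essentially identical to the paper's.

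Your reverse direction is genuinely different from the paper's, and simpler. You embed directly into $G^Y$ with $(v_x)_y := \sigma_y(x)$, so the coordinates \emph{are} the witness assignments and the $\phi_e$'s are literal coordinate projections; the required conditions $v_{x_1}+v_{x_2}+v_{x_3}=0$ and $\phi_{e_y}(e_{y'})=(0,0,0)\iff y=y'$ then read off immediately from the fact that each $\sigma_y$ is a satisfying assignment to $\CSP(\tLIN_G)$ that vanishes only on $y$. The paper instead views $H := \sat(\tLIN_G,Y)$ as a subgroup of $G^X$, picks a generating set $h_1,\dots,h_d$ of $H$, sets $v_x := (h_{i,x})_{i\in[d]}$, and builds $\phi_{e_y}$ by expressing $\sigma_y$ as an integer combination of the $h_i$. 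What the paper's route buys is a much smaller ambient dimension ($d \le \log_{|G|}|H| \le n$, versus your $d = |Y|$ which can be $\Theta(n^3)$); what your route buys is a cleaner, more elementary argument that avoids the generating-set machinery. Since the proposition quantifies over all dimensions, either works here, and downstream uses (e.g., the proof of Theorem~\ref{thm:ub-1.6}) reason about $\dim(\Span(V))$ of an \emph{arbitrary} ensemble, so nothing is lost. One small point worth tightening: when padding a collapsed $V$ with $n-|V|$ dummy isolated vectors, you also need them distinct from each other and from $V$, which may require increasing $d$ (as the paper does by appending zero coordinates); your sketch elides this, but the fix is trivial. You might also note explicitly, as the paper does, that the constructed edges $e_y$ are automatically distinct because $\phi_{e_y}(e_{y'})=(0,0,0)$ forces $y=y'$, so $|E|=|Y|$.
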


\begin{proof}
First, we show for any $G$-ensemble $(d,V, E, \{\phi_e : e \in E\})$ that $(V, E)$ is a non-redundant instance of $\CSP(\tLIN^*_G \mid \tLIN_G)$. To see why, note that for any $e, e' = (v'_1, v'_2, v'_3) \in E$, we have that $\phi_e(e') = (\phi_e(v'_1), \phi_e(v'_2), \phi_e(v'_3))$. These values must sum to $0$ because $v'_1+v'_2+v'_3=0$ and $\phi_e$ is a homomorphism. Further, observe that $\phi_e(e') = (0,0,0) \in \tLIN_G \setminus \tLIN^*_G$ if and only if $e = e'$. Therefore, $(V,E)$ is indeed non-redundant.

Second, we show that any non-redundant instance $(X,Y)$ of $\CSP(\tLIN^*_G \mid \tLIN_G)$ can be turned into a $G$-ensemble $(d, V, E, \{\phi_e : e \in E\})$ with $|V|= |X|$ and $|E|=|Y|$. Let $\{\sigma_y : X \to G \mid y \in Y\}$ be the assignments guaranteed by Definition~\ref{def:cond-nrd}. Let $H := \sat(\tLIN_G, Y),$ the set of satisfying assignments to $(X,Y)$ when viewed as an instance of $\CSP(\tLIN_G)$.

We now highlight the key step in the proof. Observe that $H$ can be viewed as a subgroup of $G^X$.  Let $h_1, \hdots, h_d \in H$ be any generating set of $H$. For each $x \in X$, we can construct $v_x \in G^d$ by letting $v_{x,i} = h_{i,x}$ for all $i \in [d]$ (i.e., the transpose). For each $y := (x_1,x_2,x_3) \in Y$, we let $e_y := (v_{x_1}, v_{x_2}, v_{x_3})$ with $E := \{e_y : y \in Y\}$. Since for every $h \in H$, we have $h_{x_1} + h_{x_2} + h_{x_3} = 0$, and in particular this holds for each $h_i$, we can conclude that $v_{x_1} + v_{x_2} + v_{x_3} = 0$.

Observe that for any $y \in Y$ each assignment $\sigma_y$ can be viewed as an element of $H$. Thus, there exists $z_{y,1}, \hdots, z_{y,d} \in \mathbb Z$ (not necessarily unique) such that $\sigma_y = z_{y,1}h_1 + \cdots + z_{y,d}h_d$. We can thus define the map $\phi_{e_y} \in \Hom(G^d, G)$ by
\[
  \phi_{e_y}(g_1, \hdots, g_d) := \sum_{i=1}^d z_{y,i}g_i.
\]
Thus, by definition of $\sigma_y$, each corresponding $\phi_{e_y}$ must map $e_{y'}$ to $(0,0,0)$ if and only if $y = y'$, as desired. Note this further implies that each $e_y$ is distinct, so $|E| = |Y|$, as desired.

As a last detail, note that $|V| \le |X|$ by definition of $X$, but $|V| < |X|$ if two $v_{x}$'s are identical. We can get around this issue by padding.  Increase $d$ to $d+|X|-|V|$ and pad each $v \in V$ by appending $|X|-|V|$ copies of $0$. We extend each homomorphism $\phi_e$ by ignoring these last $|X|-|V|$ coordinates. Pad $V$ with any $|X|-|V|$ additional vectors. It is straightforward to check we still have a $G$-ensemble of the same size, but now with $|X|$ vectors.
\end{proof}

We now heavily use Proposition~\ref{prop:ensemble-nrd} to construct novel upper and lower bounds on $\NRD(\tLIN^*_G, n)$.

\subsection{Constructing a Non-redundant Instance}\label{subsec:nrd-lb}

We now show a lower bound of size $\Omega(n^{1.5})$ for $\NRD(\tLIN^*_G, n)$ for all Abelian groups $G$ of order at least $3$.  The cardinality requirement is necessary as $\tLIN^*_{\Z/1\Z}$ is the empty relation and $\tLIN^*_{\Z/2\Z}$ is equivalent to $2$-in-$3$ SAT, which has linear non-redundancy (e.g., \cite{lagerkvist2020Sparsification,bessiere2020Chain,khanna2024Characterizations}). The key property of $G$ we need is quite simple: 

\begin{lemma}\label{lem:g-3}
If $G$ is an Abelian group with $|G| \ge 3$, then there exists $g_1, g_2, g_3 \in G \setminus \{0\}$ such that $g_1 + g_2 + g_3 = 0$.
\end{lemma}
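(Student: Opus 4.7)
The plan is to split into two cases based on the exponent of $G$ (i.e., whether $G$ has an element of order at least $3$).

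First, suppose $G$ contains an element $g$ of order at least $3$. Then I would simply take $g_1 = g_2 = g$ and $g_3 = -2g$. All three are nonzero: $g \neq 0$ by assumption, and $2g \neq 0$ since the order of $g$ exceeds $2$. Their sum is $g + g + (-2g) = 0$.

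Second, suppose every element of $G$ has order at most $2$. Then $G$ is an elementary abelian $2$-group, isomorphic to $(\Z/2\Z)^k$ for some $k$. Since $|G| \ge 3$, we have $k \ge 2$, so we can choose two distinct nonzero elements $g_1, g_2 \in G$. Set $g_3 = g_1 + g_2$. Because every element is its own inverse, $g_1 \neq g_2$ implies $g_1 \neq -g_2$, so $g_3 \neq 0$. Moreover, $g_1 + g_2 + g_3 = 2(g_1 + g_2) = 0$.

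Since the two cases are exhaustive, the lemma follows. There is no real obstacle here; the only mild subtlety is making sure the ``all elements order $2$'' case is handled separately, since in that case the construction $(g,g,-2g)$ collapses and forces $g_3 = 0$.
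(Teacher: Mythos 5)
Your proof is correct and takes essentially the same approach as the paper: case split on whether $G$ has an element of order at least $3$, using $(g, g, -2g)$ in the first case (the paper writes $(t-2)g$ for the same element) and $(g_1, g_2, g_1+g_2)$ with distinct nonzero $g_1, g_2$ in the elementary abelian $2$-group case.
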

\begin{proof}
If there exists some element $g \in G \setminus \{0\}$ of order $t \ge 3$, then we can take $g_1 = g_2 = g$ and $g_3 = (t-2)g$. Otherwise, every element of $g \in G \setminus \{0\}$ has order $2$. Since $|G| \ge 3$, there exists distinct $g_1, g_2 \in G \setminus \{0\}$. Set $g_3 = g_1 + g_2 \neq 0$.
\end{proof}

We can now prove our lower bound on non-redundancy. The construction presented is inspired by a method in the matching vector code literature (cf. Lemma~16 of \cite{dvir2011Matching}).

\begin{theorem}\label{thm:nrd-3-lin-lb}
For every Abelian group $G$ with $|G| \ge 3$, we have that \[\NRD(\tLIN^*_G, n) \ge \NRD(\tLIN^*_G|\tLIN_G, n)= \Omega(n^{1.5}).\]
\end{theorem}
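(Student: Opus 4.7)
The plan is to invoke Proposition~\ref{prop:ensemble-nrd} and reduce the theorem to constructing a $G$-ensemble with $n$ vertices and $\Omega(n^{1.5})$ edges; combined with (\ref{eq:nrd-tri-simple}), this yields the stated lower bound on $\NRD(\tLIN^*_G, n)$. I will build a tripartite ensemble with $n = 3d^2$ vertices and $d^3$ edges, achieving the target bound on the nose.

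The construction will live in $G^{3d}$, partitioned into three blocks $A, B, C$ of $d$ coordinates each. I first fix $\eta \in G$ of order $\exp(G)$, which exists by the structure theorem for finite abelian groups and ensures the evaluation map $\Hom(G, G) \to G$ sending $\phi \mapsto \phi(\eta)$ is surjective. For each $(i, j) \in [d]^2$ I define three families of vertices: $a_{i,j}$ with $\eta$ at positions $(A, i)$ and $(B, j)$; $b_{j,k}$ with $-\eta$ at $(B, j)$ and $\eta$ at $(C, k)$; and $c_{i,k}$ with $-\eta$ at both $(A, i)$ and $(C, k)$. Because the three families have support in distinct pairs of blocks, the $3d^2$ vertices are pairwise distinct, and each triple $(a_{i,j}, b_{j,k}, c_{i,k})$ sums to zero coordinate-wise, yielding $d^3$ valid edges indexed by $(i, j, k) \in [d]^3$.

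The crux is supplying, for each edge $(i_0, j_0, k_0)$, a homomorphism $\phi \in \Hom(G^{3d}, G)$ annihilating that edge and no other. I will parametrize $\phi$ by the values $\alpha_i, \beta_j, \gamma_k \in G$ that it assigns to the $\eta$-vector at each block-coordinate; by surjectivity of evaluation at $\eta$, every such parametrization is realizable. The anchor annihilation conditions collapse to $\alpha_{i_0} = -\beta_{j_0} = -\gamma_{k_0}$, which I satisfy by setting all three to $0$. At every other index I will set the values to constants drawn from a pair $g_1, g_2 \in G \setminus \{0\}$ supplied by Lemma~\ref{lem:g-3} with $g_1 + g_2 \neq 0$---since a triple $g_1, g_2, g_3$ of nonzero elements summing to zero automatically satisfies $g_1 + g_2 = -g_3 \neq 0$.

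The main obstacle will be verifying non-annihilation at every alternative edge $(i', j', k')$, which reduces to a short case analysis over the seven patterns of index matches with the anchor. Most cases fall to $g_1 \neq 0$ or $g_2 \neq 0$ acting on a single block; the tightest case, where $i', j', k'$ all differ from the anchor, is precisely where the inequality $g_1 + g_2 \neq 0$ becomes essential, and this is exactly what Lemma~\ref{lem:g-3} delivers, so no further hypothesis on $G$ beyond $|G| \geq 3$ is needed.
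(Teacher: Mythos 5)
Your proposal is correct, and it reaches the same $\Omega(n^{1.5})$ bound via Proposition~\ref{prop:ensemble-nrd} and Lemma~\ref{lem:g-3}, but the $G$-ensemble you build differs in a few notable ways from the one in the paper. The paper's construction embeds the three elements $g_1, g_2, g_3$ from Lemma~\ref{lem:g-3} \emph{directly into the vectors} (together with two auxiliary ``anchor'' coordinates at positions $3t+1$ and $3t+2$), so that the homomorphism for each triple is always the same simple coordinate sum $\phi_{i_1,i_2,i_3}(z) = z_{i_1}+z_{i_2}+z_{i_3}+z_{3t+1}+z_{3t+2}$. You instead build vectors out of a single element $\eta$ of maximal order, dispense with the extra anchor coordinates entirely, and push all the freedom into the choice of homomorphism---parametrizing $\phi$ by the elements $\alpha_i,\beta_j,\gamma_k \in G$ that its coordinate restrictions send $\eta$ to, and setting these to $0$ at the anchor indices and to $g_1,g_2$ elsewhere. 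The tradeoff is that your route requires the surjectivity of the evaluation map $\Hom(G,G)\to G$, $\phi\mapsto\phi(\eta)$, which holds because an element of order $\exp(G)$ generates a direct summand of $G$; this is a slightly heavier group-theoretic input than the paper needs. In return you get a cleaner ambient dimension ($3d$ versus $3t+2$) and a somewhat more transparent case analysis, since the seven patterns of index overlap can be read off directly from the vanishing of $\alpha_{i'}+\beta_{j'}$, $-\beta_{j'}+\gamma_{k'}$, $-\alpha_{i'}-\gamma_{k'}$. One detail worth stating explicitly in a full write-up: you should fix, say, $c_A = g_1$ and $c_B = g_2$ (rather than arbitrary choices from $\{g_1, g_2\}$), since otherwise $c_A + c_B$ could be $2g$ for an order-two $g$ and the Case-7 check would fail; Lemma~\ref{lem:g-3} hands you $g_1 + g_2 = -g_3 \neq 0$ precisely for this purpose, as you observe.
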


\begin{proof}
By Proposition~\ref{prop:nrd-tri}, we have that $\NRD(\tLIN^*_G, n) \ge \NRD(\tLIN^*_G | \tLIN_G, n).$ By Proposition~\ref{prop:ensemble-nrd}, to lower bound $\NRD(\tLIN^*_G | \tLIN_G, n)$, it suffices to construct a $G$-ensemble on $n$-vectors of size $\Omega(n^{1.5})$. We describe the construction as follows.

Let $t = \lfloor \sqrt{n/3}\rfloor$ and let $d = 3t+2$. Let $g_1,g_2,g_3 \in G \setminus \{0\}$ be chosen as in Lemma~\ref{lem:g-3}. Let $I_1 := \{1, \hdots, t\}, I_2 := \{t+1, \hdots, 2t\},$ and $I_3 := \{2t+1, \hdots, 3t\}$. For all $i \in [d]$ and $g \in G$, let $b_i(g) \in G^d$ be the vector which has all coordinates equal to zero except for the $i$th coordinate which has value $g$. Now consider the following sets of vectors
\begin{align*}
V_{12} &:= \{u_{i_1,i_2} := b_{i_1}(g_1) + b_{i_2}(g_2) + b_{3t+1}(g_3) : i_1 \in I_1, i_2 \in I_2\},\\
V_{23} &:= \{v_{i_2,i_3} := -b_{i_2}(g_2) - b_{i_3}(g_3) - b_{3t+2}(g_1) : i_2 \in I_2, i_3 \in I_3\},\\
V_{13} &:= \{w_{i_1,i_3} := -b_{i_1}(g_1) + b_{i_3}(g_3) - b_{3t+1}(g_3) + b_{3t+2}(g_1) : i_1 \in I_1, i_3 \in I_3\}.
\end{align*}
Let $V = V_{12} \cup V_{23} \cup V_{13}$ which his of size $3t^2 \le n$, which can be made of size exactly $n$ via padding. For each triple $(i_1, i_2, i_3) \in I_1 \times I_2 \times I_3$ observe that $u_{i_1,i_2} + v_{i_2,i_3} + w_{i_1,i_3} = 0$. Therefore, we can define $E := \{e_{i_1,i_2,i_3} := (u_{i_1,i_2},v_{i_2,i_3}, w_{i_1,i_3}) \mid (i_1,i_2,i_3) \in I_1 \times I_2 \times I_3\} \subset V^3.$ The size of $|E|$ is $t^3 = \Omega(n^{3/2}).$ 

To finish, it suffices to construct maps $\phi_{i_1,i_2,i_3} \in \Hom(G^d, G)$ which map only $e_{i_1,i_2,i_3}$ to $(0,0,0)$. To do this, we define
\[
  \phi_{i_1,i_2,i_3}(z) := z_{i_1} + z_{i_2} + z_{i_3} + z_{3t+1} + z_{3t+2}.
\]
Now observe that for any $(i'_1, i'_2, i'_3) \in I_1 \times I_2 \times I_3$, we have that
\begin{align}
\phi_{i_1,i_2,i_3}(u_{i_1,i_2}) &= g_1 \one[i'_1 = i_1] + g_2 \one[i'_2 = i_2] + g_3,\label{eq:v12}\\
\phi_{i_1,i_2,i_3}(v_{i_2,i_3}) &= -g_2 \one[i'_2 = i_2] - g_3 \one[i'_3 = i_3] - g_1.\label{eq:v23}
\end{align}
Since each $g_i$ is nonzero and their sum equals zero, the only way (\ref{eq:v12}) equals $0$ is for $i'_1 = i_1$ and $i'_2 = i_2$. Likewise, the only way (\ref{eq:v23}) equals $0$ is for $i'_2 = i_2$ and $i'_3 = i_3$. Thus, the only way that (\ref{eq:v12}) and (\ref{eq:v23}) are both zero is for $(i'_1, i'_2, i'_3) = (i_1, i_2, i_3)$. Since $w_{i_1,i_3} = -u_{i_1,i_2}-v_{i_2,i_3}$ and $\phi_{i_1,i_2,i_3}$ is a homomorphism, we have that $\phi_{i_1,i_2,i_3}(e_{i'_1,i'_2,i'_3}) = (0,0,0)$ if and only if $(i_1, i_2,i_3) = (i'_1,i'_2,i'_3)$. Therefore, we have indeed constructed a $G$-ensemble of size $\Omega(n^{1.5})$.
\end{proof}

\begin{remark}\label{rem:extension}
Note that the only assignments any $\phi_{i_1,i_2,i_3}$ assigns to a clause are from the list
\begin{align*}
R := &\{(0,0,0),(0,g_3,-g_3),(-g_1,0,g_1),(-g_1,g_3,g_1-g_3),\\&\ \ (-g_2,-g_1,-g_3),(-g_2,g_2,0),(g_3,-g_1,g_1-g_3),(g_3,g_2,g_1)\}
\end{align*}
In particular, this implies our non-redundancy lower bound also applies to $R \setminus \{(0,0,0)\}$.  We explore this in more detail in Section~\ref{sec:open-nrd}. This observation also suggests that for large groups $G$, a construction which uses ``more'' of $\tLIN^*_G$ could yield a stronger lower bound.
\end{remark}

\subsection{Upper Bounding Non-redundancy}\label{subsec:nrd-ub}

In this section, we show how the theory of matching vector codes can also lead to upper bounds on $\NRD(\tLIN^*_G, n)$. Due to the existence of very large MV families for Abelian groups $G$ divisible by at least two primes~\cite{dvir2011Matching}, we assume that $G = \mathbb Z/p\mathbb Z$, where $p$ is a prime.

In particular, we can identify vectors in $G^d$ with vectors in the vector space $\F_p^d$. As a consequence, for our collection of vectors $V \subseteq G^d$, we can quantify the ``spread'' of $V$ by $\dim(\Span(V))$.\footnote{In \cite{khanna2024Characterizations}, the authors use $\log_2 \card{\Span(V)}$ as a surrogate for dimension for arbitrary finite Abelian groups. However, to extend Theorem~\ref{thm:ub-1.6} to all finite Abelian groups we would need a suitable analogue of Lemma~\ref{lem:dimU-upper-bound} for arbitrary groups, which in many cases is false (see~\cite{dvir2011Matching}).} 

As a first step, we give a bound on the size of any $\F_p$-ensemble as a function of $\dim(\Span(V))$. 

\begin{lemma}\label{lem:dimU-upper-bound}
Let $p$ be a prime. There exists a constant $c_p > 0$ with the following property. Let $(d, V, E, \{\phi_e : e \in E\})$ be an $\F_p$-ensemble. For any subspace $U \le \F^d$, we have that
\begin{align*}
    |E \cap U^3| \le \binom{\dim(U) + 2p-2}{2p-2}.
\end{align*}
In particular, for $U = \Span(V)$,
\begin{align}
  |E| \le \binom{\dim(\Span(V)) + 2p-2}{2p-2}\label{eq:E-ub}.
\end{align}
\end{lemma}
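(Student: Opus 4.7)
The plan is to adapt the polynomial method from the matching vector family literature (notably \cite{dvir2011Matching}). Let $s := \dim(U)$ and fix a basis of $U$, which induces a basis of $U^* \cong \F_p^s$; every $\phi_e$ may be restricted to a linear functional $\phi_e|_U \in U^*$. For each edge $e = (v^e_1, v^e_2, v^e_3) \in E \cap U^3$, I will construct a polynomial $\tilde P_e$ in $s$ variables over $\F_p$ of total degree at most $2(p-1)$ such that, viewed as a function on $U^*$, $\tilde P_e(\phi_{e'}|_U) = \delta_{e,e'}$ for all $e' \in E \cap U^3$. Linear independence of these polynomials then forces $|E \cap U^3|$ to be at most the dimension of the ambient polynomial space, which is $\binom{s+2(p-1)}{2(p-1)} = \binom{\dim(U)+2p-2}{2p-2}$.

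The key construction is
\[
  \tilde P_e(\phi) \;:=\; \bigl(1 - \phi(v^e_1)^{p-1}\bigr)\bigl(1 - \phi(v^e_2)^{p-1}\bigr),
\]
which uses only \emph{two} of the three vectors of $e$. This is the crucial trick: the constraint $v^e_1 + v^e_2 + v^e_3 = 0$ from the $G$-ensemble definition means $\phi(v^e_3) = -\phi(v^e_1) - \phi(v^e_2)$, so $\phi(v^e_3) = 0$ is implied by $\phi(v^e_1) = \phi(v^e_2) = 0$. A naive three-factor construction would yield degree $3(p-1)$; dropping the redundant factor saves a factor of $p-1$ in the degree and is exactly what produces the stated exponent $2p-2$.

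To verify the delta property, recall that over $\F_p$ we have $y^{p-1} = \one[y \ne 0]$, so $\tilde P_e(\phi_{e'}) = \one\bigl[\phi_{e'}(v^e_1) = 0 \text{ and } \phi_{e'}(v^e_2) = 0\bigr]$. If this evaluates to $1$, then $\phi_{e'}$ vanishes on $v^e_1$ and $v^e_2$, and hence also on $v^e_3 = -v^e_1 - v^e_2$; by the defining property of a $G$-ensemble this forces $e = e'$. Conversely, if $e = e'$ then $\phi_e$ annihilates all three vectors of $e$, so both factors equal $1$. Linear independence is then immediate: if $\sum_{e} c_e \tilde P_e \equiv 0$, evaluating at each $\phi_{e'}|_U$ yields $c_{e'} = 0$. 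Finally, each $\phi(v^e_i)$ is a linear form in the $s$ coordinates of $\phi$, so $\phi(v^e_i)^{p-1}$ has degree at most $p-1$, and the product has total degree at most $2(p-1)$, placing $\tilde P_e$ in the space of polynomials of total degree $\le 2p-2$ in $s$ variables. This space has dimension $\binom{s+2p-2}{2p-2}$, completing the bound.

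I do not anticipate a significant obstacle. The only subtlety worth double-checking is robustness to degeneracies such as $v^e_1 = v^e_2$ or $v^e_i = 0$, but in each case the argument using $v^e_1 + v^e_2 + v^e_3 = 0$ together with the $G$-ensemble property still forces $e = e'$ from $\phi_{e'}(v^e_1) = \phi_{e'}(v^e_2) = 0$. The second assertion of the lemma, the bound (\ref{eq:E-ub}), is just the case $U = \Span(V)$, since $E \subseteq V^3 \subseteq \Span(V)^3$.
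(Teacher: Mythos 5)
Your proof is correct and follows essentially the same approach as the paper's: the polynomial method of Dvir, Gopalan, and Yekhanin, with the crucial two-factor polynomial $(1-\phi(v^e_1)^{p-1})(1-\phi(v^e_2)^{p-1})$ that exploits $v^e_1 + v^e_2 + v^e_3 = 0$ to drop the third factor and save a factor of $p-1$ in the degree. The only cosmetic differences are that you work directly in the dual space $U^*$ rather than fixing an inner product on $U$ to represent each $\phi_e$ as $\langle z_e, \cdot \rangle$, and that you spell out explicitly the step the paper leaves implicit, namely that $\phi_{e'}(v^e_1) = \phi_{e'}(v^e_2) = 0$ forces $\phi_{e'}(v^e_3) = 0$ and hence $e = e'$ by the ensemble property.
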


\begin{proof}
We mimic the proof of Theorem~23 of \cite{dvir2011Matching} (themselves inspired by \cite{babai1988LINEAR}).

Let $m = |E \cap U^3|$. Let $e_i = (u_i, v_i, w_i) \in E \cap U^3$ be an enumeration of these edges. For each $i \in [m]$, define the degree $2p-2$ polynomial $P_i : U \to \F$ as follows:
\[
    P_i(x) := (1 - \langle x, u_i\rangle^{p-1})(1 - \langle x, v_i\rangle^{p-1}).
\]
For each $e \in E \cap U^3$, there exists a unique $z_e \in U$ such that $\phi_{e}(x) = \langle z_e, x\rangle$ for all $x \in U$. In particular, for each $i,j \in [m]$, we have that
\begin{align*}
    P_i(z_{e_j}) &= (1 - \langle z_{e_j}, u_i\rangle^{p-1})(1 - \langle z_{e_j}, v_i\rangle^{p-1})\\
    &= (1 - \phi_{e_j}(u_i)^{p-1})(1 - \phi_{e_j}(v_i)^{p-1})\\
    &= \one[(\phi_{e_j}(u_i),\phi_{e_j}(v_i)) = (0,0)]\\
    &= \one[i = j].
\end{align*}
Thus, $P_1, \hdots, P_m$ are linearly independent in the vector space of degree $2p-2$ polynomials on $\dim(U)$ variables. Therefore, by a standard counting argument, $m \le \binom{\dim(U) + 2p-2}{2p-2}$, as desired.
\end{proof}

The key takeaway of Lemma~\ref{lem:dimU-upper-bound}, is that if $\dim(\Span(V)) \le n^{1/(p-1)-\eps}$ for some $\eps > 0$, then we have proved a subquadratic upper bound on $|E|$. To handle the case in which $V$ is more ``spread out,'' we diverge from the standard techniques in the matching vector codes literature and prove a more combinatorial upper bound. In particular, we utilize the fact that in the spread-out case, $V$ has large subsets of linearly independent vectors.

\begin{lemma}\label{lem:ind-ub}
Let $\F$ be a field and let $V \subseteq \F^d$ be a collection of vectors. Let $I \subseteq V$ be a linearly independent subset of vectors. Then, there are at most $6|V| \log_2 |V|$ triples of vectors $v_1,v_2,v_3 \in V$ summing to $0$ with at least one vector in $I$.
\end{lemma}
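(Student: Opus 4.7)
It suffices to bound
\[
N := |\{(a,b) \in V^2 : a+b \in -I\}|,
\]
since for each ordered triple $(v_1,v_2,v_3) \in V^3$ summing to zero with at least one coordinate in $I$, placing the $I$-element in each of the three slots yields an ordered pair counted by $N$; thus by union bound the triple count is at most $3N$. So it suffices to show $N \le 2|V|\log_2 |V|$, which I would establish by strong induction on $n := |V|$.

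In the corner case $V \setminus I = \emptyset$, the set $V = I$ is itself linearly independent, so any ordered triple in $V^3$ summing to zero would form a nontrivial linear dependence unless some coordinates coincide. A short case analysis (the only nontrivial contribution arising from $3v = 0$ in characteristic $3$) gives $N \le |V|$. Otherwise, for the inductive step with $V \setminus I \ne \emptyset$, I would pick $u^\star \in V \setminus I$ by averaging. Setting $f(u) := |\{v \in I : -u-v \in V\}|$, the identity
\[
\sum_{u \in V \setminus I} f(u) = |\{(u,v,b) \in (V \setminus I) \times I \times V : u+v+b = 0\}| \le N
\]
yields a $u^\star \in V \setminus I$ with $f(u^\star) \le N/(n-|I|)$. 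Removing $u^\star$ from $V$ drops $N$ by at most $2 f(u^\star)$ (the factor $2$ accounting for the two positions $u^\star$ can occupy in an ordered pair with sum in $-I$), after which the induction hypothesis applied to $V \setminus \{u^\star\}$ with the same $I$ closes the bound after rearrangement.

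The main obstacle is that the recurrence $N(V)\,(1 - 2/(n-|I|)) \le N(V \setminus \{u^\star\})$ only yields the target bound comfortably when $|V \setminus I|$ is not too small; when $V$ is nearly linearly independent (say $|V \setminus I| \le 3$), I would handle the count directly via a Sidon-type property of $I$. Namely, linear independence forces $r_I^+(w) := |\{(i_1,i_2) \in I^2 : i_1+i_2 = w\}| \le 2$ for every $w$, since a coincidence $i_1+i_2 = i'_1+i'_2$ in $I$ determines $\{i_1,i_2\} = \{i'_1,i'_2\}$ as multisets. This bounds triples with two coordinates in $I$ by $O(|V|)$, and triples with three coordinates in $I$ are controlled as in the corner case. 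Balancing these direct estimates against the inductive step produces the stated $6|V|\log_2|V|$ bound.
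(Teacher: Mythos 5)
The reduction from ordered triples to the quantity $N := |\{(a,b)\in V^2 : a+b\in -I\}|$ is sound (each triple with an $I$-entry yields a pair in at most $3$ ways), and the corner case $V=I$ and the Sidon observation for pairs from $I$ are both fine. But the core inductive step does not close: the recurrence you set up is far too lossy to produce a logarithmic bound.

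Concretely, you derive $N(V)\bigl(1 - \tfrac{2}{n-|I|}\bigr) \le N(V\setminus\{u^\star\})$, i.e.\ $N(V) \le \frac{n-|I|}{\,n-|I|-2\,}\,N(V\setminus\{u^\star\})$. Iterating as $|V\setminus I|$ decreases from $M$ down to, say, $3$, the accumulated factor is
\[
\prod_{j=4}^{M}\frac{j}{\,j-2\,} \;=\; \frac{M(M-1)}{6},
\]
which is \emph{quadratic} in $M = |V\setminus I|$, not polylogarithmic. Equivalently, to complete a single step of the induction you would need $\frac{n-|I|}{\,n-|I|-2\,}\le \frac{n\log_2 n}{(n-1)\log_2(n-1)}$, i.e.\ $\frac{2}{n-|I|-2}\lesssim \frac{1}{n}$, which forces $n - |I| \gtrsim 2n$ and is therefore never true (since $I\subseteq V$). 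Even in the most favorable regime $|I|\ll n$, you pay $\approx 2N/n$ per deleted vertex, which over $n$ deletions compounds to a polynomial loss. This is not a matter of handling a few small corner cases: the averaging bound $f(u^\star)\le N/(n-|I|)$ is simply the wrong tool here, because to close the induction you would need some $u^\star$ with $f(u^\star)\lesssim \log n$, and the averaging only gives $\lesssim N/n \approx \log n \cdot (\text{factor}\ge 2)$, which already destroys the constant and compounds.

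What is missing is a structural mechanism that explains \emph{why} some vertex has small degree, rather than an averaging heuristic. The paper's proof supplies exactly this: for $I=\{e_1,\dots,e_t\}$ it forms the bipartite (multi-)graph on $V\sqcup V$ whose edge set $E_{\le i}$ is the union of the partial matchings $E_j=\{(a,b): a+b=-e_j\}$ for $j\le i$, and observes that linear independence of $I$ prevents any later $E_j$ edge from having both endpoints inside a single connected component of $E_{\le i}$ (a cross-edge inside a component would imply $e_j\in\Span\{e_1,\dots,e_i\}$). This forces every new edge to merge distinct components, and a small-to-large charging over component merges yields $|E_{\le t}|\le 2|V|\log_2|V|$. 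That combinatorial observation is the whole content of the lemma, and your proposal does not contain it.
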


\begin{remark}
This bound is tight up to constant factors for the Hadamard code. In particular, let $V = \F^s$ and let $I$ be the standard basis. There are $\Theta(|I||V|) = \Theta(|V| \log |V|)$ triples of vectors summing to $0$ with at least one vector in $I$. However, due to the limitations imposed by Lemma~\ref{lem:dimU-upper-bound}, this construction does not correspond to an $\F$-ensemble.
\end{remark}

\begin{proof}
Let $e_1, \hdots, e_t$ be an enumeration of the elements of $I$. For $i \in t$, let $E_i \subseteq V^2$ be the pairs of vectors which sum to $-e_i$. Observe that each $E_i$ is a partial matching and $E_i \cap E_j = \emptyset$ for all $i \neq j$. Let $E_{\le i} = E_1 \cup \cdots \cup E_i$. Note that $3|E_{\le t}| = 3 \sum_{i=1}^t |E_i|$ is an upper bound on the number of edges from $E$ incident with exactly one edge of $I$. Thus, it suffices to prove that $|E_{\le t}| \le 2|V|\log_2 |V|$. The key observation is as follows.

\textbf{Observation.} View $E_{\le i}$ as the edges of a bipartite graph on $V \sqcup V$. For every $j > i$, and for every connected component $C$ of $E_{\le i}$, no edge of $E_j$ has both ends in $C$.

To see why this is true, consider any potential edge $(v_1, v_2) \not\in E_{\le i}$ lying in $C$. Since $E_{\le t}$ is bipartite, there exists a path $v_1 = u_1, u_2, \hdots, u_{2k} = v_2$ with $(u_a,u_{a+1})\in E_{\le i}$ for all $a \in [2k-1]$, where we use the fact that $(p,q) \in E_{\le i}$ if and only if $(q,p) \in E_{\le i}$. Now observe that
\[
  v_2 + v_1 = (u_1+u_2) - (u_2+u_3) + (u_3+u_4) - \cdots + (u_{2k-1} + u_{2k}) \in \Span\{e_1, \hdots, e_i\},
\]
where the inclusion is by the definition of $E_{\le i}$. Since $e_1, \hdots, e_t$ are linearly independent, we must have that $(v_1, v_2) \not\in E_j$ for all $j > i$, as desired.

\textbf{Structural Induction.} We now prove by induction the following claim: for every $i \in [t]$, and every connected component $C$ of $E_{\le i}$ (viewed as a bipartite graph on $V \sqcup V$), we have that the subgraph of $E_{\le i}$ induced by $C$, $E_{\le i}\!\upharpoonright_{C}$, has at most $|C|\log_2 |C|$ edges.

For the base case of $i=1$, note that every connected component has either exactly one vertex and zero edges or two vertices and one edge. In both cases, the bound of $|C| \log_2 |C|$ holds.

Now assume the induction hypothesis holds for $E_{\le i}$. Let $C$ be a new (compared to $E_{\le i}$) connected component of $E_{\le i+1}$. Note that $C$ must be the union of components $C'_1, \hdots, C'_a$ of $E_{\le i}$. Further, due to the observation, the edges of $E_{i+1}$ must be between different components. Assume WLOG that $|C'_1| \ge \cdots \ge |C'_a|$. By the observation, at least one vertex of every edge of $E_{i+1}\!\upharpoonright_{C}$ must lie in $C'_2 \cup \cdots \cup C'_a$. Thus, we have a bound
\begin{align*}
  |E_{\le i+1}|_{C}| &\le \sum_{j=2}^a |C'_j| + \sum_{j=1}^a |E_{\le i} \upharpoonright_{C'_j}|\\
                 &\le\sum_{j=2}^a |C'_j| + \sum_{j=1}^a |C'_j|\log_2 |C'_j| &\text{ (induction hypothesis)}\\
                 &= |C'_1|\log_2|C'_1| + \sum_{j=2}^a |C'_j|\log_2 (2|C'_j|)\\
                 &\le |C'_1|\log_2 |C| + \sum_{j=2}^a |C'_j| \log_2 |C| &\text{ (since $|C| \ge |C_1|+|C_j| \ge 2|C_j|$)}\\
                 &= |C| \log_2|C|,
\end{align*}
as desired.

To finish, each component $C$ of $E_{\le t}$ has at most $|C| \log_2 |C| \le |C| \log_2 |V|$ edges. Since the sum of the sizes of the components is $2|V|$, we have that $|E_{\le t}| \le 2|V|\log_2 |V|$, as desired.
\end{proof}

We now conclude with the main result of this section.

\begin{theorem}\label{thm:ub-1.6}
For any prime $p$, we have $\NRD(\tLIN^*_{\Z/p\Z}, n) = O_p(n^{2 - \eps_p}\log n)$, for $\eps_p = \frac{2}{2p-1}$.
\end{theorem}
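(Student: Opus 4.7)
The plan is to combine the conditional non-redundancy framework of Section~\ref{subsec:nrd-mv} with a dichotomy on the linear-algebraic spread of the ensemble, invoking Lemma~\ref{lem:dimU-upper-bound} in the low-spread regime and Lemma~\ref{lem:ind-ub} in the high-spread regime.

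\textbf{Step 1 (Reduction to $\F_p$-ensembles).} By Proposition~\ref{prop:nrd-tri} with $P = \tLIN^*_{\Z/p\Z}$ and $Q = \tLIN_{\Z/p\Z}$, together with the known linear bound $\NRD(\tLIN_{\Z/p\Z}, n) = O_p(n)$ (the affine case, e.g.\ Theorem~\ref{thm:maltsev-nrd}), it suffices to bound $\NRD(\tLIN^*_{\Z/p\Z} \mid \tLIN_{\Z/p\Z}, n)$. By Proposition~\ref{prop:ensemble-nrd}, this is exactly the problem of bounding $|E|$ for an arbitrary $\F_p$-ensemble $(d, V, E, \{\phi_e\}_{e \in E})$ with $|V| = n$. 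Write $d_* := \dim(\Span(V))$ and set the balancing threshold $d_0 := n^{2/(2p-1)}$. Observe that by the choice of $\eps_p = 2/(2p-1)$,
\[
(2p-2) \cdot \tfrac{2}{2p-1} = 2 - \eps_p, \qquad 1 - \tfrac{2}{2p-1} = 1 - \eps_p = \tfrac{2p-3}{2p-1},
\]
so the two regimes will match up cleanly at $d = d_0$.

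\textbf{Step 2 (Low-spread case).} If $d_* \le d_0$, apply Lemma~\ref{lem:dimU-upper-bound} with $U := \Span(V)$ to obtain
\[
|E| \;\le\; \binom{d_* + 2p-2}{2p-2} \;=\; O_p(d_*^{2p-2}) \;\le\; O_p\!\left(n^{(2p-2)\cdot 2/(2p-1)}\right) \;=\; O_p(n^{2-\eps_p}),
\]
which is well within the claimed bound.

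\textbf{Step 3 (High-spread case, one reduction step).} If $d_* > d_0$, pick $I \subseteq V$ forming a basis of $\Span(V)$, so $|I| = d_*$. Lemma~\ref{lem:ind-ub} bounds the number of triples in $V$ summing to zero and containing a vector of $I$ by $O(n \log n)$, and a fortiori the number of edges in $E$ touching $I$ is $O(n\log n)$. Since each such edge contributes at most three to $\sum_{v \in I} \deg_E(v)$, some $v^* \in I$ satisfies
\[
\deg_E(v^*) \;\le\; O\!\left(\tfrac{n \log n}{d_*}\right) \;\le\; O\!\left(n^{(2p-3)/(2p-1)} \log n\right).
\]
Deleting $v^*$ together with all incident edges leaves a valid $\F_p$-ensemble on $n-1$ vectors: the sum-to-zero property of each remaining triple and the kernel property of each $\phi_e$ are unaffected by removing $v^*$ from $V$.

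\textbf{Step 4 (Closing the recursion).} Proceed by strong induction on $n$, aiming for $|E| \le C(p)\, n^{2-\eps_p} \log n$. The low-spread case is direct from Step~2; the high-spread case combines the IH on the deleted ensemble with the edge count from Step~3 to give
\[
|E| \;\le\; C(p)(n-1)^{2-\eps_p}\log(n-1) + O\!\left(n^{(2p-3)/(2p-1)} \log n\right).
\]
The mean value theorem on $x \mapsto x^{2-\eps_p}$ yields
\[
n^{2-\eps_p}\log n - (n-1)^{2-\eps_p}\log(n-1) \;\ge\; (2-\eps_p)(n-1)^{1-\eps_p}\log n \;=\; \Theta_p\!\left(n^{(2p-3)/(2p-1)}\log n\right)
\]
for $n \ge 2$, so choosing $C(p)$ to be a sufficiently large multiple of the hidden constant of Lemma~\ref{lem:ind-ub} closes the induction. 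The finitely many small-$n$ cases are absorbed into $C(p)$ via the trivial bound $|E| \le n^3$.

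The delicate point is the exponent balancing: $\eps_p = 2/(2p-1)$ is precisely the threshold at which Lemma~\ref{lem:dimU-upper-bound}'s bound $d_*^{2p-2}$ coincides with Lemma~\ref{lem:ind-ub}'s ``degree bound'' $n^2/d_*$, and this is what makes $d_0 := n^{2/(2p-1)}$ the natural split. The execution is essentially a clean union of the two lemmas with a vertex-removal induction to iterate the degree-reduction step; any substantial improvement on $\eps_p$ would require sharpening one of the two ingredients.
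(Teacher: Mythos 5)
Your proposal is correct and follows essentially the same route as the paper: both reduce to $\F_p$-ensembles via Propositions~\ref{prop:nrd-tri} and~\ref{prop:ensemble-nrd}, split on the threshold $\dim(\Span(V)) \lessgtr n^{\eps_p}$, invoke Lemma~\ref{lem:dimU-upper-bound} in the low-spread regime and Lemma~\ref{lem:ind-ub} in the high-spread regime, and iterate a vertex-removal step. The only cosmetic difference is that you phrase the removal as a strong induction with a mean-value-theorem calculation, whereas the paper packages the identical content as a $\Delta$-degeneracy bound (every sub-ensemble has a vertex of degree $O(n^{1-\eps_p}\log n)$, hence $|E| \le n\Delta$); the paper's formulation avoids the MVT bookkeeping and sidesteps the small-$n$ edge case, but the substance is the same.
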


\begin{proof}
By Proposition~\ref{prop:nrd-tri} and the fact that $\NRD(\tLIN_{\Z/p\Z}, n) = O(n)$, it suffices to prove that $\NRD(\tLIN^*_{\Z/p\Z}|\tLIN_{\Z/p\Z}, n) = O_p(n^{2 - \eps_p}\log^{1-\eps_p}n)$. By Proposition~\ref{prop:ensemble-nrd}, it suffices to prove this upper bound on the size of any $\F_p$-ensemble with $n$ vectors. To do this, we shall show that every $\F_p$-ensemble on $n'$ vectors for $n' \le n$ has a vector involved in at most $\Delta := O_p((n\log n)^{1-\eps_p})$ edges; i.e., the hypergraph is $\Delta$-degenerate. In particular, this implies that any $\F_p$-ensemble on $n$ vectors has at most $n\Delta = O_p(n(n\log n)^{1-\eps_p})$ edges, as desired.

To do this, we perform casework on the size of $\dim(\Span(V))$.

\textbf{Case 1, $\dim(\Span(V)) \le n^{\eps_p}$}. By Lemma~\ref{lem:dimU-upper-bound}, we have our graph has at most 
\[
\binom{\dim(\Span(V))+2p-2}{2p-2} \le \binom{n^{\eps_p}+2p-2}{2p-2} = O_p(n^{2-\eps_p})
\]
edges. Thus, some vertex has degree at most $O_p(n^{1-\eps_p}) \le \Delta.$

\textbf{Case 2, $\dim(\Span(V)) > n^{\eps_p}$}. Let $I \subseteq V$ be a basis of $\Span(V)$. By Lemma~\ref{lem:ind-ub} there are at most $O(n \log n)$ edges incident with $I$. Thus, some vertex of $I$ has degree at most $O((n/|I|) \log n) = O_p(n^{1-\eps_p}\log n) \le \Delta,$ as desired.
\end{proof}

\begin{remark}[Efficient Kernelization]
We observe that this upper bound proof can be turned into an efficient kernelization algorithm (see Section~\ref{subsec:related-work}). More precisely, given any instance $(X,Y)$ of $\CSP(\tLIN^*_{\Z/p\Z})$ on $n$ variables, we can efficiently identify $Y' \subseteq Y$ of size at most $\widetilde{O}(n^{2-2/(2p-1)})$ such that $Y$ and $Y'$ have the same solution set.
\end{remark}

\begin{remark}
As an immediate corollary of Theorem~\ref{thm:nrd-3-lin-lb} and Theorem~\ref{thm:ub-1.6}, we have identified the first predicate $R$ for which there exist $a \in \N$ for which
\[
  a < \liminf_{n \to \infty} \frac{\log \NRD(R,n)}{\log n} \le \limsup_{n \to \infty} \frac{\log \NRD(R,n)}{\log n} < a + 1.
\]
In particular, we now know that
\begin{align*}
\NRD(\tLIN^*_{\Z/3\Z}, n) &\in [\Omega(n^{1.5}), \widetilde{O}(n^{1.6})], \text{ and }\\
\SPR(\overline{\tLIN^*_{\Z/3\Z}}, n, \eps) &\in [\Omega(n^{1.5}), \widetilde{O}(n^{1.6}/\eps^2)].
\end{align*}
\end{remark}

\section{Predicates with Unresolved Non-Redundancy}\label{sec:open-nrd}

In this section, we give an overview of various predicates discussed in the literature whose non-redundancy is unknown. Essentially all of these predicates are some modification of a system of linear equations, perhaps indicating some modification of the techniques in Section~\ref{sec:nrd-mv} could be useful in their resolution.

\subsection{The Simplest Unresolved Predicate}\label{subsec:simplest}

A central open question is many works on non-redundancy and related questions (e.g.,~\cite{lagerkvist2020Sparsification,bessiere2020Chain,carbonnel2022Redundancy} is characterizing which predicates $R \subseteq D^r$ have linear/near-linear non-redundancy (or related quantity). The best characterization to date (\cite{bessiere2020Chain,carbonnel2022Redundancy}) are the Mal'tsev predicates (Theorem~\ref{thm:maltsev-nrd}) and derived predicates via gadget reductions (see Section~\ref{subsec:gadget}). This characterization is tight for binary relations ($r=2$) \cite{bessiere2020Chain} as well as for Boolean predicates with $r \le 3$ \cite{chen2020BestCase,khanna2024Characterizations}.

However, multiple works~\cite{lagerkvist2020Sparsification,bessiere2020Chain,khanna2024Characterizations} have identified predicates which fall outside of this characterization for which no nontrivial lower bound is known. Surprisingly, assuming Mal'tsev predicates do characterize near-linear redundancy (and thus up to complement also sparsification), it suffices to resolve all of these examples in the literature by looking at a single ternary predicate (i.e., $r=3$) over a domain $D = \{0,1,2\}$ of size $3$:
\[
  \BCK := \{111,222,012,120,201\},
\]
where tuples are written as concatenated integers for succinctness.
In particular, we know\footnote{This predicate appears in Example 3 of \cite{bessiere2020Chain} as $R := \{001,020,122,202,210\}$, but one can show $\NRD(\BCK, n) = \Theta(\NRD(R, n))$ by applying to each coordinate a suitable bijection from $D$ to itself (see Section~\ref{subsec:gadget}). Explicitly, let $g_1, g_3$ be permutations which swap $0$ and $1$ and let $g_2$ be the cycle $0 \to 1 \to 2 \to 0$. Then, $\BCK = \{(x,y,z) : (g_1(x),g_2(y),g_3(z)) \in R\}$.}  from Example 3 and Observation 12 of \cite{bessiere2020Chain} that one cannot establish that $\BCK$ has near-linear non-redundancy via Theorem~\ref{thm:maltsev-nrd}. In particular, Bessiere,  Carbonnel, and Katsirelos~\cite{bessiere2020Chain}, adapting techniques of \cite{lagerkvist2020Sparsification},  show that every predicate $P \supseteq \BCK$ defineable from Mal'tsev predicates via gadget reductions must also contain the tuple $000$.  That is, the relation
\[
  \BCKp := \{000,111,222,012,120,201\}
\]
is in a strong sense the ``Mal'tsev closure'' of $\BCK$.  It is easy to verify that $\BCKp$ is Mal'tsev as it turns out that
\begin{align}
 \widetilde{\BCKp} :=  \{0,1,3\}^3 \cap \{(x,y,z) \in \F_7^3 : x+2y+4z = 0\} = \{000,111,333,013,130,301\},\label{eq:R-7}
\end{align}
which is identical to $\BCKp$ up to swapping $2$ and $3$.

\paragraph{Booleanization.} Although $\BCK$ lives over domain size $3$, its non-redundancy also has nontrivial implications in the Boolean setting. In particular, consider the following ``Booleanization'' of $\BCK$:
\begin{align*}
  \BCK_{\B} := \{(&\one[x=0], \one[x=1], \one[x=2],\\&\one[y=2], \one[y=0], \one[y=1],\\&\one[z=1], \one[z=2], \one[z=0]) : (x,y,z) \in R\} \subset \{0,1\}^9,\\
    =\ \ \{&010001100,001100010,100001010,010100001,001010100\}
\end{align*}
and one can define $\BCKp_{\B}$ analogously by adding $100010001$. If we view $\{0,1\}^9$ as $3\times 3$ Boolean matrices, then $\BCKp_{\B}$ is precisely the six permutation matrices, and $\BCK_{\B}$ is $\BCKp_{\B}$ with the identity matrix removed.

It is straightforward to observe that $\NRD(\BCK_{\B}, 3n) \ge \NRD(\BCK, n)$. For any non-redundant instance $(X, Y \subset X^3)$ of $\CSP(\BCK)$, let $X' = X \times \{0,1,2\}$ and map any $(x_1,x_2,x_3) \in Y$ to
\[
  ((x_1,0), (x_1,1), (x_1,2), (x_2,2), (x_2,0),(x_2,1), (x_3,1), (x_3,2), (x_3, 0)) \in Y' \subset (X')^3.
\]
Then, $(X', Y')$ must be a non-redundant instance of $\BCK_{\B}$. Thus, proving that $\NRD(\BCK, n) = \omega(n)$, would also show that $\NRD(\BCK_{\B}, n) = \omega(n)$.

The reason this observation is significant is that $\BCK_{\B}$ has appeared three times in the literature. In \cite{khanna2024Characterizations}, the predicate
\[
  P_{\textsc{KPS}} :=  \{000000000, 111111000, 111000111, 110001001, 101010010\}
\]
is constructed as an example for which $\SPR(\overline{P}_{\textsc{KPS}}, n, \eps)$ cannot be near-linear due to their linear code sparsification framework. In particular, any affine predicate containing $P_{\textsc{KPS}}$ must also contain $011100100$. Up to a permutation of the coordinates and flipping bits,\footnote{Explicitly, flip the first three bits of $P_{\textsc{KPS}}$, then permute the coordinates of $P$ according to the cycles $2\to 6\to 3\to 8\to 2$
and $4\to 5\to 7\to 9\to 4$.} it turns out that $P_{\textsc{KPS}}$ and $\BCK_{\B}$ are the same, with the same automorphism sending $011100100$ to the identity matrix in $\BCK^{+}_{\B}$.

Even earlier, \cite{chen2020BestCase} considered the predicate
\[
    P_{\textsc{CJP}} :=  \{100111001,000010011,010110110,000101100,001001111\}.
\]
Here, any affine predicate containing $P_{\textsc{CJP}}$ also has $111111111$. If we negate the last 6 bits and permute the coordinates suitably, we again arrive at $\BCK_{\B}$.

As one last example, \cite{lagerkvist2020Sparsification} considers the following predicate of arity $r = 10$,
\[
  P_{\textsc{LW}} = \{0000000001, 1000100010, 0100011000, 0011000100, 1000010100, 0010101000\}
\]
If one applies the gadget reduction where the last coordinate of $P_{\textsc{LW}}$ is set to $0$, then this reduced version of $P_{\textsc{LW}}$ is equal to $\BCK_{\B}$ up to a permutation of the coordinates.\footnote{In fact, \cite{lagerkvist2020Sparsification}'s combinatorial interpretation of $Q$ inspired our definition of $\BCK_{\B}$.} Thus, $\NRD(P_{\textsc{LW}}, n) \ge \NRD(\BCK_{\B}, n) \ge \NRD(\BCK, n)$. Thus, again, it suffices\footnote{However, to answer the open question in \cite{lagerkvist2020Sparsification}, we would need to show that $\BCK$ has no near-linear kernelization algorithm, however every instance of $\BCK$ is satisfiable by assigning all $1$'s. Such trivialities can be circumvented by considering a variant such as $\widehat{\BCK} := \{120,201,021,210,102\}$.} to prove that $\BCK$ has super-linear non-redundancy to prove that $P_{\textsc{LW}}$ has super-linear non-redundancy.

\begin{remark}
  The reason why $P_{\textsc{LW}}$ has an extra tuple compared to $\BCK_{\B}$ is due to how \cite{lagerkvist2020Sparsification} constructed the predicate. Like \cite{bessiere2020Chain}, they also showed that $P_{\textsc{LW}}$ cannot be a gadget construction from Mal'tsev predicates (in fact, \cite{lagerkvist2017Kernelization} appears to be the first work to develop this technique). However, the style of argument they used (in some sense a ``depth 2'' circuit refutation) was only strong enough to work with $P_{\textsc{LW}}$ but not $\BCK_{\B}$. In contrast, the argument used by \cite{bessiere2020Chain} for $\BCK$ has depth 3. Such circuits are studied in much more detail in \cite{brakensiek2025richness}.
 \end{remark}

\paragraph{Known Bounds for $\NRD(\BCK, n)$.} We now shift our focus to discussing what is currently known about $\NRD(\BCK, n)$. Since $\BCK$ is nontrivial, we know that $\NRD(\BCK, n) = \Omega(n)$, which is still the best known lower bound to date.

For upper bounds, a baseline upper bound is $\NRD(\BCK, n) = O(n^2)$. This can be proved in a few ways (e.g., write out as a quadratic polynomial, or use the techniques of \cite{bessiere2020Chain}), but perhaps the conceptually simplest way uses the conditional non-redundancy framework established in Section~\ref{sec:nrd-mv}. In particular, since $\NRD(\BCKp, n) = O(n)$ by (\ref{eq:R-7}) and Theorem~\ref{thm:maltsev-nrd}, we have that $\NRD(\BCK, n) = \Theta(\NRD(\BCK\mid \BCKp, n))$.

It is straightforward to see that $\NRD(\BCK|\BCKp, n) = O(n^2)$ via the techniques of Proposition~\ref{prop:simple-quadratic-ub}, as any such non-redundant instance cannot have two clauses sharing two variables (or else one is assigned $000$ exactly when the other is). Thus, every pair of variables appears in at most one clause, proving the quadratic upper bound. More strongly, \cite{bessiere2020Chain} essentially showed that any tripartite non-redundant instance of $\CSP(\BCK \mid \BCKp)$ cannot have a $3$-cycle. As such, by invoking a hypergraph Tur{\'a}n bound of S{\'o}s, Erd{\H o}s, and Brown~\cite{sos1973existence}, they could then prove that $\NRD(\BCK|\BCKp, n) = o(n^2)$, however the improvement over quadratic is known to be at best subpolynomial~\cite{MR519318}.

In Section~\ref{sec:nrd-mv}, we indirectly gave a new upper bound for $\NRD(\BCK, n)$. In particular, observe that $\BCK \subsetneq \tLIN_{\Z/3\Z}^*$, but $\BCKp \setminus \BCK = \tLIN_{\Z/3\Z} \setminus \tLIN_{\Z/3\Z}^*$. Thus, any non-redundant instance of $\CSP(\BCK \mid \BCKp)$ is also a non-redundant instance of $\CSP(\tLIN_{\Z/3\Z}^* \mid \tLIN_{\Z/3\Z})$. Therefore, as an immediate corollary of Theorem~\ref{thm:ub-1.6}, we have that $\NRD(\BCK, n) = O(n^{1.6} \log n)$. In fact, one can do slightly better by observing in Lemma~\ref{lem:dimU-upper-bound} that the task of checking whether $(\langle x, u_i\rangle, \langle x, v_i\rangle, \langle x, w_i\rangle) \in \BCK$ for $u_i + v_i + w_i = 0$ can be achieved by the \emph{cubic} polynomial
\[
  P_i(x) := (1- \langle x, u_i\rangle^2)(1 - \langle x, v_i\rangle). 
\]
Leaving the other steps in the proof of Theorem~\ref{thm:ub-1.6} unchanged, we can deduce that $\NRD(\BCK, n) = O(n^{1.5} \log n)$. The authors doubt the exponent of $1.5$ is tight.

\begin{remark}
Subsequent to the initial posting of this work, \cite{brakensiek2025richness} proved a slightly stronger bound $O(n^{1.5})$ by connecting non-redunedant instances of $\CSP(\BCK)$ to the structure of graphs with girth at least $6$. We leave combining these methods to get an even sharper upper bound as a question for future work.
\end{remark}

One last observation is that Theorem~\ref{thm:nrd-3-lin-lb} cannot be applied to $\NRD(\BCK, n)$. Although the instance constructed lacks $3$-cycles, one can check it contains other configurations which make it fail to be non-redundant. However, by Remark~\ref{rem:extension} and using $g_1=g_2=g_3=1$, we deduce that Theorem~\ref{thm:nrd-3-lin-lb} gives an $\Omega(n^{1.5})$ lower bound on the non-redundancy of
\begin{align*}
R := \{111,222,012,120,201,210\} = \BCK \cup \{210\},
\end{align*}
so in some sense we are ``only'' one tuple away from a superlinear lower bound for $\NRD(\BCK, n)$.

\begin{question}
Does there exist $\delta > 0$ such that $\NRD(\BCK, n) = \Omega(n^{1+\delta})$?
\end{question}

\subsection{Modulo Predicates.}

A class of predicates frequently examined in the literature \cite{jansen2019Optimal,lagerkvist2020Sparsification,khanna2024Characterizations} are of the following form
\[
  \LIN_{k,m,S} := \{x \in \{0,1\}^k : \Ham(x) \in S\!\!\!\!\mod m\},
\]
where $k$ and $m$ are positive integers, and nonempty $S \subsetneq \{0,1,\hdots, m-1\}$. Some special cases are understood. For example, if $|S|=1$, then $\NRD(\LIN_{k,m,S},n) = \Theta(n)$ because the predicate is affine. For general $S$, when $m$ is a prime, we know that $\NRD(\LIN_{k,m,S},n) = O(n^{|S|})$ because $\LIN_{k,m,S}$ has a degree-$k$ representation of the form
\[
  \LIN_{k,m,S} = \{0,1\}^k \cap \left\{x \in \F_m^k : \prod_{s \in S}\left(\sum_{i =1}^k x_i - s\right) = 0\right\}.
\]
However, even in the prime case, this upper bound does not appear tight in general, as finding a copy of $\OR_{|S|}$ in $\LIN_{k,m,S}$ seems to depend on the additive combinatorics properties of $S \mod m$ (e.g., whether $S$ is an arithmetic progression).

Even more confounding, the upper bound of $O(n^{|S|})$ does not apply in general when $m$ is non-prime. For instance, if $m = 6$ and $S = \{0,1\}$ \cite{khanna2024Characterizations,lagerkvist2020Sparsification,jansen2019Optimal}, the polynomial $\Ham(x)(\Ham(x)-1)$ has zeros whenever $\Ham(x) \in \{0,1,3,4\}$ because $2\cdot 3 \equiv 0 \mod 6.$ However, by setting $k-2$ bits of $\LIN_{k,6,\{0,1\}}$ to zero and flipping the other two bits, we get a copy of $\OR_2$, so $\NRD(\LIN_{k,6,\{0,1\}}) = \Omega(n^2)$.  Furthermore, $\LIN_{k,6,\{0,1\}}$ has a cubic representation
\[
  \sum_{1 \le i_1 < i_2 < i_3 \le k} x_{i_1}x_{i_2}x_{i_3} - 2\sum_{1\le i_1 < i_2 \le k} x_{i_1}x_{i_2}\equiv 0 \mod 6,
\]
so $\NRD(\LIN_{k,6,\{0,1\}}) = O(n^3)$.
That said, \cite{khanna2024Characterizations} proved $\LIN_{k,6,\{0,1\}}$ cannot similarly define $\OR_3$ and conjectured that $\LIN_{k,6,\{0,1\}}$ fails to have a quadratic representation for all sufficiently large $k$. However, they only ruled out a \emph{symmetric} polynomial representation (see Claim D.2~\cite{khanna2024Characterizations}). We complete the proof here.

\begin{proposition}
For $k \ge 16$, $\LIN_{k,6,\{0,1\}}$ has no quadratic representation over any Abelian group $G$.
\end{proposition}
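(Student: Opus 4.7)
The plan is to pin down the coefficients of any putative quadratic polynomial $Q : \{0,1\}^k \to G$ representing $\LIN_{k,6,\{0,1\}}$, by evaluating $Q$ on structured $0/1$-vectors, and then to derive a contradiction purely inside the group $G$. Writing $Q(x) = c_0 + \sum_i c_i x_i + \sum_{i<j} c_{ij} x_i x_j$ with coefficients in $G$, I would first evaluate at $\vec 0$ and at each standard basis vector $e_i$; since $\Ham(\vec 0) = 0$ and $\Ham(e_i) = 1$ both lie in $\{0,1\} \pmod 6$, this forces $c_0 = 0$ and $c_i = 0$ for every $i$, so $Q$ is purely quadratic with no constant or linear term.

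The next step extracts information from the characteristic vectors $\chi_S$ of subsets $S \subseteq [k]$. For any such $S$, $Q(\chi_S) = \sum_{\{i,j\} \subseteq S} c_{ij}$, and this must vanish exactly when $|S| \in \{0,1\} \pmod 6$. Taking a $7$-subset $T = S \cup \{\ell\}$ with $|S|=6$ and $\ell \notin S$, and subtracting $Q(\chi_S)=0$ from $Q(\chi_T)=0$, yields $\sum_{i \in S} c_{i\ell} = 0$, valid for every $\ell$ and every $6$-subset $S \subseteq [k] \setminus \{\ell\}$. Since $k \geq 16$ (in fact $k \geq 8$ is already enough), for any two indices $i, j \neq \ell$ there exist $6$-subsets $S, S' \subseteq [k] \setminus \{\ell\}$ differing only by swapping $i$ and $j$, and subtracting the two equations forces $c_{i\ell} = c_{j\ell}$. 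Hence for each $\ell$ there is a common $\alpha_\ell \in G$ with $c_{i\ell} = \alpha_\ell$ for all $i \ne \ell$; since $c_{ij}$ equals both $\alpha_i$ and $\alpha_j$, all the $\alpha_\ell$ coincide with a single $\alpha \in G$, so $Q$ reduces to the symmetric form $Q(x) = \alpha \sum_{i<j} x_i x_j$.

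With $Q$ in this form, the contradiction is immediate. Evaluating on any $6$-subset gives $15\alpha = \binom{6}{2}\alpha = 0$ and on any $7$-subset gives $21\alpha = \binom{7}{2}\alpha = 0$, so $3\alpha = \gcd(15,21)\alpha = 0$ in $G$. On the other hand, evaluating on any $3$-subset (which exists since $k \ge 3$) gives $Q(\chi_S) = \binom{3}{2}\alpha = 3\alpha$, which must be \emph{nonzero} because $3 \notin \{0,1\} \pmod 6$. This contradiction rules out every quadratic $Q$ over every Abelian $G$.

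The only mildly nontrivial step is the transitivity argument that forces all $c_{i\ell}$ to coincide; everything else is routine coefficient extraction and elementary arithmetic in $G$. This transitivity step is also what dictates the size hypothesis, since one needs $k - 1 \ge 7$ to produce a pair of $6$-subsets of $[k] \setminus \{\ell\}$ that differ by a single swap. Thus the approach comfortably handles $k \ge 8$, and in particular $k \ge 16$.
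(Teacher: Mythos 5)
Your proof is correct, and it takes a genuinely different route from the paper's. The paper fixes a single $16$-set $S$, sums $g$ over all $\binom{16}{6}$ six-subsets of $S$ to obtain $1001 \cdot g(\chi_S) = 0$, then invokes the existence of a $2$-$(16,6,2)$ combinatorial design (a biplane) to obtain $2 \cdot g(\chi_S) = 0$; since $\gcd(1001,2) = 1$, this forces $g(\chi_S) = 0$, contradicting $\Ham(\chi_S) = 16 \equiv 4 \pmod 6$. You instead first show the quadratic form must be \emph{symmetric}: taking differences of $6$- and $7$-subsets gives $\sum_{i \in S} c_{i\ell} = 0$ for all $6$-sets $S \subseteq [k]\setminus\{\ell\}$, and a one-element-swap between two such $S$ forces all $c_{i\ell}$ with $i \neq \ell$ to coincide, which transitively collapses to a single $\alpha$ with $c_{ij} = \alpha$. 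Once $g(x) = \alpha \sum_{i<j} x_i x_j$, the arithmetic $\binom{6}{2}\alpha = \binom{7}{2}\alpha = 0 \Rightarrow 3\alpha = 0$ clashes with the requirement $\binom{3}{2}\alpha \neq 0$. What your approach buys: it is fully elementary (no appeal to design theory) and works whenever $k \geq 8$, whereas the paper's argument is tied to $k = 16$ by the biplane's parameters; the paper in exchange avoids the pointwise symmetry reduction entirely and argues only about the aggregate quantity $g(\chi_S)$. Both yield the claimed statement for $k \geq 16$, and your version strengthens the quantitative hypothesis.
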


\begin{proof}
Assume for sake of contradiction that there exists a map $g : \{0,1\}^k \to G$ such that
\[
  g(x) := g_0 + \sum_{i=1}^k g_i \cdot x_i + \sum_{1 \le i < j \le k} g_{ij} \cdot  x_i x_j = \one[x \not\in \LIN_{k,6,\{0,1\}}],
\]
where for any $h \in G$, $h \cdot 0 = 0$ and $h \cdot 1 = h$. In particular, since $g(x) = 0$ whenever $\Ham(x) \in \{0,1\}$, we know that the constant $(g_0)$ and linear ($g_i, i \in [k]$) coefficients of $g(x)$ all equal $0$.

For $S \subseteq [k]$, Let $1_S \in \{0,1\}^k$ be the indicator vector of $S$. Pick $|S| = 16$ and consider all subsets $T \subset S$ of size exactly $6$. Note that $g(1|_{S}) \neq 0$ but $g(1|_{T}) = 0$.  In particular,
\begin{align*}
0 = \sum_{T \in \binom{S}{6}} g(1_T)
    &= \sum_{T \in \binom{S}{6}} \sum_{\{i,j\} \subset T} g_{ij}\\
    &= \sum_{\{i,j\} \subset S} \sum_{T : \{i,j\} \subset T \in \binom{S}{6}} g_{ij}\\
    &= \sum_{\{i,j\}\subset S} \binom{14}{4}g_{ij}\\
    &= 1001 \cdot g(1|_{S}).
\end{align*}
Now, from the theory of combinatorial designs (see Table 1.32 in \cite{colbourn2006Handbook}), for $|S| = 16$, there exists $T_1, \hdots, T_{16} \subseteq S$ of size exactly $6$ such that each pair $\{i,j\} \subseteq S$ is contained in exactly two of the $T_i$'s. By similar logic as before, we have that
\[
0 = \sum_{i=1}^{16} g(1_{T_i}) = \sum_{\{i,j\} \subset S} 2g_{ij} = 2 \cdot g(1|_{S}).
\]
Thus,
\[
  g(1|_{S}) = 1001 \cdot g(1|_{S}) - 500 \cdot 2\cdot g(1|_{S}) = 0,
\]
a contradiction.
\end{proof}

\subsection{Near-polynomial Predicates}

We also briefly mention that predicates like $\tLIN^*_G$ in Section~\ref{sec:nrd-mv} can also be vastly generalized. For simplicity, we consider the case where $G$ is a cyclic group of order a prime $p$, so $G$ can be identified with the field $\F_p$. 

Pick integers $r \ge k \ge 1$ and let $f : \F_p^r \to \F_p$ be a polynomial in which every monomial has at most $k$ distinct variables. Assume also that $f(0^r) = 0$. Now consider the following pair of predicates
\begin{align*}
\POLY_f &:= \{x \in \F_p^r : f(x) = 0\}, \text{ and}\\
\POLY_f^* &:= \POLY_f \setminus \{0^r\}.
\end{align*}
Since computing $f$ only depends on the behavior of tuples of at most $k$ variables, we have that $\NRD(\POLY_f, n) = O(n^k)$ (e.g., \cite{lagerkvist2020Sparsification,khanna2024Characterizations}). However, the behavior of $\NRD(\POLY_f^*, n)$  appears to be much more difficult to control.

As a concrete example, consider $p = 3$ and $f(x) = (x_1+x_2+x_3)(x_1+x_2+x_3-1)$. Observe that for $z_1, z_2 \in \{0,1\}$, we have that $R(z_1,z_2,2)$ is equivalent to $\OR_2(z_1,z_2)$. Thus, $\NRD(\POLY^*_f, n) = \Omega(n^2)$. However, one can show that $\POLY^*_f$ cannot gadget-construct $\OR_3$, so by Theorem~\ref{thm:carbonnel}, we have that $\NRD(\POLY_f^*, n) = O(n^{2.75})$. We conjecture that $\NRD(\POLY^*_f, n) = \Omega(n^{2+\delta})$ for some $\delta > 0$.

\section{Conclusion and Open Questions}\label{sec:conclusion}

In this paper, we closely linked the sparsification of CSPs to combinatorial properties of the corresponding predicate: its non-redundancy (for unweighted sparsification) and its chain length (for weighted sparsification). In addition to the new sparsification bounds we obtained by directly applying known bounds for non-redundancy and chain length, we also developed brand new tools for studying non-redundancy by linking a family of predicates to the structure of matching vector families.

Beyond classifying the non-redundancy and chain length for specific predicates, such as those suggested in Section~\ref{sec:open-nrd}, we now discuss a few directions in which this work could be extended.

\smallskip
\paragraph{Efficient Sparsification for CSPs.} From an algorithmic perspective, the primary open question is making the sparsification procedures in this paper efficient in terms of the number of variables in the CSP instance. Any such procedure cannot be purely from the perspective of non-linear code sparsification, as our proofs (if turned into algorithms) already run in polynomial time with respect to the size of the code.

In Khanna, Putterman, and Sudan's recent line of work on linear code sparsification~\cite{khanna2024Code,khanna2024Characterizations}, their first sparsification method was inefficient (as it required solving the minimum-weight codeword problem), but was later made efficient. However, their algorithms heavily used techniques specialized to the structure of linear codes, such as efficiently finding a spanning set for a vector space and conditioning on certain linear conditions being satisfied (i.e., ``contraction''). Since Mal'tsev predicates behave similarly to systems of linear equations~\cite{bulatov2006Simple}, making a result like Theorem~\ref{thm:maltsev-spr} efficient seems well within reach, but an analogous result for general \textsf{NP}-hard CSPs\footnote{Or more precisely, CSPs which do not have a tight gadget reduction from any efficiently-solvable CSP.} seems much more daunting.

\begin{question}\label{ques:algo}
Can the upper bounds of Theorem~\ref{thm:main} and Theorem~\ref{thm:wspr-csp}  be made efficient?
\end{question}

In fact, an efficient algorithm for Theorem~\ref{thm:main} would imply that every CSP can be kernelized to (approximately) its non-redundancy which would resolve an important open question in the CSP kernelization community (e.g.,~\cite{carbonnel2022Redundancy}).

\smallskip
\paragraph{Non-redundancy Versus Chain Length for CSPs.} Another interesting direction question is to understand the relationship between non-redundancy and chain length for CSPs (see the open problems of \cite{carbonnel2022Redundancy}). Recall by Proposition~\ref{prop:cl-ge-nrd} that chain length is always at least non-redundancy.  However, as discussed in Example~\ref{example:chain}, there are non-linear codes for which non-redundancy and chain length are as distant from each other as possible. Conversely, for linear codes, chain length and non-redundancy are very closely\footnote{For codes over fields, chain length and non-redundancy are equal, but for codes over an Abelian group $G$ they can differ by a factor of $\log_2|G|$. For instance, consider for $G := \Z/2^k\Z$ and let $C \subset G^k$ be the span of $\{2^{k-1}e_1, 2^{k-2}e_1+2^{k-1}e_2, \hdots, e_1+2e_2+\cdots+2^{k-1}e_{k}\}.$ The non-redundancy of $C$ is $1$ but the chain length is $k = \log_2 |G|$.} linked. Since the particular codes which are to be sparsified for a CSP predicate $R$ are highly structured, it remains to be seen what the general relationship between $\CL(R, n)$ and $\NRD(R, n)$ is.

\begin{question}\label{ques:CL-NRD}
For every nontrivial $R \subseteq D^r$, is it the case that $\CL(R, n) \le \NRD(R, n) \cdot \log^{O(1)}(n)$? In fact, can we even show that there exists a universal constant $\alpha \ge 1$ such that $\CL(R, n) \le O(\NRD(R,n)^{\alpha})$?
\end{question}

As discussed in the introduction, the concepts of non-redundancy and chain length appear as the currently best upper and lower bounds on a query complexity problem for CSPs~\cite{bessiere2020Chain}. As such, a positive resolution of Question~\ref{ques:CL-NRD} could lead to a resolution of that question as well.

\smallskip 
\paragraph{Extending to Valued CSPs.}
A natural variant of CSP sparsification that has been studied in the literature (e.g., \cite{kenneth2023cut}) is to assign different weights to each assignment to a CSP constraint. In the CSP literature, this is referred to as a \emph{valued CSP} (VCSP) (e.g., \cite{thapper2015Necessary,thapper2017Power}). One method of abstracting a question like this would be to consider \emph{non-Boolean} code sparsification: we have a code $C \subseteq \Sigma^m$ along with a weight function $w : \Sigma \to \R_{\ge 0}$  so that the weight of a codeword $c \in C$ is $\sum_{i=1}^m w(c_i)$ (or more generally, a separate weight function from some family for each coordinate). To represent an ``unsatisfied'' assignment, we also assume there is $0 \in \Sigma$ with $w(0) = 0$.

As a baseline upper bound, split $C$ into $|\Sigma| - 1$ codes $\{C_{\alpha} : \alpha \in \Sigma \setminus \{0\}\}$ such that each $c \in C$ corresponds to $(\one[c_i = \alpha] : i \in [m])$ in $C_{\alpha}$. Then, we can combine the sparsifiers for each $C_{\alpha}$ to get a sparsifier for $C$. In general, this is a poor bound, as correctly sparsifying the count of each symbol is a much stronger property than estimating their weighted sum. 

Likewise, replacing each nonzero symbol with $1$ to reduce to the Boolean case does not in general yield a tight lower bound. For example, consider $\Sigma = \{0,1,2\}$ and $w(\alpha) = \alpha$. The Booleanization of the code $C := \{1,2\}^m$ gives every codeword a weight of $m$, so it has a trivial $O(1)$ sparsifier. However, one can work out that any $\eps$-sparsifier of $C$ requires $\Omega(m)$ support size. 

\begin{question}
What is the optimal sparsifiability of valued CSPs / non-Boolean codes?
\end{question}

\smallskip
\paragraph{Improved Lower Bounds.} For both unweighted and weighted sparsification, our current lower bounds on $\eps$-sparsification are purely a function of $n$. Can we improve the upper bound as a function of $\eps$? For instance, a lower bound of $n/\eps^2$ is known for cut sparsifiers, which is tight~\cite{andoni2016Sketching,MR3909627}. However, the lower bound techniques seem to crucially use that the non-redundancy/chain length of CUT is linear, so many nearly-isolated clauses can be used in a lower bound. For general CSPs, the optimal relationship between $n$ and $\eps$ is unclear. For instance, if $\NRD(R, n) = \Omega(n^r)$ (i.e., $R$ is $\OR_r$ or $\overline{R}$ is $\AND_r$), then no improvement as a function of $\eps$ is possible as every sparsifier has size at most $n^r$.

\begin{question}
For every $R \subseteq D^r$ and $\eps \in (0,1)$, is it the case that 
\begin{align*}
\Omega(\min(\NRD(R, n)/\eps^2, n^r)) &\le \SPR(R, n, \eps)?, \text{ and}\\
\Omega(\min(\CL(R, n)/\eps^2, n^r)) &\le \wSPR(R, n, \eps)?
\end{align*}
\end{question} 

\smallskip
\paragraph{Average-case Behavior.} As our final area of investigation, it would be interesting to understand the behavior of the non-redundancy (and thus sparsity\footnote{Note that via the reduction from CSP sparsification to code sparsification, there exists a sparsifier of an instance $(X,Y)$ of $\CSP(\overline{R})$ of size approximately the non-redundancy of $(X,Y)$ when viewed as an instance of $\CSP(R)$.}) of random instances of $\CSP(R)$. In particular, the size of the sparsifier may be significantly better than the guarantees of Theorem~\ref{thm:main}.
More formally for a variable set $X$ of size $n$, pick $m \in [n, n^r]$ and sample uniformly at random $m$ clauses $Y$ from $X^r$. What is the non-redundancy? If $R$ is an affine CSP like $\CUT$ or $\tLIN_G$, then it is not hard to see that if $m = \widetilde{\Omega}(n)$, then the space of assignments to the clauses will have rank $\Omega(n)$, so average-case behavior is essentially the same as worst-case behavior. One can make a similar argument for certain ``higher degree'' clauses like $\AND_r$.

However, for a predicate like $\tLIN^*_G$, the non-redundant instance constructed in Theorem~\ref{thm:nrd-3-lin-lb} seems rather ``fragile.'' In particular, a random sample of say $m := n^{1.5}$ clauses seems to not include $\widetilde{\Omega}(n)$ clauses from any copy of the construction in Theorem~\ref{thm:nrd-3-lin-lb}. As such, it could be possible that the non-redundancy of $\tLIN^*_G$ is near-linear in this regime. 

\begin{question}\label{ques:random}
For random instances of $\CSP(R)$, what is the relationship between clause count and non-redundancy?
\end{question}

Subsequent to the initial posting of our work, the intent behind Question~\ref{ques:random} was largely solved by \cite{BGP25} who computed an optimal sparsifier size for random instances of any \emph{valued} CSP. Two important features of this result are (1) the answer is always an integral polynomial (i.e., $\widetilde{O}_{\eps}(n^{k})$ for some integer $k$) and (2) depending on the model the optimal sparsifier size can be \emph{non-monotone} in the number of edges! However, since the sparsification criterion use by \cite{BGP25} was not exactly non-redundancy, Question~\ref{ques:random} is technically still open.

\bibliographystyle{alphaurl}
\bibliography{ref.bib}

\appendix

\end{document}